\newtheorem*{EFPMultiCorre}{Multigraph/EFP Correspondence}
\newtheorem*{conceptdef}{Quark and Gluon Jet Definition (Conceptual)~\cite{Gras:2017jty}}
\newtheorem*{opdef}{Quark and Gluon Jet Definition (Operational)}
\newtheorem{lemma}{Lemma}
\newtheorem{conjecture}{Conjecture}
\newtheorem{definition}{Definition}
\newtheorem*{ircsafety}{Infrared and Collinear Safety}
\providecommand{\href}[2]{#2}
\newcommand{\pd}[2]{\frac{\partial #1}{\partial #2}}
\DeclareMathOperator{\tr}{tr}
\DeclareMathOperator*{\argmin}{arg\,min}
\DeclareRobustCommand{\Sec}[1]{Sec.~\ref{#1}}
\DeclareRobustCommand{\Secs}[2]{Secs.~\ref{#1} and \ref{#2}}
\DeclareRobustCommand{\App}[1]{App.~\ref{#1}}
\DeclareRobustCommand{\Tab}[1]{Table~\ref{#1}}
\DeclareRobustCommand{\Tabs}[2]{Tables~\ref{#1} and \ref{#2}}
\DeclareRobustCommand{\Fig}[1]{Fig.~\ref{#1}}
\DeclareRobustCommand{\Figs}[2]{Figs.~\ref{#1} and \ref{#2}}
\DeclareRobustCommand{\Eq}[1]{Eq.~(\ref{#1})}
\DeclareRobustCommand{\Eqs}[2]{Eqs.~(\ref{#1}) and (\ref{#2})}
\DeclareRobustCommand{\Eqss}[3]{Eqs.~(\ref{#1}), (\ref{#2}), and (\ref{#3})}
\DeclareRobustCommand{\Ref}[1]{Ref.~\cite{#1}}
\DeclareRobustCommand{\Refs}[1]{Refs.~\cite{#1}}
\newcommand{\pythia}{\textsc{Pythia}\xspace}
\newcommand{\B}{\text{EFP}\xspace}
\newcommand{\Bs}{\text{EFPs}\xspace}
\newcommand{\EMD}{\text{EMD}\xspace}
\newcommand{\E}{\mathcal{E}}
\renewcommand{\O}{\mathcal{O}}
\definecolor{table_red}{rgb}{0.501960,0.000000,0.008131}
\definecolor{table_yellow}{rgb}{0.501959,0.501930,0.014756}
\definecolor{table_green}{rgb}{0.065930,0.501799,0.006832}
\definecolor{table_teal}{rgb}{0.064121,0.501820,0.501973}
\definecolor{table_red_bg}{RGB}{249,243,242}
\definecolor{table_yellow_bg}{RGB}{249,249,243}
\definecolor{table_green_bg}{RGB}{245,249,244}
\definecolor{table_teal_bg}{RGB}{244,249,249}
\begin{document}

\title{Energy Flow in Particle Collisions}

\author{Eric Mario Metodiev}

 \prevdegrees{A.B., Physics and Mathematics, Harvard University (2016) \\
                    A.M., Physics, Harvard University (2016)}

\department{Department of Physics}

\degree{Doctor of Philosophy}

\degreemonth{September}
\degreeyear{2020}
\thesisdate{July 30, 2020}

\supervisor{Jesse Thaler}{Associate Professor of Physics}

\chairman{Nergis Mavalvala}{Associate Department Head, Physics}

\maketitle

\cleardoublepage

\setcounter{savepage}{\thepage}

\begin{abstractpage}
In this thesis, I introduce a new bottom-up approach to quantum field theory and collider physics, beginning from the observable energy flow: the energy distribution produced by particle collisions.
First, I establish a metric space for collision events by comparing their energy flows.
I unify many ideas spanning multiple decades, such as observables and jets, as simple geometric objects in this new space.
Second, I develop a basis of observables by systematically expanding in particle energies and angles, encompassing many existing observables and uncovering new analytic structures.
I highlight how the traditional criteria for theoretical calculability emerge as consistency conditions, due to the redundancy of describing an event using particles rather than its energy flow.
Finally, I propose a definition of particle type, or flavor, which makes use of only observable information.
This definition requires refining the notion of flavor from a per-event label to a statistical category, and I showcase its direct experimental applicability at colliders.
Throughout, I synthesize concepts from particle physics with ideas from statistics and computer science to expand the theoretical understanding of particle interactions and enhance the experimental capabilities of collider data analysis techniques.
\end{abstractpage}

\cleardoublepage

\section*{Acknowledgments}

I am deeply grateful for the many great people I have had the chance to interact with during my time at MIT.
I have had the privilege of learning a huge amount from the members of the phenomenology group in the CTP, especially during the Friday journal clubs.
I would also like to thank my officemates: Adam Bene Watts, Patrick Fitzpatrick, Linghang Kong, and Mehdi Soleimanifar.
I absorbed a great deal about quantum information and dark matter from them by osmosis, and they made the office a terrific place to work.
One of the major highlights of my time at MIT (and before) was collaborating with Patrick Komiske, with whom I had an amazing collaboration that produced many of the results presented in this thesis.
I learned the most about particle physics while studying for the qualifying exam, and he, Anthony Grebe, and Tej Kanwar made the experience a very memorable one.

I am lucky to have many wonderful colleagues and collaborators beyond MIT.
From the world of experimental particle physics, I would particularly like to thank Matt LeBlanc, Ben Nachman, and Jennifer Roloff.
I would also like to acknowledge Andrew Larkoski for great discussions and fruitful collaborations.
I was fortunate to spend a year at Harvard as a Visiting Fellow in the Center for the Fundamental Laws of Nature, where the phenomenology group at Harvard was exceptionally welcoming.
I have learned a tremendous amount from Matt Schwartz, who has been a fantastic pedagogue and mentor to me since before I began at MIT.
I also need to thank many colleagues at Harvard for great discussions and collaborations, particularly: Anders Andreassen, Cari Cesarotti, Chris Frye, and Hofie Hannesdottir.

I am eternally grateful to my wife, Miruna Oprescu for far more than I can list here.
I am also thankful for my parents and grandparents, who originally inspired me to go down the path of scientific research by their words and their actions.

Lastly, I would like to thank my advisor, Jesse Thaler, from whom I learned an unbelievable amount about particle physics and far beyond.
He has deeply influenced how I do research, the way I give talks, my style of writing papers, and much more.
I am forever grateful to Jesse, and I could not imagine a more excellent time at MIT.

\pagestyle{plain}
\tableofcontents
\newpage
\listoffigures
\newpage
\listoftables

\chapter{Introduction}

\section{Particle Collisions and their Energy Flow}

Particles and their interactions give rise to the richness of the universe around us.
The standard model of particle physics consists of all of the presently known particles and forces.
Quantum field theory is the theoretical framework for understanding and predicting the interactions between these particles.
Among them, electrons and photons are described by the theory of electromagnetism: quantum electrodynamics, while quarks and gluons are governed by the theory of the strong force: quantum chromodynamics.
To search for new particles and forces beyond the standard model, particles are collided at high energies to provide an experimental probe of physics at increasingly small distances.
These particle collisions, or ``events,'' produce complex mosaics of particles, encoding the high energy interactions that took place.
By better theoretically understanding and experimentally analyzing the patterns of particles collisions, we can gain new insights into the fundamental interactions of nature.

An essential perspective for understanding particle interactions and collisions is to consider the energy flow, or angular distribution of energy, that they produce.
For an event consisting of $M$ particles with four-momenta $\{p_i^\mu\}_{i=1}^M$, the energy flow is:
\begin{equation}
\mathcal E(\hat n) = \sum_{i=1}^M E_i \,\delta(\hat n - \hat n_i),
\end{equation}
where $E_i$ is the energy and $\hat n_i = \vec p_i/E_i$ is the velocity of particle $i$.
Here, I consider specifically theories of massless particles, with the massive case discussed in \App{sec:mass}.
Fundamentally, the energy flow can be written in terms of the energy-momentum tensor $T_{\mu\nu}$ of the underlying quantum field theory as~\cite{Sveshnikov:1995vi,Cherzor:1997ak,Korchemsky:1997sy,Bauer:2008dt}:
\begin{equation}
\mathcal E(\hat n) = \lim_{r\to\infty} r^2 \int_0^\infty dt\,\hat n^i T_{0i} (t,r\hat n),
\end{equation}
which is precisely the amount of energy that flows in direction $\hat n$.

The energy flow is deeply related to the observable information in an event, namely those quantities that can be theoretically calculated and experimentally measured.
In massless theories, only the energy flow of an event is observable, which excludes all information about the number and types of particles that comprise the event.
In massive theories, this feature manifests as the robustness of the energy flow to low energy effects, such as nonperturbative or detector effects.
This behavior is typically phrased as ``infrared and collinear safety'': insensitivity to low energy and collinear modifications of the event.
From this foundation, the energy flow has been used as the framework for a variety of ideas and developments~\cite{Tkachov:1995kk,Korchemsky:1999kt,Belitsky:2001ij,Berger:2003iw,Lee:2006nr,Hofman:2008ar,Mateu:2012nk,Belitsky:2013xxa,Chen:2019bpb}, ranging from understanding nonperturbative corrections to calculating observables without scattering amplitudes.

In this thesis, I take a new bottom-up approach to quantum field theory and collider physics, with the unifying theme of starting from the observable information: the energy flow of the event.
First, I establish a space of events by introducing a metric between their energy flows: the ``work'' required to rearrange one event into another.
I show that a host of classic collider observables and concepts, such as jets, emerge as simple geometric objects in this new space.
Second, I develop a basis of observables by performing a systematic expansion in particle energies and angles, with many existing observables directly encompassed in the basis and numerous new analytic structures appearing.
Infrared and collinear safety emerges as a consistency condition from the redundancy of describing events using particles instead of their energy flow.
Finally, I define how different types, or flavors, of particles can be defined using only the observed energy flow.
This provides an operational and data-driven definition that is applicable both theoretically and experimentally at colliders today.
I approach these questions through the lens of quantum chromodynamics relevant for physics at the Large Hadron Collider, though the general considerations and results apply to quantum field theories more broadly.
This thesis is largely based on work done in collaboration with Patrick Komiske and Jesse Thaler in \Refs{Komiske:2020qhg,Komiske:2017aww,Komiske:2018vkc}.
In all of these cases, I synthesize physical concepts, such as the energy flow and factorization, with ideas from statistics and computer science, such as optimal transport and topic modeling, to enable new theoretical developments in particle physics and expand the experimental capabilities of collider data analysis.


\section{The Space of Events and its Geometry}

In Chapter 2, I  establish that many fundamental concepts and techniques in quantum field theory and collider physics can be naturally understood and unified through a simple new geometric language, based on work with my collaborators in \Ref{Komiske:2020qhg}.
The idea is to equip the space of events with a metric from which other geometric objects can be rigorously defined.
The analysis is based on the energy mover's distance, a metric which operates purely at the level of the observable energy flow, and allows for a clarified definition of infrared and collinear safety and related concepts.
A number of well-known collider observables can be exactly cast as the minimum distance between an event and various manifolds in this space.
Jet definitions, such as exclusive cone and sequential recombination algorithms, can be directly derived by finding the closest few-particle approximation to the event.
Several area- and constituent-based pileup mitigation strategies are naturally expressed in this formalism as well.
Finally, I lift the reasoning to develop a precise distance between theories, which are treated as collections of events weighted by cross sections.
In all of these various cases, a better understanding of existing methods in our geometric language suggests interesting new ideas and generalizations.

Formulating a metric requires asking a key question: When are two events similar?
Despite the simplicity and generality of this question, there had been no established notion of the distance between two events.
To address this question, with my collaborators in \Ref{Komiske:2019fks}, I developed a metric for the space of collider events based on the earth mover's distance: the ``work'' required to rearrange the radiation pattern of one event into another.
This new distance, called the energy mover's distance, can be formualted as an optimal transport problem between two energy flows $\E$ and $\E'$ as:
\begin{equation}
\label{eq:emdintro}
\EMD_{\beta,R} (\mathcal E, \mathcal E') = \min_{\{f_{ij}\ge0\}} \sum_{i=1}^M\sum_{j=1}^{M'} f_{ij} \left( \frac{\theta_{ij}}{R} \right)^\beta + \left|\sum_{i=1}^M E_i - \sum_{j=1}^{M'}E_j'\right|,
\end{equation}
\begin{equation}
\label{eq:emdconstraintsintro}
\sum_{i=1}^M f_{ij} \le E_j', \quad\quad \sum_{j=1}^{M'} f_{ij} \le E_i, \quad\quad \sum_{i=1}^M\sum_{j=1}^{M'} f_{ij} = \min\left(\sum_{i=1}^M E_i,\sum_{j=1}^{M'}E_j'\right),
\end{equation}
where $\theta_{ij}^2 = -(n_i^\mu - n_j^\mu)^2$ is an angular ground metric with $n^\mu = p^\mu/E$, $R>0$ is a parameter controlling the tradeoff between transporting energy and destroying it, and $\beta > 0$ is an angular weighting.
I exposed interesting connections between this metric and the structure of infrared- and collinear-safe observables, providing a novel technique to quantify event modifications due to hadronization, pileup, and detector effects.
I showcased how this metrization unlocks powerful new tools for analyzing and visualizing collider data without relying upon a choice of observables.
More broadly, this framework paves the way for data-driven collider phenomenology without specialized observables or machine learning models.

Beyond solely developing these event geometry ideas theoretically, I explored this new space using jets in public collider data from the CMS experiment with my collaborators in \Ref{Komiske:2019jim}.
Starting from 2.3~fb$^{-1}$ of 7 TeV proton-proton collisions collected at the Large Hadron Collider in 2011, we isolated a sample of 1,690,984 central jets with transverse momentum above 375 GeV.
To validate the performance of the CMS detector in reconstructing the energy flow of jets, we compared the CMS Open Data to corresponding simulated data samples for a variety of jet kinematic and substructure observables.
Even without detector unfolding, we find very good agreement for track-based observables after mitigating the impact of pileup.
I performed a range of novel analyses, using the energy mover's distance to measure the pairwise difference between jet energy flows, such as the ones shown in \Fig{fig:cmsod}.
The metric allowed us to quantify the impact of detector effects, visualize the metric space of jets, extract their fractal dimension, and identify the most and least typical jet configurations.
To facilitate future studies with CMS Open Data, we made our datasets and analysis code available, amounting to around two gigabytes of distilled data and one hundred gigabytes of simulation files.
With this analysis using CMS Open Data, I brought event geometry ideas from their original theoretical development all the way to their first explorations in real collider data.

\begin{figure}
\centering
\includegraphics[scale=0.925]{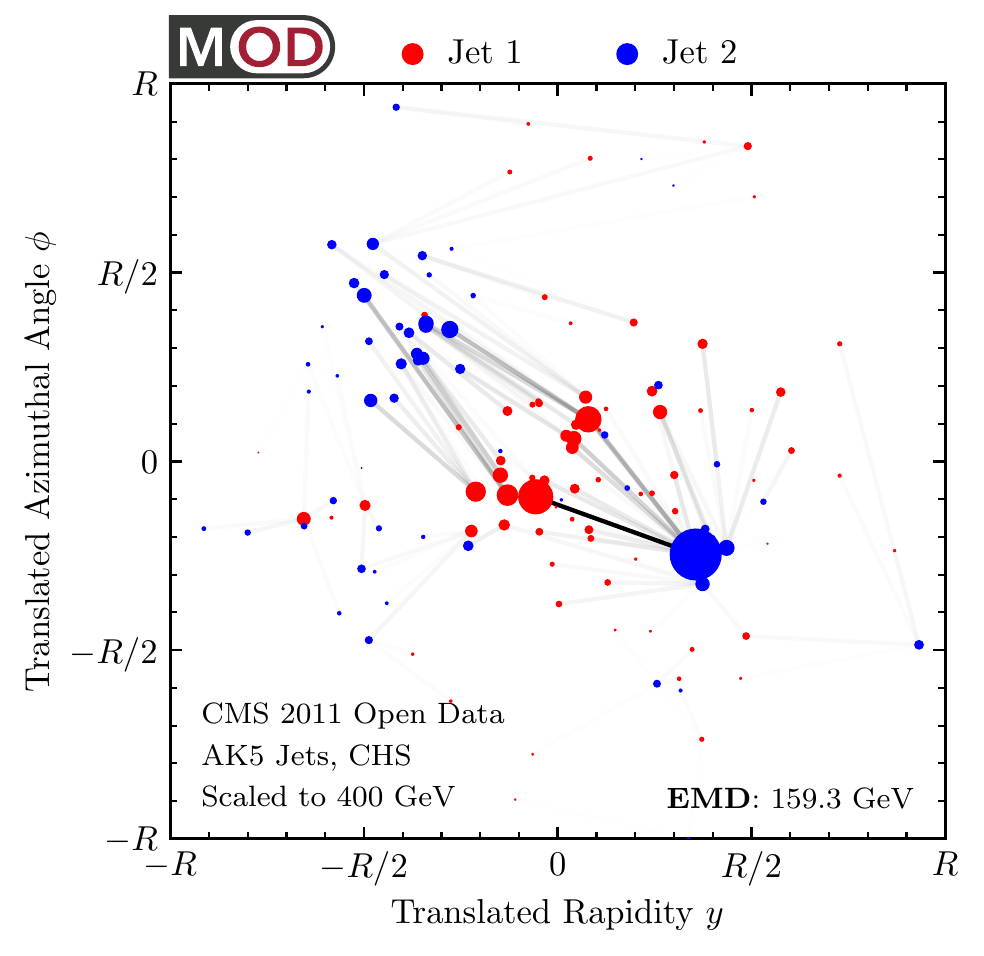}
\caption{\label{fig:cmsod}The energy mover's distance between two jets from CMS Open Data~\cite{Komiske:2019jim}.
Particles are shown at their position in the rapidity-azimuth plane with sizes corresponding to their transverse momenta.
The optimal transport plan to rearrange one jet into the other is shown by the intensity of the lines connecting pairs of points.}
\end{figure}

%

\section{Observables in Massless Quantum Field Theories}

In quantum physics, an observable $\mathcal O$ is any quantity that takes a well-defined value on some basis of states, with $\mathcal O \ket X =\mathcal O(X) \ket X$.
Any such observable can be theoretically calculated and experimentally measured using the usual framework of quantum physics.
The situation becomes more complicated, however, for massless quantum field theories.
With a basis of multiparticle states $\ket{\{p_1^\mu, \cdots, p_M^\mu\}}$, it is no longer true that quantities defined on this basis can always be theoretically calculated or experimentally observed.
Infrared divergences, or the tendencies to emit low energy or collinear particles, cause infinite results for most such quantities in perturbation theory.
Even simple and intuitive quantities, such as the number of particles in the state, do not yield finite results.
Importantly, these divergences are not a problem with the theory and instead highlight that most such quantities are not physically observable in these theories.
This issue signals that a reorganization of our understanding of observables for massless quantum field theories is potentially required, related to the vanishing of the $S$-matrix in theories with massless particles.

Since massless quantum field theories are ubiquitously used in particle physics, such as for precision collider calculations, it is imperative to understand the structure of observables in these theories.
The key concept for identifying observable quantities is their infrared and collinear safety, or robustness to low energy emissions and collinear splittings.
Conceptually, using the event energy flow $\mathcal E$, I will argue that observables $\mathcal O$ are essentially those quantities that can be written as:
\begin{equation}
\mathcal O \ket{\mathcal E} = \mathcal O(\mathcal E) \ket{\mathcal E},
\end{equation}
where $\ket{\mathcal E}$ is any quantum state with well-defined energy flow $\mathcal E$, such as a multiparticle state.
Infrared- and collinear-safe observables are theoretically calculable with finite and physical results, while unsafe observables will yield infinite results.
In massive theories, namely those with a mass gap or massive states in the spectrum, these unsafe observables might be calculable and observable.
However, they may be dominated by low energy or nonperturbative effects rather than the high energy structure of the event.
Typically, infrared and collinear safety is checked observable by observable without recourse to a more fundamental underlying structure or basis of observables.
Understanding observables in quantum field theories at the same level as in ordinary quantum physics requires a detailed exploration of their infrared and collinear safety.

In Chapter 3, I introduce a complete linear basis of infrared- and collinear-safe observables: the energy flow polynomials, based on work done with my collaborators in \Ref{Komiske:2017aww}.
I focus specifically on jets and their substructure, though the analysis holds more broadly at event-level for quantum field theories with massless particles.
Energy flow polynomials are multiparticle energy correlators with specific angular structures that are a direct consequence of infrared and collinear safety.
I establish a powerful graph theoretic representation of these polynomials which allows for their efficient organization and computation.
The energy flow polynomial corresponding to a multigraph $G$ with $N$ vertices is:
\begin{equation}
\text{EFP}_G(\mathcal E) = \sum_{i_1=1}^M \cdots \sum_{i_N=1}^M E_{i_1}\cdots E_{i_N} \prod_{(k,\ell)\in G} \theta_{i_k i_\ell}^\beta,
\end{equation}
where $\theta_{ij}^2 = -(n_i^\mu - n_j^\mu)^{2}$ is an angular distance measure with $n^\mu = p^\mu/E$, $\beta$ is an angular weighting, and $(k,\ell)$ are the pairs of vertices connected by edges in $G$.
Many common collider observables are shown to be exact linear combinations of energy flow polynomials.
I demonstrate the linear spanning nature of the energy flow basis by performing direct regression for several common observables.

The energy flow polynomials form an overcomplete basis, with linear relations among them emerging in various contexts.
It is crucial to analyze these relations to understand the independent energy-momentum tensor correlations that are probed by these observables.
More generally, the energy flow polynomials are one of a broad class of objects, called ``multiparticle correlators'', which are frequently encountered in particle physics and beyond.
By translating multiparticle correlators into the language of graph theory, new insights into their structure can be gained.
In \Ref{Komiske:2019asc} with my collaborators, I highlighted the power of this graph-theoretic approach by ``cutting open'' the vertices and edges of the graphs, allowing for the systematic classification of linear relations among multiparticle correlators and the development of faster methods for their computation.
The naive computational complexity of an $N$-point correlator among $M$ particles is $\mathcal O(M^N)$, but when the pairwise distances between particles can be cast as an inner product, I showed that all such correlators can be computed in linear $\mathcal O(M)$ runtime.
By introducing novel tensorial objects called energy flow moments:
\begin{equation}
\mathcal I^{\mu_1\cdots\mu_v} =2^{v/2} \sum_{i=1}^M E_i n_i^{\mu_1}\cdots n_i^{\mu_v},
\end{equation}
I achieved a fast implementation of collider observables widely used at the Large Hadron Collider to identify boosted hadronic resonances.
As another application, I computed the number of leafless multigraphs with $d$ edges up to $d=16$ (15,641,159), conjecturing that this is the same as the number of independent kinematic polynomials of degree $d$, previously known only to $d=8$ (279) in a string theory context~\cite{Boels:2013jua}.

A better understanding of the structure of observables in massless quantum field theories also enabled new developments in machine learning for particle physics.
A key question for machine learning approaches in particle physics is how to best represent and learn from collider events.
As an event is intrinsically a variable-length, unordered set of particles, I built upon recent machine learning efforts to learn directly from sets of features or ``point clouds''.
In \Ref{Komiske:2018cqr} with my collaborators, by adapting and specializing the Deep Sets framework~\cite{DBLP:conf/nips/ZaheerKRPSS17} to particle physics, I introduced energy flow networks, architectures which respect infrared and collinear safety by construction.
I also developed particle flow networks, which allow for general energy dependence and the inclusion of additional particle-level information such as charge and flavor.
The energy flow networks feature a per-particle internal (latent) representation, and summing over all particles yields an overall event-level latent representation:
\begin{equation}
\text{EFN}(\mathcal E) = F\left(  \sum_{i=1}^M E_i \Phi(n_i^\mu) \right),
\end{equation}
for two functions $F$ and $\Phi$, which are learned.
I showed how this latent space decomposition unifies existing event representations based on detector images and radiation moments.
To demonstrate the power and simplicity of this set-based approach, I applied these networks to the collider task of discriminating quark jets from gluon jets, finding similar or improved performance compared to existing methods.
These architectures also achieved state-of-the-art performance for boosted top quark identification, which I showcased in \Ref{Kasieczka:2019dbj} as part of a community comparison.
I also showed how the learned event representation can be directly visualized, providing insight into the inner workings of the model.
In this way, a better understanding of observables in quantum field theory can translate into more efficient processing and analyzing of events for a wide variety of tasks at the Large Hadron Collider.

\section{Particle Flavors and Factorization}

What exactly is a particle?
Quarks and gluons, for instance, are never directly observed in experiments. Instead, they manifest as jets: collimated sprays of particles.
In fact, electrons in massless quantum electrodynamics also share this same issue.
While ``quark'' and ``gluon'' jets are often treated as separate, well-defined objects in both theoretical and experimental contexts, no precise, practical, and cross section-level definition of jet flavor had existed.
A crucial question is then whether particle flavors such as ``electron'', ``photon'', ``quark'', and ``gluon''  are purely theoretical constructs or whether they can be defined using observable quantities.

Remarkably, particle flavors can indeed be defined from observables.
A key concept for understanding this and related questions is the factorization of observables.
Factorization describes how physics at different energy scales affects the observed radiation pattern in a structured and hierarchical way.
Intuitively, the observed event emerges from the outgoing high energy particles followed by the formation of their substructure (e.g.~jets), up to subleading corrections.
This principle of factorization has also been shown to apply at the level of the full energy flow~\cite{Bauer:2008jx}.
Translating this idea into a particle or jet flavor definition requires rethinking the standard notion of per-event flavors and sharpening the idea of flavor as a statistical category.

In Chapter 4, I develop and advocate for a definition of particle flavor that solely uses observable information and is built on factorization, based on work with my collaborators in \Ref{Komiske:2018vkc}.
I focus on providing a fully data-driven and operational definition of quark and gluon jets that is readily applicable at colliders, though the conclusions apply more generally for particle flavor in massless quantum field theories.
Rather than specifying a per-jet flavor label, we aggregately define quark and gluon jets at the distribution level in terms of measured hadronic cross sections.
Intuitively, quark and gluon jet ``topics'' emerge as the two maximally separable categories within two jet samples in data.
From two mixed samples $M_1$ and $M_2$, taking $M_1$ to be more quark-enriched, the operational quark and gluon distributions are:
\begin{equation}
p_q(\mathcal O) = \frac{p_{M_1}(\mathcal O) - \kappa_{12}\, p_{M_2}(\mathcal O)}{1 - \kappa_{12}}, \quad\quad p_g(\mathcal O) = \frac{p_{M_2}(\mathcal O) - \kappa_{21}\, p_{M_1}(\mathcal O)}{1 - \kappa_{21}},
\end{equation}
in terms of the two reducibility factors $\kappa_{12}=\min_\mathcal O \frac{p_{M_1}(\mathcal O)}{p_{M_2}(\mathcal O)}$ and $\kappa_{21}=\min_\mathcal O \frac{p_{M_2}(\mathcal O)}{p_{M_1}(\mathcal O)}$.
Benefiting from my previous work on data-driven classifiers and topic modeling for jets, I show that the practical tools needed to implement this definition already exist for experimental applications.
As an informative example, I demonstrate the power of this operational definition using $Z$+jet and dijet samples, illustrating that pure quark and gluon distributions and fractions can be successfully extracted in a fully well-defined manner.
In fact, this new flavor definition has already been experimentally applied by the ATLAS experiment~\cite{Aad:2019onw}, giving rise to the data-driven quark and gluon jet distributions (topics) shown in \Fig{fig:atlasfig}.
\begin{figure}[t]
\centering
\includegraphics[width=0.625\columnwidth]{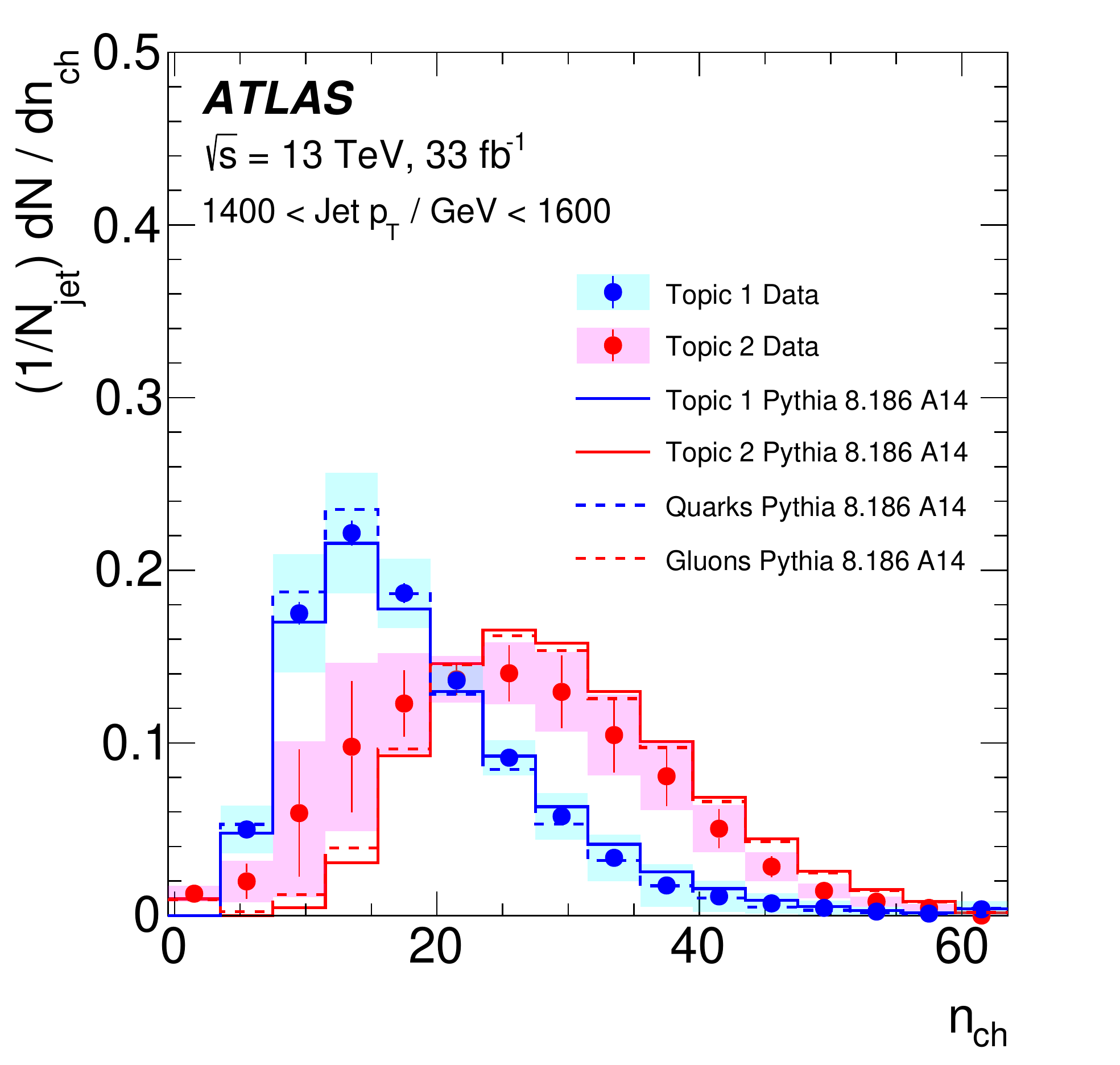}
\caption{\label{fig:atlasfig} The quark and gluon jet topic distributions extracted by the ATLAS experiment with the number of charged particles $n_\text{ch}$ using high energy jets~\cite{Aad:2019onw}.
There is agreement between the fully data-driven flavor definition and the unphysical simulation-based quark and gluon labels from \textsc{Pythia}.}
\end{figure}

More broadly, understanding jets initiated by quarks and gluons is of fundamental importance in collider physics.
Efficient and robust techniques for quark versus gluon jet classification have consequences for new physics searches, precision studies of the strong coupling constant, parton distribution function extractions, and many other applications.
Numerous machine learning analyses have attacked the problem, demonstrating that good performance can be obtained but generally not providing an understanding for what properties of the jets are responsible for that separation power.
In \Ref{Larkoski:2019nwj} with my collaborator, I provided an extensive and detailed analysis of quark versus gluon classification from first-principles theoretical calculations.
Working in the strongly-ordered soft and collinear limits, I calculated probability distributions for fixed $N$-body kinematics within jets with up through three resolved emissions.
This enables explicit calculation of quantities central to machine learning such as the likelihood ratio, the area under the receiver operating characteristic curve, and reducibility factors within a well-defined approximation scheme.
Further, I related the existence of a consistent power counting procedure for classification to ideas for operational flavor definitions, and I used this relationship to construct a power counting for quark versus gluon classification as an expansion in $e^{C_F-C_A}\ll 1$, the exponential of the fundamental and adjoint Casimirs.
These calculations provide insight into the classification performance of particle multiplicity and show how observables sensitive to all emissions in a jet are optimal.
I compared the predictions to the performance of individual observables and neural networks with parton shower event generators, validating that the predictions describe the features identified by machine learning techniques.

Understanding particle flavors using the mathematics of topic modeling builds upon work done in \Ref{Metodiev:2018ftz} with my collaborator, where I first introduced a framework to identify underlying classes of particle-types from collider data.
Because of a close mathematical relationship between factorized distributions of observables and emergent themes in sets of documents, I applied ideas from topic modeling to extract topics from data with minimal or no input from simulation or theory.
As a proof of concept, I applied the topic modeling framework to determine separate quark and gluon jet distributions for constituent multiplicity.
I also determined separate quark and gluon rapidity spectra from a mixed $Z$+jet sample.
While the topics are defined directly from hadron-level multi-differential cross sections, one can also predict topics from first-principles theoretical calculations, with implications for how to define quark and gluon jets and other particle flavors beyond leading-logarithmic accuracy.
These investigations suggest that factorized topic modeling approaches will be useful for extracting underlying particle distributions and fractions in a wide range of contexts at the Large Hadron Collider.

 \begin{figure}[t]
\centering
\includegraphics[width=0.625\columnwidth]{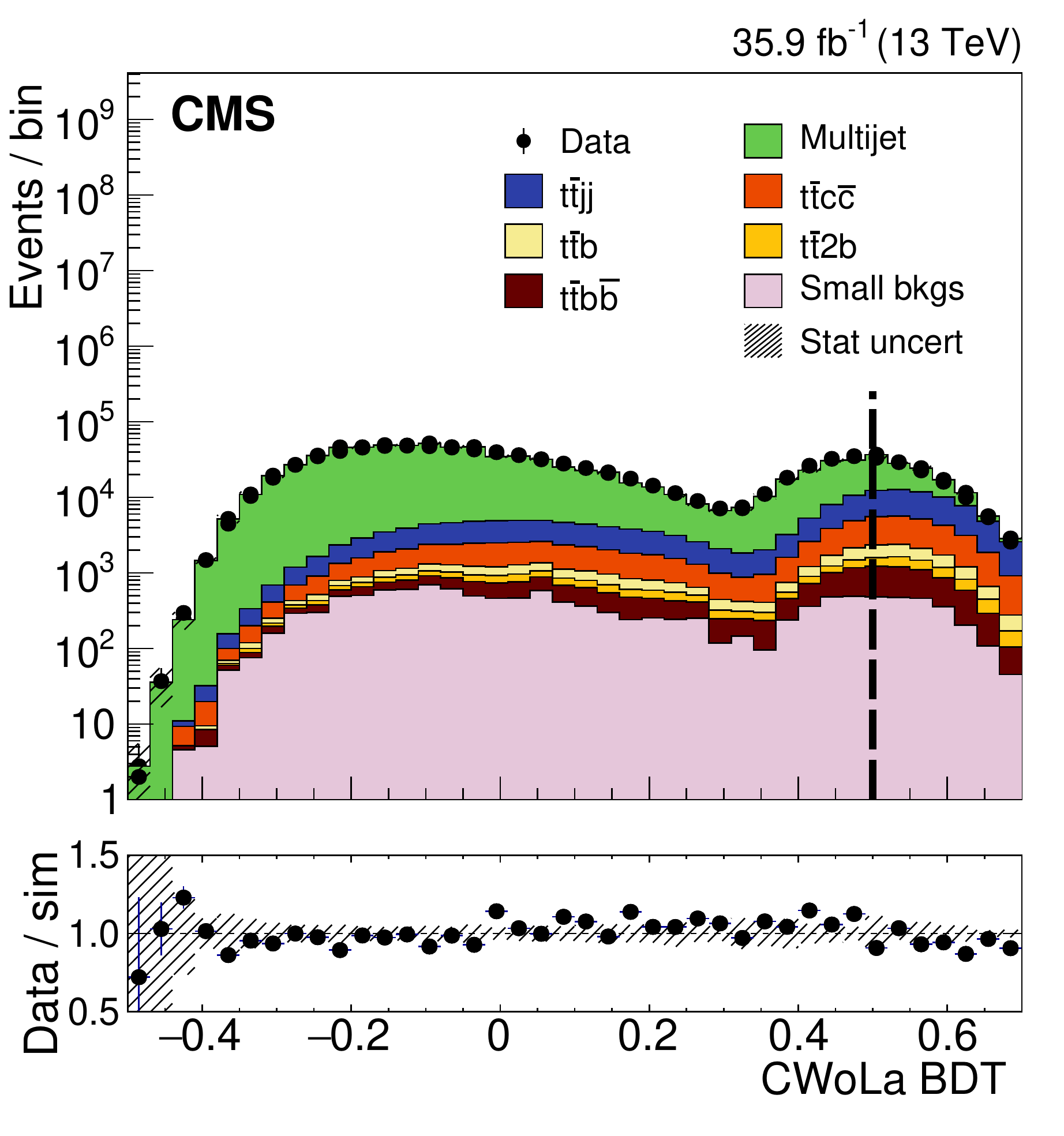}
\caption{\label{fig:cmsfig}The output of a boosted decision tree (BDT) from the CMS experiment, trained directly on data using the classification without labels (CWoLa) framework to identify various $t\bar t$+jets events from the multijet background~\cite{Sirunyan:2019jud}.}
\end{figure}

The insight of viewing collider data in terms of factorized distributions also gave rise to developments in data-driven machine learning at colliders.
Using these ideas, together with my collaborators in \Ref{Metodiev:2017vrx}, I introduced a new ``classification without labels'' paradigm to train machine learning classifiers directly on real collider data.
Modern machine learning techniques can be used to construct powerful models for difficult collider physics problems.
In many applications, however, these models are trained on imperfect simulations due to a lack of truth-level information in the data, which risks the model learning artifacts of the simulation.
In this new framework, a classifier is trained to distinguish statistical mixtures of classes, which are common at colliders due to factorization.
Crucially, neither individual labels nor class proportions are required, yet I proved that the optimal classifier in this paradigm is also the optimal classifier in the traditional fully-supervised case where all label information is available.
After demonstrating the power of this method in an analytical toy example, I considered a realistic benchmark for collider physics: distinguishing quark- versus gluon-initiated jets using mixed quark and gluon training samples.
Further, with my collaborators in \Ref{Komiske:2018oaa}, I demonstrated that complex, high-dimensional classifiers can also be trained on impure mixtures using these weak supervision techniques.
The quark versus gluon jet performance of the weakly supervised methods is comparable with what can be achieved using pure samples, which I originally studied in depth with my collaborators in \Ref{Komiske:2016rsd}.
This work opens the door to a new regime whereby complex models are trained directly on data, sidestepping simulations and providing direct access to probe the underlying physics.

More generally, this framework for classification without labels can be applied to any classification problem where labels or class proportions are unknown or simulations are unreliable, but statistical mixtures of the classes are available.
This technique has already been experimentally applied by the CMS experiment for a  measurement of the $t\bar t b \bar b$ cross section~\cite{Sirunyan:2019jud}, yielding the fully data-driven trained model shown in \Fig{fig:cmsfig}.
The paradigm has also been used as the foundation for model-agnostic new physics search strategies~\cite{Collins:2018epr,Collins:2019jip}.
These ideas have been recently applied by the ATLAS experiment to search for resonant new physics in dijets in a model-independent way for the first time~\cite{Aad:2020cws}.

\section{Experimental Aspects}

The theoretical ideas presented in this thesis are deeply influenced by and benefit from close contact with the realities of particle experiments.
It was by starting from the experimentally robust and observable quantities, namely cross sections of infrared- and collinear-safe observables, that many of the previous theoretical developments were enabled.
Beyond the research described above, I have also done work on experimentally-focused aspects of collider physics: pileup mitigation and unfolding.

Pileup involves the contamination of the energy distribution arising from the primary collision of interest by radiation from additional soft collisions.
With my collaborators in \Ref{Komiske:2017ubm}, I developed a new technique for removing this pileup contamination using machine learning and convolutional neural networks.
The network takes as input the energy distribution of charged leading vertex particles, charged pileup particles, and all neutral particles and outputs the energy distribution of particles coming from leading vertex alone.
The algorithm performs remarkably well at eliminating pileup distortion on a wide range of simple and complex jet observables.
I tested the robustness of the algorithm in a number of ways and discussed how the network can be trained directly on data.

Collider data must be corrected for detector effects (``unfolded'') to be compared with many theoretical calculations and measurements from other experiments.
Unfolding is traditionally done for individual, binned observables without including all information relevant for characterizing the detector response.
Together with my collaborators in \Ref{Andreassen:2019cjw}, I introduced an unfolding method that iteratively reweights a simulated dataset, using machine learning to capitalize on all of the  available information.
The approach is unbinned, works for arbitrarily high-dimensional data, and naturally incorporates information from the full phase space.
I illustrated this technique on a realistic jet substructure example from the Large Hadron Collider and compared it to standard binned unfolding methods.
This new paradigm enables the simultaneous measurement of all observables, including those not yet invented at the time of the analysis.

\chapter{The Hidden Geometry of Particle Collisions}

\section{Introduction}
\label{sec:intro}

Unification of ideas in physics has been an important way of achieving elegance, clarity, and simplicity, which in turn helps inspire meaningful new developments.
In this chapter, we use the energy mover's distance (EMD) between collider events~\cite{Komiske:2019fks} to provide a natural geometric language that unifies many important concepts and techniques in quantum field theory and collider physics from the past five decades.
Furthermore, we introduce and discuss several new ideas inspired by this geometric approach to studying the space of events.

Throughout this chapter, we refer to an event and its \emph{energy flow} interchangeably.
The energy flow, or distribution of energy, is the kinematic information that is experimentally observable and perturbatively well-defined in quantum field theories with massless particles~\cite{Tkachov:1995kk}.
As it relates to collider physics, the energy flow has been extensively studied~\cite{Tkachov:1995kk,Sveshnikov:1995vi,Korchemsky:1997sy,Basham:1978zq,Cherzor:1997ak,Tkachov:1999py,Korchemsky:1999kt,Belitsky:2001ij,Berger:2002jt,Berger:2003iw,Lee:2006nr,Bauer:2008dt,Hofman:2008ar,Mateu:2012nk,Belitsky:2013xxa,Komiske:2017aww,Komiske:2018cqr,Komiske:2019asc,Chen:2019bpb}, and this chapter builds on many of these previous concepts.
For an event consisting of $M$ particles with positive energies $E_i$ and angular directions $\hat n_i$, the energy flow is:
\begin{equation}
\label{eq:energyflow}
\E(\hat n) = \sum_{i=1}^ME_i\,\delta(\hat n - \hat n_i).
\end{equation}
Note that the energy flow is insensitive to charge and flavor information.
Particles are taken to be massless in the body of this chapter, with $n_i^\mu = (1, \hat{n}_i)^\mu = p^\mu_i/E_i$, and the case of massive particles is discussed in \App{sec:mass}.
In a hadron collider context, particle transverse momenta $p_{T,i}$ are typically used in place of particle energies, but we focus on energies in this chapter to minimize extraneous notation.

The EMD was introduced in \Ref{Komiske:2019fks} as a metric between events.
It is based on the well-known earth mover's distance~\cite{DBLP:journals/pami/PelegWR89,Rubner:1998:MDA:938978.939133,Rubner:2000:EMD:365875.365881,DBLP:conf/eccv/PeleW08,DBLP:conf/gsi/PeleT13}, also known as the Wasserstein metric~\cite{wasserstein1969markov,dobrushin1970prescribing}.
Intuitively, the EMD between two events is the amount of ``work'' required to rearrange one event to the other.
Its value can be obtained by solving the following optimal transport problem between energy flows $\E$ and $\E'$:
\begin{equation}
\label{eq:emd}
\EMD_{\beta,R} (\mathcal E, \mathcal E') = \min_{\{f_{ij}\ge0\}} \sum_{i=1}^M\sum_{j=1}^{M'} f_{ij} \left( \frac{\theta_{ij}}{R} \right)^\beta + \left|\sum_{i=1}^M E_i - \sum_{j=1}^{M'}E_j'\right|,
\end{equation}
\begin{equation}
\label{eq:emdconstraints}
\sum_{i=1}^M f_{ij} \le E_j', \quad\quad \sum_{j=1}^{M'} f_{ij} \le E_i, \quad\quad \sum_{i=1}^M\sum_{j=1}^{M'} f_{ij} = \min\left(\sum_{i=1}^M E_i,\sum_{j=1}^{M'}E_j'\right),
\end{equation}
where $\theta_{ij}$ is a pairwise distance between particles known as the \emph{ground metric}, $R>0$ is a parameter controlling the tradeoff between transporting energy and destroying it, and $\beta > 0$ is an angular weighting exponent.%
\footnote{\label{footnote:pWasser}
Strictly speaking, for the case of $\beta>1$, one must raise the first term in \Eq{eq:emd} to the $1/\beta$ power for the EMD to be a proper metric satisfying the triangle inequality, in which case it is known as a $p$-Wasserstein metric with $p = \beta$.
Additionally, $2R$ should be larger than or equal to the maximum distance in the ground space for the EMD to satisfy the triangle inequality.
When written without subscripts, $\EMD(\E,\E')$ refers to the case of $\beta=1$ and a sufficiently large $R$ to ensure that we have a proper metric.
Even if the EMD is not a proper metric, though, it is still a valid optimal transport problem for any positive values of $\beta$ and $R$.
}
For the angular metric between two massless particles, we focus on the case of
\begin{equation}
\label{eq:theta_def}
\theta_{ij} = \sqrt{2n_i^\mu n_{j\mu}} = \sqrt{2 (1 - \hat{n}_i \cdot \hat{n}_j)},
\end{equation}
which reduces to their opening angle in the nearby limit.%
\footnote{Many modifications to this EMD definition are possible, including alternative angular distances such as strict opening angle or rapidity-azimuth distance as well as alternative notions of energy such as transverse momentum. In addition, energies can be normalized by dividing by their total scalar sum so that energy flows become proper probability distributions. If desired, the EMD in the center-of-mass frame can be phrased in a manifestly Lorentz-invariant way by replacing the particle energies $E_i$ with $p_i^\mu P_\mu/\sqrt{P_\mu P^\mu}$, where $P_\mu$ is the total event four-momentum.}
The first term in \Eq{eq:emd} quantifies the difference in radiation patterns while the second term, which vanishes in the case of normalized energy flows, allows for the comparison of events with different total energies.
The constraints in \Eq{eq:emdconstraints} specify that the amount of energy moved to or from a particle cannot exceed its initial energy, and that as much energy must be moved as possible.

\begin{figure}[t]
\centering
\includegraphics[scale=0.8]{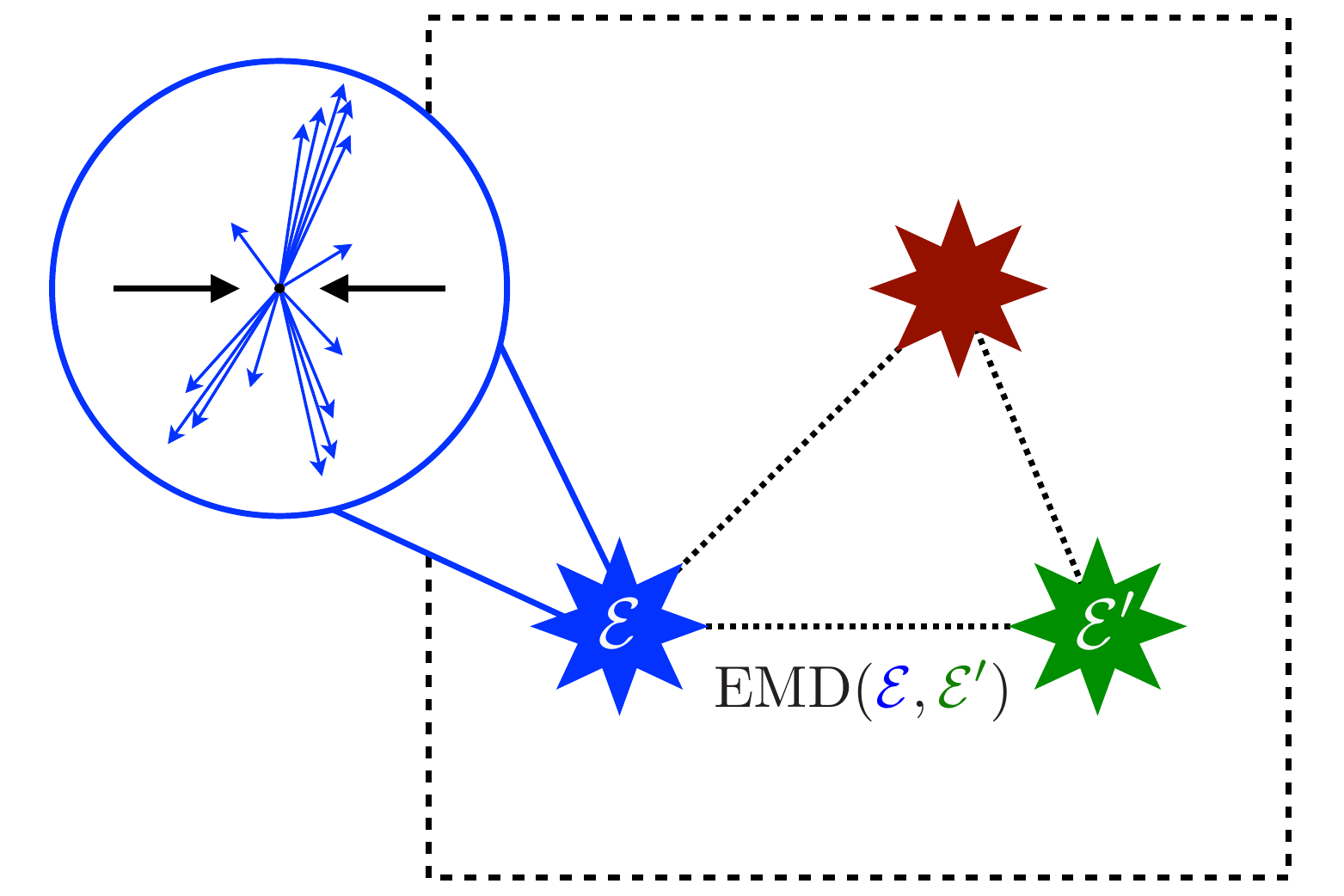}
\caption{\label{fig:event_space} An illustration of the space of events.
Each point in the space is a collider event consisting of the particles produced in a collision, as indicated by the blue event.
The distance between events is quantified by the EMD, giving rise to a metric space.
Geometry in this abstract space of events provides a natural language to understand many ideas and developments in quantum field theory and collider physics.}
\end{figure}

The EMD has previously been used to bound modifications to infrared- and collinear-safe (IRC-safe) observables, distinguish different types of jets, and enable visualizations of the space of events~\cite{Komiske:2019fks}.
It has also been used to explore the space of jets and quantify detector effects with CMS Open Data from the Large Hadron Collider (LHC)~\cite{Komiske:2019jim}.
Alternative pairwise event distances were considered in \Ref{Mullin:2019mmh} in the context of new physics searches.
Here, we demonstrate that the EMD can be used to clarify numerous concepts throughout quantum field theory and collider physics using a unified language of event space geometry.
In addition to demonstrating how concepts such as IRC safety, observables, jet finding, and pileup subtraction are related, we will develop new ideas and techniques in each of these areas, which we describe below.

Equipping collider events with a metric allows us to explore interesting geometric and topological ideas in the space of events.
\Fig{fig:event_space} illustrates the space of events with the EMD as a metric.
One key construction for relating these concepts is the notion of a manifold in the space of events, which will allow us to define the distance between an event and a manifold, as well as the point of closest approach on a manifold.
Since fixed-order perturbation theory works with a definite number of particles, an important type of manifold will be the idealized massless $N$-particle manifold:
\begin{equation}
\label{eq:npmanifold}
\mathcal P_N = \left\{\left.\sum_{i=1}^N E_i\, \delta(\hat n - \hat n_i)\,\, \right| \,\, E_i\ge0 \right\},
\end{equation}
which, intuitively, is the set of all possible events with $N$ massless particles.
Note that $\mathcal P_N\supset\mathcal P_{N-1}\supset\cdots\mathcal P_2\supset \mathcal P_1\supset\mathcal P_0$ via soft and collinear limits, so that the idealized $N$-particle manifold contains each manifold of smaller particle multiplicity.

\begin{table}[t]
\centering
\begin{tabular}{|c|c|c|c|}
\hline\hline
\bf Sec. & \bf Concept &  \bf Equation & \bf Illust.  \\
\hline\hline
& &  &  \multirow{5}{*}{\raisebox{-4em}{\includegraphics[scale=0.19]{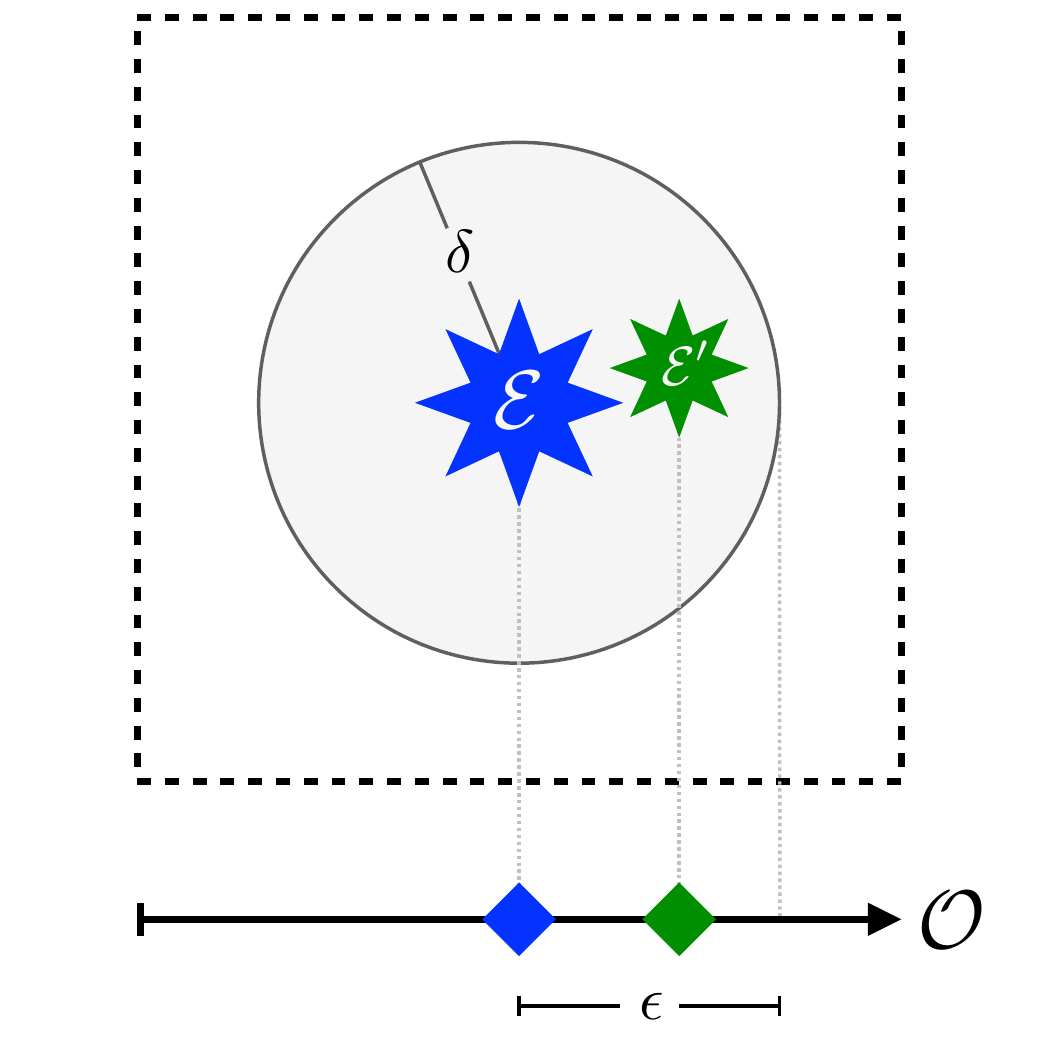}}} \\
 \ref{sec:safety} & {\bf Infrared and} & $\displaystyle\text{EMD}(\E,\E')<\delta  \implies$ & \\ 
 & {\bf Collinear Safety} & $|\O(\E) - \O(\E')| < \epsilon$ &  \\
& \cite{Kinoshita:1962ur,Lee:1964is,Sterman:1977wj,Sterman:1978bi,Sterman:1978bj,Sterman:1979uw,sterman1995handbook,Weinberg:1995mt,Ellis:1991qj,Banfi:2004yd,Larkoski:2013paa,Larkoski:2014wba,Larkoski:2015lea} &  & \\
& &  &  \\ \hline
& &  & \multirow{5}{*}{\raisebox{-4em}{\includegraphics[scale=0.19]{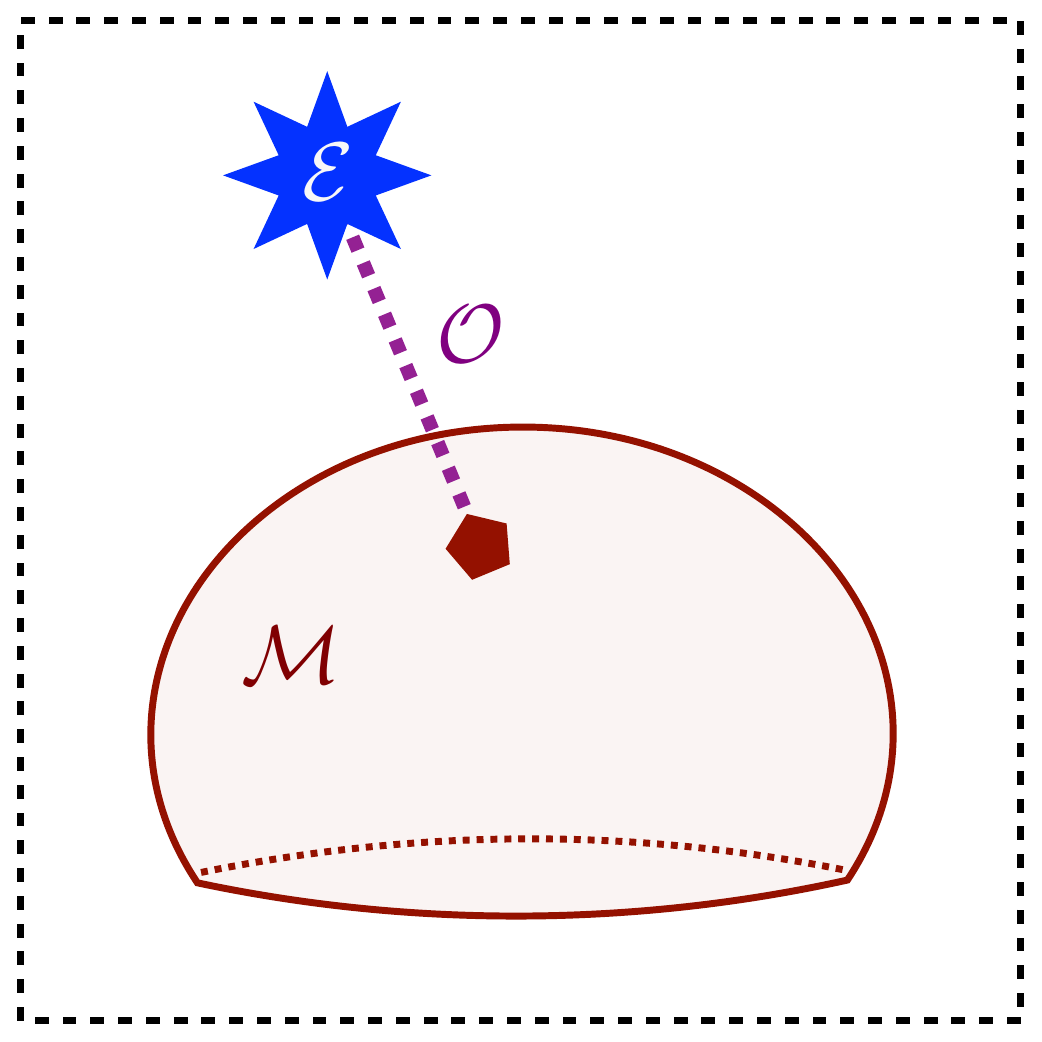}}} \\ 
 \ref{sec:observables} &{\bf Observables} &  $\displaystyle\O(\E) = \min_{\E'\in\mathcal M}\EMD(\E,\E')$ &   \\
\ref{sec:eventobservables} & Event Shapes~\cite{Brandt:1964sa,Farhi:1977sg,Georgi:1977sf,Larkoski:2014uqa,Stewart:2010tn,isotropytemp} &  &    \\
 \ref{sec:jetobservables} & Jet Shapes~\cite{Ellis:2010rwa,Thaler:2010tr,Thaler:2011gf}  &  &    \\
& &  &  \\ \hline
& &  & \multirow{5}{*}{\raisebox{-4em}{\includegraphics[scale=0.19]{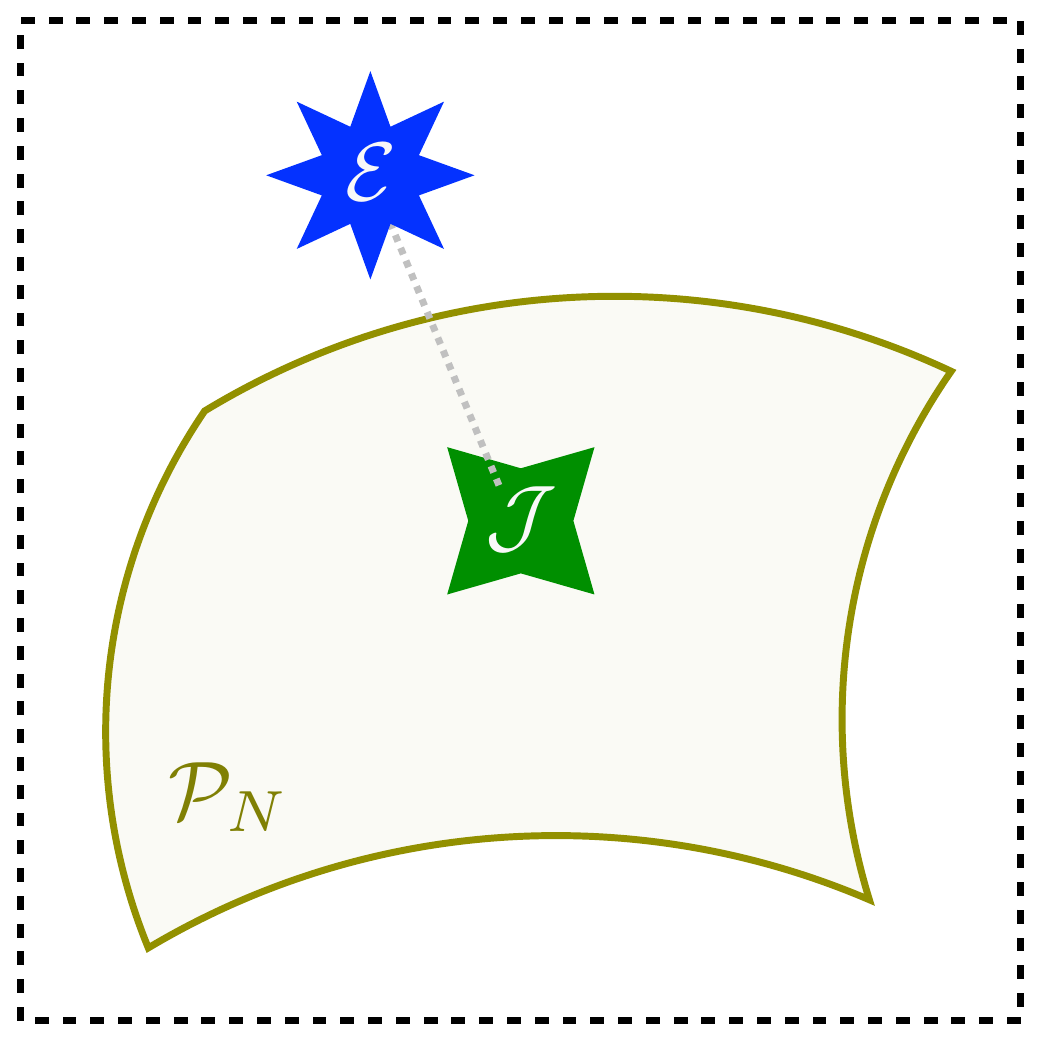}}} \\ 
 \ref{sec:jets} &  {\bf Jets} & $\displaystyle\mathcal J(\E)=\argmin_{\mathcal J\in\mathcal P_N}\,\EMD(\E,\mathcal J)$ &   \\
 \ref{subsec:xcone} &  Cone Finding~\cite{Stewart:2015waa,Thaler:2015xaa}  & & \\
 \ref{sec:seqrec} & Seq. Rec.~\cite{Catani:1993hr,Ellis:1993tq,Bertolini:2013iqa,Salambroadening} &  & \\
& &  & \\ \hline
& &  & \multirow{5}{*}{\raisebox{-4em}{\includegraphics[scale=0.19]{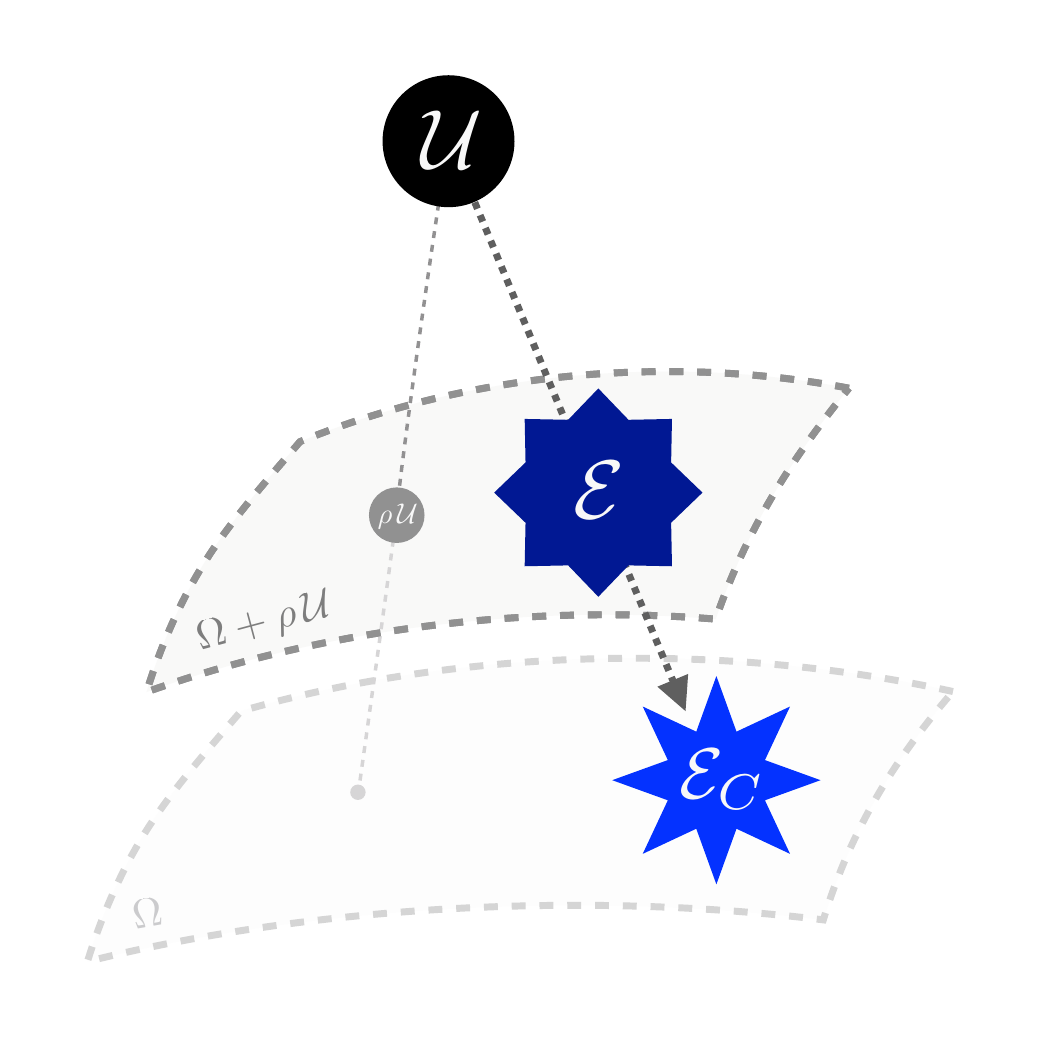}}} \\ 
\ref{sec:pileup} & {\bf Pileup} &  $\mathcal E_C(\mathcal E, \rho)  = \displaystyle\argmin_{\mathcal E' \in \Omega}\,\EMD(\mathcal E,\mathcal E' + \rho\,\mathcal U)$ & \\
 &\cite{Cacciari:2007fd,Cacciari:2008gn,Soyez:2012hv,Berta:2014eza,Bertolini:2014bba,Soyez:2018opl,Berta:2019hnj}  &  & \\
& &  & \\
& &  & \\ \hline
& & & \multirow{3}{*}{\raisebox{-5em}{\includegraphics[scale=0.19]{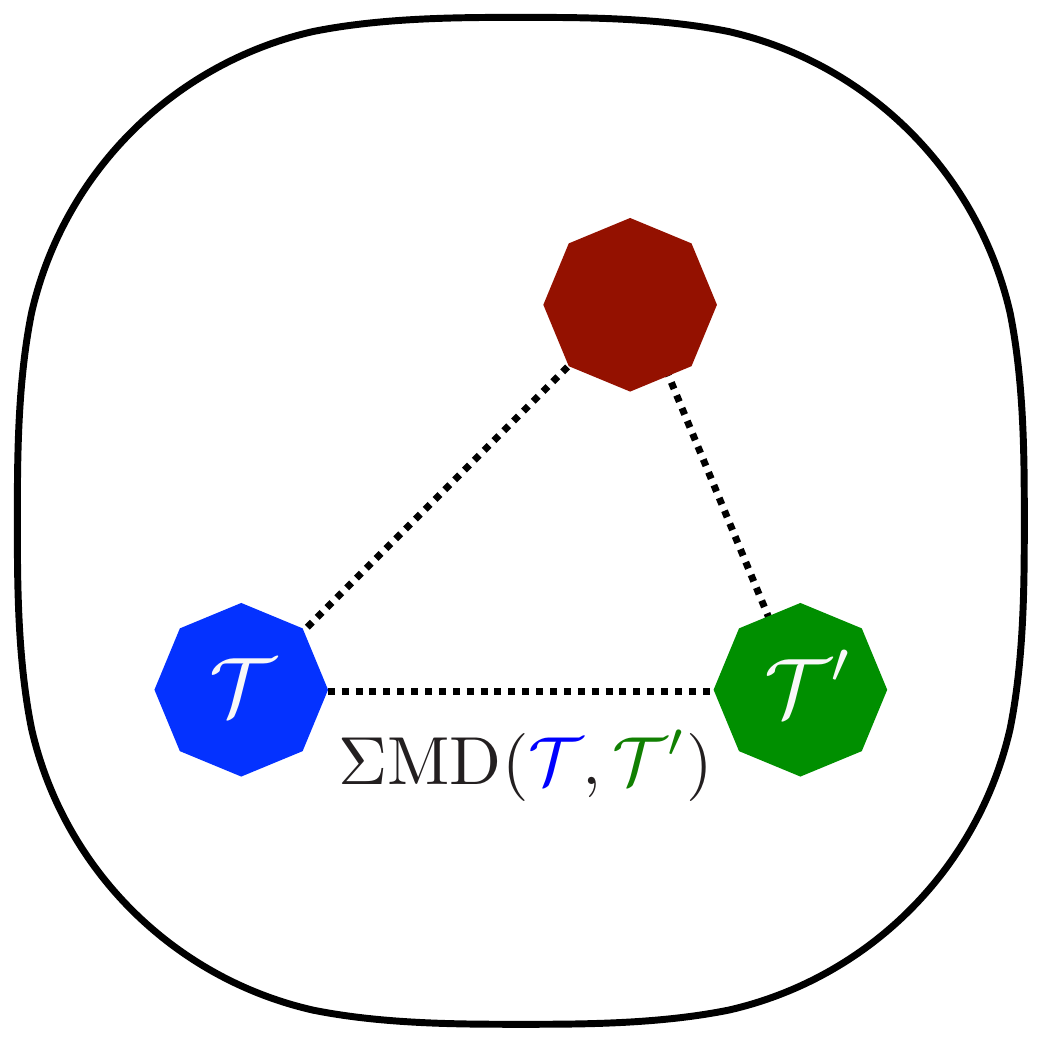}}}  \\
\ref{sec:theory} & {\bf Theory Space} & $\displaystyle\mathcal{T}(\mathcal E) = \sum_{i = 1}^N \sigma_i \, \delta(\mathcal E - \mathcal E_i)$ & \\
& & & \\
& & & \\
\hline\hline
\end{tabular}
\caption{\label{tab:outlinefigs}
Concepts from quantum field theory and collider physics, unified in this chapter as geometric and topological constructions in the space of events.
In \Sec{sec:safety}, IRC safety is identified as continuity in this space.
In \Sec{sec:observables}, many classic collider observables are shown to be the shortest distance between the event and a manifold of events.
In \Sec{sec:jets}, popular jet algorithms are derived by projecting the event onto manifolds of $N$-particle events.
In \Sec{sec:pileup}, common pileup mitigation strategies are cast as transporting away uniform radiation.
In \Sec{sec:theory}, a space of theories is developed using a distance between event distributions.
}
\end{table}
\afterpage{\clearpage}

The key concepts unified in this chapter are outlined and summarized in \Tab{tab:outlinefigs}.
In \Sec{sec:safety}, we discuss observables as functions defined on event radiation patterns and IRC safety as smoothness in the space of energy flows.
Colloquially, the label ``IRC safe'' indicates that an observable should be well-defined and calculable in perturbation theory~\cite{Kinoshita:1962ur,Lee:1964is} due to its robustness to long-distance effects (e.g.~hadronization in the case of QCD).
This ``perturbatively accessible'' IRC safety is traditionally connected to the observable being ``insensitive'' to the addition of low energy particles or collinear splittings of particles~\cite{Sterman:1977wj,Sterman:1978bi,Sterman:1978bj,Sterman:1979uw,sterman1995handbook,Weinberg:1995mt,Ellis:1991qj,Banfi:2004yd}.
Here, we refine the definition of IRC safety and clarify when discontinuities in an observable spoil its perturbative calculability.
Critical to our formulation is the notion of continuity with respect to the metric topology provided by the EMD:
\begin{definition}\label{def:emdcontinuity}
An observable $\O$ is $\EMD$ continuous at an event $\E$ if, for any $\epsilon>0$, there exists a $\delta>0$ such that for all events $\E'$:
\begin{equation}\label{eq:emdcontinuity}
\text{\emph{EMD}}(\E,\E')<\delta \quad \implies\quad |\O(\E) - \O(\E')| < \epsilon.
\end{equation}
\end{definition}
\noindent We argue that IRC safety is $\EMD$ continuity everywhere except a negligible set of events, where a negligible set is one that contains no EMD balls of non-zero radius.
Using the EMD provides a definition of IRC safety that does not refer to particles directly, which circumvents many pathologies of previous definitions.
We argue that observables that are calculable in fixed-order perturbation are exactly those that satisfy a slightly stronger continuity condition known as H\"older continuity~\cite{Ortega2000,Gilbarg2001}, which restricts the types of divergences that can appear in the distribution of an observable~\cite{Sterman:1979uw,Banfi:2004yd}.
Fascinatingly, this framework naturally accommodates Sudakov-safe observables~\cite{Larkoski:2013paa,Larkoski:2014wba,Larkoski:2015lea} as those that are IRC safe but fail to satisfy EMD H\"older continuity on a non-negligible subset of some $\mathcal P_N$ (where a non-negligible subset of $\mathcal P_N$ is one that has measure in $\mathcal P_N$).
This suggests, in agreement with \Ref{Larkoski:2015lea}, that Sudakov safe observables are indeed perturbatively calculable once properly regulated.

In \Sec{sec:observables}, we highlight that many well-known collider observables can be viewed as the distance of closest approach between an event and a manifold of events.
Many of the observables we consider can be exactly cast as:
\begin{equation}
\label{eq:obsdefemd}
\O(\E) = \min_{\E'\in\mathcal M}\EMD_{\beta,R}(\E,\E'),
\end{equation}
for particular choices of the manifold $\mathcal M$ and parameters $\beta$ and $R$.
Observables that have the form of \Eq{eq:obsdefemd} include thrust~\cite{Brandt:1964sa,Farhi:1977sg}, spherocity~\cite{Georgi:1977sf}, (recoil-free) broadening~\cite{Larkoski:2014uqa}, and $N$-jettiness~\cite{Stewart:2010tn}.
Particularly interesting is the event isotropy, recently proposed in \Ref{isotropytemp}, which was inspired by EMD geometry and is directly based on optimal transport.
This geometric framework also includes jet substructure observables such as jet angularities~\cite{Ellis:2010rwa} and  $N$-subjettiness~\cite{Thaler:2010tr,Thaler:2011gf}.

In \Sec{sec:jets}, we demonstrate how jet finding can be phrased in our geometric language.
Intuitively, a jet algorithm ``approximates'' an $M$-particle event with $N<M$ objects called jets.
To phrase this geometrically, we are interested in the point of closest approach in $\mathcal P_N$ to our event, allowing us to define jets as:
\begin{equation}
\label{eq:jetdefemd}
\mathcal J(\E)=\argmin_{\mathcal J\in\mathcal P_N}\,\EMD_{\beta,R}(\E,\mathcal J),
\end{equation}
where $\mathcal J$ is the collection of $N$ jets corresponding to the event $\E$.
Many common jet finding algorithms can be derived in full detail from \Eq{eq:jetdefemd}.
For instance, we show that jets defined by \Eq{eq:jetdefemd} are precisely those found by XCone~\cite{Stewart:2015waa,Thaler:2015xaa}, where $\beta$ is the angular weighting exponent and $R$ is the jet radius.
Also, several popular sequential clustering algorithms and recombination schemes, such as $k_T$ clustering~\cite{Catani:1993hr,Ellis:1993tq} with winner-take-all recombination~\cite{Bertolini:2013iqa,Larkoski:2014uqa,Salambroadening}, can be exactly obtained by iterating \Eq{eq:jetdefemd} with $N = M-1$ for various $\beta$.
It is satisfying that a rich diversity of jet algorithms can be concisely encoded using event geometry, and we find that several new schemes not previously appearing in the literature naturally emerge.

In \Sec{sec:pileup}, we connect several pileup mitigation strategies to optimal transport through the EMD.
There is a long-established relationship between pileup subtraction and geometric concepts~\cite{Cacciari:2007fd,Cacciari:2008gn,Soyez:2012hv,Berta:2014eza,Bertolini:2014bba,Soyez:2018opl,Berta:2019hnj}.
Since pileup is reasonably modeled as uniform contamination in rapidity and azimuth~\cite{Soyez:2018opl}, we phrase pileup subtraction as removing a uniform distribution of radiation from the event using optimal transport.
Intuitively, pileup mitigation finds the event that, when combined with an amount $\rho$ of uniform radiation $\mathcal U$, is closest to the given event:
\begin{equation}\label{eq:pileupemdintro}
\mathcal E_C(\mathcal E, \rho) = \argmin_{\mathcal E' \in \Omega}\,\EMD_\beta(\mathcal E,\mathcal E' + \rho\,\mathcal U),
\end{equation}
yielding the pileup-corrected event $\mathcal E_C$.
Here, $\Omega$ refers to the space of all possible energy flows and $\EMD_\beta$ compares events of equal energy, as described at the beginning of \Sec{sec:observables}.
We demonstrate that Voronoi area subtraction~\cite{Cacciari:2007fd,Cacciari:2008gn} and constituent subtraction~\cite{Berta:2014eza} can be phrased exactly as \Eq{eq:pileupemdintro} in the small-pileup limit.
Generalizing this to the large-pileup limit, we develop two new pileup subtraction schemes, Apollonius subtraction and iterated Voronoi subtraction, and discuss their prospects and potential advantages.

In \Sec{sec:theory}, we introduce a distance between theories: the cross section mover's distance (stylized as $\Sigma$MD, using the typical greek letter for cross section).
Here, a ``theory'' $\mathcal T$ is taken to be a distribution over (or collection of) events $\{\E_i\}$ weighted by cross sections $\{\sigma_i\}$:
\begin{equation}
\label{eq:T_defintro}
\mathcal{T}(\mathcal E) = \sum_{i = 1}^N \sigma_i \, \delta(\mathcal E - \mathcal E_i).
\end{equation}
The $\Sigma$MD is formulated as an optimal transport problem with EMD as the ground metric and cross sections as the weights.
The similarity of the constructions of EMD and $\Sigma$MD are highlighted in \Tab{tab:emdsmdcomp}.
Interestingly, we connect $\Sigma$MD to a recently proposed technique for probing jet modifications due to the quark-gluon plasma by comparing similar sets of events between proton-proton and heavy-ion collisions~\cite{Brewer:2018dfs}.
We also demonstrate that representative events can be identified by clustering using the $\Sigma$MD, analogously to how particles are clustered into jets.
The $\Sigma$MD provides the foundation for a rigorous formulation of ``theory space'', quantifying how different two theories are based on all of their physically observable quantities simultaneously.

\begin{table}[t]
\centering
\begin{tabular}{rcc}
\hline\hline
& Energy Mover's Distance & Cross Section Mover's Distance  \\ \hline \hline
Symbol & \text{EMD} & \text{$\Sigma$MD} \\
Description & Distance between events & Distance between theories \\ \hline
Weight & Particle energies $E_i$ & Event cross sections $\sigma_i$ \\
Ground Metric & Particle distances $\theta_{ij}$ & Event distances $\text{EMD}(\mathcal E_i, \mathcal E_j)$ \\
\hline\hline
\end{tabular}
\caption{\label{tab:emdsmdcomp}
Comparing the constructions of EMD and $\Sigma$MD as optimal transport problems.
Events are treated as energy-weighted angular distributions, whereas theories are treated as cross section-weighted event distributions.
This connection allows us to bootstrap the EMD as a ground metric for the $\Sigma$MD to develop a rigorous notion of theory space.
}
\end{table}

Our conclusions are presented in \Sec{sec:conc}, where we also highlight the interesting and unique interplay between machine learning and the natural sciences in this story.
\clearpage

\section{Infrared and collinear safety: Smoothness in the space of events}
\label{sec:safety}

\begin{figure}[t]
\centering
\includegraphics[width=\columnwidth]{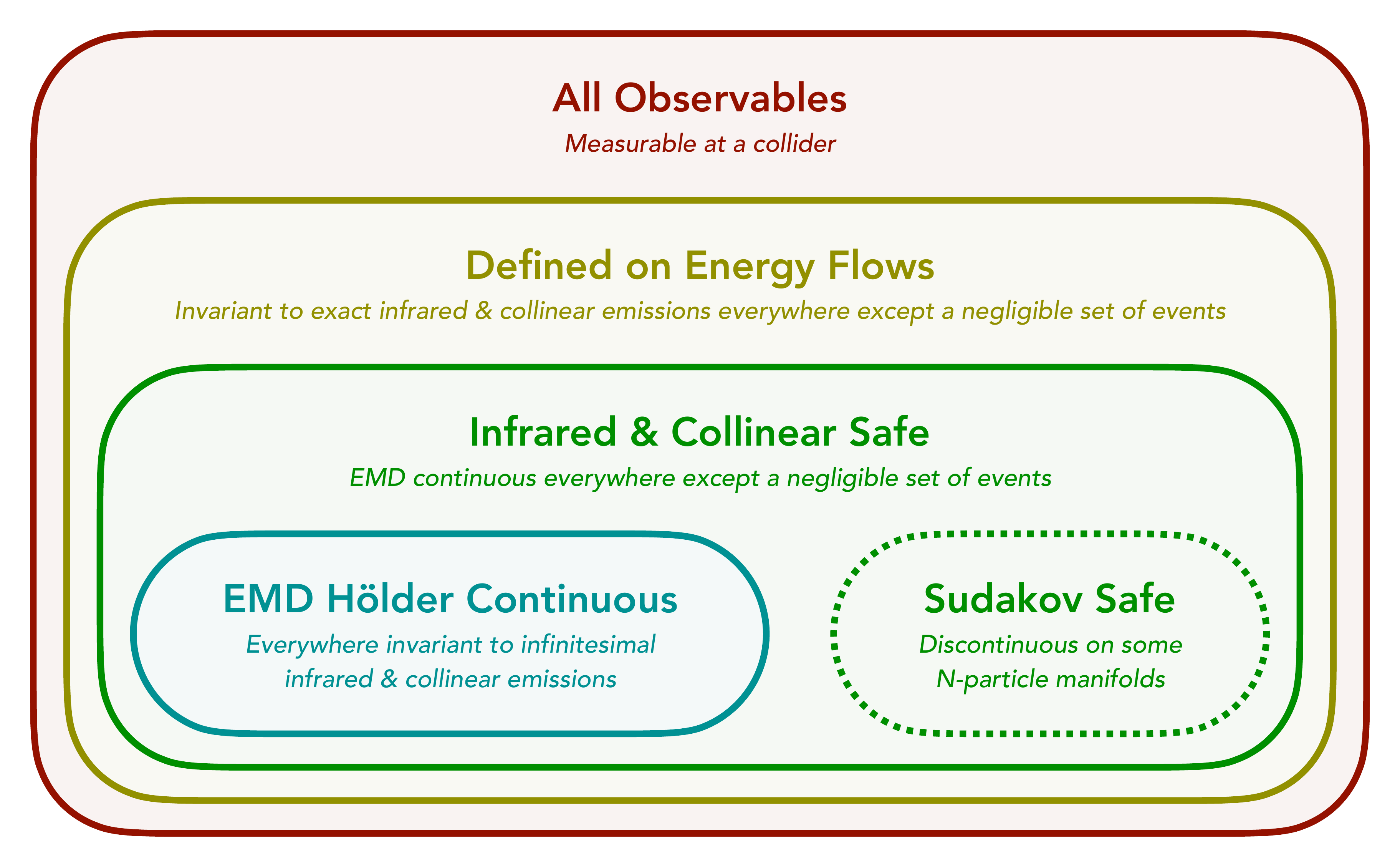}
\caption{\label{fig:obsset}
An illustration of the set of observables partitioned according to various IRC-invariance properties.
Examples of observables in each category are listed in \Tab{tab:exampleobs}.
}
\end{figure}

IRC safety is a central notion in collider physics because it indicates when an observable is robust to long distance effects and hence can be described in perturbation theory~\cite{Kinoshita:1962ur,Lee:1964is} using a combination of fixed-order calculations and resummation.
This insensitivity is frequently connected to the invariance of an observable under certain modifications of the event, namely soft and collinear splittings~\cite{Sterman:1977wj,Sterman:1978bi,Sterman:1978bj,sterman1995handbook,Weinberg:1995mt,Ellis:1991qj,Banfi:2004yd}.

In this section, we review some of the common mathematical statements of this invariance that have appeared in the literature, with the goal of clarifying and categorizing their implications.
We arrive at a simple, unified description of IRC safety and related concepts (including Sudakov safety) as statements about continuity in the space of energy flows.
In \Fig{fig:obsset}, we show the breakdown of observables into broad classes according to our categorization.
A few common examples of each category are given in \Tab{tab:exampleobs}.

\begin{table}[p]
\begin{flushright}

\begin{tabular}{p{0.56\textwidth} p{0.38\textwidth}}
\hline\hline
\cellcolor{table_red_bg} \textbf{\textcolor{table_red}{All Observables}} & \cellcolor{table_red_bg} Comments \\
 \hline
\cellcolor{table_red_bg} Multiplicity $\left(\sum_i 1\right)$ & \cellcolor{table_red_bg}  IR unsafe and C unsafe\\
\cellcolor{table_red_bg} Momentum Dispersion~\cite{CMS-PAS-JME-13-002} $\left(\sum_i E_i^2 \right)$ & \cellcolor{table_red_bg}  IR safe but C unsafe\\ 
\cellcolor{table_red_bg} Sphericity Tensor~\cite{Bjorken:1969wi} $\left(\sum_i p_i^\mu p_i^\nu \right)$ & \cellcolor{table_red_bg}  IR safe but C unsafe\\
\cellcolor{table_red_bg} Number of Non-Zero Calorimeter Deposits & \cellcolor{table_red_bg}  C safe but IR unsafe \\
  \hline \hline
\end{tabular}

\begin{tabular}{p{0.51\textwidth} p{0.38\textwidth}}
 \multicolumn{2}{l}{\cellcolor{table_yellow_bg} \textbf{\textcolor{table_yellow}{Defined on Energy Flows}}} \\
\hline
\cellcolor{table_yellow_bg} Pseudo-Multiplicity ($\min \{N\, |\, \mathcal T_N = 0\}$)& \cellcolor{table_yellow_bg} Robust to exact IR or C emissions\\
\hline  \hline
\end{tabular}

\begin{tabular}{p{0.46\textwidth} p{0.38\textwidth}}
 \multicolumn{2}{l}{\cellcolor{table_green_bg} \textbf{\textcolor{table_green}{Infrared \& Collinear Safe}}} \\
 \hline
\cellcolor{table_green_bg} Jet Energy $\left(\sum_i E_i \right)$ &\cellcolor{table_green_bg}  Disc.\ at jet boundary\\
\cellcolor{table_green_bg} Heavy Jet Mass~\cite{Clavelli:1981yh} &\cellcolor{table_green_bg}  Disc.\ at hemisphere boundary \\
\cellcolor{table_green_bg} Soft-Dropped Jet Mass~\cite{Dasgupta:2013ihk,Larkoski:2014wba} & \cellcolor{table_green_bg} Disc.\ at grooming threshold\\
\cellcolor{table_green_bg} Calorimeter Activity~\cite{Pumplin:1991kc} ($N_{95}$) & \cellcolor{table_green_bg} Disc.\ at cell boundary\\
  \hline
  \hline
   \multicolumn{2}{l}{\cellcolor{table_green_bg} \textit{\textcolor{table_green}{Sudakov Safe}}} \\
  \hline
\cellcolor{table_green_bg} Groomed Momentum Fraction~\cite{Larkoski:2015lea} ($z_g$) & \cellcolor{table_green_bg} Disc.\ on $1$-particle manifold\\
\cellcolor{table_green_bg} Jet Angularity Ratios~\cite{Larkoski:2013paa} & \cellcolor{table_green_bg} Disc.\ on 1-particle manifold \\
\cellcolor{table_green_bg} $N$-subjettiness Ratios~\cite{Thaler:2010tr,Thaler:2011gf} ($\tau_{N+1} / \tau_{N}$) & \cellcolor{table_green_bg} Disc.\ on $N$-particle manifold\\
\cellcolor{table_green_bg} $V$ parameter~\cite{Banfi:2004yd} (\Eq{eq:Vobs}) & \cellcolor{table_green_bg} H-disc.\ on 3-particle manifold \\
 \hline \hline
\end{tabular}

\begin{tabular}{p{0.405\textwidth} p{0.385\textwidth}}
 \multicolumn{2}{l}{\cellcolor{table_teal_bg} \textbf{\textcolor{table_teal}{EMD H\"{o}lder Continuous Everywhere}}} \\
\hline
\cellcolor{table_teal_bg} Thrust~\cite{Brandt:1964sa,Farhi:1977sg} & \cellcolor{table_teal_bg} \\
\cellcolor{table_teal_bg} Spherocity~\cite{Georgi:1977sf} & \cellcolor{table_teal_bg}  \\ 
\cellcolor{table_teal_bg} Angularities~\cite{Berger:2003iw} & \cellcolor{table_teal_bg}  \\ 
\cellcolor{table_teal_bg} $N$-jettiness~\cite{Stewart:2010tn} $\left(\mathcal T_N\right)$ & \cellcolor{table_teal_bg}  \\
\cellcolor{table_teal_bg} $C$ parameter~\cite{Parisi:1978eg,Donoghue:1979vi,Ellis:1980wv,Catani:1997xc} & Resumming beneficial at $C = \frac34$ \cellcolor{table_teal_bg} \\
\cellcolor{table_teal_bg} Linear Sphericity~\cite{Donoghue:1979vi} $\left(\sum_i E_i n_i^\mu n_i^\nu \right)$ & \cellcolor{table_teal_bg}  \\
\cellcolor{table_teal_bg} Energy Correlators~\cite{Banfi:2004yd,Larkoski:2013eya,Larkoski:2014gra,Moult:2016cvt} & \cellcolor{table_teal_bg}  \\
\cellcolor{table_teal_bg} Energy Flow Polynomials~\cite{Komiske:2017aww,Komiske:2019asc} & \cellcolor{table_teal_bg}  \\
\hline\hline
\end{tabular}
\end{flushright}
\caption{
Examples of well-known collider observables, along with their classification according to \Fig{fig:obsset}.
The observables satisfy the conditions of all bold-faced categories above them in the table.
Note that via our classification, Sudakov safe observables are IRC safe, since the discontinuities appear on $N$-particle manifolds which are negligible sets in the full space.
}
\label{tab:exampleobs}
\end{table}

\subsection{Review of infrared and collinear invariance}
\label{sec:invariance}

The most straightforward statement of IRC invariance is that an observable $\mathcal O$ is unchanged under the addition of an exactly zero energy particle or an exactly collinear splitting~\cite{sterman1995handbook}:
\begin{align}
\label{eq:exactirsafety}\text{Exact Infrared Invariance:}&\quad \mathcal O(p_1^\mu, \ldots,p_M^\mu) =  \mathcal O(0 p_0^\mu, p_1^\mu,  \ldots, p_M^\mu),\\
\label{eq:exactcsafety}\text{Exact Collinear Invariance:}&\quad \mathcal O(p_1^\mu,  \ldots, p_M^\mu) = \mathcal O(\lambda p_1^\mu, (1-\lambda) p_1^\mu,\ldots, p_M^\mu),
\end{align}
for any soft momentum $p_0^\mu$ and collinear splitting fraction $\lambda\in[0,1]$.
These conditions correctly rule out some observables from having a perturbative description, such as the number of particles in an event, which change by a finite amount under any splitting.
Exact IRC invariance, however, is not sufficiently restrictive to guarantee perturbative calculability of an observable.
For instance, the number of calorimeter cells with non-zero energy is safe according to \Eqs{eq:exactirsafety}{eq:exactcsafety}, though it is highly sensitive to arbitrarily low-energy effects~\cite{Pumplin:1991kc}.
Similarly, the pseudo-multiplicity, which we define as the smallest $N$ that yields zero $N$-jettiness (see \Sec{subsec:nsubjettiness} below), is unchanged by exact infrared and collinear emissions,%
\footnote{We thank Andrew Larkoski for discussions related to this point.}
but is highly sensitive to any emissions at finite energy or angle.

Another common statement of IRC invariance refines the concept by invoking the limit as particles become soft or collinear~\cite{Sterman:1978bi,Sterman:1978bj,Weinberg:1995mt,Banfi:2004yd}:
\begin{align}
\label{eq:nearirsafety}\text{Near Infrared Invariance:}&\quad \mathcal O(p_1^\mu, \ldots,p_M^\mu) =  \lim_{\epsilon\to0}\mathcal O(\epsilon p_0^\mu, p_1^\mu,  \ldots, p_M^\mu),\\
\label{eq:nearcsafety}\text{Near Collinear Invariance:}&\quad \mathcal O(p_1^\mu, \ldots, p_M^\mu) = \lim_{p_0^\mu\to p_1^\mu}\mathcal O(\lambda p_{0}^\mu, (1-\lambda) p_1^\mu, \ldots, p_M^\mu).
\end{align}
One issue with this definition is that many reasonable observables that have hard boundaries in phase space are excluded, such as jet kinematics due to sensitivity to particles on a jet boundary.
Hybrid definitions mixing exact and near IRC invariance also appear in the literature but they suffer from the same pathologies.
Another issue is that \Eqs{eq:nearirsafety}{eq:nearcsafety} (and also \Eqs{eq:exactirsafety}{eq:exactcsafety}) do not guarantee insensitivity to multiple soft or collinear splittings.

Several of these issues were previously identified in \Ref{Banfi:2004yd}, which utilized a limit-based statement of IRC invariance, recognized the importance of allowing for multiple soft and collinear emissions, and allowed for exceptions on sets of measure zero.
Despite noting that a rigorous mathematical definition of IRC safety would be desirable, \Ref{Banfi:2004yd} concluded that formulating one without pathologies was challenging and that a satisfactory definition had not yet been obtained.
Here, we explore how the geometric picture provided by the EMD yields a natural and elegant way to phrase IRC safety and to control these various subtleties.
This builds on the notion of ``$C$-continuity'' advocated for in \Refs{Tkachov:1995kk,Tkachov:1999py}, which argue that the perturbative calculability of $C$-continuous observables can be seen by relating the energy flow to the energy-momentum tensor of the underlying quantum field theory.

\subsection{Infrared and collinear safety in the space of events}
\label{sec:ircsafety}

The EMD provides a natural language for understanding IRC-safe observables as continuous functions on the space of events.
To make this precise, we first must understand which observables are well-defined functions of the energy flow.

We can show that observables that are defined on \emph{all} energy flows are precisely those which have exact IRC invariance according to \Eqs{eq:exactirsafety}{eq:exactcsafety}.
First, an observable is well defined on the space of energy flows if its value is the same on events that are zero EMD apart. 
The following lemma establishes the remaining connection to exact IRC invariance.
\begin{lemma}
Two events are zero EMD apart if and only if they differ by zero energy emissions or exactly collinear splittings.
\end{lemma}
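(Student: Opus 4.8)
The plan is to reduce the biconditional to a statement about equality of energy flows as measures on the space of directions, and then to dispose of the remaining combinatorial ambiguity in how a given energy flow is represented by particles. The key observation driving everything is that both terms in \Eq{eq:emd} are manifestly non-negative, so $\EMD_{\beta,R}(\E,\E') = 0$ forces each of them to vanish separately.

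For the ``if'' direction I would argue directly. Adding a zero-energy particle or performing an exactly collinear splitting $p^\mu \to \lambda p^\mu,(1-\lambda)p^\mu$ leaves the energy flow $\E(\hat n)=\sum_i E_i\,\delta(\hat n - \hat n_i)$ unchanged as a distribution: the zero-energy particle contributes nothing, and the collinear pair deposits energy $\lambda E + (1-\lambda)E = E$ at the same direction. Two identical energy flows then have vanishing EMD, since the trivial transport plan incurs zero angular cost and the energy-mismatch term vanishes because the total energies agree.

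For the ``only if'' direction I would first extract the two consequences of $\EMD=0$. Vanishing of the second term gives $\sum_i E_i = \sum_j E_j'$, so the total energies agree and the constraint in \Eq{eq:emdconstraints} tightens to the balanced marginals $\sum_i f_{ij} = E_j'$ and $\sum_j f_{ij} = E_i$. Vanishing of the first term, together with $\beta,R>0$, forces $f_{ij}>0 \implies \theta_{ij}=0$, so energy is transported only between coincident directions. From this I would conclude $\E=\E'$ as measures: the weight $\E$ places at a direction $\hat n$ is $\sum_{i:\hat n_i=\hat n}\sum_j f_{ij}$, and since $f_{ij}$ vanishes unless $\hat n_j=\hat n_i$, this equals the double sum of $f_{ij}$ over pairs both pointing along $\hat n$, which is precisely the weight $\E'$ places at $\hat n$. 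Finally I would invoke a canonical-form argument: discarding zero-energy particles and merging all particles sharing a direction into one carrying their summed energy (both being the allowed operations run in reverse) maps any configuration to a representative determined solely by its energy flow, so equal energy flows yield identical representatives and the two original configurations are connected through it by zero-energy emissions and collinear splittings.

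The hard part will be the equality-of-measures step in the ``only if'' direction, where I must be careful that a zero-cost plan genuinely cannot move energy between distinct directions and therefore pins down the energy deposited at each $\hat n$ exactly --- and that this holds for \emph{any} optimal plan (which need not be unique, though one exists since the feasible plans form a compact polytope) and for all $\beta,R>0$, not merely the proper-metric case. Once the measures are shown equal, the canonical-form bookkeeping is routine, and the ``if'' direction is immediate.
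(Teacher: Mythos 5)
Your proof is correct, and the reverse direction takes a genuinely different route from the paper's. The paper gets from $\EMD(\E,\E')=0$ to $\E=\E'$ in one step by invoking the identity of indiscernibles for the EMD as a proper metric --- an external fact imported from the Wasserstein literature and valid only in the regime where the EMD actually is a metric ($\beta=1$ here, or after taking the $1/\beta$ power, with $2R$ at least the diameter of the ground space). You instead derive the equality of energy flows directly from the structure of the optimal transport problem: non-negativity of both terms in \Eq{eq:emd} forces each to vanish, the vanishing energy-mismatch term tightens the constraints in \Eq{eq:emdconstraints} to balanced marginals, the vanishing transport term forces $f_{ij}>0\implies\theta_{ij}=0$, and chasing the marginals through the support condition pins down the energy deposited at each direction. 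This is more self-contained and covers all $\beta,R>0$, including cases where the EMD fails the triangle inequality, at the cost of a few more lines. The final combinatorial step --- passing from equality of energy flows to a chain of zero-energy emissions and collinear splittings --- is the same in substance; your canonical-representative phrasing is a clean way to organize what the paper does direction by direction. The forward direction is identical in both.
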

\begin{proof}
Adding a zero energy particle or a collinear splitting to an event manifestly does zero energy moving, proving the forward direction.
To prove the reverse direction, suppose that two events are zero EMD apart and take their energy flows to be:
\begin{equation}
\mathcal E(\hat n) = \sum_{i=1}^M E_i\,\delta(\hat n - \hat n_i),\quad\quad \mathcal E'(\hat n) = \sum_{j=1}^{M'} E'_j \,\delta(\hat n - \hat n'_j).
\end{equation}
Since the EMD is a proper metric between energy flows, the identity of indiscernibles says that $\EMD(\E(\hat n),\E'(\hat n))=0$ implies $\E(\hat n)=\E'(\hat n)$.
For any direction $\hat n$ with at least one particle, either the sums of energies in that direction are equal between the two events or the particle has zero energy.
In the first case, the events differ by exactly collinear splittings in that direction, and in the second case they differ by zero energy particles.
\end{proof}
By this lemma we see that exact IRC invariance ensures that we can write $\mathcal O(\mathcal E)$ rather than $\mathcal O(p_1^\mu, \cdots, p_M^\mu)$ for an observable.
As discussed in \Sec{sec:invariance}, exact IRC invariance is insufficient to guarantee IRC safety and we must formulate a stronger condition phrased in the geometric language of the space of events.

We propose that IRC safety is achieved by requiring an observable to be EMD continuous, in the sense of Definition~\ref{def:emdcontinuity}, except possibly on a negligible set of events.
We define a negligible set to be one that contains no EMD ball.
The (open) \EMD ball $B_r(\E)$ around an event $\mathcal E$ is defined as all events within an EMD of $r>0$:
\begin{equation}
B_r(\E)=\left\{\E'\in\Omega\,\Big|\,\EMD(\E,\E')<r\right\},
\end{equation}
where $\Omega$ is the space of all energy flows.
Implicit in the above requirement is that an observable must be well defined on energy flows.
Concretely, we state IRC safety as the following:
\begin{framed}
\begin{ircsafety}
An observable is IRC safe if it is EMD continuous for all energy flows, except potentially on a negligible set of events.
\end{ircsafety}
\end{framed}
\noindent This new formulation of IRC safety has many aspects of existing ideas of safety discussed in \Sec{sec:ircsafety} wrapped into a concise and rigorous statement.
It makes mathematically precise the intuitive notion that small perturbations in the energy flow of the event give rise to small perturbations in the observable.
This notion of EMD continuity for IRC safe observables is illustrated in \Fig{fig:space_ircsafe}.
The exception for negligible sets allows observables to be discontinuous in a way that affords them the opportunity to depend sharply on phase space but does not spoil their calculability.
Calculability is a statement about integrability, and removing a negligible set of points from an integral cannot change its value.

\begin{figure}[t]
\centering
\includegraphics[scale=0.75]{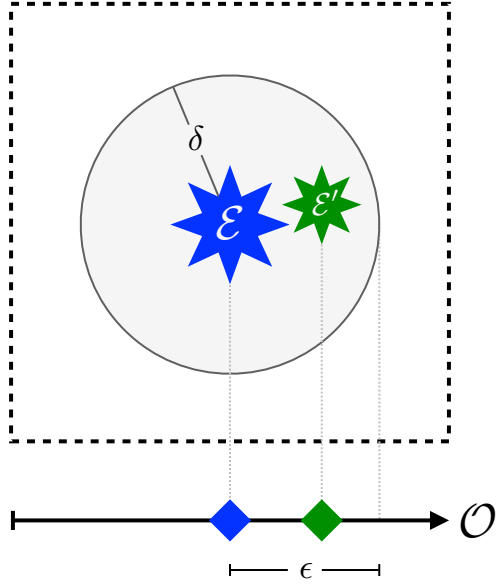}
\caption{\label{fig:space_ircsafe} An illustration of IRC safety of an observable as continuity in the space of events.
As formulated in \Eq{eq:emdcontinuity}, small perturbations to the event, as measured by EMD, yield small changes in the observable value.}
\end{figure}

To get some familiarity with this definition, consider additive IRC-safe observables, which are ubiquitous structures~\cite{Komiske:2019asc} that take the form $\mathcal O(\mathcal E) = \sum_{i=1}^M E_i f(\hat n_i)$ for an angular function $f$.
One can prove that they are Lipschitz continuous in the space of events assuming $f$ is Lipschitz continuous~\cite{Komiske:2019fks}, and therefore they naturally satisfy continuity according to the EMD.
As a generalization of additive observables, energy flow networks~\cite{Komiske:2018cqr} are a machine learning architecture that can approximate any IRC-safe observable through an additive IRC-safe latent space.
As long as the activation functions are continuous almost everywhere, then the final energy flow network output will be IRC safe.

There are also observables that fail the criteria of \Eqs{eq:nearirsafety}{eq:nearcsafety} for small sets of events but are safe according to our definition and are indeed calculable.
The energy of a jet is a simple example where emissions on the jet boundary result in discontinuous behavior of the observable, but this discontinuity is integrable in fixed-order perturbation theory.
A more complicated example is the invariant mass after soft drop grooming~\cite{Larkoski:2014wba,Dasgupta:2013ihk}: for events on the threshold of having an emission dropped, tiny perturbations can give rise to discontinuously large changes in the observable.
This issue, however, only occurs on a negligible set, satisfying our definition of safety and avoiding serious analytic pathologies~\cite{Frye:2016okc,Frye:2016aiz,Marzani:2017mva,Marzani:2017kqd}.
Piecewise continuity does, however, complicate analyzing the nonperturbative corrections~\cite{Hoang:2019ceu} and detector response~\cite{ATL-PHYS-PUB-2019-027,Aad:2019vyi} of soft-dropped jet mass.

Our definition also includes observables that would sometimes not be called IRC safe since they do not have a well defined Taylor expansion in the small parameter of the theory (e.g.\ $\alpha_s$ for QCD).
These observables are nevertheless perturbatively calculable, though methods beyond fixed-order perturbation theory may be required.
The next subsections are devoted to exploring which IRC-safe observables are calculable in fixed-order perturbation theory and which require additional techniques.

\subsection{Calculability in fixed-order perturbation theory}
\label{sec:fopt}

IRC safety has long been connected with the notion of calculability order-by-order in perturbative quantum field theory.
However, IRC safety according to our Definition~\ref{def:emdcontinuity} includes observables that are not calculable in fixed-order perturbation theory, which we explore further in the next subsection.
Here, building off the work in \Refs{Sterman:1979uw,Banfi:2004yd}, we formulate the stronger notion of EMD H\"older continuity~\cite{Ortega2000,Gilbarg2001} and argue that it is the appropriate condition to guarantee order-by-order perturbative control:
\begin{definition}\label{def:emdholdercontinuity}
An observable $\mathcal O$ is EMD H\"older continuous with exponent $\alpha\in(0,1]$ at an event $\E$ if there exists $K>0$ such that for all $\E'$ in some neighborhood of $\E$:
\begin{equation}\label{eq:Kalpha}
|\mathcal O(\E)-\mathcal O(\E')|\le K\,\text{\emph{EMD}}(\E,\E')^\alpha.
\end{equation}
\end{definition}
\noindent Note that the case of $\alpha=1$ corresponds to Lipschitz continuity at $\E$, and in general we have containment such that H\"older continuity with exponent $\alpha$ implies H\"older continuity with exponent $\beta$ if $\beta\le\alpha$.
EMD H\"older continuity effectively specifies that the $\delta$ in Definition~\ref{def:emdcontinuity} is no smaller than $\epsilon$ to some power (times a constant) for all points in a neighborhood of $\E$, and thus it is a stronger requirement than plain EMD continuity.

To connect to fixed-order perturbation theory, we state the following conjecture:
\begin{framed}
\begin{conjecture}\label{conj:fopt}
An observable is calculable order-by-order in perturbation theory if it is EMD H\"older continuous on all but a negligible set of events in each $N$-particle manifold. 
\end{conjecture}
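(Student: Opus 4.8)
The plan is to reduce order-by-order calculability to the integrability of the observable-weighted squared matrix element over each fixed-multiplicity phase space, and then to show that EMD H\"older continuity supplies exactly the power-law suppression needed to tame the infrared singularities of that integrand. At order $n$ in the coupling, the distribution of $\O$ is a finite sum of contributions of the schematic form $\int_{\mathcal P_N} d\Pi_N\,|\mathcal M_N|^2\,\O(\E_N)$, combining real emissions with their virtual counterparts at lower multiplicity. The observable enters only through its value $\O(\E_N)$ on the energy flow of each phase-space point, which is well defined since H\"older continuity forces equal values on zero-EMD configurations and hence, by the preceding Lemma, exact IRC invariance. Calculability then means precisely that each such integral is finite after the Kinoshita--Lee--Nauenberg cancellation of real and virtual divergences.

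First I would locate the singularities of the integrand. By the factorization of QCD amplitudes, $|\mathcal M_N|^2$ develops soft and collinear divergences exactly on the boundaries where $\mathcal P_N$ degenerates onto an embedded lower manifold $\mathcal P_{N-1}\subset\mathcal P_N$ --- the very soft and collinear limits along which $\EMD(\E_N,\E_{N-1})\to 0$. Parametrizing the approach by a soft energy fraction $z$ or a collinear angle $\theta$, the eikonal and splitting kernels produce the familiar logarithmic singularities $dz/z$ and $d\theta/\theta$ in the phase-space measure.

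Next I would convert H\"older continuity into a suppression factor. After the KLN subtraction, the virtual piece removes the value $\O(\E_{N-1})$ on the lower manifold, so in each singular region the integrand is weighted not by $\O(\E_N)$ but by the mismatch $\O(\E_N)-\O(\E_{N-1})$. The bound \Eq{eq:Kalpha} of Definition~\ref{def:emdholdercontinuity} controls this by $K\,\EMD(\E_N,\E_{N-1})^\alpha$. Reading the EMD scaling near the boundary off of \Eq{eq:emd} gives $\EMD\sim z$ in the soft limit and $\EMD\sim(\theta/R)^\beta$ in the collinear limit, so the mismatch is suppressed by $z^\alpha$ or $\theta^{\alpha\beta}$ respectively. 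The resulting integrals $\int dz\,z^{\alpha-1}$ and $\int d\theta\,\theta^{\alpha\beta-1}$ then converge for any $\alpha>0$, showing that each single-unresolved-emission region is integrable. Because the exceptional set on which H\"older continuity may fail is negligible --- of measure zero within $\mathcal P_N$ --- its removal cannot alter the value of the integral, so it may be discarded throughout.

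The main obstacle will be the region of overlapping and nested unresolved emissions, where several particles become simultaneously soft and/or collinear and the phase space approaches $\mathcal P_{N'}$ for multiple $N'<N$ along competing rates. Here one must show that a single H\"older exponent furnishes uniform suppression across all such sub-limits at once, and that the multiple singularities of $|\mathcal M_N|^2$ combine to no worse than products of the eikonal and collinear factors already handled, via an induction over decreasing multiplicity. Making the matching between the EMD scaling and the full all-region singularity degree rigorous, while keeping the negligible exceptional sets under control in every sub-limit simultaneously, is precisely the step that resists a complete argument, which is why we state it as a conjecture. This also sharpens the contrast with Sudakov-safe observables: their discontinuities sit on a \emph{non}-negligible subset of some $\mathcal P_N$, so the suppression fails on a set of positive measure and the fixed-order integral genuinely diverges, consistent with their requiring resummation.
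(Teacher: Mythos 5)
The statement you are addressing is presented in the paper as a conjecture, and the paper does not prove it: its ``proof'' consists of showing that EMD H\"older continuity on all but a negligible subset of each $\mathcal P_N$ is equivalent to two previously proposed calculability criteria --- the conditions of \Ref{Sterman:1979uw} in \Eqs{eq:stermanenergycond}{eq:stermanthetacond}, recast via $\EMD(\E,\E')\propto E_i$ and $\EMD(\E,\E')\propto\theta_{ij}$ in the soft and collinear limits, and ``Version 2'' of the IRC-safety definition of \Ref{Banfi:2004yd}, whose $\epsilon$--$\delta$ power relation is exactly the H\"older exponent. The paper then delegates the actual link to fixed-order calculability to those references and explicitly flags residual subtleties as the reason for the conjectural status.

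Your route is genuinely different and complementary: rather than establishing equivalence with existing criteria, you attempt the underlying physical argument directly --- KLN pairing of real and virtual contributions so that the singular regions are weighted by the mismatch $\O(\E_N)-\O(\E_{N-1})$, bounding that mismatch by $K\,\EMD^\alpha$, and reading off the EMD scaling near the boundary of $\mathcal P_N$ to convert the logarithmic $dz/z$ and $d\theta/\theta$ singularities into convergent $\int dz\,z^{\alpha-1}$ and $\int d\theta\,\theta^{\alpha\beta-1}$ integrals. This is essentially a modernized, self-contained reconstruction of the reasoning behind Sterman's original criteria, and your soft/collinear EMD scalings agree with the ones the paper quotes. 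What your approach buys is an explicit mechanism for \emph{why} H\"older continuity is the right condition; what the paper's approach buys is a clean unification statement with rigor delegated to the literature. Crucially, you correctly identify the step that cannot yet be completed --- uniform control of nested and overlapping multi-unresolved-emission regions with a single H\"older exponent, together with tracking the exceptional sets through every sub-limit --- and this is precisely the class of ``additional subtleties'' the paper cites for leaving the statement as a conjecture. Two minor cautions: first, ``removing a measure-zero set cannot change the integral'' requires that the integrand remain integrable near that set, which the H\"older bound on the complement does not by itself guarantee (a gap shared with the paper's own remark); second, your treatment of the soft and collinear logarithms as separate single integrals glosses over the double-singular soft-collinear region, though there the EMD scales as $z(\theta/R)^\beta$ and the combined suppression still suffices, so nothing fails. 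Your closing contrast with Sudakov-safe observables matches the paper's discussion exactly.
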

\end{framed}
\noindent This relation phrases the ideas of \Ref{Sterman:1979uw} and ``Version 2'' of the IRC safety definition of \Ref{Banfi:2004yd} in our geometric language via the EMD.
While these criteria were originally formulated for the calculability of moments of an observable, they appear to also extend to the calculability of distributions of observables~\cite{Sterman:2006uk}.

It is possible to demonstrate a precise equivalence between our Conjecture~\ref{conj:fopt} and the following criteria of \Ref{Sterman:1979uw} regarding when the average value of an observable $\O$ is calculable in fixed-order perturbation theory:
\begin{equation}
\label{eq:stermanenergycond}
\lim_{|\vec p_i|\to0}\frac{\O(\vec p_1,\ldots,\vec p_i,\ldots,\vec p_M)-\O(\vec p_1,\ldots,\vec p_{i-1},\vec p_{i+1},\ldots,\vec p_M)}{|\vec p_i|^a}=0,
\end{equation}
\begin{equation}
\label{eq:stermanthetacond}
\lim_{\theta_{ij}\to0}\frac{\O(\vec p_1,\ldots,\vec p_i,\ldots,\vec p_j,\ldots)-\O(\vec p_1,\ldots,\vec p_i+\vec p_j,\ldots,\vec p_{j-1},\vec p_{j+1},\ldots)}{\theta_{ij}^b}=0,
\end{equation}
where the powers $a$ and $b$ are positive and the choices of $i$ and $j$ are arbitrary.
Here, \Eq{eq:stermanenergycond} is a statement of H\"older continuity in the energy of particle $i$, which implies ordinary soft safety.
Similarly, \Eq{eq:stermanthetacond} is a statement of H\"older continuity in the angular distance between particles $i$ and $j$, which implies ordinary collinear safety.
In these soft and collinear limits, $\EMD(\mathcal E, \mathcal E') \propto E_i$ and $\EMD(\mathcal E, \mathcal E') \propto \theta_{ij}$ respectively, and so \Eqs{eq:stermanenergycond}{eq:stermanthetacond} can be phrased compactly as:
\begin{equation}
\lim_{\mathcal E' \to \mathcal E} \frac{\mathcal O(\mathcal E) - \mathcal O(\mathcal E')}{\EMD(\mathcal E, \mathcal E')^c} = 0.
\end{equation}
for some positive exponent $c$.
This is equivalent to the H{\"{o}}lder continuity of the observable $\mathcal O$ at $\E$ with some exponent $\alpha \ge c$, connecting the formulation of \Ref{Sterman:1979uw} to our conjecture.

Our Conjecture~\ref{conj:fopt} also nicely connects to ``Version 2'' of the IRC safety definition in \Ref{Banfi:2004yd}, which we restate here with a suggestive relabeling of the original notation.
The criteria for fixed-order calculability of an observable in \Ref{Banfi:2004yd} are as follows:
\begin{quote}
\Ref{Banfi:2004yd}: Given almost any fixed set of particles and any value $n$, then for any $\epsilon>0$, however small, there should exist a $\delta>0$ such that producing $n$ extra soft or collinear emissions, each emission being at a distance of no more than $\delta$ from the nearest particle, then the value of the observable does not change by more than $\epsilon$. Furthermore, there should exist a positive power $c$ such that for small $\epsilon$, $\delta^c$ can always be taken greater than $\epsilon$.
\end{quote}
By equipping the space of events with these topological and geometric structures via EMD, our language provides a natural language to sharply mathematically formulate this discussion.
The first sentence can be encoded as EMD continuity of the observable on all but a negligible set of events.
The power relation between the $\epsilon$ and $\delta$ parameters is precisely captured by EMD H\"{o}lder continuity with some exponent $\alpha > c$, connecting to our Conjecture~\ref{conj:fopt}.

A variety of observables are considered in \Ref{Banfi:2004yd} at the boundary of perturbative calculability, which helpfully illustrate the various requirements in their definition.\footnote{We thank Gavin Salam for discussions related to this point.}
An observable that is useful to consider is:
\begin{equation}\label{eq:Vobs}
V(\mathcal E) = \mathcal T_2(\mathcal E) \left(1 + \frac{1}{\ln E(\mathcal E)/{\mathcal T_3(\mathcal E)}}\right),
\end{equation}
where $\mathcal T_N$ are $N$-jettiness observables~\cite{Stewart:2010tn} discussed further in \Sec{sec:njettiness}, and $E$ is the total energy of the event.
We will refer to this observable as the ``$V$ parameter''.
The double logarithmic structure of $\mathcal T_3$ spoils the integrability of $V$ at fixed order due to its behavior as $\mathcal T_3$ goes to zero~\cite{Banfi:2004yd}, which occurs on the three-particle manifold $\mathcal P_3$.
Nonetheless, this observable can be calculated using techniques beyond fixed-order perturbation theory, such as the Sudakov safety approach discussed in the next section.

The relation between our formalism and fixed-order perturbative calculability is phrased as a conjecture since additional subtleties or nuances about this type of calculability may emerge with future research.
Nonetheless, it is very satisfying that our geometric language provides an efficient encapsulation and unification of the existing formulations of \Refs{Sterman:1979uw,Banfi:2004yd}.
In future work, it would be interesting to find a geometric phrasing of recursive IRC safety~\cite{Banfi:2004yd}, which is a more restrictive condition than EMD H\"older continuity and relevant for understanding factorization and resummation.
It would also be interesting to find a geometric phrasing of unsafe observables that can be nevertheless be computed with the help of nonperturbative fragmentation functions (see \Ref{Elder:2017bkd} for a broad class of such observables).
We hope that further refinements and developments will benefit from and be enabled by the rigorous geometric and topological constructions we have introduced for the space of events via the EMD.

\subsection{A refined understanding of Sudakov safety}

Sudakov-safe observables~\cite{Larkoski:2013paa,Larkoski:2014wba,Larkoski:2015lea} are an interesting class of observables that are not typically considered IRC safe because divergences may appear order by order in perturbation theory; this issue was originally pointed out in \Ref{Soyez:2012hv}.
Nevertheless, the distribution for a Sudakov-safe observable $\mathcal{O}_s$ can be computed perturbatively by calculating its conditional distribution with an IRC-safe companion observable $\mathcal{O}_c$, resumming the $\mathcal{O}_c$ distribution, and then marginalizing over $\mathcal{O}_c$ to obtain a finite answer~\cite{Larkoski:2015lea}:
\begin{equation}
\label{eq:sudakov_safe_strategy}
p(\mathcal{O}_s) = \int \text{d}\mathcal{O}_c \, p(\mathcal{O}_s | \mathcal{O}_c) \, p(\mathcal{O}_c).
\end{equation}
The conditional probability $p(\mathcal{O}_s | \mathcal{O}_c)$ can either be computed in fixed-order perturbation theory or it can be further resummed to obtain a more accurate prediction for $p(\mathcal{O}_s)$.

Here, we interpret Sudakov-safe observables as observables that \emph{are} IRC safe according to our definition but may be EMD (H\"{o}lder) discontinuous on sets with non-zero measure when restricted to some idealized massless $N$-particle manifold $\mathcal P_N$, defined in \Eq{eq:npmanifold}.
The relevant manifolds are the $N$-particle manifolds since these contain the infrared singular regions of massless gauge theories, namely configurations that differ by soft and collinear splittings.
The IRC safety of an observable according to our definition guarantees that any potentially problematic energy flows are infinitesimally close to energy flows for which the observable is well defined.
The strategy in \Eq{eq:sudakov_safe_strategy} also enables the computation of observables such as the $V$ parameter in \Eq{eq:Vobs}, which are EMD continuous everywhere but exhibit H\"{o}lder discontinuities on sets with non-zero measure in $\mathcal P_N$ and are therefore incalculable with fixed-order perturbation theory alone.

It is instructive to make a connection to practical methods of computing Sudakov-safe observables.
In a quantum field theory of massless particles, the cross section to produce events with exactly $N$ particles is zero (i.e.~the naive $S$-matrix is zero), and such theories ultimately yield smooth predictions in the space of events.
Hence, divergences that appear in the calculation of such an observable in a fixed-order expansion can be regulated by a joint, all-orders calculation of the observable and the distance from the problematic manifold $\mathcal P_N$.
This is precisely the strategy represented by \Eq{eq:sudakov_safe_strategy}, though \Ref{Larkoski:2015lea} did not provide a generic method to identify the companion observable $\mathcal{O}_c$.
In \Sec{sec:observables}, we will establish that the distance from an event to the manifold $\mathcal P_N$ is precisely $N$-(sub)jettiness~\cite{Stewart:2010tn,Thaler:2010tr,Thaler:2011gf}, suggesting that they are universal companion observables for the calculation of Sudakov-safe observables, in a similar spirit to \Refs{Alioli:2012fc,Alioli:2013hqa,Alioli:2015toa}.

It is worth mentioning that, even if an observable is EMD H\"{o}lder continuous everywhere, resummation along the lines of \Eq{eq:sudakov_safe_strategy} may still be beneficial for making reliable predictions.
The $C$-parameter~\cite{Parisi:1978eg,Donoghue:1979vi,Ellis:1980wv} is an example of an EMD H\"{o}lder continuous observable, yet its fixed-order perturbative distribution exhibits discontinuous behavior at $C = \frac{3}{4}$~\cite{Catani:1997xc}.
This perturbative discontinuity can be smoothed through soft-gluon resummation, and such techniques are relevant for other observables that exhibit Sudakov shoulder behavior~\cite{Larkoski:2015uaa}.
This is different, however, from Sudakov-safe observables, where the observable itself (and not just its distribution) is ill-defined on some $\mathcal P_N$.

To summarize, our definition of IRC safety does includes Sudakov-safe observables, but we argue that this is appropriate since such observables are indeed perturbatively accessible via regulation with $N$-(sub)jettiness.
This motivates the following conjecture:
\begin{framed}
\begin{conjecture}\label{conj:pertcalc}
An observable is perturbatively calculable, using a combination of fixed-order and resummation techniques, if it is IRC safe according to the definition in \Sec{sec:ircsafety}.
\end{conjecture}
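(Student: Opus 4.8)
The plan is to argue the conjecture by partitioning the IRC-safe observables according to where their EMD H\"older continuity can fail, and then dispatching each piece with the tools already assembled, matching the two halves of the conjecture's ``combination of fixed-order and resummation techniques.'' Since IRC safety only requires EMD continuity off a negligible set, the first step is to dispose of discontinuities that live on negligible subsets of the full space $\Omega$: because a negligible set contains no EMD ball, it carries vanishing weight in any integral over event space, so such discontinuities cannot spoil the calculation of moments or distributions. This is exactly the integrability observation already used to motivate Definition~\ref{def:emdcontinuity} as the correct notion of safety, and it covers observables like the jet energy or soft-dropped jet mass. What remains to control are failures of H\"older continuity that are negligible in $\Omega$ but occupy a non-negligible subset of some lower-dimensional manifold $\mathcal P_N$.

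The second step is to split on this residual behavior. If the observable is EMD H\"older continuous on all but a negligible subset of each $\mathcal P_N$, then by Conjecture~\ref{conj:fopt} it is already calculable order-by-order in fixed-order perturbation theory, and there is nothing further to prove. The genuinely new case is an observable that is IRC safe --- hence EMD continuous off a negligible set --- but fails H\"older continuity on a set of positive measure inside some $\mathcal P_N$, such as the $V$ parameter of \Eq{eq:Vobs}. These are precisely the Sudakov-safe observables identified earlier, and the task reduces to showing that this class is perturbatively accessible once resummation is admitted.

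For that case, the key construction is to regulate the observable by its distance to the offending manifold. I would take as companion observable $\mathcal O_c = \mathcal T_N = \EMD(\E,\mathcal P_N)$, the $N$-(sub)jettiness, which \Sec{sec:observables} establishes to be exactly the distance of closest approach to $\mathcal P_N$; this makes $\mathcal O_c$ a universal regulator that vanishes on the problematic locus and is strictly positive away from it. For $\mathcal O_c>0$ the observable is H\"older continuous by hypothesis, so the conditional distribution $p(\mathcal O_s\,|\,\mathcal O_c)$ is finite and calculable, possibly after its own resummation. The remaining task is to show that the marginalization in \Eq{eq:sudakov_safe_strategy} converges, and here the physical input is decisive: in a massless theory the cross section to produce exactly $N$ particles vanishes (the naive $S$-matrix is zero), so the all-orders weight $p(\mathcal O_c)$ is Sudakov-suppressed as $\mathcal O_c\to 0$. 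The plan is to show that this suppression of the measure near $\mathcal P_N$ dominates whatever power-law or logarithmic growth $p(\mathcal O_s\,|\,\mathcal O_c)$ inherits from the H\"older failure, rendering the integral for $p(\mathcal O_s)$ finite.

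The hard part will be this last estimate: establishing, for an \emph{arbitrary} IRC-safe observable rather than case-by-case, a uniform bound guaranteeing that the Sudakov suppression of $p(\mathcal O_c)$ always overwhelms the growth of the conditional distribution as $\mathcal O_c\to 0$. In the known examples --- the groomed momentum fraction, angularity ratios, and the $V$ parameter --- the singularity is at worst logarithmic while the suppression is a double-logarithmic exponential, so convergence is immediate; but a general argument would require characterizing how badly H\"older continuity can fail on $\mathcal P_N$ while the observable remains EMD continuous on $\Omega$, and then matching that growth against the universal Sudakov form factor near each $\mathcal P_N$. Because this matching depends on the detailed all-orders structure in the neighborhood of every singular manifold and is not yet available in full generality, I would expect this step to resist a fully rigorous proof, which is why the statement is most honestly left as a conjecture.
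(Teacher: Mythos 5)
Your plan reproduces the paper's own reasoning almost exactly: the statement is left as a conjecture there, and its supporting discussion is precisely your decomposition --- discontinuities on negligible subsets of the full space dispatched by integrability, EMD H\"older continuous observables dispatched by Conjecture~\ref{conj:fopt}, and Sudakov-safe observables regulated by $N$-(sub)jettiness as the universal companion observable in \Eq{eq:sudakov_safe_strategy}, with the vanishing of the exact-$N$-particle cross section in a massless theory supplying the suppression near $\mathcal P_N$. You have also correctly located the open step --- a uniform, observable-independent bound showing that the Sudakov suppression of $p(\mathcal{O}_c)$ always dominates the growth of the conditional distribution as $\mathcal{O}_c \to 0$ --- which is exactly the gap that keeps this a conjecture rather than a theorem in the paper as well.
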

\end{framed}
\noindent 
Proving this conjecture, or finding a counterexample, would shed considerable light on the structure of perturbative quantum field theory.
Of course, even if an observable is perturbatively calculable, it may suffer from large nonperturbative or detector corrections, and it may be helpful to use the $K$ and $\alpha$ paramters in \Eq{eq:Kalpha} to assess the sensitivity of observables to long-distance effects.

\section{Observables: Distances between events and manifolds}
\label{sec:observables}

In this section, we show that a number of event-level and jet substructure observables can be identified as geometric quantities in the space of events.
Broadly speaking, the observables we consider take the general form of a distance between an event and a manifold, as in \Eq{eq:obsdefemd}.
The illustration in \Fig{fig:space_obs} shows an observable as a distance between geometric objects in the space of events.
While not all IRC-safe observables can be written in this way, a remarkably large family of classic observables take precisely this geometric form.
We will work with unnormalized observables here, but normalized versions can be obtained by dividing by the total energy (or transverse momentum in the hadronic case).

\begin{figure}[t]
\centering
\includegraphics[scale=0.6]{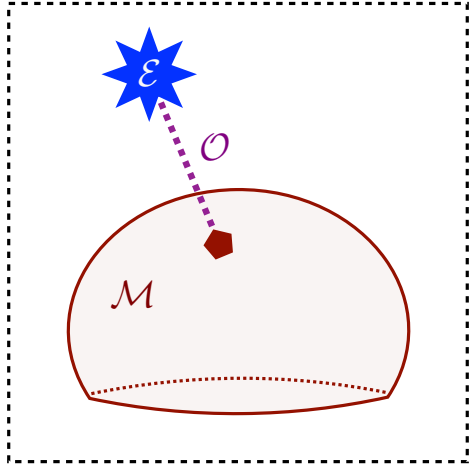}
\caption{\label{fig:space_obs} An illustration of an observable $\mathcal O$ as the distance of closest approach, as measured by the EMD, between the event $\mathcal E$ and a manifold $\mathcal M$ of events.
Many classic collider observables fit into this precise form, stated in \Eq{eq:obsdefemd}, with particular choices of manifold.
}
\end{figure}

We begin by discussing thrust and spherocity, where the manifold is the set of all back-to-back two-particle events.
To understand (recoil-free) broadening, we expand the manifold to all two-particle events, beyond just back-to-back configurations.
Then, to connect to $N$-jettiness, we utilize the idealized $N$-particle manifold defined in \Eq{eq:npmanifold}.
Our geometric language gives clear and intuitive explanations of what physics these observables probe and why they take the forms that they do.
While these EMD formulations do not necessarily lead to practical computational improvements, we do highlight ways to speed up the numerical evaluation of event isotropy using techniques from the optimal transport literature.
Finally, we identify jet angularities and $N$-subjettiness as jet substructure observables obeying similar principles at the level of jets.

\begin{table}[t]
\centering
\begin{tabular}{rccl}
\hline
\hline
  & \multicolumn{3}{l}{$\mathcal O(\mathcal E)=\displaystyle \min_{\mathcal E'\in\mathcal M} \text{EMD}_\beta(\mathcal E, \mathcal E')$} \\
 Name &  & $\beta$ & Manifold $\mathcal M$ \\ \hline\hline
Thrust & $t(\mathcal E)$ & 2& $\mathcal P^{\rm BB}_2$:  2-particle events, back to back \\
Spherocity & $\sqrt{s(\mathcal E)}$ & 1 &  $\mathcal P^{\rm BB}_2$:  2-particle events, back to back \\ 
Broadening & $b(\mathcal E)$ & 1 & $\mathcal P_2$:  2-particle events \\ 
$N$-jettiness & $\mathcal T_N^{(\beta)} (\mathcal E)$& $\beta$ & $\mathcal P_N$:  $N$-particle events \\ 
Isotropy & $\mathcal I^{(\beta)}(\mathcal E)$ & $\beta$ & $\mathcal M_{\mathcal{U}}$: Uniform events\\
\hline
Jet Angularities & $\lambda_\beta(\mathcal J)$ & $\beta$ & $\mathcal P_1$:   1-particle jets\\ 
$N$-subjettiness & $\tau_N^{(\beta)}(\mathcal J)$ & $\beta$ & $\mathcal P_N$:   $N$-particle jets \\ 
\hline
\hline
\end{tabular}
\caption{Observables as the EMD between the event $\mathcal E$ and a manifold $\mathcal{M}$, using the EMD definition in \Eq{eq:emd_noR}.
Several of these observables are illustrated in \Fig{fig:space_manyobs}.
Here, we consider only the ``recoil-free'' versions of these observables.}
\label{tab:obs}
\end{table}

\begin{figure}[t]
\centering
\subfloat[]{\includegraphics[scale=0.685]{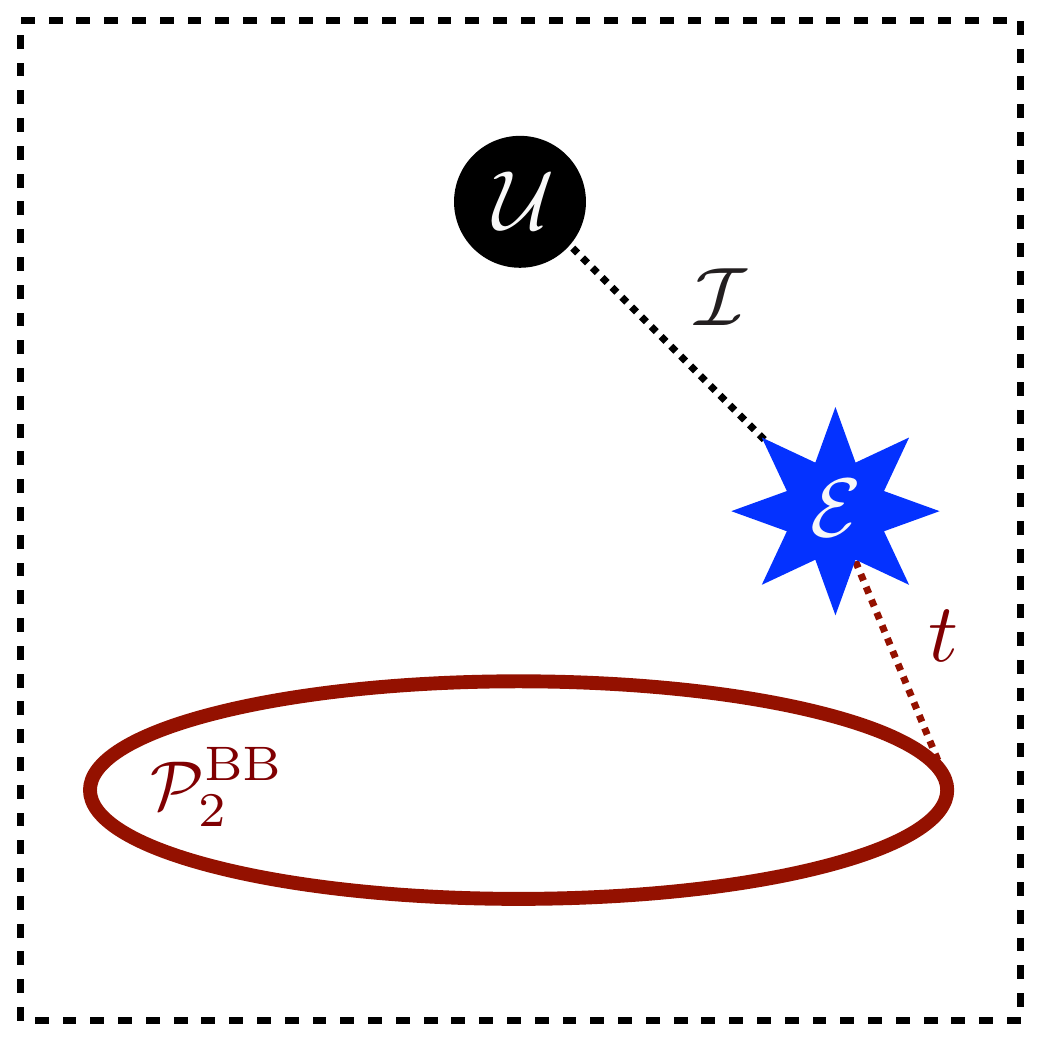}}\hspace{4mm}
\subfloat[]{\includegraphics[scale=0.685]{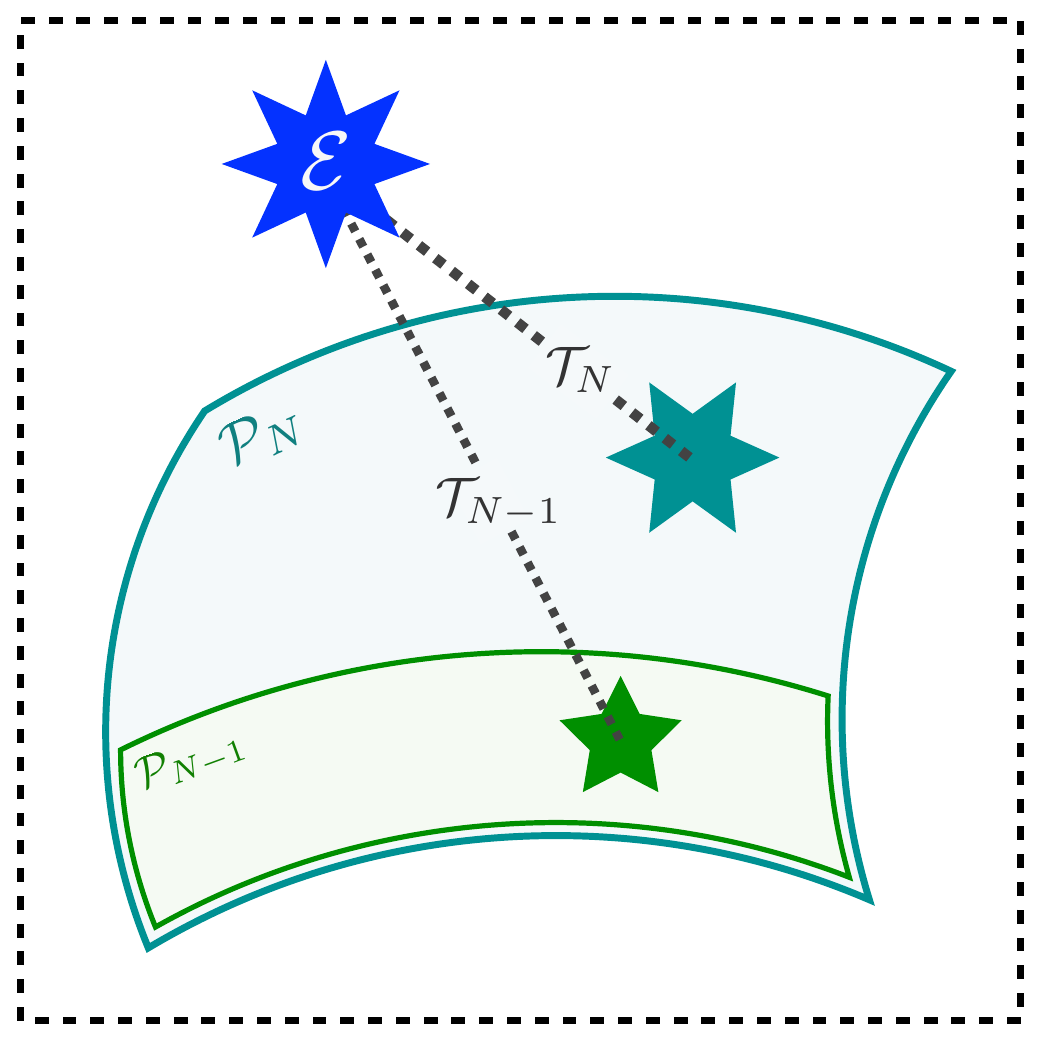}}
\caption{\label{fig:space_manyobs} An illustration of a variety of observables as distances between an event $\mathcal E$ and various manifolds in the space of events, as summarized in \Tab{tab:obs}.
(a) Thrust $t$ is the smallest distance from the event to the manifold $\mathcal P_2^\text{BB}$ of two-particle back-to-back events, while event isotropy $\mathcal I$ is the distance to the uniform event $\mathcal U$.
(b) $N$-jettiness observables $\mathcal T_N$ are the smallest distances from the event to the $N$-particle manifolds $\mathcal P_N$.
}
\end{figure}

For most of the observables in this section, the $R$ parameter is not needed, in which case we define a notion of EMD relevant for comparing events with equal energies:
\begin{equation}
\label{eq:emdRtoinfty}
\EMD_{\beta} (\mathcal E, \mathcal E') = \lim_{R \to \infty} R^\beta \, \EMD_{\beta,R} (\mathcal E, \mathcal E').
\end{equation}
This only has a finite limit if $\mathcal E$ and $\mathcal E'$ have the same total energy, which is a useful property to simplify our analysis.
Explicitly, when comparing events with equal energy, this EMD simplifies to:
\begin{equation}
\label{eq:emd_noR}
\EMD_{\beta} (\mathcal E, \mathcal E') = \min_{\{f_{ij}\ge0\}} \sum_{i=1}^M\sum_{j=1}^{M'} f_{ij} \theta_{ij}^\beta ,
\end{equation}
\begin{equation}
\label{eq:emdconstraints_noR}
\sum_{i=1}^M f_{ij} \le E_j', \quad\quad \sum_{j=1}^{M'} f_{ij} \le E_i, \quad\quad \sum_{i=1}^M\sum_{j=1}^{M'} f_{ij} = \sum_{i=1}^M E_i = \sum_{j=1}^{M'}E_j'.
\end{equation}
This will be the precise notion of EMD we use when the $R$ subscript is suppressed.

In \Tab{tab:obs}, we summarize some of the observables considered below and their geometric interpretations.
In \Fig{fig:space_manyobs}, we illustrate the geometric construction of many of these observables, which we will explore in detail below.

\subsection{Event-level observables}
\label{sec:eventobservables}

\subsubsection{Thrust}
\label{subsubsec:thrust}

Thrust is an observable that quantifies the degree to which an event is pencil-like~\cite{Brandt:1964sa,Farhi:1977sg,DeRujula:1978vmq}.
It has been experimentally measured~\cite{Barber:1979bj,Bartel:1979ut,Althoff:1983ew,Bender:1984fp,Abrams:1989ez,Li:1989sn,Decamp:1990nf,Braunschweig:1990yd,Abe:1994mf,Heister:2003aj,Abdallah:2003xz,Achard:2004sv,Abbiendi:2004qz} and theoretically calculated~\cite{Gehrmann-DeRidder:2007nzq,GehrmannDeRidder:2007hr,Becher:2008cf,Weinzierl:2009ms,Abbate:2010xh,Abbate:2012jh} in detail for electron-positron collisions.
Thrust seeks to find an axis $\hat n$ (the ``thrust axis'') such that most of the radiation lies in the direction of either $\hat n$ or $-\hat n$; i.e.~it maximizes the amount of radiation \emph{longitudinal} to the thrust axis.
While a variety of conventions for defining thrust exist, here we use the following dimensionful definition:
\begin{equation}\label{eq:thrust}
t(\mathcal E) = 2\min_{\hat n}\sum_{i=1}^M|\vec p_i|(1- |\vec n_i \cdot \hat n|),
\end{equation}
where $\hat n_i = \vec p_i/|\vec p_i|$ and other definitions follow by simple rescalings.
A thrust value of zero corresponds to an event consisting of two back-to-back prongs, while its maximum value of the total energy corresponds to a perfectly spherical event.

Interestingly, the value of thrust in \Eq{eq:thrust} is equivalent to the cost of an optimal transport problem.
This connection will allow us to cast thrust as a simple geometric quantity written in terms of the EMD.
Using $E_i=|\vec p_i|$ for massless particles and writing out the absolute value, we can cast \Eq{eq:thrust} as:
\begin{equation}\label{eq:thrust2}
t(\mathcal E) = 2\min_{\hat n}\sum_{i=1}^M E_i \min(1 - \hat n_i \cdot \hat n,\, 1 + \hat n_i \cdot \hat n).
\end{equation}
For a fixed $\hat n$, the summand in \Eq{eq:thrust2} is the transportation cost to move particle $i$ to the closer of $\hat n$ or $-\hat n$ with an angular measure of $\theta_{ij}^2 = 2n_i^\mu n_{j\mu}= 2 (1 - \hat n_i \cdot \hat n_j)$.
The sum is then the EMD between the event and a two-particle event consisting of back-to-back particles directed along $\hat n$, where the energy of each of the two particles is equal to the total energy in the corresponding hemisphere.
The minimization over $\hat n$ is equivalent to a minimization over all such two-particle events.

Thus, thrust is our first example of an observable that can be cast in the form of \Eq{eq:obsdefemd}.
First, we define the manifold of back-to-back two-particle events:
\begin{equation}
\label{eq:bbmanifold}
\mathcal P^{\rm BB}_2 = \left\{\left. \sum_{i=1}^2 E_i\, \delta(\hat n - \hat n_i)\,\, \right| \,\, E_i \ge 0, \,\, \hat{n}_1 = - \hat{n}_2 \right\}.
\end{equation}
Then, using the notation of \Eq{eq:emd_noR} with $\beta=2$,%
\footnote{As mentioned in footnote \ref{footnote:pWasser}, strictly speaking only the square root of $\text{EMD}_2$ is a proper metric.
Because the square root is a monotonic function, though, this has no impact on the interpretation of thrust as an optimal transport problem.}
thrust is the smallest EMD from the event to the $\mathcal P^{\rm BB}_2$ manifold:
\begin{equation}
\label{eq:thrustasEMD}
\begin{boxed}{
t(\mathcal E) = \min_{\mathcal E' \in \mathcal P^{\rm BB}_2} \text{EMD}_2(\mathcal E, \mathcal E'),}
\end{boxed}
\end{equation}
where the minimization is carried out over all back-to-back two-particle configurations.

Because of the $R \to \infty$ limit in \Eq{eq:emdRtoinfty}, the optimal back-to-back configuration is guaranteed to have the same total energy as the event $\mathcal E$, as desired.
Note that even if this analysis is carried out in the center-of-mass frame, the optimal back-to-back configuration will generically not be at rest, since it involves two massless particles with different energies.%
\footnote{We thank Samuel Alipour-fard for discussions related to this point.}
This suggests a possible variant of thrust where one restricts the two-particle manifold to only include events that are physically accessible, either by forcing $E_1 = E_2$ or by considering massive particles as in \App{sec:mass}.

\subsubsection{Spherocity}

Spherocity is an observable that also probes the jetty nature of events~\cite{Georgi:1977sf}.
It seeks to find an axis that minimizes the amount of radiation in the event \emph{transverse} to it according to the following criterion:
\begin{equation}
\label{eq:spherocity_orig}
s(\E) = \min_{\hat n} \left(\sum_{i=1}^ME_i|\vec n_i\times\hat n| \right)^2,
\end{equation}
where the original definition of spherocity is related to this by an overall rescaling.
In the small $s$ limit, where the event configurations are back to back, we can write $|\vec n_i\times\hat n|\simeq \sqrt{2(1- |\hat n_i\cdot\hat n|)}$ and obtain:
\begin{equation}
\label{eq:spherocity}
s(\mathcal E)  \simeq \min_{\hat n} \left( \sum_{i=1}^M E_i \sqrt{2(1 - |\hat n_i \cdot \hat n|)} \right)^2.
\end{equation}
We focus on this limiting form for the following discussion.

Similar to the case of thrust, we can identify the spherocity expression to be minimized as an optimal transport problem.
For a fixed $\hat n$, the summand in \Eq{eq:spherocity} is the cost to transport particle $i$ to the closer of $\hat n$ or $-\hat n$ with an angular measure of $\theta_{ij}=\sqrt{2n^\mu_i n_{j\mu}}$.%
\footnote{In fact, \Eq{eq:spherocity_orig} is already an optimal transport problem, using $\theta_{ij} = \sin \Omega_{ij}$, where $\Omega_{ij}$ is the opening angle between particles $i$ and $j$.  This has the same small angle behavior as $\theta_{ij} = 2 \sin \frac{\Omega_{ij}}{2}$ from \Eq{eq:theta_def}.}
The sum is once again the EMD from the event to the manifold of back-to-back events, with the minimization over $\hat n$ interpreted as a minimization over the manifold.

Spherocity, in the appropriate limit, is therefore the square of the smallest EMD (with $\beta=1$) from the event to the manifold $\mathcal P^{\rm BB}_2$ from \Eq{eq:bbmanifold}:
\begin{equation}\label{eq:spherocityasEMD}
\begin{boxed}{
\sqrt{s(\mathcal E)} = \min_{\mathcal E' \in \mathcal P^{\rm BB}_2} \text{EMD}_1(\mathcal E, \mathcal E').
}\end{boxed}
\end{equation}
Through this lens, spherocity differs from thrust (besides the overall exponent) solely in the angular weighting factor:  $\beta=1$ for spherocity and $\beta=2$ for thrust.
One could continue in this direction, defining the distance of closest approach for general $\beta$.
(This is related to the event shape angularities~\cite{Berger:2003iw}, with a key difference being that angularities are traditionally measured with respect to the thrust axis.)
Instead, we now turn towards enlarging the manifold itself.

\subsubsection{Broadening}

Recoil-free broadening~\cite{Larkoski:2014uqa} is an observable that is sensitive to two-pronged events that are not precisely back-to-back jets.
Here we focus on recoil-free broadening, to be distinguished from the original jet broadening~\cite{Rakow:1981qn,Ellis:1986ig,Catani:1992jc} which is defined in terms of the thrust axis.\footnote{There is an EMD-based definition of the original jet broadening, using the thrust axis defined by $\mathcal E_t = \text{arg min}_{\mathcal E'\in \mathcal P_2^\text{BB}}\text{EMD}_2(\mathcal E, \mathcal E')$. With modified angular measure and normalization, the original jet broadening with respect to the thrust axis is $b_t(\mathcal E) = \text{EMD}_1(\mathcal E,\mathcal E_t)$. Note the two different values of $\beta$ in these expressions.}
It differs from spherocity only in that it minimizes the same quantity over two ``kinked'' axes that need not be antipodal.
Though subtle, this difference gives rise to very important theoretical differences between broadening and spherocity in the treatment of soft recoil effects~\cite{Dokshitzer:1998kz}, as discussed extensively in~\Ref{Larkoski:2014uqa}.

Here, we use the following definition of broadening:
\begin{equation}\label{eq:broadening}
b(\mathcal E) = \min_{\hat n_L,\,\hat n_R} \sum_{i=1}^M E_i \min(\theta_{iL}, \theta_{iR}),
\end{equation}
where $\theta_{iL}$ and $\theta_{iR}$ are the angular distances between particle $i$ and $\hat n_L$ and $\hat n_R$, respectively.
The fact that $\hat n_L$ and $\hat n_R$ are minimized separately (rather than $\hat n_L = -\hat n_R$) is the key distinction between recoil-free broadening and previous observables.
For a fixed $\hat n_L$ and $\hat n_R$, the summand in \Eq{eq:broadening} is the cost to transport particle $i$ to the closer of $\hat n_L$ or $\hat n_R$ with an angular measure of $\theta_{ij} = \sqrt{2n_i^\mu n_{j\mu}}$.
The sum is then the EMD from the event to the manifold of all two-particle events, which need not be back-to-back, namely $\mathcal P_2$ from \Eq{eq:npmanifold}.
The minimization over $\hat n_L$ and $\hat n_R$ is then interpreted as a minimization over this manifold.

Thus, broadening is the smallest EMD with $\beta=1$ from the event to $\mathcal P_2$:
\begin{equation}\label{eq:broadeningasEMD}
\begin{boxed}{
b(\mathcal E) = \min_{\mathcal E' \in \mathcal P_2} \text{EMD}_1(\mathcal E, \mathcal E').
}\end{boxed}
\end{equation}
The geometrical formulation of broadening in \Eq{eq:broadeningasEMD} differs from that of spherocity in \Eq{eq:spherocityasEMD} only in that it does not restrict the manifold to back-to-back configurations.
This distinction is important to extend these ideas beyond the two-particle manifold.

\subsubsection{$N$-jettiness}
\label{sec:njettiness}

$N$-jettiness~\cite{Stewart:2010tn} (see also \Ref{Brandt:1978zm}) is an observable that partitions an event into $N$ jet regions and, for hadronic collisions, a beam region.
Without a beam region, it is defined based on a minimization procedure over $N$ axes:
\begin{equation}\label{eq:Njettiness}
\mathcal T_N^{(\beta)} = \min_{\hat n_1,\cdots,\hat n_N} \sum_{i=1}^M E_i \min\left(\theta_{i1}^\beta, \theta_{i2}^\beta, \cdots, \theta_{iN}^\beta\right),
\end{equation}
where $\theta_{i1}$ through $\theta_{iN}$ are the angular distances between particle $i$ and axes $\hat n_1$ through $\hat n_N$, respectively.

We immediately identify the summand as the cost of transporting particle $i$ to the nearest axis.
For fixed $\hat n_1$ through $\hat n_N$, assigning the energy transported to each axis as the energy of that axis gives rise to an $N$-particle event.
The expression to be minimized is then the EMD between the original event and that $N$-particle event.
The minimization over $\hat n_1$ through $\hat n_N$ is interpreted as a minimization over all such $N$-particle events.

Therefore, $N$-jettiness is the smallest distance between the event and the manifold $\mathcal{P}_N$ of $N$-particle events.
Equivalently, one can view it as the EMD to the best $N$-particle approximation of the event, and we return to this interpretation in \Sec{subsec:xcone}.
Thus, we have:
\begin{equation}
\label{eq:Njettiness_asEMD}
\begin{boxed}{
\mathcal T_N^{(\beta)} = \min_{\mathcal E' \in \mathcal P_N} \text{EMD}_{\beta}(\mathcal E, \mathcal E').
}\end{boxed}
\end{equation}
We see that $N$-jettiness generalizes the geometric interpretation of broadening to a general $N$-particle manifold and a general angular weighting exponent $\beta$.

For hadronic collisions, initial state radiation and underlying event activity require the introduction of a ``beam'' (or out-of-jet) region~\cite{Stewart:2009yx,Stewart:2010tn,Berger:2010xi}.
This can be accomplished via the introduction of a beam distance $\theta_{i,\text{beam}}$ into the minimization of \Eq{eq:Njettiness}.
There are many possible beam measures~\cite{Jouttenus:2013hs,Stewart:2015waa}, including ones that involve optimizing over two beam axes $\hat n_a$ and $\hat n_b$.
For simplicity, we focus on $\theta_{i,\text{beam}} = R^\beta$ which makes no explicit reference to the beam directions~\cite{Thaler:2011gf}.
Dividing by an overall factor of $R^\beta$, this modified version of $N$-jettiness can be written as:
\begin{equation}\label{eq:Njettiness_with_beam}
\mathcal T_N^{(\beta,R)} = \min_{\hat n_1,\cdots,\hat n_N} \sum_{i=1}^M E_i \min\left(1, \frac{\theta_{i1}^\beta}{R^\beta}, \frac{\theta_{i2}^\beta}{R^\beta}, \cdots, \frac{\theta_{iN}^\beta}{R^\beta}\right).
\end{equation}
This definition of $N$-jettiness is similar to \Eq{eq:Njettiness}, though now a particle can be closer to the beam than to any axis.
In this case, we say that the particle is transported to the beam and removed for a cost $E_i$.
The summand is then the cost to transport the event to an $N$-particle event plus the cost of removing any particles beyond $R$ from any axes.

Remarkably, this precisely corresponds to the EMD when formulated for events of different total energy.
Namely, $N$-jettiness with this beam region is simply the smallest distance between the event and the manifold of $N$-particle events, with $R$ smaller than the radius of the space:
\begin{equation}
\label{eq:Njettiness_with_beam_asEMD}
\begin{boxed}{
\mathcal T_N^{(\beta,R)} = \min_{\mathcal E'\in \mathcal P_N} \text{EMD}_{\beta,R}(\mathcal E, \mathcal E').
}\end{boxed}
\end{equation}
Particles removed by the optimal transport procedure are interpreted as being part of the beam region.
This fact will also be relevant in \Sec{sec:seqrec} for understanding sequential recombination jet clustering algorithms as geometric constructions in the space of events.

\subsubsection{Event isotropy}
\label{sec:isotropy}

Our new geometric phrasing of these classic collider observables highlights the types of configurations that they are designed to probe.
Specifically, \Eq{eq:obsdefemd} can be interpreted as how similar an event is to the class of events on the manifold $\mathcal{M}$.
This framework also suggests regions of phase space that are poorly resolved by existing observables and provides a prescription for developing new observables by identifying new manifolds of interest.

Event isotropy~\cite{isotropytemp} is a recently-proposed observable that provides a clear example of this strategy.
It is based on the insight that distances from the $N$-particle manifolds (such as thrust and $N$-jettiness) are not well-suited for resolving isotropic events with uniform radiation patterns.
Having observables with sensitivity to isotropic events can, for instance, improve new physics searches for microscopic black holes or strongly-coupled scenarios.
This motivates event isotropy, which is the distance between the event $\mathcal E$ and an isotropic event $\mathcal U$ of the same total energy:
\begin{equation}\label{eq:isotropy}
\mathcal I^{(\beta)}(\mathcal E) =  \text{EMD}_{\beta}(\mathcal E, \mathcal{U}).
\end{equation}
Since $\mathcal E$ and $\mathcal{U}$ have the same total energy by construction, it is natural to normalize event isotropy by the total energy to make it dimensionless.
The analysis in \Ref{isotropytemp} focused primarily on $\beta = 2$, though this approach can be extended to a general angular exponent.
For practical applications, it is convenient to consider a manifold of quasi-isotropic events of the same total energy and then estimate event isotropy as the average EMD between an event and this manifold.

We can cast \Eq{eq:isotropy} into the form of \Eq{eq:obsdefemd} by introducing a manifold $\mathcal M_{\mathcal U}$ of uniform events with varying total energies:
\begin{equation}\label{eq:isotropy_emd}
\boxed{
\mathcal I^{(\beta)}(\mathcal E) = \min_{\mathcal{E}' \in \mathcal M_{\mathcal U}} \text{EMD}_{\beta}(\mathcal E, \mathcal{E}').
}
\end{equation}
The $R \to \infty$ limit in \Eq{eq:emdRtoinfty} enforces that the optimal isotropic approximation $\mathcal{U}$ has the same total energy as $\mathcal E$, as in the original event isotropy definition.

The particular notion of a uniform distribution depends on the collider context---spherical for electron-positron collisions and cylindrical or ring-like for hadronic collisions---with corresponding choices for the energy and angular measures.
The case of ring-like isotropy at a hadron collider is particularly interesting, since there are known simplifications for one-dimensional circular optimal transport problems.
For $\beta = 1$, ring-like event isotropy can be computed in $\mathcal O(M)$ runtime~\cite{DBLP:journals/jmiv/RabinDG11} and there are fast approximations for any $\beta\ge1$~\cite{DBLP:journals/jmiv/RabinDG11}.
This is much faster than the generic $\mathcal O(M^3 \log M)$ expectation for EMD computations, motivating further studies of these one-dimensional geometries.

\subsection{Jet substructure observables}
\label{sec:jetobservables}

\subsubsection{Jet angularities}

Jet angularities are the energy-weighted angular moments of radiation within a jet~\cite{Ellis:2010rwa} (see also \Refs{Almeida:2008yp,Larkoski:2014uqa,Larkoski:2014pca}).
Here, we use the following definition of a recoil-free jet angularity:
\begin{equation}
\lambda_\beta(\mathcal J) = \min_{\hat n} \sum_{i=1}^M E_i \,\theta_{i}^\beta,
\end{equation}
where $\theta_{i}$ is the angular distance between particle $i$ and an axis $\hat n$.
The summand of an angularity is the EMD from the jet to the axis, so we can follow the analogous logic from our previous discussions of event shapes to reframe this observable in our geometric language.
Specifically, the recoil-free angularities are the closest distance between the jet and the 1-particle manifold $\mathcal{P}_1$:
\begin{equation}
\label{eq:emd_jet_angularities}
\begin{boxed}{
\lambda_\beta(\mathcal J) = \min_{\mathcal J'\in\mathcal P_1} \text{EMD}_{\beta}(\mathcal J, \mathcal J').
}\end{boxed}
\end{equation}
One can alternatively consider a definition of angularities where $\theta_{i}$ is computed with respect to a fixed jet axis.
In that case, the angularities are the EMD from the jet to a 1-particle configuration where the total energy of the jet is placed at the position of the desired axis.

\subsubsection{$N$-subjettiness}
\label{subsec:nsubjettiness}

$N$-subjettiness is a jet substructure observable that applies the ideas of $N$-jettiness at the level of jet substructure~\cite{Thaler:2010tr,Thaler:2011gf}.
$N$ axes are placed within the jet, with a penalty for having energy far away from any axis, and then the positions of the axes are optimized.
The (dimensionful) $N$-subjettiness of a jet can be defined as follows:
\begin{equation}
\tau_N^{(\beta)} (\mathcal J) = \min_{\hat n_1,\cdots,\hat n_N} \sum_{i=1}^M E_i \min\left(\theta_{i1}^\beta, \theta_{i2}^\beta, \cdots, \theta_{iN}^\beta\right),
\end{equation}
where $\theta_{i1}$ through $\theta_{iN}$ are the angular distances between particle $i$ and axes $\hat n_1$ through $\hat n_N$.
The beam region is absent due to the fact that these observables are only defined using the particles already within an identified jet.

\begin{figure}[t]
\centering
\includegraphics[scale=0.7]{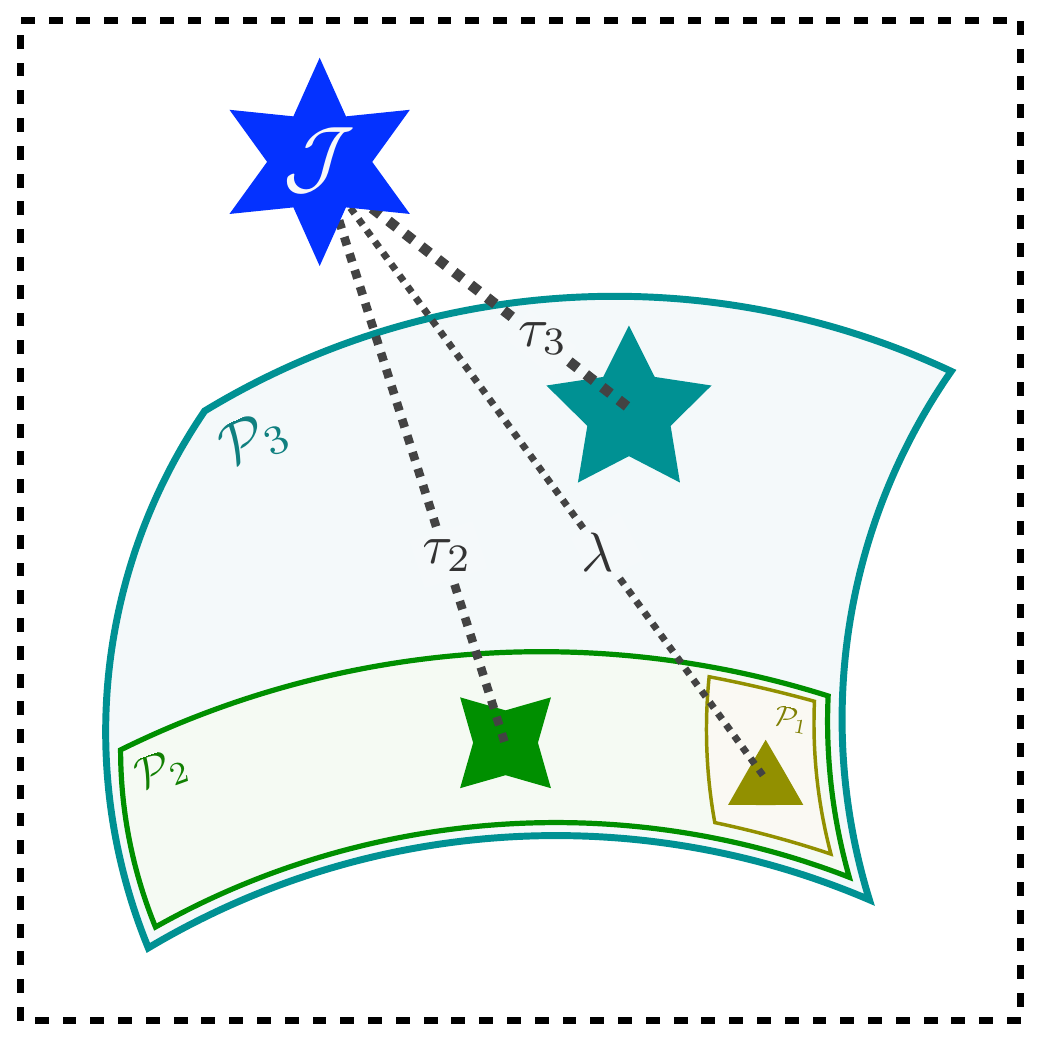}
\caption{\label{fig:space_nsub} An illustration of $N$-subjettiness values as the smallest distances, as measured by EMD, between the event $\mathcal E$ and each of the $N$-particle manifolds $\mathcal P_N$.
The jet angularities are the distances to the 1-particle manifold $\mathcal P_1$.
These observables form a set of ``coordinates'' for the space.
}
\end{figure}

We can find a geometric interpretation for $N$-subjettiness by using the analogous discussion from $N$-jettiness in \Sec{sec:njettiness}.
$N$-subjettiness is the distance between the jet and the manifold of all $N$-particle jets:
\begin{equation}
\begin{boxed}{
\tau_N^{(\beta)}(\mathcal J) = \min_{\mathcal J'\in\mathcal P_N} \text{EMD}_{\beta}(\mathcal J, \mathcal J').
}\end{boxed}
\end{equation}
As a limiting case, $N = 1$ corresponds to the jet angularities in \Eq{eq:emd_jet_angularities}.

In this way, we can view $N$-subjettiness values as ``coordinates'' for the space of jets, defined as distances from each of the $N$-particle manifolds, illustrated in \Fig{fig:space_nsub}.
The $N$-subjettiness ratios $\tau_{N} / \tau_{N-1}$, used ubiquitously for jet substructure studies~\cite{Larkoski:2017jix,Asquith:2018igt,Marzani:2019hun}, are then the relative distances between the manifolds $\mathcal{P}_N$ and $\mathcal{P}_{N-1}$.
This is also an interesting way to interpret existing constructions of observable bases using $N$-(sub)jettiness~\cite{Datta:2017rhs,Datta:2017lxt,Larkoski:2019nwj}; the fact that multiple $\beta$ values are typically needed for these constructions emphasizes that the choice of ground metric affects the geometry of the space induced by the EMD.

\section{Jets: The closest $N$-particle description of an $M$-particle event}
\label{sec:jets}

In this section, we turn our attention to how jets are defined.
We interpret two of the most common classes of jet algorithms as simple geometric constructions in the space of events.
Intuitively, we find that jets are the best $N$-particle approximation to an $M$-particle event. 
Many existing techniques naturally emerge from this simple principle in fascinating ways.

First, we discuss exclusive cone finding, as this technique corresponds exactly to the intuition above that jets approximate the energy flow of an event using a smaller number of particles.
Next, we show that many sequential recombination algorithms can be derived by iteratively approximating an $M$-particle event using $M-1$ particles.
These jet-finding strategies are illustrated in \Fig{fig:space_jets} as projections to $N$-particle manifolds in the space of events.

\begin{figure}[t]
\centering
\subfloat[]{\includegraphics[scale=0.685]{figures/space_xcone}}\hspace{4mm}
\subfloat[]{\includegraphics[scale=0.685]{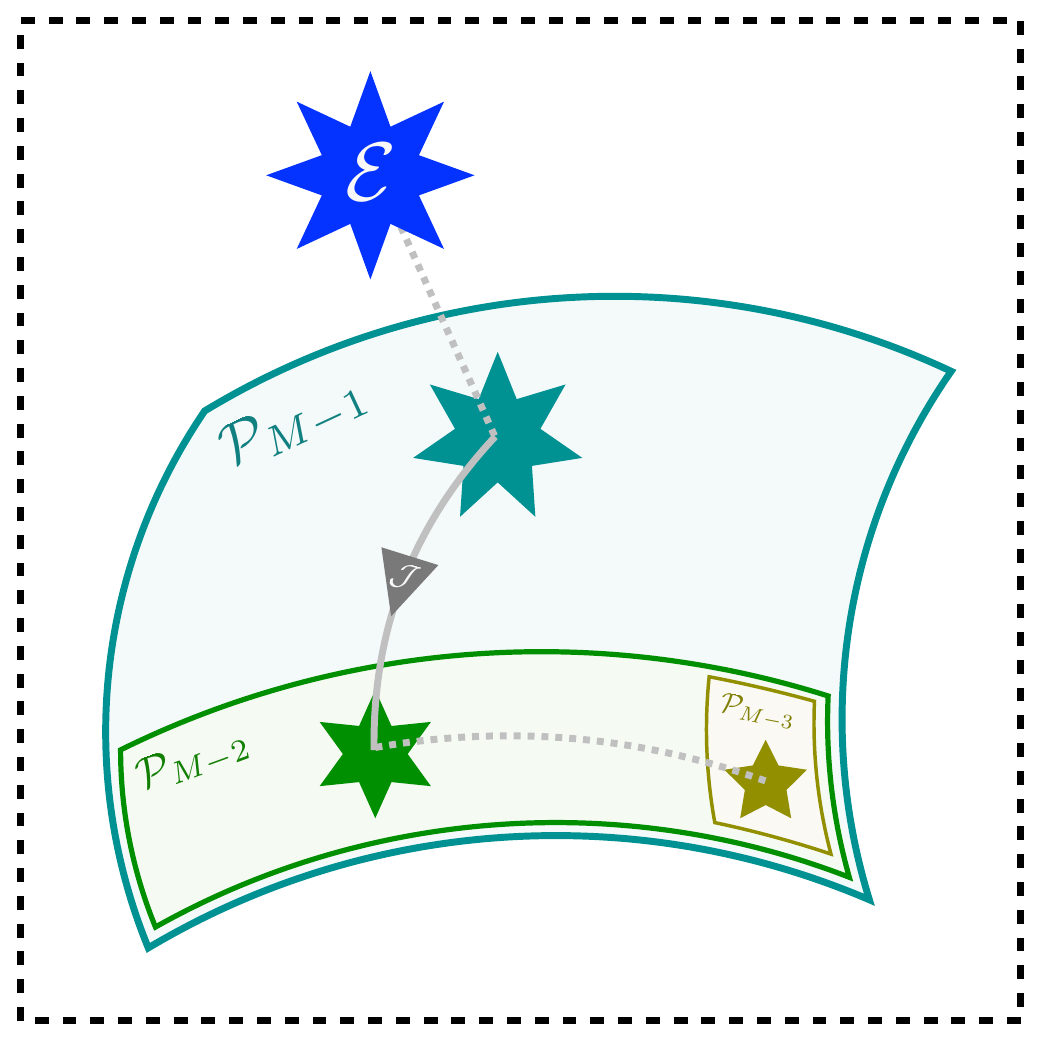}}
\caption{\label{fig:space_jets} An illustration of jet clustering algorithms as projections to $N$-particle manifolds $\mathcal P_N$ in the space of events.
(a) Exclusive cone finding algorithms yield $N$ jets as the closest $N$-particle approximation to the event, as measured by the EMD.
(b) Sequential recombination algorithms iteratively find the best $(M-1)$-particle approximation to the $M$-particle event, either (dashed) merging two particles or (solid) removing a particle and calling it a jet.}
\end{figure}

\subsection{General $N$: Exclusive cone finding}
\label{subsec:xcone}

XCone~\cite{Stewart:2015waa,Thaler:2015xaa} is an exclusive cone finding algorithm that seeks to find jets by minimizing $N$-jettiness.
It returns a fixed number of jets based on the parameters $N$ and $R$, in the same spirit as the exclusive version of the $k_t$ sequential recombination algorithm~\cite{Catani:1993hr}.
XCone proceeds by finding the $N$ axes that minimize $N$-jettiness as defined in \Eq{eq:Njettiness_with_beam}:
\begin{equation}\label{eq:XCone}
\underset{\hat n_1,\cdots,\hat n_N}{\text{argmin}}\sum_{i=1}^M E_i \min\left(1, \frac{\theta_{i1}^\beta}{R^\beta}, \frac{\theta_{i2}^\beta}{R^\beta}, \cdots, \frac{\theta_{iN}^\beta}{R^\beta}\right).
\end{equation}
Together with the energy assigned to those axes, or equivalently the set of particles mapped to each axis, the $N$ axes from \Eq{eq:XCone} define $N$ jets.
The jet radius parameter $R$ controls which particles are not assigned to any jet (i.e.\ assigned to the beam region). 
Following the discussion in \Sec{sec:njettiness}, \Eq{eq:XCone} can be interpreted as finding the $N$-particle configuration that best approximates the event of interest.

In our geometric language, we can cast XCone as identifying the point of closest approach between an event $\mathcal E$ and the $N$-particle manifold $\mathcal{P}_N$:
\begin{equation}\label{eq:XConeasEMD}
\begin{boxed}{
\mathcal J^\text{XCone}_{N,\beta,R}(\mathcal E) = \underset{\mathcal J\in\mathcal P_N}{\text{argmin}}\,\,\text{EMD}_{\beta,R}(\mathcal E, \mathcal J).
}\end{boxed}
\end{equation}
Different variants of XCone correspond to different choices for the energy weight $E_i$ and the angular measure $\theta_{ij}$~\cite{Jouttenus:2013hs,Stewart:2015waa}, which in turn correspond to different choices for what defines the ``best'' $N$-particle approximation to an event.

As discussed in \Ref{Thaler:2015uja}, there is a close relationship between exclusive cone finding algorithms, stable cone algorithms~\cite{Blazey:2000qt,Ellis:2001aa,Salam:2007xv}, and jet maximization algorithms~\cite{Georgi:2014zwa,Ge:2014ova,Bai:2014qca,Bai:2015fka,Wei:2019rqy}.
For the choice of $\beta = 2$, the jet axis aligns with the jet momentum direction, which is known as the stable cone criterion~\cite{Blazey:2000qt,Ellis:2001aa}.
For $N = 1$, one can relate the optimization problem in \Eq{eq:XConeasEMD} to maximizing a ``jet function'' over all possible partitions of an event into one in-jet region and one out-of-jet region~\cite{Georgi:2014zwa}.
Iteratively applying the $N = 1$ procedure is related to the \text{SISCone} algorithm with progressive jet removal~\cite{Salam:2007xv}.
All of these various algorithms can now be interpreted in our geometric picture as different ways to ``project'' the event $\mathcal E$ onto the $N$-particle manifold $\mathcal{P}_N$.

\subsection{$N=M-1$: Sequential recombination}
\label{sec:seqrec}

Sequential recombination algorithms are a class of jet clustering algorithms that have seen tremendous use at colliders, particularly the anti-$k_t$ algorithm~\cite{Cacciari:2008gp} which is the current default jet algorithm at the LHC.
These methods utilize an interparticle distance $d_{ij}$, a particle-beam distance $d_{iB}$, and a recombination scheme for merging two particles.
The algorithm proceeds iteratively by finding the smallest distance, combining particle $i$ and $j$ if it is a $d_{ij}$, or calling $i$ a jet and removing it from further clustering if it is a $d_{iB}$.

There are a variety of distance measures and recombination schemes that appear in the literature, many of which are implemented in the \textsc{FastJet} library~\cite{Cacciari:2011ma}.
The most commonly used distance measures take the form:
\begin{equation}\label{eq:dij}
d_{ij} = \min\left(E_i^{2p}, E_j^{2p}\right) \frac{\theta_{ij}^2}{R^2},\quad \quad d_{iB} = E_i^{2p},
\end{equation}
where $p$ is an energy weighting exponent and $R$ is the jet radius.
The exponent $p=1$ corresponds to $k_t$ jet clustering~\cite{Catani:1993hr,Ellis:1993tq}, $p=0$ corresponds to Cambridge/Aachen (C/A) clustering~\cite{Dokshitzer:1997in,Wobisch:1998wt}, and $p=-1$ corresponds to anti-$k_t$ clustering~\cite{Cacciari:2008gp}.
The recombination scheme determines the energy $E_c$ and direction $\hat n_c$ of the combined particle and typically takes the form:
\begin{equation}
\label{eq:recomb}
E_c = E_i  + E_j,\quad \quad \hat n_c = \frac{E_i^\kappa \, \hat n_i + E_j^\kappa \, \hat n_j}{E_i^\kappa + E_j^\kappa},
\end{equation}
where $\kappa=1$ corresponds the $E$-scheme (most typically used), $\kappa=2$ is the $E^2$-scheme~\cite{Catani:1993hr,Butterworth:2002xg}, and $\kappa\to\infty$ is the winner-take-all scheme~\cite{Bertolini:2013iqa,Larkoski:2014uqa,Salambroadening}.
In the $E$-scheme, the four-momenta of the two particles are simply added.%
\footnote{\label{footnote:Escheme}One has to be a bit careful about the interpretation of jet masses in the $E$-scheme.  In the discussion below, the combined particle is interpreted as a massless four-vector.  For the angular distance in \Eq{eq:theta_def}, the direction $\hat{n}_i$ is the same for massless and massive particles, so one can consistently assign the mass of the jet to be the invariant mass of the summed jet constituents.  For the rapidity-azimuth distance typically used at hadron colliders, though, the rapidity of a particle depends on its mass, so one has to be careful about whether one is talking about a light-like jet axis or a massive jet when discussing the $E$-scheme.  See further discussion in \App{sec:mass}.}
In the winner-take-all scheme, the direction is determined by the more energetic particle.

\begin{table}[t]
\centering
\begin{tabular}{c|ccc|cc}
\hline\hline
EMD$_{\beta,R} $& Name & Measure $d_{ij}$ & $d_{iB}$ & Name & Scheme $\lambda^*$ \\ \hline\hline
$0<\beta < 1$ & Gen.\ $k_t$ & $\min(E_i, E_j) \frac{\theta_{ij}^\beta}{R^\beta}$ & $E_i$ & Winner-take-all  & $\text{argmin}(E_i,E_j)$ \\
$\beta = 1$ & $k_T$ & $\min(E_i, E_j) \frac{\theta_{ij}^{\phantom{\beta}}}{R}$ & $E_i$ & Winner-take-all  & $\text{argmin}(E_i,E_j)$ \\
$\beta = \frac32$ & ? & $\frac{E_i E_j}{\sqrt{E_i^2 + E_j^2}} \frac{\theta_{ij}^\frac32}{R^\frac32}$ & $E_i$ & $E^2$-scheme & $\frac{E_j^2 }{E_i^2 + E_j^2}$ \\
$\beta = 2$ & ? & $\frac{E_i E_j}{E_i + E_j} \frac{\theta_{ij}^2}{R^2}$ &  $E_i$ & $E$-scheme &  $\frac{E_j}{E_i + E_j}$ \\
\hline
$\beta\to\infty$ & C/A & $\frac{\theta^{\phantom{\beta}}_{ij}}{2R}$ & 1 & ? & $\frac12$\\ \hline \hline
\end{tabular}
\caption{Different sequential recombination measures $d_{ij}$ and recombination schemes $\lambda^*$ that emerge from an EMD formulation.
A question mark indicates a method that, to our knowledge, does not yet appear in the literature.
The traditional definitions of generalized $k_t$ and C/A require squaring $d_{ij}$ and $d_{iB}$.
Note the factor of 2 in the C/A effective jet radius parameter.
}
\label{tab:seqrec}
\end{table}

The conceptual and algorithmic richness of these different distance measures and recombination schemes arose from decades of phenomenological studies.
Remarkably, many of these techniques naturally emerge from event space geometry, as finding the point on the $(M-1)$-particle manifold $\mathcal P_{M-1}$ that is closest to configuration $\mathcal{E}$ with $M$ particles.
Note that the sequential recombination algorithms in \Eqs{eq:dij}{eq:recomb} depend on the two parameters $p$ and $\kappa$, whereas \Eq{eq:SRasEMD} depends only on $\beta$, so the logic below will only identify a one-dimensional family of jet algorithms, as summarized in \Tab{tab:seqrec}.

To derive this connection between event geometry and sequential recombination, we need the following simple yet profound lemma, using the suggestive notation of $d_{iB}$ and $d_{ij}$ to refer to the EMD cost of rearrangement.
\begin{lemma}
As measured by the EMD, the closest $(M-1)$-particle event to an $M$-particle event has, without loss of generality, either:
\begin{enumerate}
\item[(a)] Two of the particles in the event merged together.
\item[(b)] One of the particles in the event removed.
\end{enumerate}
\end{lemma}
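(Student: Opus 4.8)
The plan is to jointly optimize over the target event $\mathcal E' \in \mathcal P_{M-1}$ and the transport plan $f_{ij}$, exploiting the fact that the target particle positions are free parameters. First I would rewrite the minimization $\min_{\mathcal E' \in \mathcal P_{M-1}} \EMD_{\beta,R}(\mathcal E, \mathcal E')$ by absorbing the choice of each target position into the transport cost. Writing the energy of target $j$ as $E'_j = \sum_i f_{ij}$, the objective splits into a per-target piece
\[
G_j(f) = \min_{\hat n'_j} \sum_{i=1}^M f_{ij}\left(\frac{\theta(\hat n_i, \hat n'_j)}{R}\right)^\beta
\]
and a destruction (beam) piece $\sum_i \big(E_i - \sum_j f_{ij}\big)$ arising from the energy-difference term of \Eq{eq:emd}. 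The only remaining constraints are $f_{ij}\ge 0$ and $\sum_j f_{ij} \le E_i$.

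The key observation is that each $G_j$ is a pointwise minimum of functions that are \emph{linear} in the flows, hence \emph{concave} in $f$; adding the affine beam term leaves the whole objective concave in $f$. A concave function on a compact convex set attains its minimum at an extreme point, so it suffices to examine the vertices of the feasible polytope. Because the target energies are free, this polytope factorizes as a product over sources: source $i$ contributes the simplex $\{f_{ij}\ge 0,\ \sum_j f_{ij}\le E_i\}$, whose vertices are either the origin or $E_i$ placed on a single coordinate. Therefore at the optimum every source particle either sends \emph{all} of its energy to a single target or is destroyed entirely to the beam; in particular, no source ever splits.

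With this vertex structure the result follows by pigeonhole. If any source is destroyed, the remaining $\le M-1$ sources match bijectively onto the $M-1$ targets at zero transport cost, and minimizing the beam cost sends exactly one (lowest-energy) particle to the beam: this is case (b), removal. If no source is destroyed, then $M$ sources occupy $M-1$ targets, so some target collects at least two of them; the cheapest such assignment matches $M-2$ sources to their own targets (zero cost) and routes the optimal pair to a single shared target, whose optimized position realizes the recombination scheme and whose cost is $\min_{\hat n}[E_i(\theta_i/R)^\beta + E_j(\theta_j/R)^\beta]$: this is case (a), merging. Assignments with a target holding three or more sources land in $\mathcal P_{M-2}\subset\mathcal P_{M-1}$ and cannot beat the two-particle merge, so they need not be considered.

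I expect the main obstacle to be establishing the concavity reduction rigorously rather than the combinatorics. One must verify that $G_j$ is genuinely concave (which rests on its being an infimum of linear functionals of $f$, valid for any ground metric and any $\beta$) and, more importantly, that the feasible region really is the product of simplices quoted above. This is exactly where the freedom of the target energies $E'_j$ is essential and distinguishes the problem from a standard fixed-marginal transportation LP, whose vertices would permit splitting. Care is also needed with the beam normalization and with non-generic ties (equal distances), where the optimizer is non-unique but the merge and removal forms are still attained, justifying the ``without loss of generality.''
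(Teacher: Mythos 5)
Your proof is correct, and it is substantially more rigorous than the argument the paper actually gives. The paper's proof of this lemma is a short informal one: it observes that reducing the particle count by one can be done either by removing a particle (cost $d_{iB}$) or by merging a pair (cost $d_{ij}$), and then simply asserts that ``altering more than two particles by (re)moving fractions of additional particles always incurs additional EMD costs,'' with a remark that coincident pairs can be merged one at a time without loss of generality. The assertion that fractionally splitting a source among several targets is never beneficial---exactly the step a skeptic would question, since the vertices of a standard fixed-marginal transport polytope do permit splitting---is precisely what your concavity argument establishes: because the target positions and energies are free, the objective becomes a concave function of the flows over a product of per-source simplices, so some optimum sits at a vertex where each source is either wholly destroyed or wholly assigned to a single target, and the pigeonhole and cost-monotonicity steps then force a single removal or a single pairwise merge. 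The one place to tighten your write-up is the dismissal of targets receiving three or more sources: the right reason is not that such configurations lie in $\mathcal P_{M-2}$, but that dropping a nonnegative summand from the minimized merge cost shows a $k$-fold merge is at least as expensive as merging two of its members (and, likewise, that combining a removal with a merge is at least as expensive as either alone). With that spelled out, your argument is a complete, and arguably more convincing, proof of the paper's lemma, valid for any ground metric and any $\beta>0$.
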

\begin{proof}
Removing a particle from the event has some EMD cost $d_{iB}$ and merging a pair of particles has a some EMD cost $d_{ij}$.
To reduce the number of particles in the event by one, one can either remove a particle or merge two particles.
Altering more than two particles by (re)moving fractions of additional particles always incurs additional EMD costs.
If there are multiple pairs that are zero distance apart, then we can without loss of generality always choose to only merge one pair.
\end{proof}

The two options in this lemma correspond precisely to the two possible actions at each stage of a sequential recombination algorithm.
The EMD cost of removing a particle is always
\begin{equation}\label{eq:diBemd}
d_{iB} = E_i.
\end{equation}
If this is less than the cost of merging two particles together, then particle $i$ can be identified as a jet.
For one step of a sequential recombination (SR) procedure applied to an event $\mathcal E$ with $M$ particles, we can express this step mathematically as:
\begin{equation}\label{eq:SRasEMD}
\begin{boxed}{
\mathcal J_{\beta, R}^\text{SR}(\mathcal E) = \mathcal E - \underset{\mathcal E' \in \mathcal P_{M-1}}{\text{argmin}}\,\,\text{EMD}_{\beta,R}(\mathcal E, \mathcal E').
}\end{boxed}
\end{equation}
In our geometric picture, if the $M$ particle event is ``far away'' from the $(M-1)$-particle manifold $\mathcal P_{M-1}$, then the projected difference is a jet.

On the other hand, if the cost of merging two particles is less than any of the particle energies, then the event is ``close'' to the $(M-1)$-particle manifold.
Consider a pair of particles with energies $E_i$ and $E_j$ separated by a distance $\theta_{ij}$.
To find the best $(M-1)$-particle approximation, we want to merge these two particles into one combined particle with energy $E_i + E_j$.
Because the EMD is a metric, the optimal transportation plan must occur along a ``geodesic'' connecting the particles, with particle $i$ moving a distance $\lambda\,\theta_{ij}$ and particle $j$ moving a distance $(1-\lambda)\,\theta_{ij}$ for some $\lambda \in [0,1]$.\footnote{This linear decomposition of the distance does not hold for a general ground metric. However, it does hold when using the rapidity-azimuth distance, the opening-angle on the sphere, the small angle limit of \Eq{eq:theta_def}, or the improved $\theta_{ij}$ distance with particle masses in \App{sec:mass}.}
Minimizing this cost with respect to $\lambda$ yields both the cost of merging those two particles as well as the optimal recombination scheme with which to merge them.
Because no energy is removed in this process, \Eq{eq:SRasEMD} yields a zero energy jet, which we can interpret as no jet being found at this step of the sequential recombination.

The cost of merging particles $i$ and $j$ depends on the jet radius parameter $R$ and angular exponent $\beta$:
\begin{equation}\label{eq:emd2ps}
d_{ij} = \min_\lambda\left[E_i \left(\lambda \,\frac{\theta_{ij}}{R}\right)^\beta + E_j \left((1-\lambda) \,\frac{\theta_{ij}}{R}\right)^\beta\right].
\end{equation}
For $\beta \le 1$, the cost in \Eq{eq:emd2ps} is minimized at the endpoints.
This corresponds to moving the less energetic particle the entire distance $\theta_{ij}$ to the more energetic particle, which is the precisely behavior of the winner-take-all recombination scheme.
For $\beta>1$, the optimal value $\lambda^*$ can be found by differentiating \Eq{eq:emd2ps} with respect to $\lambda$ and setting the result equal to zero.
In general, the optimal recombination scheme has:
\begin{align}
0 < \beta \le 1:& \quad \lambda^* = 1 \text{ if $E_i < E_j$, else } 0, \nonumber \\
\beta>1: & \quad \lambda^* = \frac{1}{1 + \left(\frac{E_i}{E_j}\right)^{\frac{1}{\beta-1}}}.
\label{eq:opt_recomb}
\end{align}
To determine the actual cost, we substitute this $\lambda^*$ back into \Eq{eq:emd2ps}:
\begin{align}
\beta \le 1:&\quad d_{ij} = \min(E_i, E_j)\frac{\theta_{ij}^\beta}{R^\beta}, \nonumber\\
\beta>1: & \quad d_{ij} = \frac{E_i E_j^{\frac{\beta}{\beta-1} }+ E_i^{\frac{\beta}{\beta-1}}E_j}{\left(E_i^{\frac{1}{\beta-1}} + E_j^{\frac{1}{\beta-1}}\right)^\beta}\frac{\theta_{ij}^\beta}{R^\beta}.
\label{eq:dijemd}
\end{align}
If all $d_{ij}$ values in \Eq{eq:dijemd} are smaller than all particle energies in \Eq{eq:diBemd}, then the optimal transportation plan is to merge particles $i$ and $j$.

In this way, \Eq{eq:SRasEMD} takes an $M$-particle event and returns a jet (with zero energy if no actual jet is found) plus the remaining $(M-1)$-particle approximation.
This corresponds exactly to one step of a sequential clustering procedure.
Iterating this procedure until $M = 1$, we derive a sequential recombination jet algorithm, where the jets correspond to all of the positive energy configurations obtained from \Eq{eq:SRasEMD}.

Many existing methods reside within the simple framework of \Eq{eq:SRasEMD}.
For instance, $\beta=1$ corresponds to $k_t$ jet clustering with winner-take-all recombination.
The recombination scheme for $\beta=2$ is the $E$-scheme, whereas for $\beta=\frac32$ it is the $E^2$-scheme.
Raising the distance measures to the $1/\beta$ power and taking the $\beta\to\infty$ limit, we obtain the C/A clustering metric, albeit with an effective jet radius that is twice the $R$ parameter.
There are also a number of methods, indicated as question marks in \Tab{tab:seqrec}, that emerge from this reasoning yet do not presently appear in the literature.
Exploring these new methods is an interesting avenue for future work.

Intriguingly, in this geometric picture, the distance measure $d_{ij}$ and the recombination scheme $\lambda^*$ are paired by the $\beta$ parameter.
A similar pairing was noted in \Refs{Stewart:2015waa,Dasgupta:2015lxh} in the context of choosing approximate axes for computing $N$-(sub)jettiness, and it would be interesting to explore the phenomenological implications of these paired choices for jet clustering.
One sequential combination algorithm that does not appear is anti-$k_t$.
Given that anti-$k_t$ is a kind of hybrid between sequential recombination and cone algorithms, there may be a way to combine the logic of \Secs{subsec:xcone}{sec:seqrec} to find a geometric phrasing of anti-$k_t$.
If successful, such a geometric construction would likely illuminate the difference between exclusive jet algorithms like XCone that find a fixed number of jets $N$ and inclusive jet algorithms like anti-$k_t$ that determine $N$ dynamically.

\section{Pileup subtraction: Moving away from uniform events}
\label{sec:pileup}

The LHC era has brought with it new collider data analysis challenges.
One notable example is pileup mitigation~\cite{Soyez:2018opl}, removing the diffuse soft contamination from additional uncorrelated proton-proton collisions.
The radiation from pileup interactions is approximately uniform in the rapidity-azimuth plane, and several existing pileup mitigation strategies seek to remove this uniform distribution of energy from the event~\cite{Cacciari:2007fd,Krohn:2013lba,Cacciari:2014jta,Cacciari:2014gra,Bertolini:2014bba,Berta:2014eza,Komiske:2017ubm,Monk:2018clo,Martinez:2018fwc}.

\begin{figure}[t]
\centering
\includegraphics[width=0.5\columnwidth]{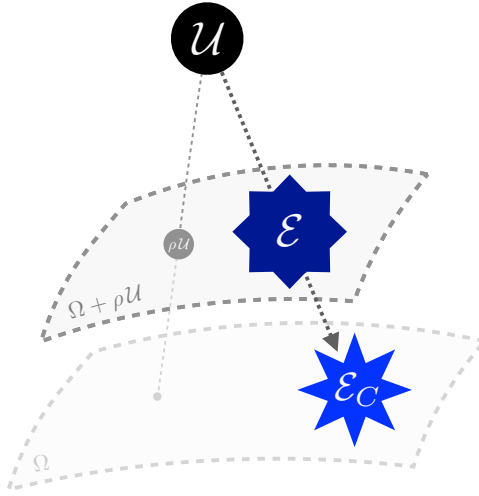}
\caption{
\label{fig:spacepu}
A visualization of pileup subtraction in the space of events as moving away from uniform radiation.
This proceeds by finding the event $\mathcal{E}_C$ that, when combined with uniform contamination $\rho \, \mathcal{U}$, is most similar to the given event $\mathcal{E}$.
Different pileup mitigation strategies implement this removal in different ways.
In the figure above, $\Omega$ refers to the space of all energy flows and $\Omega + \rho \, U$ is a subset of that space obtained by adding uniform contamination to every event configuration (shown as a separate manifold for ease of visualization). 
}
\end{figure}

In this section, inspired by the approximate uniformity of pileup, we consider a class of pileup removal procedures that can be described as ``subtracting'' a uniform distribution of energy with density $\rho$, denoted $\rho\,\mathcal U$, from a given event.
We take the pileup density per unit area $\rho$ to be given, for instance, by the area-median approach~\cite{Cacciari:2007fd}.
Given an event flow $\mathcal E$, the subtracted distribution $\mathcal E - \rho \, \mathcal{U}$ is typically not a valid energy flow, since the local energy density can go negative.
Therefore, to implement this principle at the level of energy distributions, we turn this logic around and declare the corrected event $\mathcal{E}_C$ to be one that is as close as possible to the given event $\E$ when uniform radiation $\rho\,\mathcal U$ is added to it:
\begin{equation}
\label{eq:pileupemd}
\E_C(\E, \rho)  = \argmin_{\E' \in \Omega}\,\EMD_\beta(\E,\E' + \rho\,\mathcal U).
\end{equation}
Here, $\Omega$ refers to the complete space of energy flows, and the $R\to\infty$ limit of the EMD from \Eq{eq:emdRtoinfty} enforces that the corrected distribution $\E_C$ has the correct total energy.

As illustrated in \Fig{fig:spacepu}, one can visualize \Eq{eq:pileupemd} as a procedure that subtracts a uniform component from the energy flow.
To make contact with existing techniques, we show that area-based Voronoi subtraction~\cite{Cacciari:2007fd,Cacciari:2008gn,Cacciari:2011ma} and ghost-based constituent subtraction~\cite{Berta:2014eza} can be cast in the form of \Eq{eq:pileupemd} in the low-pileup limit.
We then develop two new pileup mitigation techniques that have optimal transport interpretations even away from the low-pileup limit: Apollonius subtraction, which corresponds to exactly implementing \Eq{eq:pileupemd} for $\beta=1$, and iterated Voronoi subtraction, which repeatedly applies \Eq{eq:pileupemd} with an infinitesimal $\rho$.
Since pileup is characteristic of a hadron collider, throughout this section we compute the EMD using particle transverse momenta $p_{T,i}$ and rapidity-azimuth coordinates $\hat n_i = (y_i,\phi_i)$, with $\theta_{ij}$ being the rapidity-azimuth distance.
Typically, pileup is taken to be uniform in a bounded region of the plane (e.g.~$|y|< y_{\rm max}$), though the specifics will not significantly affect our analysis.
First, though, we establish an important lemma that justifies why the corrected distribution $\E_C$ has a particle-like interpretation.

\subsection{A property of semi-discrete optimal transport}

There is a direct connection between pileup subtraction in \Eq{eq:pileupemd} and semi-discrete optimal transport~\cite{hartmann2017semi}.
Semi-discrete means that we are comparing a discrete energy flow (i.e.~one composed of individual particles) to a smooth distribution (i.e.~uniform pileup contamination).

Importantly, if $\mathcal{E}$ is discrete, then the corrected distribution $\E_C$ will also be discrete.
This can be proved via the following lemma.
\begin{lemma}
\label{lemma:pileupemd}
$\E_C$ defined according to \Eq{eq:pileupemd} is strictly contained in $\E$, where containment here means that $\E-\E_C$ is a valid distribution with non-negative particle transverse momenta.
\end{lemma}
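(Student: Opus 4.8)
The plan is to reduce the statement to a structural property of the optimal transport plan and then argue by exchange. Because $\EMD_\beta$ compares only distributions of equal total energy and $\E'\ge 0$, I would first reparametrize the minimization in \Eq{eq:pileupemd}. Writing $\nu = \E' + \rho\,\mathcal U$, the constraint $\E'\ge 0$ becomes $\nu\ge\rho\,\mathcal U$, while equality of total energies fixes the mass of $\nu$ to that of $\E$. Hence
\begin{equation}
\E_C = \nu^\star - \rho\,\mathcal U, \qquad \nu^\star = \argmin_{\nu\,\ge\,\rho\,\mathcal U,\ |\nu|\,=\,|\E|}\, \EMD_\beta(\E,\nu),
\end{equation}
and the claim $\E - \E_C \ge 0$ becomes the two sub-statements that the ``excess'' $\nu^\star - \rho\,\mathcal U$ is supported on the particle locations $\{\hat n_i\}$ of $\E$ and that its mass at each $\hat n_i$ is at most $E_i$.

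Next I would establish the support statement by an exchange argument on an optimal plan $\pi$ transporting $\E$ to $\nu^\star$. Suppose $\nu^\star$ carried excess mass (above $\rho\,\mathcal U$) at some point $y\notin\{\hat n_i\}$. Since $\E$ is discrete, this excess is delivered by $\pi$ from some particle $\hat n_{i_0}$, necessarily with $\theta_{i_0 y}>0$. Relocating that excess from $y$ back to its source $\hat n_{i_0}$ produces a competitor $\nu'$ that is still feasible---moving only the part above $\rho\,\mathcal U$ and using that the smooth pileup carries no point mass keeps $\nu'\ge\rho\,\mathcal U$---and whose cost is strictly smaller by $\theta_{i_0 y}^\beta$ per unit mass. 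This contradicts optimality, so $\nu^\star - \rho\,\mathcal U = \sum_i E_i^C\,\delta_{\hat n_i}$ is discrete on the support of $\E$.

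I would then bound each $E_i^C$ by the same idea. If $E_{i_0}^C > E_{i_0}$ for some particle, mass balance at $\hat n_{i_0}$ forces $\pi$ to route the surplus $E_{i_0}^C - E_{i_0}$ in from some other particle $\hat n_j$, again at strictly positive cost $\theta_{j i_0}^\beta$. Moving that surplus point mass from $\hat n_{i_0}$ to $\hat n_j$ yields a strictly cheaper feasible $\nu'$, a contradiction; hence $0\le E_i^C\le E_i$. Therefore $\E - \E_C = \sum_i (E_i - E_i^C)\,\delta_{\hat n_i}$ is a non-negative discrete distribution, and since its total energy equals the injected pileup density $\rho$ times the area of the pileup region, which is positive, the containment is strict. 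As a bonus this exhibits $\E-\E_C$ concretely as the semi-discrete transport of the uniform pileup onto the event's particles.

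The main obstacle I anticipate is not the combinatorial logic but making the two perturbations rigorous in the semi-discrete setting: one must invoke existence of an optimal plan (or, in a potential-based phrasing, Kantorovich potentials) for transport between the discrete $\E$ and a measure $\nu$ possessing both an absolutely continuous part $\rho\,\mathcal U$ and atoms, and one must check at each step that shifting only the above-$\rho\,\mathcal U$ excess preserves the constraint $\nu\ge\rho\,\mathcal U$. This is precisely where the smoothness of pileup---its lack of atoms---is essential, so I would state that property carefully at the outset and assume throughout that $|\E|\ge\rho\,A$ so that the reparametrized problem is feasible.
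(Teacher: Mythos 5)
Your proof is correct, but it takes a genuinely different route from the paper's. The paper argues globally: it defines $\tilde\E$ as the image of $\E_C$ under the optimal plan transporting $\E_C+\rho\,\mathcal U$ to $\E$ (which is contained in $\E$ by construction), then combines the additive decomposition of the EMD over disjoint flows, \Eq{eq:emdineq}, with the Wasserstein inequality $\EMD_\beta(\E,\mathcal F)\ge\EMD_\beta(\E+\mathcal G,\mathcal F+\mathcal G)$ of \Eq{eq:wassersteinprop} to show that $\tilde\E$ is a strictly better competitor whenever $\E_C\neq\tilde\E$. You instead reparametrize by $\nu=\E'+\rho\,\mathcal U$ and perform local exchange arguments on the optimal plan: relocating off-support excess back to its source atom, and relocating surplus atom mass back to the particle that supplied it, each at a strict cost saving. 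Your approach is more elementary—it needs only existence of an optimal plan and strict positivity of the ground cost off the diagonal—and it explicitly delivers the per-particle bound $E_i^C\le E_i$, which the paper's proof as written does not address (it only rules out support of $\E_C$ outside that of $\E$; the mass bound follows from their argument only after noting that it actually forces $\E_C=\tilde\E$). The paper's route, in exchange, establishes the decomposition identity \Eq{eq:emdineq} that is reused later (e.g.\ in the Apollonius subtraction discussion), so it integrates better with the rest of the section. One point to tighten in your write-up: since $\Omega$ admits diffuse energy flows, the ``excess at a point $y$'' should be phrased measure-theoretically—the excess $\nu^\star-\rho\,\mathcal U$ restricted to the complement of $\{\hat n_i\}$ may be a diffuse measure rather than an atom—but the exchange still works because that portion of the plan transports positive mass from the finite source set over strictly positive distances, so its cost is strictly positive and relocating it to its sources (via the disintegration of the plan) preserves both the mass constraint and $\nu'\ge\rho\,\mathcal U$.
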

\begin{proof}
Suppose for the sake of contradiction that $\E_C$ is defined according to \Eq{eq:pileupemd} has some support where $\E$ does not.
Let $\tilde\E$ be the distribution that $\E_C$ flows to when $\E_C+\rho\,\mathcal U$ is optimally transported to $\E$, noting that by definition, $\tilde\E$ must be contained in $\E$.
By the linear sum structure of \Eq{eq:emd_noR}~\cite{OTbook}, we have the following relation:
\begin{equation}
\label{eq:emdineq}
\EMD_\beta(\E,\E_C+\rho\,\mathcal U)=\EMD_\beta(\tilde\E,\E_C)+\EMD_\beta(\E-\tilde\E,\rho\,\mathcal U).
\end{equation}
Now using the following property of $\EMD_\beta$ inherited from Wasserstein distances~\cite{hartmann2017semi}:
\begin{equation}
\label{eq:wassersteinprop}
\EMD_\beta(\mathcal E,\mathcal F) \ge  \EMD_\beta(\mathcal E+\mathcal G,\mathcal F+\mathcal G),
\end{equation}
with equality if $\beta=1$ and the ground metric is Euclidean, we add $\tilde\E$ to both arguments of the last term in \Eq{eq:emdineq} and apply \Eq{eq:wassersteinprop} to find:
\begin{equation}
\label{eq:emdineq2}
\EMD_\beta(\E,\E_C+\rho\,\mathcal U)\ge\EMD_\beta(\tilde\E,\E_C)+\EMD_\beta(\E,\tilde\E+\rho\,\mathcal U).
\end{equation}
Now using that $\EMD_\beta(\tilde\E,\E_C)>0$ by the assumption that they have different supports as well as the non-negativity of the EMD, we find:
\begin{equation}
\label{eq:contradiction}
\EMD_\beta(\E,\E_C+\rho\,\mathcal U)>\EMD_\beta(\E,\tilde\E+\rho\,\mathcal U),
\end{equation}
which contradicts the assumption that $\E_C$ is found according to \Eq{eq:pileupemd}.
Thus, we conclude that $\E_C$ has no support outside of the support of $\E$, verifying the claim.
\end{proof}

This lemma establishes that pileup mitigation strategies defined by \Eq{eq:pileupemd} act by scaling the energies of the particles in the original event $\mathcal{E}$, not by producing new particles.
Indeed, this is a desirable feature of many popular pileup mitigations schemes, including two well-known methods that we describe next.

\subsection{Voronoi area subtraction}
\label{sec:voronoi}

\begin{figure}[t]
\centering
\subfloat[Voronoi]{\includegraphics[width=0.33\columnwidth]{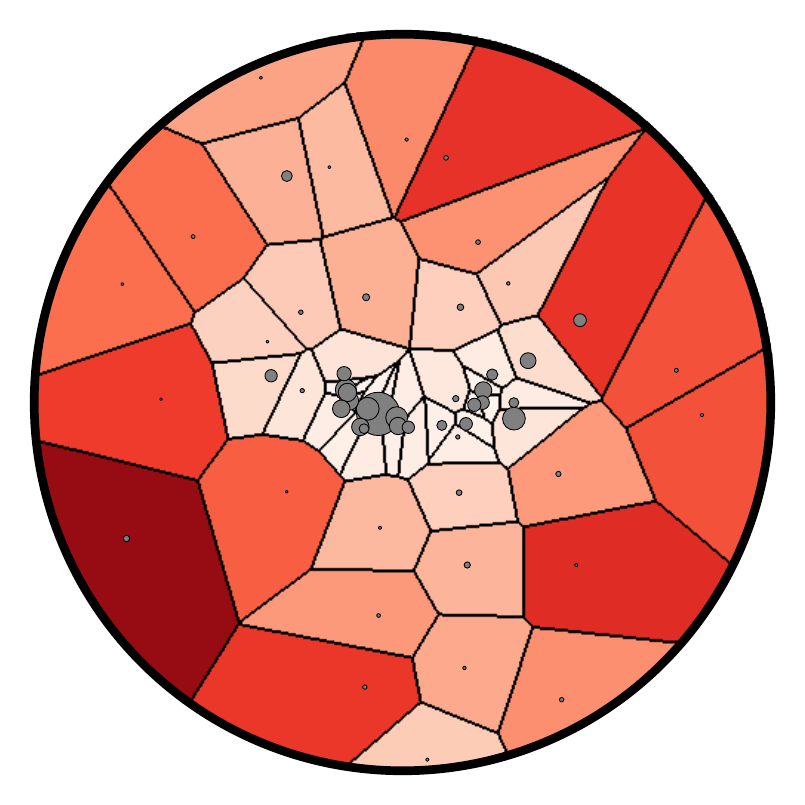}\label{fig:voronoi}}
\subfloat[Constituent Subtraction]{\includegraphics[width=0.33\columnwidth]{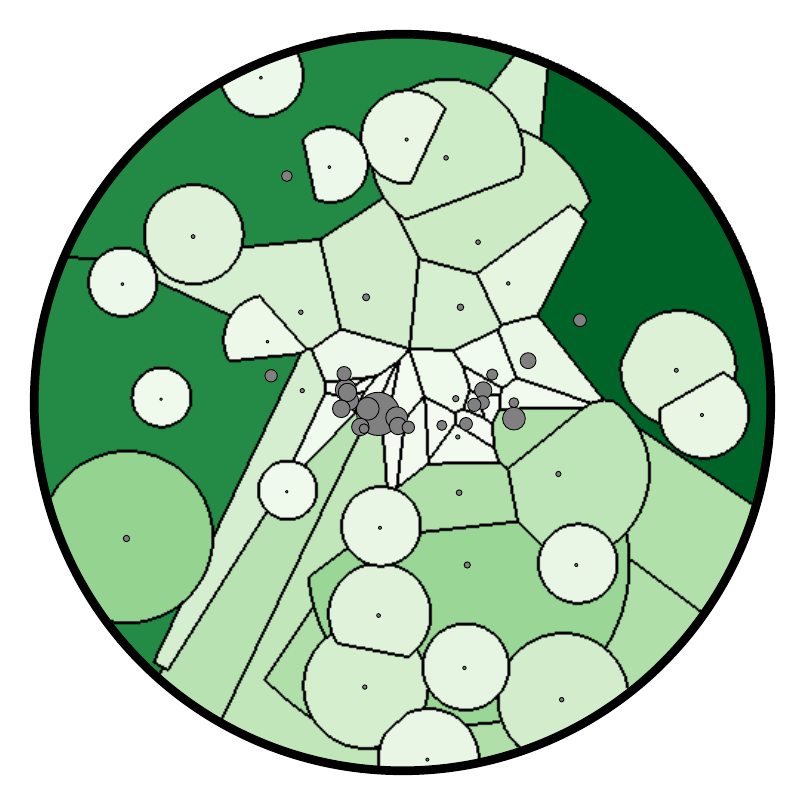}\label{fig:constsub}}
\subfloat[Apollonius]{\includegraphics[width=0.33\columnwidth]{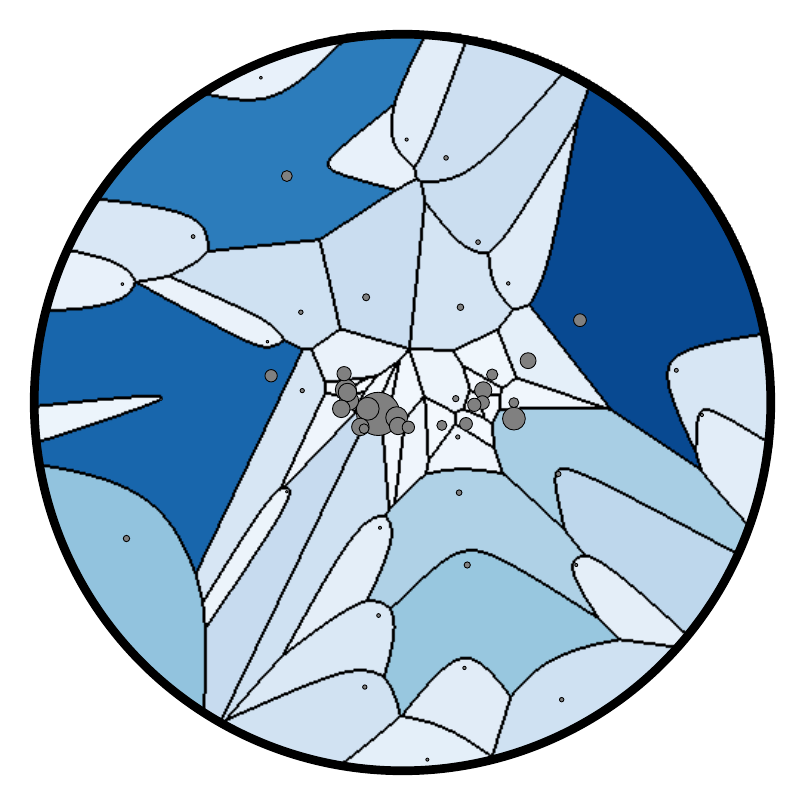}\label{fig:apollonius}}
\caption{
\label{fig:pileupexample}
Three pileup subtraction procedures shown for a jet from the 2011A CMS Jet Primary Dataset.
The jet constituents are shown as gray disks at their locations in the rapidity-azimuth plane with sizes proportional to their transverse momenta.
The boundary is at a distance $R=0.5$ from the jet axis at the center.
%
%
The Voronoi diagram (a) is independent of $\rho$.
The constituent subtraction (b) and Apollonius (c) diagrams are determined using a $\rho$ that corresponds to subtracting one-tenth of the total scalar $p_T$ of the jet.
}
\end{figure}

Voronoi area subtraction~\cite{Cacciari:2007fd,Cacciari:2008gn,Cacciari:2011ma} is a pileup mitigation technique that estimates a particle's pileup contamination by associating it with an area determined by its corresponding \emph{Voronoi region}, or the set of points in the plane closer to that particle than any other~\cite{Aurenhammer2013:book}.
Letting $A_i^\text{Vor.}$ be the area of the Voronoi region of particle $i$, Voronoi subtraction then simply removes $\rho A_i^\text{Vor.}$ from each particle's transverse momentum, without letting the particle $p_T$ become negative.
If $\rho A_i^\text{Vor.}\ge p_{T,i}$ then the particle is removed entirely.
In \Fig{fig:voronoi}, we show the Voronoi regions for an example jet recorded by the CMS detector~\cite{CMS:JetPrimary2011A,Komiske:2019jim}.

Voronoi area subtraction (VAS) can be thought of as carving up the uniform event $\rho\,\mathcal U$ according to the original event's Voronoi diagram and transporting this energy to the location of the corresponding particle, yielding the corrected energy flow:%
\begin{equation}
\label{eq:evas}
\E_\text{VAS}(\hat n)=\sum_{i=1}^M\max\left(p_{T,i}-\rho A_i^\text{Vor.},0\right)\delta(\hat n - \hat n_i).
\end{equation}
Strictly speaking, Voronoi area subtraction does not satisfy exact IRC invariance (see \Eqs{eq:exactirsafety}{eq:exactcsafety}) and thus it cannot in general be written as operating on energy flows.
The reason is that an exact IRC splitting changes the number of Voronoi regions as well as their areas.
In order for \Eq{eq:evas} to be valid, we therefore assume that particles with exactly zero transverse momentum are removed and exactly coincident particles are combined before applying the Voronoi area subtraction procedure.

In the limit that $\rho\le p_{T,i}/A_i^\text{Vor.}$ for all particles $i$, the max in \Eq{eq:evas} evaluates to just its first argument.
In this case, since no particle is assigned a larger correction than its own transverse momentum, the Voronoi diagram gives the optimal transportation plan that minimizes the \EMD of moving the uniform event with density $\rho$ onto the event of interest:
\begin{equation}\label{eq:evas2}
\boxed{
\rho \le \min\left\{p_{T,i}/A_i^\text{Vor.}\right\}: \quad \mathcal E_\text{VAS} = \argmin_{\mathcal E' \in \Omega} \EMD_1(\mathcal E, \mathcal E' + \rho\,\mathcal U).
}
\end{equation}
Thus, in this small-pileup limit, \Eq{eq:evas} agrees with \Eq{eq:pileupemd} with $\beta=1$.
Despite this attractive geometric interpretation, Voronoi area subtraction beyond this limit is sensitive to arbitrarily soft particles: the amount that is subtracted depends only on particle positions, through their Voronoi areas, and not their transverse momenta.

\subsection{Constituent subtraction}
\label{subsec:const_sub}

Constituent subtraction~\cite{Berta:2014eza} is another pileup mitigation method that resolves several pathologies of Voronoi area subtraction by correcting the particles in a manner that depends on both their positions and their transverse momenta.%
\footnote{In this discussion, we focus on the $\alpha=0$ case of constituent subtraction, as recommended by \Ref{Berta:2014eza}.}
This comes at the cost of requiring a fine grid of low energy ``ghost'' particles with $p_T^g=\rho A^\text{ghost}$, where $A^\text{ghost}$ is the area assigned to each ghost, as a proxy for the pileup contamination.
The algorithm is applied by considering the geometrically closest ghost-particle pair $k,i$ and modifying them via:
\begin{equation}
p_{T,i} \to \max(p_{T,i} - p_{T,k}^g, 0), \quad \quad p_{T,k}^g \to \max(p_{T,k}^g - p_{T,i}, 0),
\end{equation}
continuing until all such pairs have been considered.
Since the number of ghosts is typically large in order to have fine angular granularity, this iteration through all ghost-particle pairs can be computationally expensive.

Constituent subtraction (CS) in the continuum ghost limit can be geometrically described by placing circles around each point in the rapidity-azimuth plane and simultaneously increasing their radii.
Each point in the plane is assigned to the particle whose circle reaches it first.
Circles stop growing when $A_i^\text{CS}$, the area assigned to particle $i$, grows larger than $p_{T,i}/\rho$.
We can write the resulting distribution as:
\begin{equation}
\label{eq:ecs}
\E_\text{CS}(\hat n)=\sum_{i=1}^M\left(p_{T,i}-\rho A_i^\text{CS}\right)\delta(\hat n-\hat n_i).
\end{equation}
Unlike naive Voronoi area subtraction, continuum constituent subtraction satisfies exact IRC invariance, since a zero energy particle has zero $A^\text{CS}$ and an exact collinear splitting yields two areas that sum to the original $A^\text{CS}$.
Constituent subtraction is also better suited for intermediate values of $\rho$, where particles can be fully removed, since further corrections are distributed to the next closest particle instead of being ignored as in Voronoi area subtraction.

Due to the complicated shapes of the corresponding regions, it is difficult to describe the areas $A_i^\text{CS}$ analytically and in practice they need to be estimated using numerical ghosts.
An example of constituent subtraction is shown in \Fig{fig:constsub}, where it can be seen that some region boundaries are straight and thus contained in the Voronoi diagram of \Fig{fig:voronoi}.
Indeed, growing circles from a set of points and assigning points in the plane according to which circle reaches them first is another way of describing the construction of a Voronoi diagram.
Regions with circular boundaries correspond to softer particles that are fully subtracted by the constituent subtraction procedure.

When $\rho$ is sufficiently small such that no particle's region has a circular boundary (i.e.\ no circle stops growing), constituent subtraction is exactly equivalent to Voronoi area subtraction.
Constituent subtraction in the low-pileup limit is then also equivalent to optimally transporting the uniform event with density $\rho$ to the event of interest and subtracting accordingly, again in line with \Eq{eq:pileupemd} with $\beta=1$:
\begin{equation}\label{eq:ecs2}
\boxed{
\rho \le \min\left\{p_{T,i}/A_i^\text{Vor.}\right\}: \quad \mathcal E_\text{CS} = \argmin_{\mathcal E' \in \Omega} \EMD_1(\mathcal E, \mathcal E' + \rho\,\mathcal U).
}
\end{equation}
Constituent subtraction can also be extended with a $\Delta R^\text{max}$ parameter to restrict ghosts from affecting distant particles.
Our geometric formalism can also encompass this locality by re-introducing the $R$-parameter to the EMD in \Eq{eq:ecs2} with $R = \Delta R^\text{max}$.

\subsection{Apollonius subtraction}
\label{sec:apollonius}

Voronoi area subtraction and constituent subtraction both make contact with \Eq{eq:pileupemd} in the small-$\rho$ limit, but we would like to explore pileup subtraction based on optimal transport for all values of $\rho$.
By Lemma~\ref{lemma:pileupemd}, we know that the corrected event is contained in the original event, and by the decomposition properties of the EMD in \Eq{eq:emdineq}, we only need to consider the transport of $\rho\,\mathcal U$ to $\E$.
Since the total transverse momenta of $\rho\,\mathcal U$ and $\E$ are generally different, this is now an example of a semi-discrete, \emph{unbalanced} optimal transport problem~\cite{OTtheory,bourne2018semi}.

The problem of minimizing the EMD between a uniform distribution and an event is solved, for general $\beta$, by a \emph{generalized Laguerre diagram}~\cite{bourne2018semi}.
For the special case of $\beta=1$, which we focus on here, this is also known as the \emph{Apollonius diagram} (or additively weighted Voronoi diagram)~\cite{DBLP:conf/esa/KaravelasY02,DBLP:journals/comgeo/GeissKPR13,hartmann2017semi}, and for $\beta=2$ it is a \emph{power diagram}~\cite{DBLP:journals/tog/XinLCCYTW16}.
An Apollonius diagram in the plane is constructed from a set of points $\hat n_i$ that each carry a non-negative weight $w_i$ that is the $i^\text{th}$ component of a vector ${\bf w}\in\mathbb{R}_+^M$.
In the two-dimensional Euclidean plane, the Apollonius region associated to particle $i$ depending on ${\bf w}$ is:
\begin{equation}
\label{eq:apollonius}
R^\text{Apoll.}_i({\bf w})=\left\{\hat n\in\mathbb{R}^2\,\big|\,\|\hat n-\hat n_i\|-w_i\le\|\hat n-\hat n_j\|-w_j,\,\,\forall\,\,j\neq i\right\},
\end{equation}
where particle indices $i,j=1,\ldots,M$ and $\|\cdot\|$ is the Euclidean norm.
One interpretation of \Eq{eq:apollonius} is that region $i$ is all points closer to a circle of radius $w_i$ centered at $\hat n_i$ than to the corresponding circle for any other particle.
The boundaries of the Apollonius regions are contained in the set $\{\hat n\in\mathbb{R}^2\,|\,\|\hat n-\hat n_i\|-\|\hat n-\hat n_j\|=w_i-w_j\}$, which is a union of hyperbolic segments.
Note that adding the same constant to all of the weights does not change the resulting Apollonius diagram.
Hence, if all the weights are equal, they can equivalently be set to zero and we attain the Voronoi diagram as a limiting case of an Apollonius diagram.

We can now specify the action of Apollonius subtraction on an event using the areas of the Apollonius regions subject to the minimal EMD requirement:
\begin{align}
\label{eq:corrapollonius}
\E_\text{Apoll.}(\hat n) &=\sum_{i=1}^M\left(p_{T,i} - \rho A_i^\text{Apoll.}({\bf w}^*)\right)\delta(\hat n-\hat n_i),
\\ \label{eq:wstar}
{\bf w}^* &=\argmin_{{\bf w}\in\mathbb{R}_+^M}\sum_{i=1}^M\EMD_{1,R}\left(\{p_{T,i},\hat n_i\},\rho R_i^\text{Apoll.}(\bf w)\right),
\end{align}
treating $\rho R_i^\text{Apoll.}({\bf w})$ as an event with uniform energy density $\rho$ in that Apollonius region.
Here, \Eq{eq:corrapollonius} is analogous to \Eqs{eq:evas}{eq:ecs}, and \Eq{eq:wstar} implements the requirement that the EMD of the subtraction is minimal.
Note that the $R$ parameter in \Eq{eq:wstar} serves only to guarantee that it is more efficient to transport energy rather than create/destroy it.
As long as $2R$ is greater than the diameter of the space, $R$ has no impact on the solution other than to guarantee that $\rho A_i^\text{Apoll.}({\bf w}^*)$ does not exceed $p_{T,i}$, as this would be less efficient than transporting the excess energy elsewhere.
An example of an Apollonius diagram is shown in \Fig{fig:apollonius}, where hyperbolic boundaries of the Apollonius regions are clearly seen in the outer part of the jet and straight boundaries, matching those of the Voronoi diagram, are seen near the core.

In this way, Apollonius subtraction generalizes Voronoi area and constituent subtraction beyond the small-pileup limit, directly implementing \Eq{eq:pileupemd} for $\beta=1$ for all values of $\rho$:
\begin{equation}\label{eq:eapoll2}
\boxed{
\mathcal E_\text{Apoll.} = \argmin_{\mathcal E' \in \Omega} \EMD_1(\mathcal E, \mathcal E' + \rho\,\mathcal U).
}
\end{equation}
While the optimal solution in \Eq{eq:wstar} is based on an unbalanced optimal transport problem, the restatement in \Eq{eq:eapoll2} corresponds to balanced transport.
This same connection underpins Lemma~\ref{lemma:pileupemd}, guaranteeing that the corrected event in \Eq{eq:corrapollonius} involves the same $M$ directions as the original event, just with different weights.

To turn \Eq{eq:eapoll2} into a practical algorithm, we would need an efficient way to compute the weights according to \Eq{eq:wstar}.
While \Refs{OTtheory,bourne2018semi} have developed the theoretical framework of semi-discrete, unbalanced optimal transport needed to solve this convex minimization problem, they stop short of describing easily-implementable algorithms to attain practical solutions.
In order to create \Fig{fig:apollonius}, we were limited to using numerical ghosts to directly solve for the transport plan that minimizes the EMD cost of subtracting the uniform energy component from the event, which is too computationally costly for LHC applications.

If the target areas $A_i^\text{Apoll.}$ are previously specified, then the solution to \Eq{eq:wstar} simplifies~\cite{hartmann2017semi}.
Given that the areas depend nontrivially on the resulting weight vector, though, the only case where we know them ahead of time is when $\rho$ is such that \emph{all} of the energy will be exactly subtracted, in which case $A_i^\text{Apoll.}=p_{T,i}/\rho$.
Though this is not so useful for pileup, where we typically want to subtract an amount of energy less than the total, it does indicate that an Apollonius diagram can be found and used to compute the event isotropy from \Sec{sec:isotropy} without the use of numerical ghosts.
We leave the implementation of such a procedure to future work, though we note that \Ref{hartmann2017semi} has already built an implementation that relies on numerical ghosts to estimate the areas of the Apollonius regions rather than solving for them analytically.

\subsection{Iterated Voronoi subtraction}
\label{sec:ivs}

Given the difficulty of analytically solving \Eq{eq:wstar} and thus implementing Apollonius subtraction, we now develop an alternative method called iterated Voronoi subtraction that gives up a global notion of minimizing EMD but retains a local one.
In all three methods described above, the difficulty comes when a particle is removed in the course of subtracting pileup.
Otherwise, the above methods all reduce to subtracting transverse momentum according to the Voronoi areas of the regions corresponding to the particles, as in \Eq{eq:evas2}.
This suggests a procedure in which pileup is subtracted according to \Eq{eq:pileupemd} an infinitesimal amount at a time, thus ensuring that \Eq{eq:evas2} can be used at every stage of the procedure.

The area of the Voronoi cell of particle $i$ is now a function of the total amount of energy density that has been subtracted thus far, a quantity that starts at zero and will be integrated up to the target $\rho^\text{tot}$ over the course of the procedure.
When a particle loses all of its transverse momentum, it is removed from the Voronoi diagram and is considered to have zero area associated to it.
The removal of a particle from the diagram changes the Voronoi regions of all of its neighbors, and their areas are updated accordingly.
Denoting the area associated to particle $i$ after $\rho$ worth of energy density has been subtracted as $A_i^\text{IVS}(\rho)$, we can write the corrected distribution for iterated Voronoi subtraction (IVS) as:
\begin{equation}
\label{eq:corrivs}
\E_\text{IVS}(\hat n)=\sum_{i=1}^M\left(p_{T,i} - \int_0^{\rho^\text{tot}} d\rho\,A_i^\text{IVS}(\rho)\right)\delta(\hat n-\hat n_i).
\end{equation}

Unlike \Eqs{eq:corrapollonius}{eq:wstar}, \Eq{eq:corrivs} naturally lends itself to a simple and efficient implementation.
We can iteratively solve for $A_i^\text{IVS}(\rho)$ using the fact that the areas correspond to Voronoi regions, and furthermore that these regions change only when a particle is removed.
Let $\E^{(0)}$ be the initial event consisting of particles with transverse momenta $p_{T,i}^{(0)}$ in Voronoi regions with area $A_i^{(0)}$.
We subtract a total energy density $\rho^\text{tot}$ by breaking up the integral in \Eq{eq:corrivs} starting with $\rho^{(0)}=0$ and determining the boundaries from:
\begin{equation}
\label{eq:rhon}
\rho^{(n)}=\max\left\{\rho\,\Bigg|\,0\le\rho\le\rho^\text{tot}-\sum_{i=0}^{n-1}\rho^{(i)}\,\text{ s.t. }\rho A_j^{(n-1)}\le p_{T,j}^{(n-1)}\,\forall\,j\right\},
\end{equation}
where $n$ starts at 1 and goes up to at most $M$.
The values of $\rho^{(n)}$ can be expressed simply as:
\begin{equation}
\label{eq:rhon2}
\rho^{(n)}=\min\left(\min\left\{\frac{p_{T,j}^{(n-1)}}{A_j^{(n-1)}}\right\}, \rho^\text{tot} - \sum_{i=0}^{n-1} \rho^{(i)}\right),
\end{equation}
where the inner minimum is taken over all remaining particles with $p_{T,j}^{(n-1)}>0$.

The updated particle momenta from each piece of the integral in \Eq{eq:corrivs} are then:
\begin{equation}
p_{T,i}^{(n)}=p_{T,i}^{(n-1)}-\rho^{(n)}A_i^{(n-1)},
\end{equation}
and a particle is considered removed if its transverse momentum is zero, in which case it is also considered to have zero area.
The areas $A_i^{(n)}$ are determined by the Voronoi diagram of $\E^{(n)}$.
The above procedure terminates either when the total amount of energy density removed is equal to $\rho^\text{tot}$ or there are no more particles left in the event.
Thus, iterated Voronoi subtraction makes contact with the geometric perspective of \Eq{eq:pileupemd}, applying it in infinitesimal increments, resulting in the discrete steps:
\begin{equation}\label{eq:eitvor2}
\boxed{
\mathcal E_\text{IVS}^{(n)} = \argmin_{\mathcal E' \in \Omega} \EMD_1(\mathcal E^{(n-1)}, \mathcal E' + \rho^{(n)}\,\mathcal U).
}
\end{equation}
Said another way, this is simply a repeated application of Voronoi area subtraction:  subtract until a particle reaches zero momentum, and repeat until the desired energy density has been removed.

Iterated Voronoi subtraction is made even more attractive computationally when one considers that the Voronoi diagram of $\E^{(n)}$ does not need to be recomputed from scratch.
Rather, it can be obtained from the Voronoi diagram of $\E^{(n-1)}$ by removing a site and updating only the neighboring regions.
Thus, we only need to construct the Voronoi diagram of $\E^{(0)}$ and each removal can be done in constant (amortized) time as the average number of neighbors of any cell is no more than 6~\cite{Aurenhammer2013:book}.
We have constructed an implementation of iterated Voronoi subtraction that interfaces with \textsc{FastJet} and will explore its phenomenological properties in future work.

\section{Theory space}
\label{sec:theory}

When do two theories give rise to similar signatures?
In this section, we seek to generalize the intuition behind the EMD to obtain a metric between theories using their predicted cross sections in energy flow space.
A construction of such a distance and the induced ``theory space'' is conceptually useful and, in fact, naturally underpins several recently introduced techniques for collider physics.

We introduce the cross section mover's distance ($\Sigma$MD) as a metric for the space of theories.
Here, we treat a ``theory'' as an ensemble of event energy flows with corresponding cross sections, encompassing both the predictions of quantum field theories as well as the structure of collider datasets.
To accomplish this, we again use an EMD-like construction, except the $\Sigma$MD uses the EMD itself as the ``angles'' and the event cross sections as the ``energies'', as mentioned in \Tab{tab:emdsmdcomp}.
The resulting space of theories with the $\Sigma$MD as a metric is illustrated in \Fig{fig:theory_space}.
Interestingly, \Ref{Erdmenger:2020abc} also put a metric on theory space by using the Fisher information matrix.

\begin{figure}[t]
\centering
\includegraphics[scale=0.7]{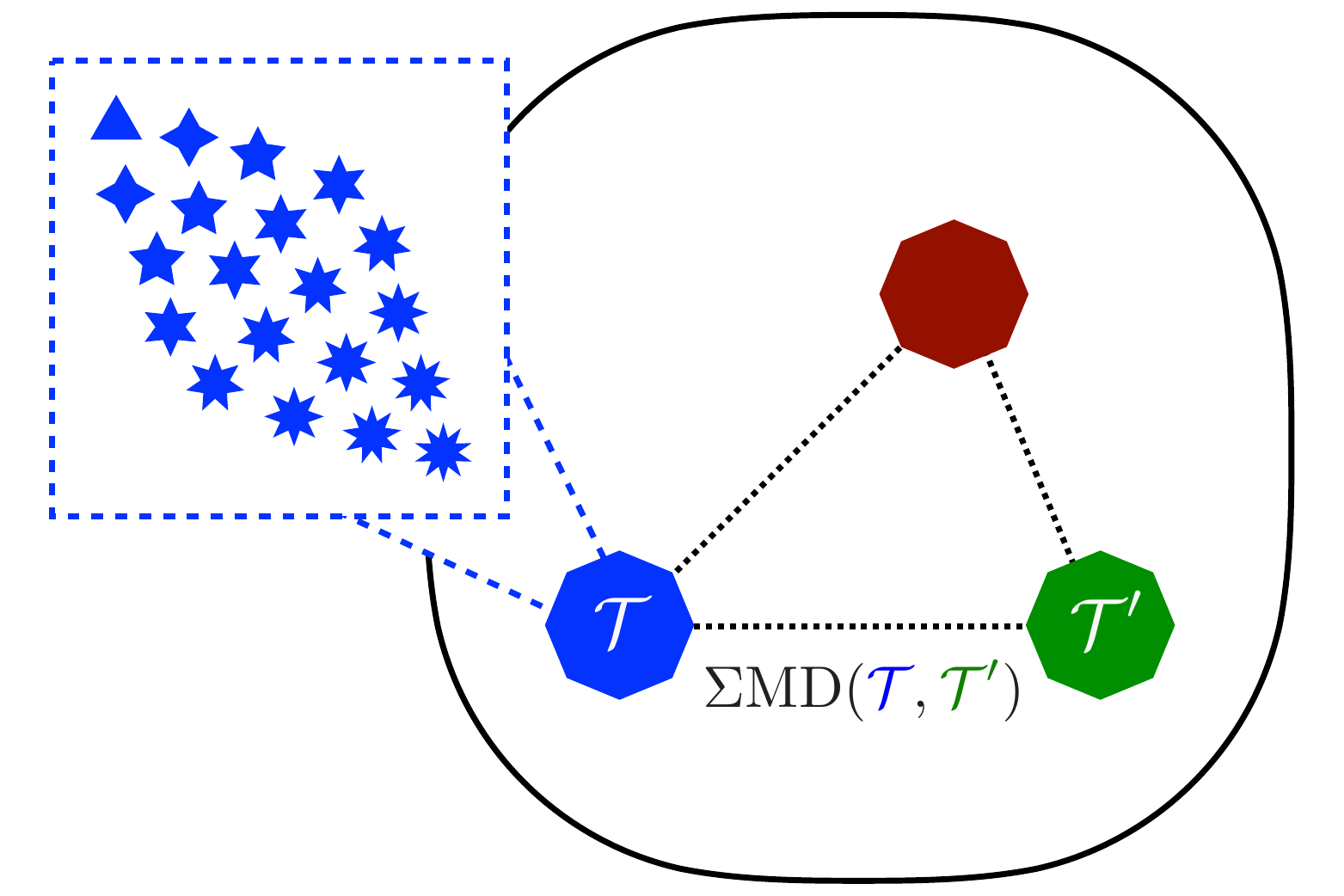}
\caption{\label{fig:theory_space} An illustration of the space of theories.
Each point in the space is a ``theory'': a distribution over (or collection of) events, as indicated by the blue point.
The distance between theories is quantified by the $\Sigma$MD, giving rise to a metric space.}
\end{figure}

\subsection{Introducing a distance between theories}

A ``theory'' $\mathcal T$ is taken to be a (finite, for now) set of events with associated cross sections $\{(\mathcal E_i,\sigma_i)\}_{i=1}^N$.
We can equivalently view $\mathcal T$ as a distribution over the space of event configurations $\mathcal E$:
\begin{equation}
\label{eq:T_def}
\mathcal{T}(\mathcal E) = \sum_{i = 1}^N \sigma_i \, \delta(\mathcal E - \mathcal E_i).
\end{equation}
In the case of unweighted events, the cross sections are simply $\sigma_i = 1/\mathcal{L}$, where $\mathcal{L}$ is the total integrated luminosity.
One can think of these $N$ events as being produced by a perfect event generator or measured by a perfect collider experiment in the context of this theory.
In the $\mathcal{L} \to \infty$ or $N\to\infty$ limit, \Eq{eq:T_def} becomes a smooth distribution.

The $\Sigma$MD is the minimum ``work'' required to rearrange one theory $\mathcal T$ into another $\mathcal T'$ by moving cross section $\mathcal F_{ij}$ from event $i$ in one theory to event $j$ in the other:
\begin{equation}\nonumber
\text{$\Sigma$MD}_{\gamma,S; \beta, R}(\mathcal T, \mathcal T') = \min_{\{\mathcal F_{ij}\ge0\}} \sum_{i=1}^N\sum_{j=1}^{N'}  \mathcal F_{ij} \left(\frac{\text{EMD}_{\beta, R}(\mathcal E_i, \mathcal E_j')}{S} \right)^\gamma + \left|\sum_{i=1}^N \sigma_i - \sum_{j=1}^{N'}\sigma_j'\right|,
\end{equation}
\begin{equation}\label{eq:SigmaMD}
\sum_{i=1}^N \mathcal F_{ij} \le \sigma_j', \quad \sum_{j=1}^{N'} \mathcal F_{ij} \le \sigma_i, \quad \sum_{i=1}^N\sum_{j=1}^{N'} \mathcal F_{ij} = \min\left(\sum_{i=1}^N \sigma_i,\sum_{j=1}^{N'}\sigma_j'\right).
\end{equation}
Here, $i$ and $j$ index the events in theories $\mathcal T$ and $\mathcal T'$, respectively. 
The parameter $S$, which has the same units as the EMD, controls the relative importance of the two terms, analogous to the jet radius parameter $R$ in the EMD.
We have also introduced a possible $\gamma$ exponent, analogous to the $\beta$ angular exponent in the EMD.

The $\Sigma$MD has dimensions of cross section, where the first term quantifies the difference in event distributions and the second term accounts for the creation or destruction of cross section.
For $\gamma > 0$, it is a true metric as long as the underlying EMD is a metric and $2S$ is larger than the largest attainable EMD between two events.%
\footnote{The analogous discussion to footnote \ref{footnote:pWasser} holds for $\gamma > 1$.}
In the limit as $S\to\infty$, the $\Sigma$MD reduces simply to the difference in total cross section between the two theories.
The natural continuum notion of \Eq{eq:SigmaMD} can be used whenever such an analysis is analytically tractable.

The $\Sigma$MD from a theory to itself is zero in the continuum limit or with infinite data.
Further, two theories that differ in their Lagrangians yet give rise to identical scattering cross sections for all energy flows will have a $\Sigma$MD of zero.
This includes, for instance, theories that are equivalent up to field redefinitions~\cite{Cheung:2017pzi} or rearrangements of the asymptotic states~\cite{Frye:2018xjj,Hannesdottir:2019rqq,Hannesdottir:2019opa}.
Finally, if two theories have any observable differences in energy flow, then the $\Sigma$MD between them will be non-zero.
Note that the $\Sigma$MD inherits the flavor and charge insensitivity of the EMD, but it is interesting to consider extending the $\Sigma$MD to account for additional quantum numbers that particles may carry.

\subsection{Jet quenching via quantile matching}

Quantile matching~\cite{Brewer:2018dfs} is an analysis strategy to study the modification of jets as they traverse the quark-gluon plasma in heavy-ion collisions.
We now show that, surprisingly, this technique can be cast naturally in a $\Sigma$MD formulation.
The optimal ``theory moving'' transport between two otherwise-equivalent datasets provides a proxy for the jet modification by the quark-gluon plasma.

Intuitively, the idea is to select a set of statistically equivalent jets from both proton-proton (pp) collisions and heavy-ion (AA) collisions.
This gives a snapshot of jets and their energies before and after modification by the quark-gluon plasma, respectively.
Such a selection can be achieved by selecting jets with the same upper cumulative effective cross section, after appropriately normalizing the AA cross section to account for the average number of nucleon-nucleon collisions.

With such a selection, a quantile matching can be used to specify $p_T^\text{quant}$ for a given heavy-ion jet with reconstructed transverse momentum $p_T^\text{AA}$:
\begin{equation}
\Sigma^\text{eff}_\text{pp}(p_T^\text{quant}) \equiv \Sigma_\text{AA}^\text{eff}(p_T^\text{AA}),
\end{equation}
where $p_T^\text{quant}$ gives a proxy for the jet $p_T$ prior to modification by the quark-gluon plasma.
The ratio between the heavy-ion and proton-proton jet transverse momentum in the same quantile then gives a physically-motivated quantification of the medium jet modification.
\begin{equation}
Q_\text{AA}(p_T^\text{quant}) = \frac{p_T^\text{AA}}{p_T^\text{quant}}.
\end{equation}

We now turn to explaining the intriguing connection between this quantile matching procedure and the $\Sigma$MD through optimal transport.
We use transverse momenta in place of energies and take $R\to\infty$ in \Eq{eq:emd}, where the EMD becomes simply the difference in jet transverse momenta.
Further, we set $\gamma = 1$ and $S=1$ in \Eq{eq:SigmaMD} and note that the normalization of the cross sections makes the second term in that equation vanish.

The theory moving problem now becomes a simple one-dimensional optimal transport problem of moving the pp jet $p_T$ distribution to the AA jet $p_T$ distribution.
Remarkably, this is mathematically equivalent to quantile matching.
We use the notation TM to represent the optimal theory movement $\mathcal F^*$ in the $\Sigma$MD.
Letting $\mathcal T_\text{AA}$ be the set of heavy-ion jets and $\mathcal T_\text{pp}$ be the set of proton-proton jets, we have that:
\begin{equation}
\begin{boxed}{
p_T^\text{quant} = \text{TM}\left(\mathcal T_\text{AA}, \mathcal T_\text{pp}\right)[p_{T}^\text{AA}],
}\end{boxed}
\end{equation}
where we can define this formally using a ``ghost'' heavy-ion jet with transverse momentum $p_T^\text{AA}$ and infinitesimal cross section $\sigma \sim 0$.

Quantile matching can therefore be seen as a matching induced by the optimal theory movement between the heavy-ion and proton-proton jets.
In this sense, it operationally defines the modification by the quark-gluon plasma in terms of the theory-movement of the jet transverse momentum spectrum.
It would be interesting to follow this connection further and explore this procedure using the full EMD beyond the $R\to\infty$ limit to study the medium modification as a function of the jet substructure.

\subsection{Event clustering and coresets}

One of the essential unsupervised methods for probing a dataset is to analyze its complexity.
A method to do this for collider physics datasets is that of $k$-eventiness, recently introduced in~\Ref{Komiske:2019jim}.
Here, one seeks to find $k$ representative events that minimize the EMD from each event in the dataset to the nearest representative event:
\begin{equation}
\mathcal V^{(\gamma)}_k = \min_{\mathcal K_1, \ldots,\mathcal K_k}\sum_{i=1}^N \sigma_i \min\{\text{EMD}(\mathcal E_i, \mathcal K_1)^\gamma, \text{EMD}(\mathcal E_i, \mathcal K_2)^\gamma, \ldots, \text{EMD}(\mathcal E_i, \mathcal K_k)^\gamma\},
\end{equation}
where we have dropped the $\beta$ and $R$ subscripts on \text{EMD} for compactness.
The value of $\mathcal V_k$ probes how well the dataset is approximated by the $k$ events.
This gives rise to the notion of $\mathcal V_k$ as the ``$k$-eventiness'' of the dataset, in analogy with $N$-(sub)jettiness, where smaller values of $\mathcal V_k$ indicate better approximations.

From a geometric perspective, $\mathcal V_k$ is the smallest $\Sigma$MD to the manifold of $k$-event datasets.
Analogous to \Eq{eq:emd_noR}, we can introduce the $S \to \infty$ version of the $\Sigma$MD:
\begin{equation}
\label{eq:smd_noR}
\text{$\Sigma$MD}_{\gamma} (\mathcal T, \mathcal T') = \lim_{S \to \infty} S^\gamma \, \text{$\Sigma$MD}_{\gamma,S} (\mathcal T, \mathcal T'),
\end{equation}
which yields an infinite distances between theories of different total cross sections.
Following the identical logic to \Secs{sec:njettiness}{subsec:nsubjettiness}, we have:
\begin{equation}
\label{eq:keventiness_EMD}
\begin{boxed}{
\mathcal V^{(\gamma)}_k = \min_{|\mathcal T'|=k} \text{$\Sigma$MD}_\gamma(\mathcal T, \mathcal T').
}\end{boxed}
\end{equation}
Here, we use the $|\cdot|$ notation to count the number of events in $\mathcal T'$.
Just like for $N$-(sub)jettiness, different values of $\gamma$ highlight different aspects of theory space geometry.

Following the logic in \Sec{subsec:xcone} of lifting the $N$-jettiness observable into the XCone jet algorithm, we can lift $k$-eventiness into an event clustering algorithm.
The representative events $\mathcal K$ (i.e.\ the point of closest approach on the $k$-event manifold), has the interpretation of the ``$k$-geometric-medians'' for $\gamma = 1$ or ``$k$-means'' for $\gamma = 2$.
For practical applications, it is often convenient to restrict the representative events to be within the dataset $\mathcal T$, i.e. the ``$k$-medoids'', giving only an approximate value of $\mathcal V_k$.
While the full problem of finding the representative jets may be computationally intractable, fast approximations to find the medoids exist and have been explored in \Ref{Komiske:2019jim}.

Inspired by \Sec{sec:seqrec}, one might consider implementing sequential clustering algorithms by iterating $\Sigma$MD computations to approximate $M$ events with $M-1$ events and so forth.
Such a clustering may be helpful for rigorous data compression of large collider datasets or, if implemented efficiently, for tasks such as triggering.
These ideas are closely related to the notion of finding a coreset (see \Ref{2019arXiv191008707J} for a recent review), for which techniques from quantum information and quantum computation may also find use~\cite{harrow2020small}.
Additionally, \Ref{DBLP:journals/corr/abs-1805-07412} uses the Wasserstein metric to construct ``measure coresets'' that take into account the underlying data distribution and which may prove useful for high-energy physics applications.
We leave further exploration of theory geometry and theory space algorithms to future work.

\section{Conclusions}
\label{sec:conc}

In this chapter, we have explored the metric space of collider events from a theoretical perspective.
Beginning from the EMD between final states, namely the ``work'' required to rearrange one into another, we have cast a multitude of diverse collider algorithms and analysis techniques in a geometric language.
First, we connected this metric to the fundamental notion of IRC safety in massless quantum field theories, with the EMD providing a sharp language to define IRC safety and even Sudakov safety.
We extended this connection by highlighting that a wide variety of collider observables, including thrust and $N$-jettiness, can be cast as distances between events and manifolds in this space.
Further, we demonstrated that many jet clustering algorithms, such as exclusive cone finding and sequential clustering, can be exactly derived in full detail from the simple principle that jets are the best $N$-particle approximation to the event.
Even pileup mitigation techniques developed to face the LHC-era challenge of high luminosity running can be cast in the language of subtracting a uniform radiation pattern, connecting this field to semidiscrete unbalanced optimal transport.
Finally, we generalized our reasoning to define a distance between ``theories'' as sets of events with cross sections, proving a new lens to understand several existing techniques and a roadmap for future developments in the geometry of theory space.

From the perspective of massless quantum field theories, our metric space of events is the natural space for understanding observables, as the only truly observable quantities are IRC safe.
More speculatively, it would be interesting to circumvent the (unphysical) particle-level stage of calculations and make theoretical predictions directly in the space of events.
Understanding and expanding in this direction would require natural notions of volume and integration in this space, perhaps aided by recent developments in Wasserstein spaces~\cite{DBLP:journals/siamma/BianchiniB10,DBLP:journals/siamma/AguehC11,DBLP:conf/gsi/BertrandK13,DBLP:conf/gsi/GouicL15}, though we leave this fascinating exploration to future work.
Nonetheless, it has already been established that the energy flows themselves obey factorization theorems in effective field theory contexts~\cite{Bauer:2008jx}, and give rise to rich behavior in correlators~\cite{Basham:1978zq,Belitsky:2013ofa,Dixon:2019uzg,Chen:2019bpb}.
Going directly from first principles and symmetries to observables (i.e.\ energy flows, not particles) suggests a natural extension to the philosophy driving the present scattering amplitudes program (see \Refs{Elvang:2013cua,Carrasco:2015iwa,Cheung:2017pzi} for reviews).
It is also interesting to extend this logic to massive quantum field theories where observable quantities can depend on flavor and charge.
One promising strategy is to treat events as collections of objects (jets, electrons, muons, etc.) and to use a ground distance that penalizes converting one type of object into another~\cite{Romao:2020ojy}.
Alternatively, it may be possible to find an EMD variant that mimics the flavor-sensitive clustering behavior of \Ref{Banfi:2006hf}.

It is also useful to discuss these developments in the broader context of machine learning and the physical sciences~\cite{Larkoski:2017jix,Radovic:NatureML,Carleo:2019ptp}.
Typically, problems in the natural sciences can be cast as machine learning problems such as classification and regression, whereby the relevant tools from machine learning can be applied to achieve improved performance on those tasks.
It is far rarer for machine learning to enhance our theoretical or conceptual understanding of physics directly.
This story provides an interesting case where new insights and questions exposed by machine learning have impacted purely theoretical and phenomenological collider physics.
The question of when two collider events are similar, for which the EMD was introduced, was originally motivated by unsupervised learning methods and autoencoders~\cite{Hajer:2018kqm,Heimel:2018mkt,Farina:2018fyg,Roy:2019jae}, which require a distance matrix or reconstruction loss.
By providing an answer to this simple question, which itself involved familiar machine learning tools such as optimal transport, we uncovered a new mathematical formalism to better understand and express concepts in quantum field theory and collider physics.
We hope that this will be just one example of many future profound insights into the natural sciences facilitated by this perspective.

\chapter{A Basis of Infrared- and Collinear-Safe Observables}

\section{Introduction}
\label{sec:intro}

Jet substructure is the analysis of radiation patterns and particle distributions within the collimated sprays of particles (jets) emerging from high-energy collisions~\cite{Seymour:1991cb,Seymour:1993mx,Butterworth:2002tt,Butterworth:2007ke,Butterworth:2008iy}.
Jet substructure is central to many analyses at the Large Hadron Collider (LHC), finding applications in both Standard Model measurements~\cite{Chatrchyan:2012sn,CMS:2013cda,Aad:2015lxa,Aad:2015cua,TheATLAScollaboration:2015ynv,Aad:2015hna,Aad:2015rpa,ATLAS:2016dpc,ATLAS:2016wlr,ATLAS:2016wzt,CMS:2016rtp,Rauco:2017xzb,ATLAS:2017jiz,Sirunyan:2017dgc} and in searches for physics beyond the Standard Model~\cite{CMS:2011bqa,Chatrchyan:2012ku,Fleischmann:2013woa,Pilot:2013bla,TheATLAScollaboration:2013qia,CMS:2014afa,CMS:2014aka,Khachatryan:2015axa,TheATLAScollaboration:2015elo,TheATLAScollaboration:2015xqi,Khachatryan:2015bma,CMS:2016qwm,CMS:2016bja,CMS:2016ehh,CMS:2016rfr,Aaboud:2016okv,Khachatryan:2016mdm,CMS:2016flr,CMS:2016pod,Aaboud:2016qgg,Sirunyan:2016wqt,Sirunyan:2017usq,Sirunyan:2017acf,Sirunyan:2017nvi}. 
An enormous catalog of jet substructure observables has been developed to tackle specific collider physics tasks~\cite{Abdesselam:2010pt,Altheimer:2012mn,Altheimer:2013yza,Adams:2015hiv,Larkoski:2017jix}, such as the identification of boosted heavy particles or the classification of quark- from gluon-initiated jets.

The space of possible jet substructure observables is formidable, with few known complete and systematic organizations.
Previous efforts to define classes of observables around organizing principles include:
the jet energy moments and related Zernike polynomials to classify energy flow observables~\cite{GurAri:2011vx};
a pixelated jet image~\cite{Cogan:2014oua} to represent energy deposits in a calorimeter;
the energy correlation functions (ECFs)~\cite{Larkoski:2013eya} to highlight the $N$-prong substructure of jets;
the generalized energy correlation functions (ECFGs)~\cite{Moult:2016cvt} based around soft-collinear power counting~\cite{Larkoski:2014gra};
and a set of $N$-subjettiness observables~\cite{Thaler:2010tr,Thaler:2011gf,Stewart:2010tn} to capture $N$-body phase space information~\cite{Datta:2017rhs}.
With any of these representations, there is no simple method to combine individual observables, so one typically uses sophisticated multivariate techniques such as neural networks to fully access the information contained in several observables~\cite{Gallicchio:2010dq,Gallicchio:2011xq,Gallicchio:2012ez,Baldi:2014kfa,Baldi:2014pta,deOliveira:2015xxd,Barnard:2016qma,Komiske:2016rsd,Kasieczka:2017nvn,Almeida:2015jua,Baldi:2016fql,Guest:2016iqz,Louppe:2017ipp,Pearkes:2017hku,Butter:2017cot,Aguilar-Saavedra:2017rzt,Datta:2017rhs}.
Furthermore, the sense in which these sets ``span'' the space of jet substructure is often unclear, sometimes relying on the existence of complicated nonlinear functions to map observables to kinematic phase space.

In this chapter, we introduce a powerful set of jet substructure observables organized directly around the principle of infrared and collinear (IRC) safety.
These observables are multiparticle energy correlators with specific angular structures which directly result from IRC safety.
Since they trace their lineage to the hadronic energy flow analysis of \Ref{Tkachov:1995kk}, we call these observables the \emph{energy flow polynomials} (\Bs) and we refer to the set of \Bs as the \emph{energy flow basis}.
In the language of \Ref{Tkachov:1995kk}, the \Bs can be viewed as a discrete set of $C$-correlators, though our analysis is independent from the original $C$-correlator logic.
Crucially, the EFPs form a \emph{linear} basis of all IRC-safe observables, making them suitable for a wide variety of jet substructure contexts where linear methods are applicable.

There is a one-to-one correspondence between \Bs and loopless multigraphs, which helps to visualize and calculate the \Bs.
A multigraph is a graph where any two vertices can be connected by multiple edges; in this context, a \emph{loop} is an edge from a vertex to itself, while a closed chain of edges is instead referred to as a \emph{cycle}.
For a multigraph $G$ with $N$ vertices and edges $(k,\ell)\in G$, the corresponding \B takes the form:
\begin{equation}\label{eq:introefp}
\B_{G} = \sum_{i_1=1}^M\cdots\sum_{i_N=1}^M z_{i_1}\cdots z_{i_N}\prod_{(k,\ell) \in G} \theta_{i_k i_\ell},
\end{equation}
where the jet consists of $M$ particles, $z_i \equiv E_i/\sum_{j=1}^M E_j$ is the energy fraction carried by particle $i$, and $\theta_{ij}$ is the angular distance between particles $i$ and $j$.
The precise definitions of $E_i$ and $\theta_{ij}$ will depend on the collider context, with energy and spherical ($\theta,\phi$) coordinates typically used for $e^+e^-$ collisions, and transverse momentum $p_T$ and rapidity-azimuth $(y,\phi)$ coordinates for hadronic collisions.
For brevity, we often use the multigraph $G$ to represent the formula for $\B_G$ in \Eq{eq:introefp}, e.g.:
\begin{equation}\label{eq:wedgegraphex}
\begin{gathered}
\includegraphics[scale=.3]{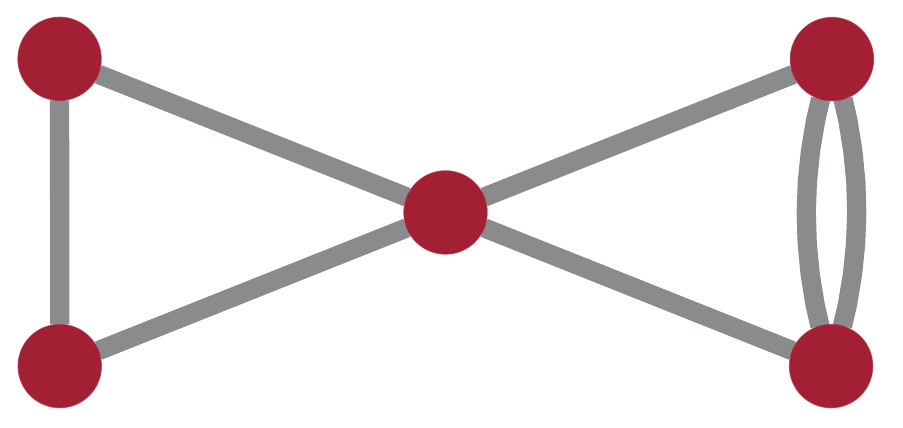}
\end{gathered}
= \sum_{i_1=1}^M\sum_{i_2 = 1}^M \sum_{i_3 = 1}^M \sum_{i_4=1}^M\sum_{i_5=1}^M  z_{i_1} z_{i_2} z_{i_3}z_{i_4}z_{i_5}\theta_{i_1i_2} \theta_{i_2i_3}\theta_{i_1i_3}\theta_{i_1i_4}\theta_{i_1i_5}\theta_{i_4i_5}^2.
\end{equation}

This chapter is a self-contained introduction to the energy flow basis, with the following organization.
\Sec{sec:efps} contains a general overview of the \Bs, with more detailed descriptions of \Eq{eq:introefp} and the correspondence to multigraphs.
We also discuss a few different choices of measure for $z_i$ and $\theta_{ij}$.
As already mentioned, EFPs are a special case of $C$-correlators~\cite{Tkachov:1995kk}, so not surprisingly, we find a close relationship between \Bs and other classes of observables that are themselves $C$-correlators, including jet mass, ECFs~\cite{Larkoski:2013eya}, certain generalized angularities~\cite{Larkoski:2014pca}, and energy distribution moments~\cite{GurAri:2011vx}.
We also highlight features of the \Bs which are less well-known in the $C$-correlator-based literature.
 
In \Sec{sec:basis}, we give a detailed derivation of the \Bs as an (over)complete linear basis of all IRC-safe observables in the case of massless particles.
Because this section is rather technical, it can be omitted on a first reading, though the logic just amounts to systematically imposing the constraints of IRC safety.
In \Sec{sec:Eexpansion}, we use an independent (and arguably more transparent) logic from \Ref{Tkachov:1995kk} to show that any IRC-safe observable can be written as a linear combination of $C$-correlators:
\begin{equation}\label{eq:genccorr}
\mathcal C_N^{f_N}=\sum_{i_1=1}^M \cdots\sum_{i_N=1}^M E_{i_1}\cdots E_{i_N} \, f_N(\hat p_{i_1}^\mu,\ldots,\hat p_{i_N}^\mu),
\end{equation}
where $f_N$ is an angular weighting function that is only a function of the particle directions $\hat{p}_i^\mu=p_i^\mu/E_i$ (and not their energies $E_i$).
To derive \Eq{eq:genccorr}, we use the Stone-Weierstrass theorem~\cite{stone1948generalized} to expand an arbitrary IRC-safe observable in polynomials of particle energies, and then directly impose IRC safety and particle relabeling invariance.
In \Sec{sec:angleexpansion}, we determine the angular structures of the \Bs by expanding $f_N$ in terms of a discrete set of polynomials in pairwise angular distances.
Remarkably, the discrete set of polynomials appearing in this expansion is in one-to-one correspondence with the set of non-isomorphic multigraphs, which facilitates indexing the \Bs by multigraphs to encode the geometric structure in \Eq{eq:introefp}.

In \Sec{sec:complexity}, we investigate the complexity of computing \Bs.
Naively, \Eq{eq:introefp} has complexity $\mathcal O(M^N)$ due to the $N$ nested sums over $M$ particles.
However, the rich analytic structure of \Eq{eq:introefp} and the graph representations of \Bs allow for numerous algorithmic speedups.
Any \B with a disconnected graph can be computed as the product of the \Bs corresponding to its connected components.
Furthermore, we find that the Variable Elimination (VE) algorithm~\cite{zhang1996exploiting} can be used to vastly speed up the computation of many \Bs compared to the naive $\mathcal O(M^N)$ algorithm.
VE uses the factorability of the summand to systematically determine a more efficient order for performing nested sums.
For instance, all tree graphs can be computed in $\mathcal O(M^2)$ using VE. 
As an explicit example, consider an EFP with naive $\mathcal O(M^6)$ scaling:
\begin{equation}\label{eq:extree}
\begin{gathered}
\includegraphics[scale=.25]{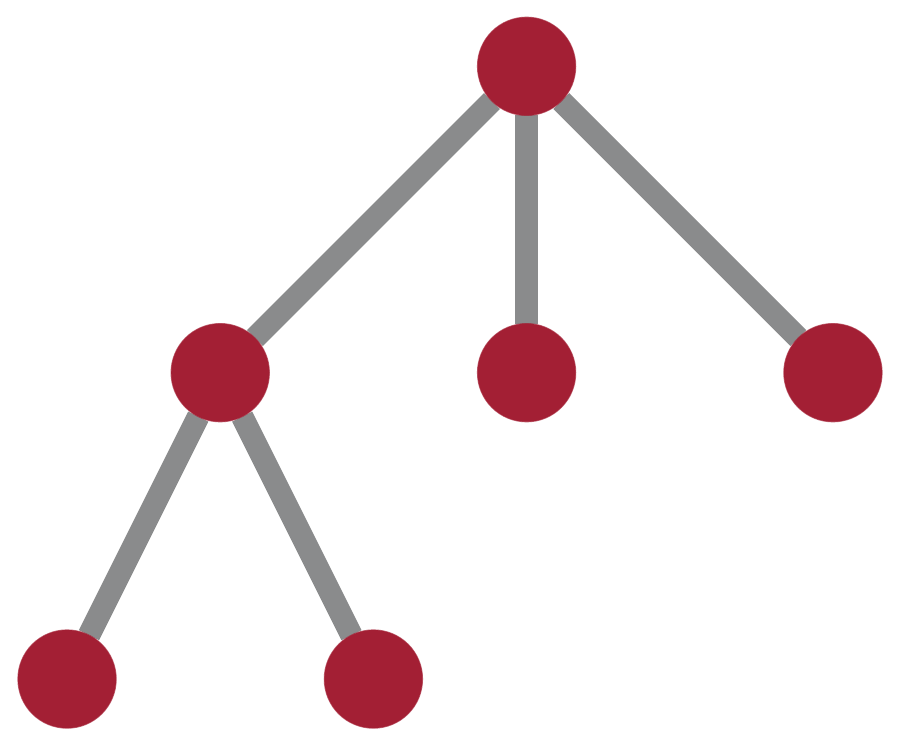}
\end{gathered}
\,=\sum_{i_1=1}^M\sum_{i_2=1}^Mz_{i_1}z_{i_2}\theta_{i_1i_2}\left(\sum_{i_3=1}^Mz_{i_3}\theta_{i_1i_3}\right)^2\left(\sum_{i_4=1}^Mz_{i_4}\theta_{i_2i_4}\right)^2.
\end{equation}
The quantities in parentheses are computable in $\mathcal O(M^2)$, since they are length $M$ lists with each element a sum over $M$ objects, making the overall expression in \Eq{eq:extree} computable in $\mathcal O(M^2)$.
The efficient computation of the \Bs overcomes one of the main previous challenges in using higher-$N$ multiparticle correlators in collider physics applications.\footnote{Sadly, fully-connected graphs, which correspond to the original ECFs~\cite{Larkoski:2013eya}, cannot be simplified using VE.} 

In \Sec{sec:linreg}, we perform numerical linear regression with \Bs for various jet observables.
The linear spanning nature of the energy flow basis means that any IRC-safe observable $\mathcal S$ can be linearly approximated by \Bs, which we write as:
\begin{equation}\label{eq:introefpsspan}
\mathcal S\simeq\sum_{G\in\mathcal G}s_G \, \B_G,
\end{equation}
for some finite set of multigraphs $\mathcal G$ and some real coefficients $s_G$.  
One might worry that the number of \Bs needed to achieve convergence could be intractably large.
In practice, though, we find that the required set of $\mathcal G$ needed for convergence is rather reasonable in a variety of jet contexts.
While we find excellent convergence for IRC-safe observables, regressing with IRC-unsafe observables does not work as well, demonstrating the importance of IRC safety for the energy flow basis.

In \Sec{sec:linclass}, we perform another test of \Eq{eq:introefpsspan} by using linear classification with \Bs to distinguish signal from background jets.
We consider three representative jet tagging problems: quark/gluon classification, boosted $W$ tagging, and boosted top tagging.
In this study, the observable appearing on the left-hand side of \Eq{eq:introefpsspan} is the optimal IRC-safe discriminant for the two classes of jets.
Remarkably, linear classification with \Bs performs comparably to multivariate machine learning techniques, such as jet images with convolutional neural networks (CNNs)~\cite{Cogan:2014oua,deOliveira:2015xxd,Barnard:2016qma,Komiske:2016rsd,Kasieczka:2017nvn} or dense neural networks (DNNs) with a complete set of $N$-subjettiness observables~\cite{Datta:2017rhs}.
Both the linear regression and classification models have few or no hyperparameters, illustrating the power and simplicity of linear learning methods combined with our fully general linear basis for IRC-safe jet substructure.

Our conclusions are presented in \Sec{sec:conclusion}, where we highlight the relevance of the energy flow basis to machine learning and discuss potential future applications and developments.
A review of $C$-correlators and additional tagging plots are left to the appendices.

\section{Energy flow polynomials}
\label{sec:efps}

IRC-safe observables have long been of theoretical and experimental interest because observables which lack IRC safety are not well defined~\cite{Kinoshita:1962ur,Lee:1964is,Weinberg:1995mt,sterman1995handbook}, or require additional care to calculate~\cite{Larkoski:2013paa,Larkoski:2015lea,Waalewijn:2012sv,Chang:2013rca,Elder:2017bkd}, in perturbative quantum chromodynamics (pQCD).
More broadly, though, IRC safety is a simple and natural organizing principle for high-energy physics observables, since IRC-safe observables probe the high-energy structure of an event while being insensitive to low-energy and collinear modifications.
IRC safety is also an important property experimentally as IRC-safe observables are more robust to noise and finite detector granularity.

As argued in \Refs{Tkachov:1995kk,Sveshnikov:1995vi,Cherzor:1997ak,Tkachov:1999py}, the $C$-correlators in \Eq{eq:genccorr} are a generic way to capture the IRC-safe structure of a jet, as long as one chooses an appropriate angular weighting function $f_N$.
Later in \Sec{sec:basis}, we give an alternative proof that $C$-correlators span the space of IRC-safe observables and go on to give a systematic expansion for $f_N$.
This expansion results in the \Bs, which yield an (over)complete linear basis for IRC-safe observables.
In this section, we highlight the basic features of the \Bs and their relationship to previous jet substructure observables.

\subsection{The energy flow basis}
\label{sec:efbasis}

One can think of the \Bs as $C$-correlators that make specific, discrete choices for the angular weighting function $f_N$ in \Eq{eq:genccorr}.  
True to their name, \Bs have angular weighting functions that are polynomial in pairwise angular distances $\theta_{ij}$.
The energy flow basis is therefore all $C$-correlators with angular structures that are unique monomials in $\theta_{ij}$, meaning monomials that give algebraically different expressions once the sums in \Eq{eq:genccorr} are performed.
Since we intend to apply the energy flow basis for jet substructure, we remove the dependence on the overall jet kinematics by normalizing the particle energies by the total jet energy, $E_J\equiv\sum_{i=1}^M E_i$, leading to the \Bs written in terms of the energy fractions $z_i\equiv E_i/E_{J}$ as in \Eq{eq:introefp}. 

The uniqueness requirement on angular monomials can be better understood by developing a correspondence between monomials in $\theta_{ij}$ and multigraphs:
\begin{EFPMultiCorre}
The set of loopless multigraphs on $N$ vertices corresponds exactly to the set of angular monomials in $\{\theta_{i_k i_\ell}\}_{k<\ell\in\{1,\cdots,N\}}$. Each edge $(k,\ell)$ in a multigraph is in one-to-one correspondence with a term $\theta_{i_ki_\ell}$ in an angular monomial; each vertex $j$ in the multigraph corresponds to a factor of $z_{i_j}$ and summation over $i_j$ in the \B:
\begin{align}\label{eq:correspondence}
&\begin{gathered}
\includegraphics[scale=0.04]{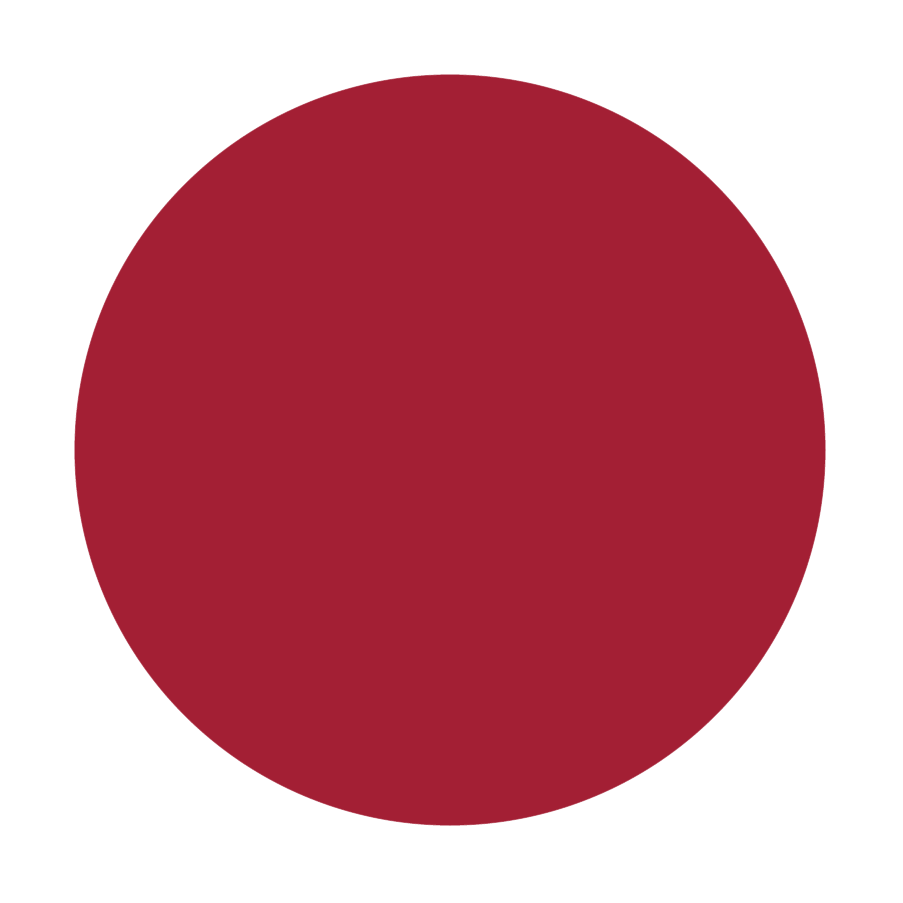}
\end{gathered}_j
 \,\,\Longleftrightarrow\,\, \sum_{i_j=1}^Mz_{i_j},
& k \begin{gathered}
\includegraphics[scale=0.3]{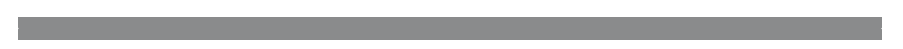}
\end{gathered} \,\ell
 \,\,\Longleftrightarrow\,\, \theta_{i_ki_\ell}.
\end{align}
\end{EFPMultiCorre}

Using \Eq{eq:correspondence}, the \Bs can be directly encoded by their corresponding multigraphs. For instance:
\begin{equation}\label{eq:flyswatter}
\begin{gathered}
\includegraphics[scale=0.32]{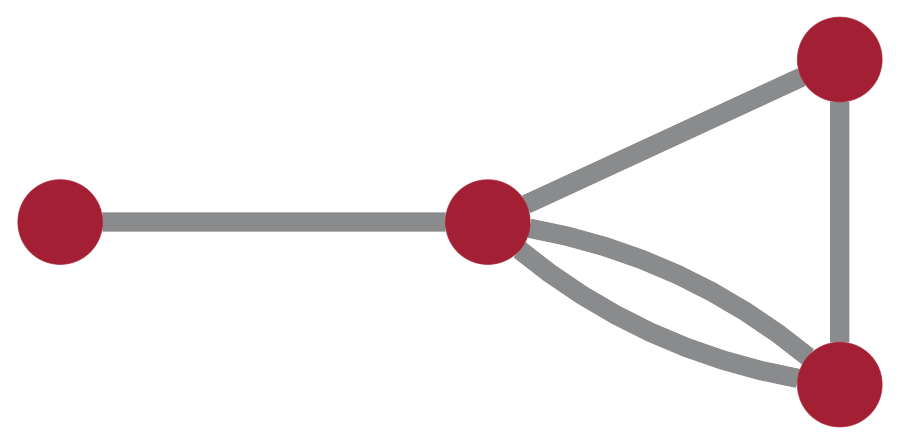}
\end{gathered}
= \sum_{i_1=1}^M\sum_{i_2 = 1}^M \sum_{i_3 = 1}^M \sum_{i_4=1}^M  z_{i_1} z_{i_2} z_{i_3}z_{i_4}\theta_{i_1i_2}\theta_{i_2i_3}\theta_{i_2i_4}^2\theta_{i_3i_4}.
\end{equation}
Since any labeling of the vertices gives an equivalent algebraic expression, we represent the graphs as unlabeled.
The specification that the \Bs are unique monomials translates into the requirement that the corresponding multigraphs are non-isomorphic.
Versions of these multigraphs have previously appeared in the physics literature in the context of many-body configurations~\cite{dallot1992reduced, mandal2017determination}, encoding all local scalar operators of a free theory~\cite{Hogervorst:2014rta}, and in graphically depicting ECFs for jets~\cite{Moult:2016cvt,Larkoski:2017iuy}.

\begin{table}[t]
\centering
\begin{tabular}{|rrcl|}
\hline
& \textbf{Multigraph}  & & \textbf{Energy Flow Polynomial}  \\ \hline \hline
$N$ : & Number of vertices & $\Longleftrightarrow$ &  $N$-particle correlator \\ 
$d$ : & Number of edges &$\Longleftrightarrow$ & Degree of angular monomial \\ 
$\chi$ : & Treewidth $+\,1$  &$\Longleftrightarrow$ & Optimal VE complexity $\mathcal O(M^\chi)$  \\
& Chromatic number &$\Longleftrightarrow$ & Minimum number of prongs to not vanish\\
\hline
& Connected& $\Longleftrightarrow$ &Prime  \\
& Disconnected& $\Longleftrightarrow$ &Composite  \\
\hline
\end{tabular}
\caption{Corresponding properties of multigraphs and \Bs.}
\label{tab:correspondence}
\end{table}

\Tab{tab:correspondence} contains a summary of the correspondence between the properties of \Bs and multigraphs.
The number of graph vertices $N$ corresponds to the number of particle sums in the EFP, and the number of graph edges $d$ corresponds to the \emph{degree} of the \B (i.e.\ the degree of the underlying angular monomial).
The number of separated prongs for which an individual \B is first non-vanishing is the \emph{chromatic number} of the graph: the smallest number of colors needed to color the vertices of the graph with no two adjacent vertices sharing a color.
For computational reasons discussed further in \Sec{sec:complexity}, we also care about the treewidth of the graph, which is related to the computational complexity $\chi$ of an EFP.
Also for computational reasons, we make a distinction between connected or \emph{prime} multigraphs and disconnected or \emph{composite} multigraphs; the value of a composite \B is simply the product of the prime \Bs corresponding to its connected components.

\begin{table}[t]
\centering
\subfloat[]{\label{tab:efpcounts:a}
\begin{tabular}{|rc||*{10}{r}|}\hline
\multicolumn{2}{|c}{Maximum degree $d$}&\multicolumn{1}{c}{\bf0}&\multicolumn{1}{c}{\bf1}&\multicolumn{1}{c}{\bf2}&\multicolumn{1}{r}{\bf3}&\multicolumn{1}{c}{\bf4}&\multicolumn{1}{c}{\,\,\bf5}&\multicolumn{1}{c}{\bf6}&\multicolumn{1}{c}{\bf7}&\multicolumn{1}{c}{\bf8}&\multicolumn{1}{c|}{\bf9}\\\hhline{:==:t:*{10}{=}:}
\multirow{2}{*}{{\bf Prime \Bs}}
    & \href{https://oeis.org/A076864}{A076864} & 1 & 1 & 2 & 5 & 12 & 33 & 103 & 333 & 1\,183   & 4\,442    \\ 
    & Cumul.    & 1 & 2 & 4 & 9 & 21 & 54 & 157 & 490 & 1\,673 & 6\,115 \\ \hhline{|--||*{10}{-}}
\multirow{2}{*}{{\bf All \Bs}}
    & \href{https://oeis.org/A050535}{A050535} & 1 & 1 & 3 & 8   & 23 & 66   & 212 & 686     & 2\,389   & 8\,682      \\
    & Cumul.    & 1 & 2 & 5 & 13 & 36 & 102 & 314 & 1\,000 & 3\,389 & 12\,071  \\ \hhline{|--||*{10}{-}}
\end{tabular}}
\\\vspace{.25in}
\subfloat[]{\label{tab:efpcounts:b}
\begin{tabular}{|cc||rrrrrrrrrr|}\hline
\multicolumn{2}{|c}{$d$}&\bf1&\bf2&\bf3&\bf4&\bf5&\bf6&\bf7&\bf8&\bf9&\bf10 \\ \hhline{:==:t:*{10}{=}:}
\multirow{10}{*}{$N$} 
        & \bf2 & 1 & 1 & 1 & 1 & 1   & 1   & 1     & 1     & 1         & 1         \\
        & \bf3 &    & 1 & 2 & 3 & 4   & 6   & 7     & 9     & 11       & 13       \\
        & \bf4 &    &    & 2 & 5 & 11 & 22 & 37   & 61   & 95       & 141     \\
        & \bf5 &    &    &    & 3 & 11 & 34 & 85   & 193 & 396     & 771     \\
        & \bf6 &    &    &    &    & 6   & 29 & 110 & 348 & 969     & 2\,445 \\
        & \bf7 &    &    &    &    &      & 11 & 70   & 339 & 1\,318 & 4\,457 \\
        & \bf8 &    &    &    &    &      &      & 23   & 185 & 1\,067 & 4\,940 \\
        & \bf9 &    &    &    &    &      &      &        & 47   & 479     & 3\,294 \\
        & \bf10 &  &    &    &    &      &      &        &        & 106     & 1\,279 \\
        & \bf11 &  &    &     &    &     &       &       &        &            & 235     \\ \hhline{|--||*{10}{-}|} 
\end{tabular}}
\caption{(a) The number of \Bs (prime and all) organized by degree $d$, for $d$ up to 9. The cumulative rows tally the number of \Bs with degree at most $d$, i.e.\ the number of basis elements truncated at that $d$. While these sequences grow quickly, the total number of all basis elements is at most 1000 for $d\le7$, which is computationally tractable. (b) The number of prime \Bs broken down by number of vertices $N$ and number of edges $d$ in the multigraph, for $d$ up to 10.  All connected graphs (prime \Bs) for $d$ up to 5 are shown explicitly in \Tab{tab:graphs}.}
\label{tab:efpcounts}
\end{table}

Because the \B basis is infinite, a suitable organization and truncation scheme is necessary to use the basis in practice.
In this chapter, we usually truncate by restricting to the set of all multigraphs with at most $d$ edges.
This is a natural choice because it corresponds to truncating the approximation of the angular function $f_N$ at degree $d$ polynomials.
Furthermore, this truncation results in a \emph{finite} number of \Bs at each order of truncation, which is not true for truncation by the number of vertices.
The number of multigraphs with exactly $d$ edges is Sequence A050535 in the On-Line Encyclopedia of Integer Sequences (OEIS)~\cite{sloane2007line,harary2014graphical}; the number of connected multigraphs with exactly $d$ edges is Sequence A076864 in the OEIS~\cite{sloane2007line}.
The numbers of \Bs in our truncation of the energy flow basis are the partial sums of these sequences, which are listed in \Tab{tab:efpcounts:a} up to $d=9$.
\Tab{tab:efpcounts:b} tabulates the number of prime \Bs of degree $d$ binned by $N$ up to $d=10$.
\Tab{tab:graphs} illustrates all connected multigraphs with $d\le 5$ edges.

\begin{table}[t]
\centering
\begin{tabular}{| >{\centering}m{.5in} | >{\centering}m{5in} |}\hline
\textbf{Degree} & \textbf{Connected Multigraphs} \tabularnewline\hline \hline
$d=0$ & \includegraphics[scale=0.02]{graphs/dot}\tabularnewline\hline
$d=1$ & \vspace{.08in}\includegraphics[scale=0.15]{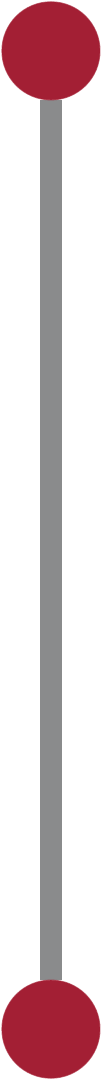}
\tabularnewline\hline
$d=2$ & \vspace{.08in}\includegraphics[scale=0.15]{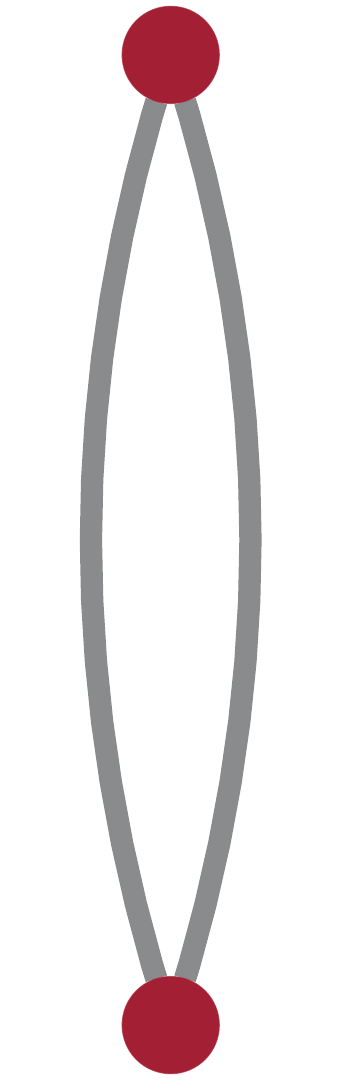}
                                     \includegraphics[scale=0.15]{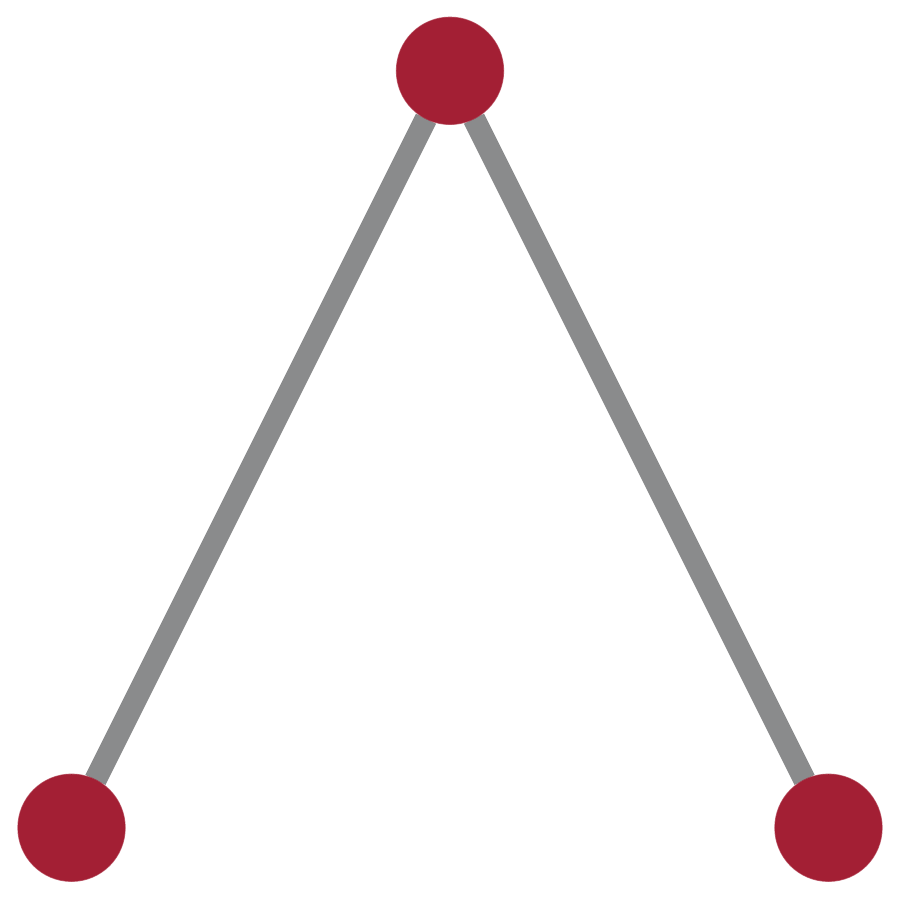} 
\tabularnewline\hline
$d=3$ & \vspace{.08in}\includegraphics[scale=0.15]{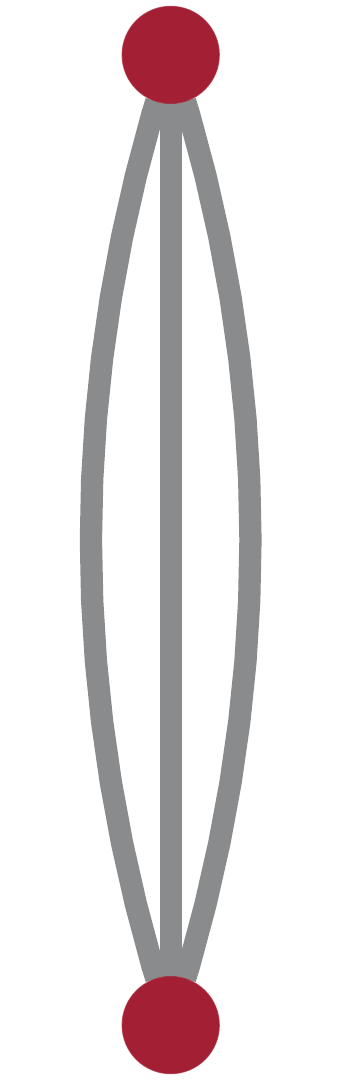}
                                     \includegraphics[scale=0.15]{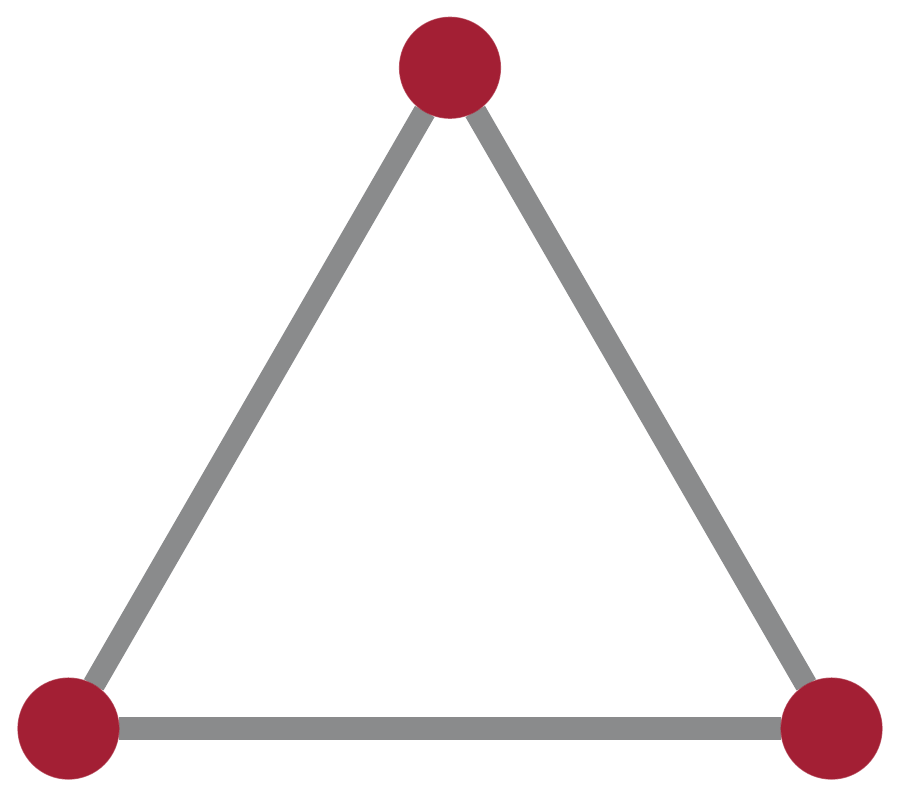}
                                     \includegraphics[scale=0.15]{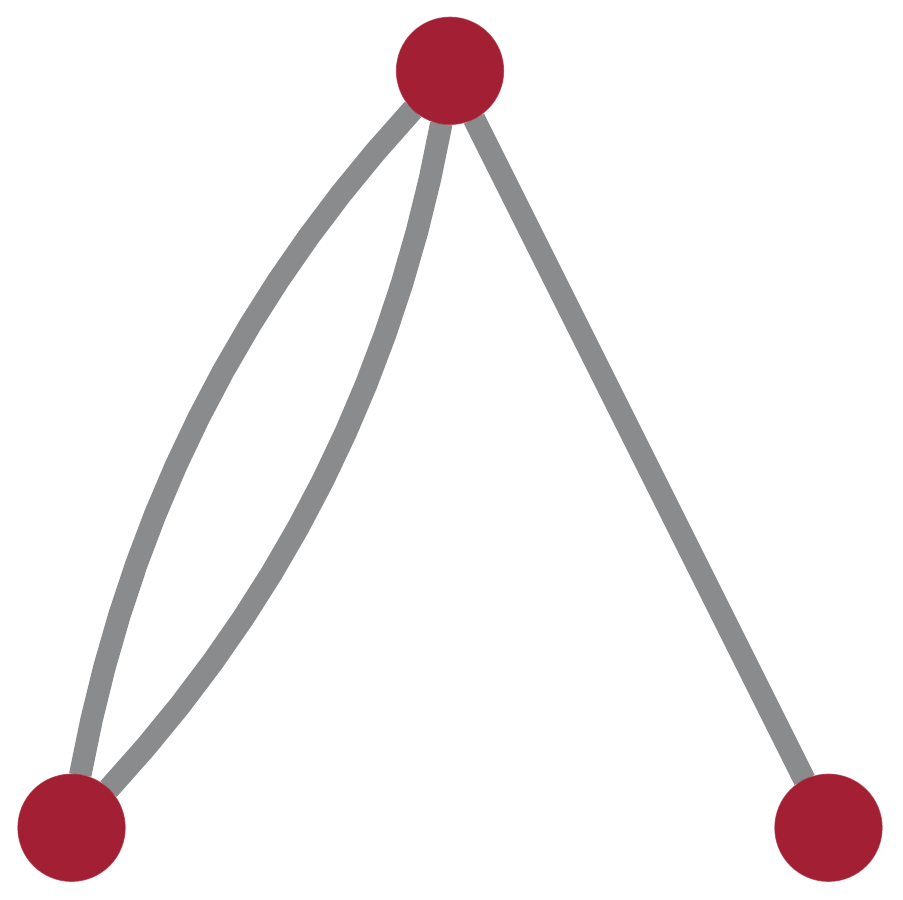}
                                     \includegraphics[scale=0.2]{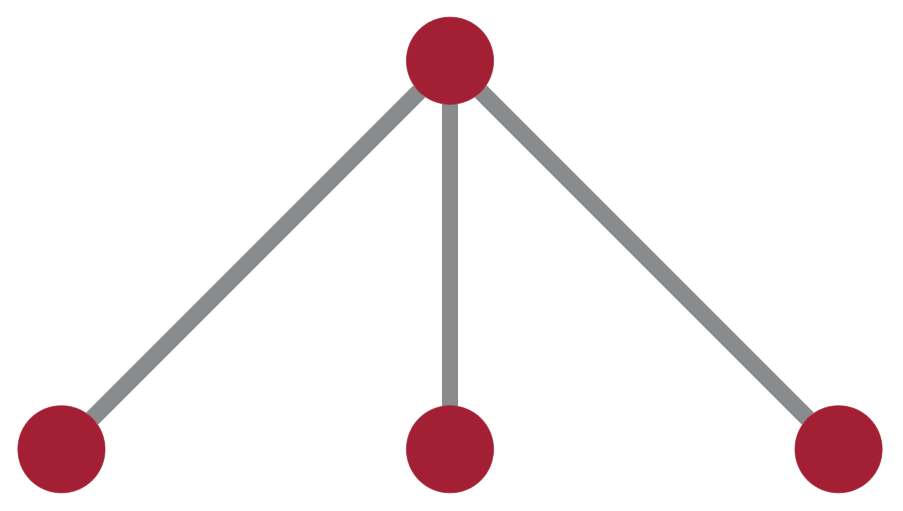}
                                     \includegraphics[scale=0.15]{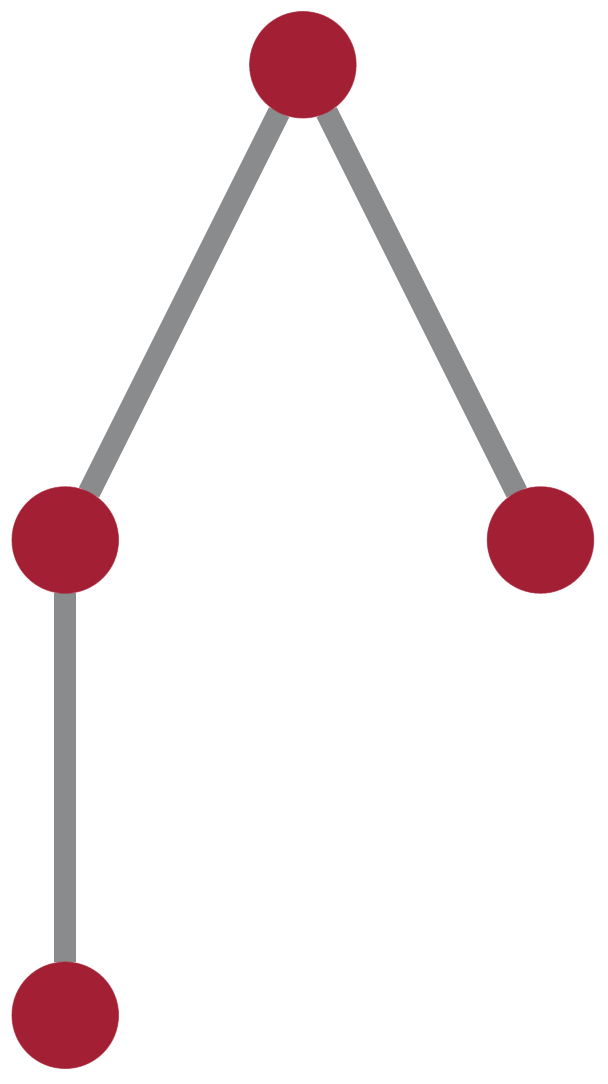}
\tabularnewline\hline
\multirow{5}{*}{$d=4$} & \vspace{.08in}
                                     \includegraphics[scale=0.15]{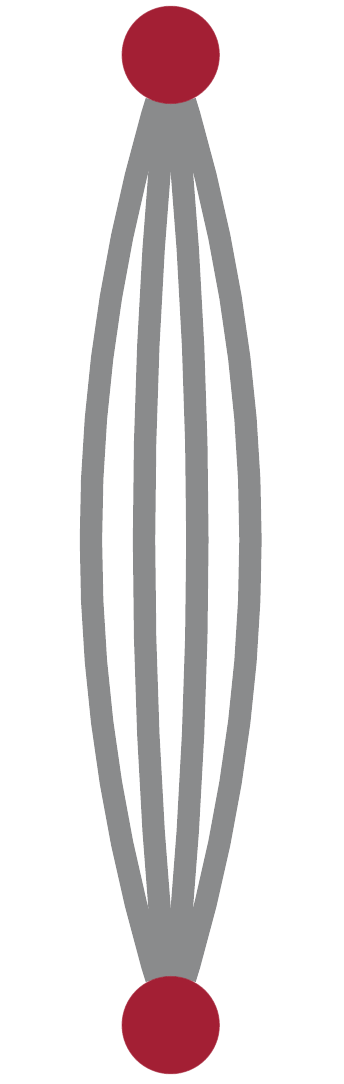}
                                     \includegraphics[scale=0.15]{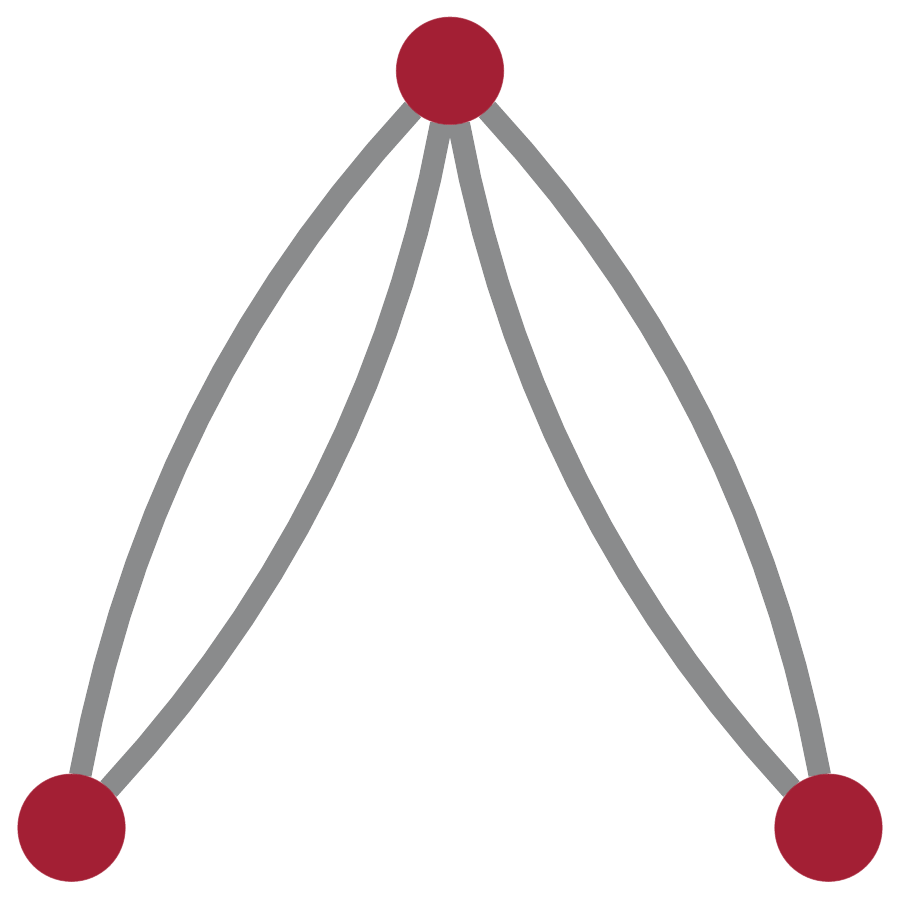}
                                     \includegraphics[scale=0.15]{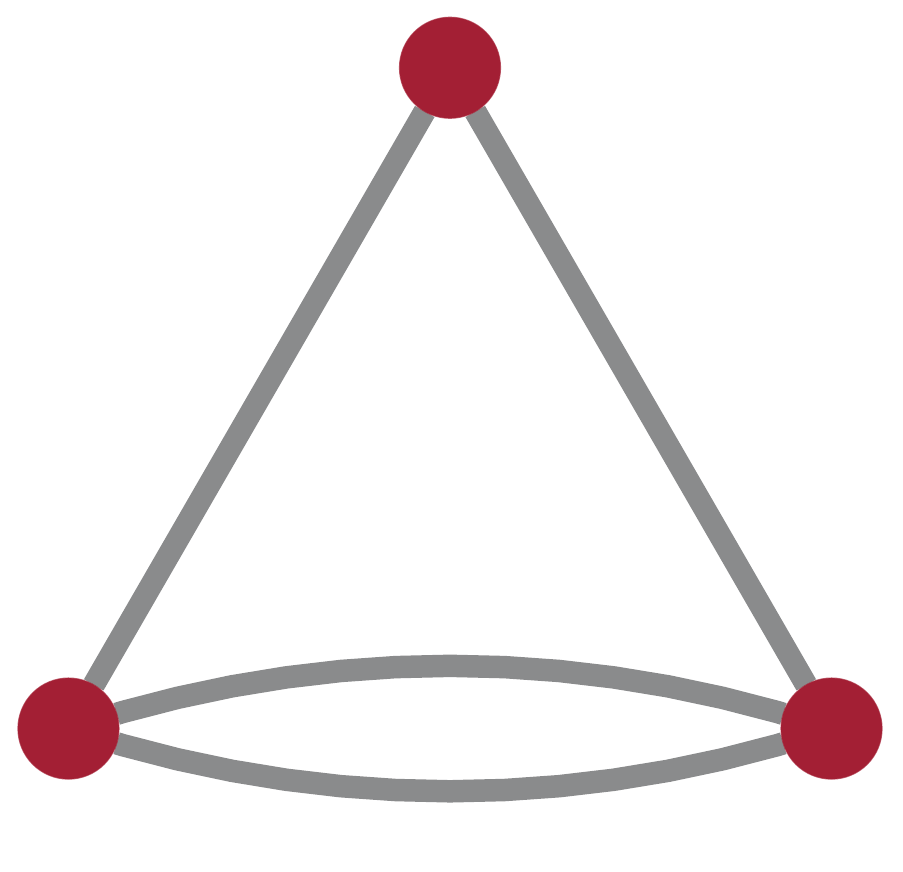}
                                     \includegraphics[scale=0.15]{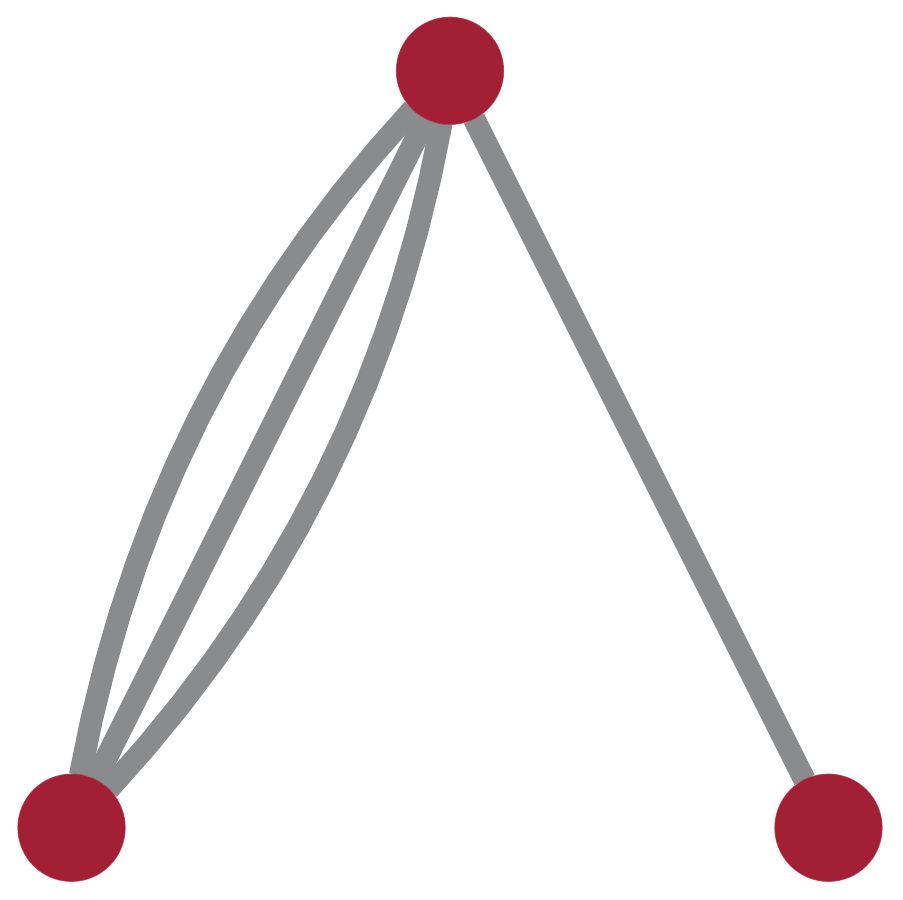}
                                     \includegraphics[scale=0.15]{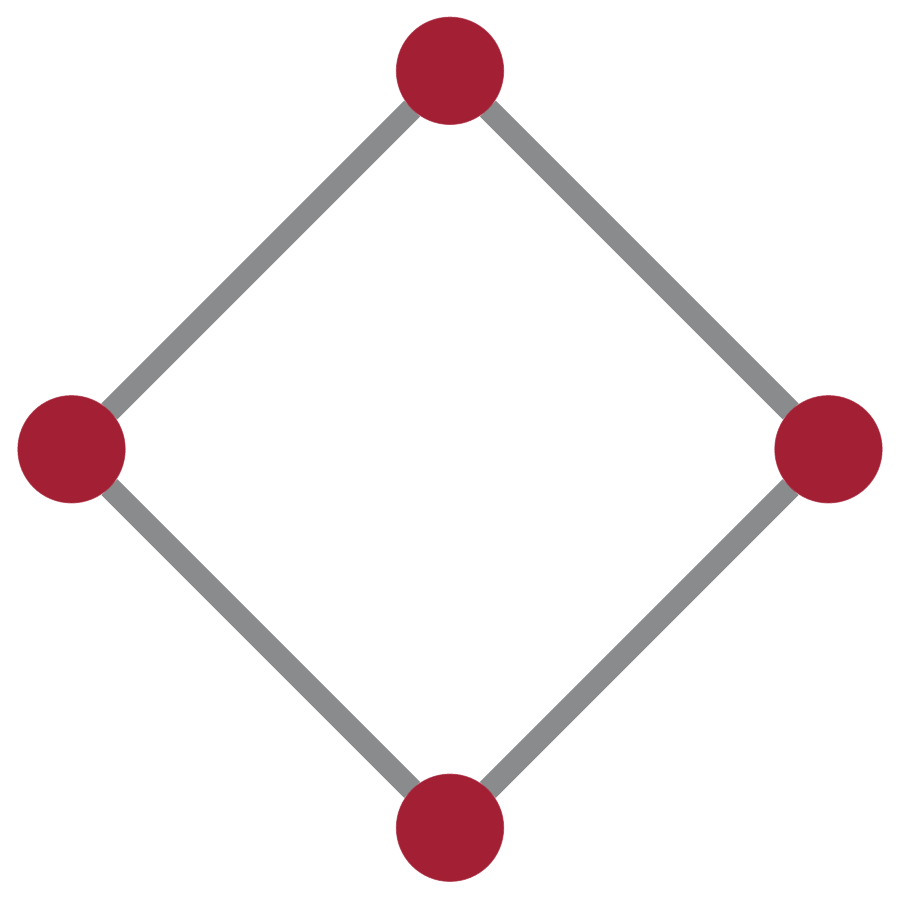}
                                     \includegraphics[scale=0.2]{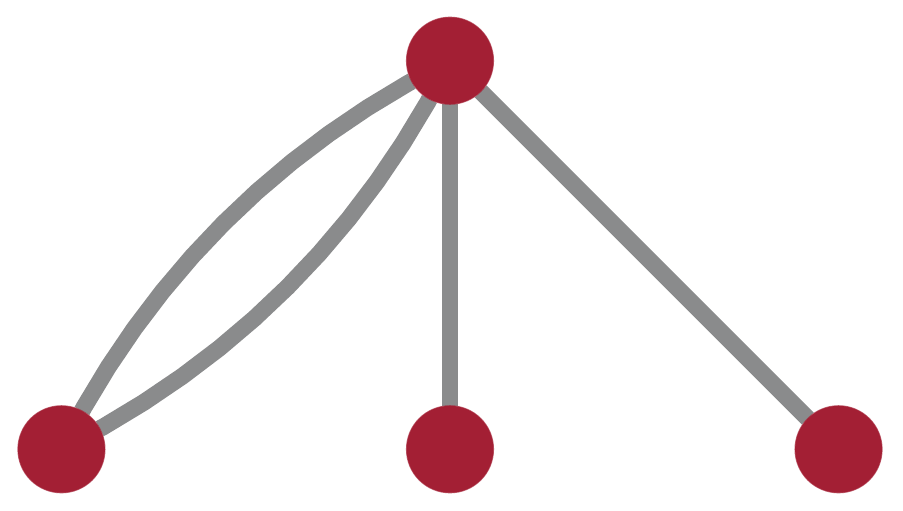}
                                     \tabularnewline &
                                     \includegraphics[scale=0.15]{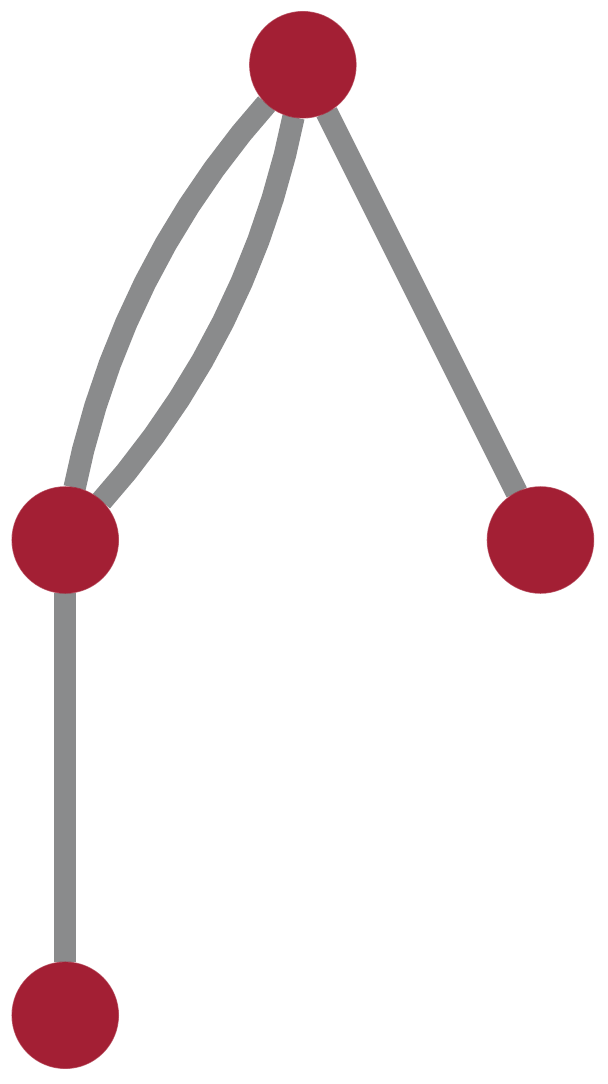}
                                     \rotatebox[origin=t]{270}{\includegraphics[scale=0.2]{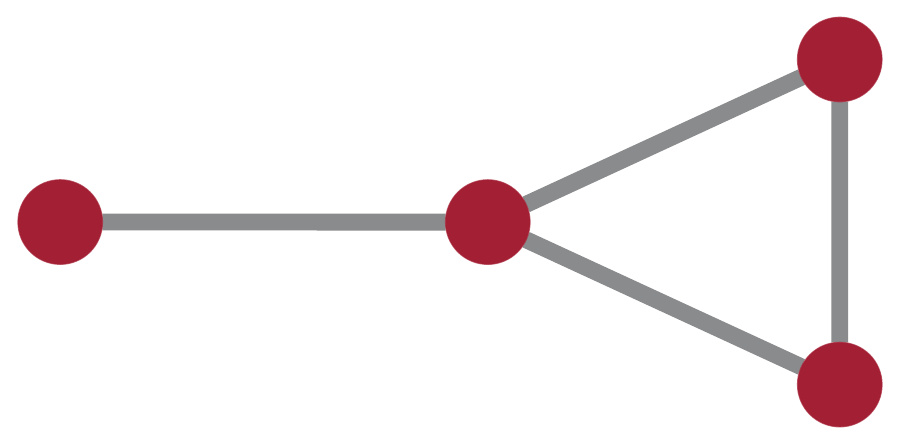}}
                                     \includegraphics[scale=0.15]{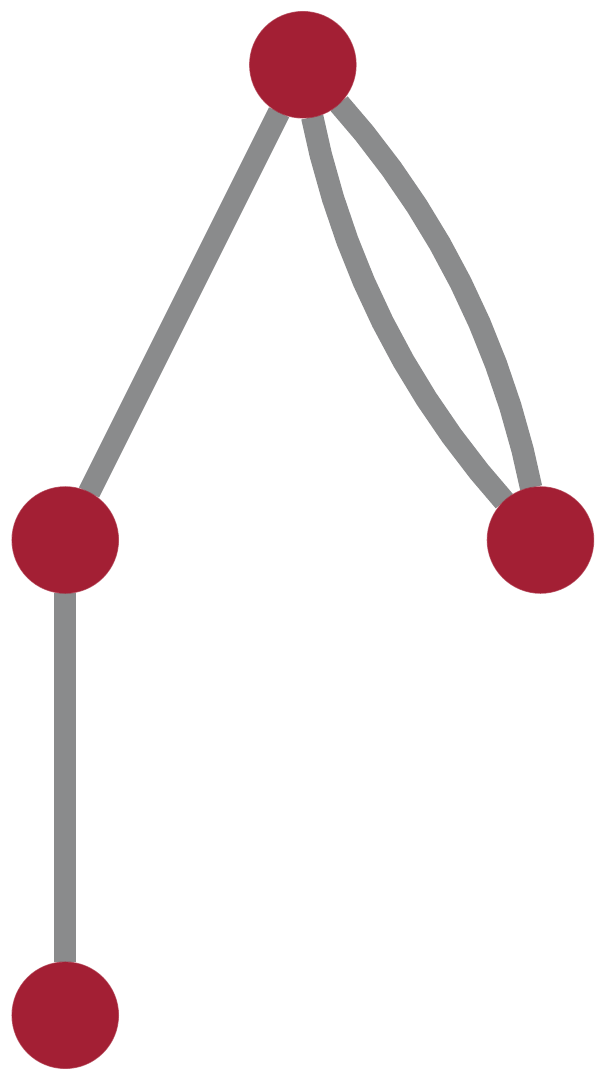}
                                     \includegraphics[scale=0.2]{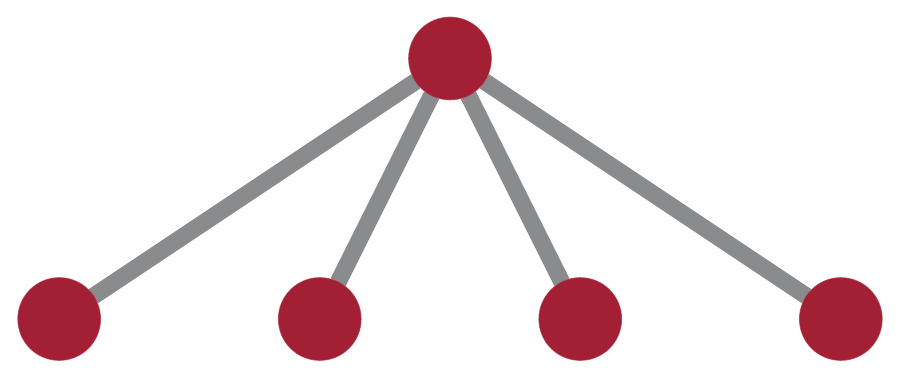}
                                     \includegraphics[scale=0.15]{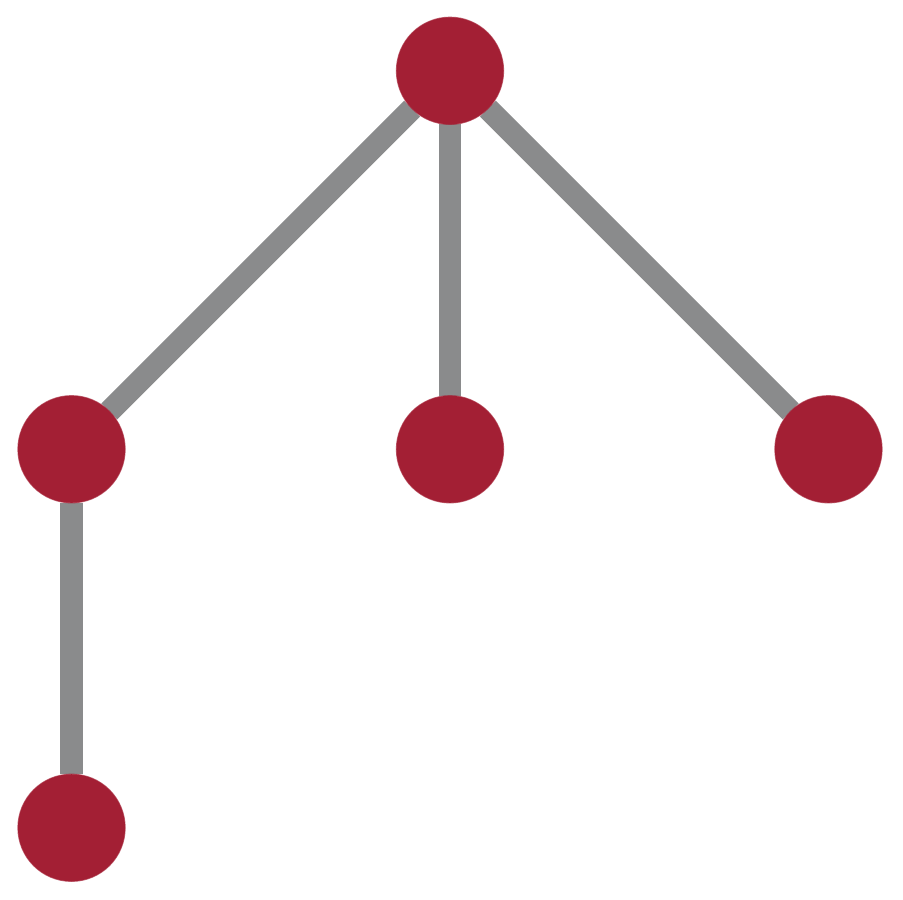}
                                     \includegraphics[scale=0.15]{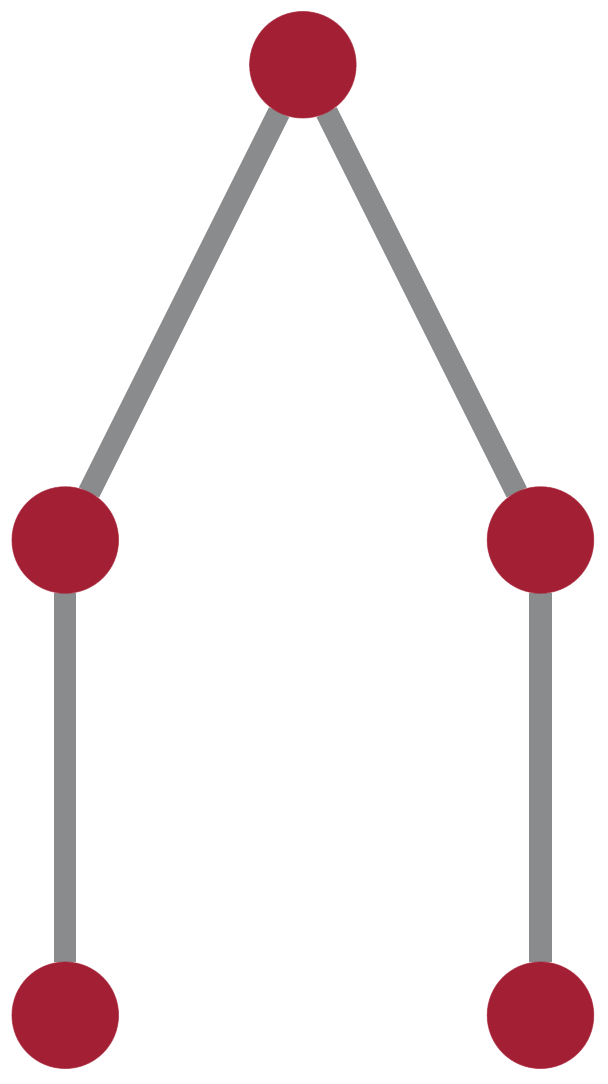}
\tabularnewline\hline
\multirow{12}{*}{$d=5$} & \vspace{.08in}
                                     \includegraphics[scale=0.15]{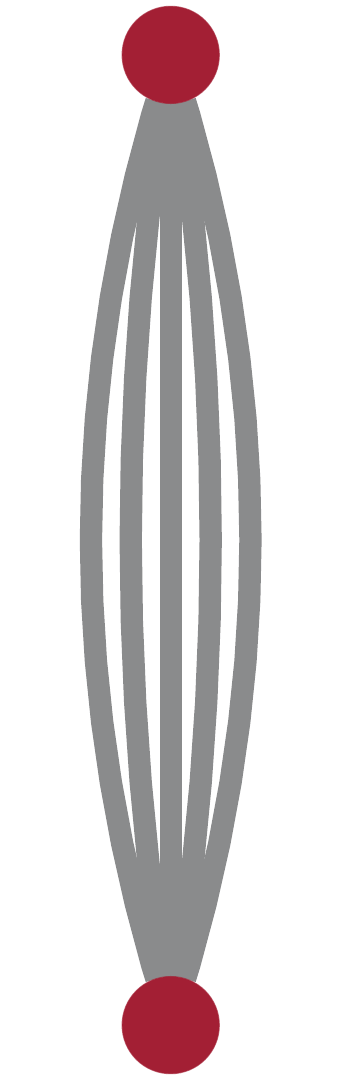}
                                     \includegraphics[scale=0.15]{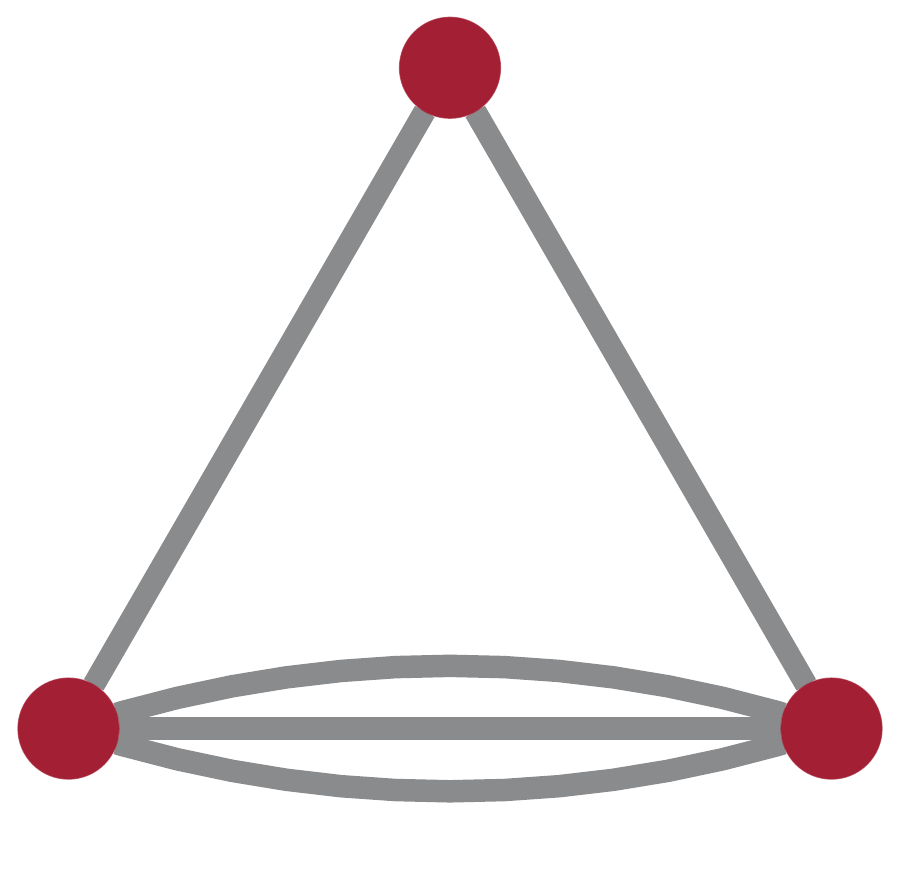}
                                     \includegraphics[scale=0.15]{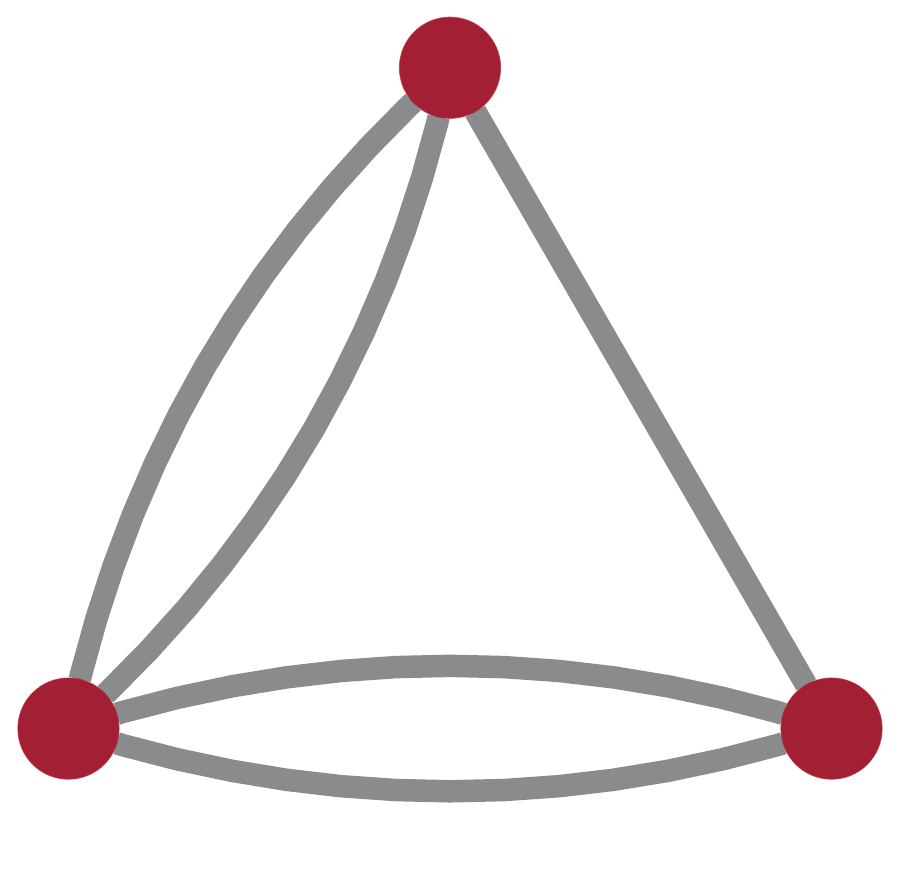}
                                     \includegraphics[scale=0.15]{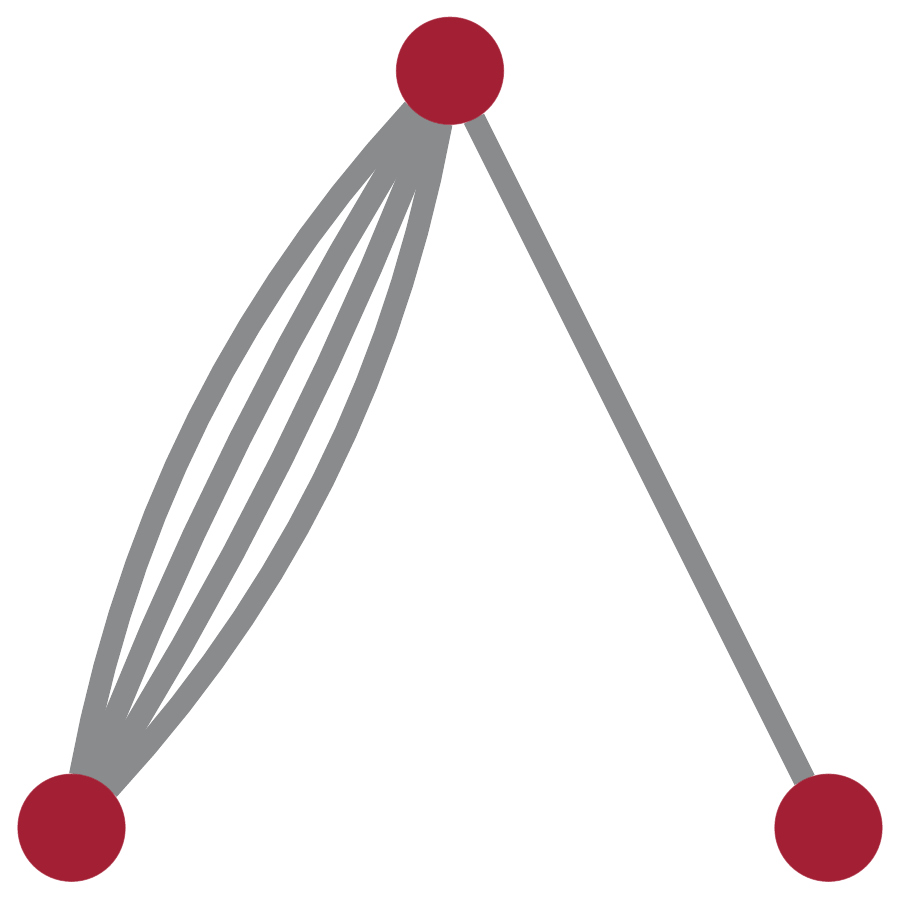}
                                     \includegraphics[scale=0.15]{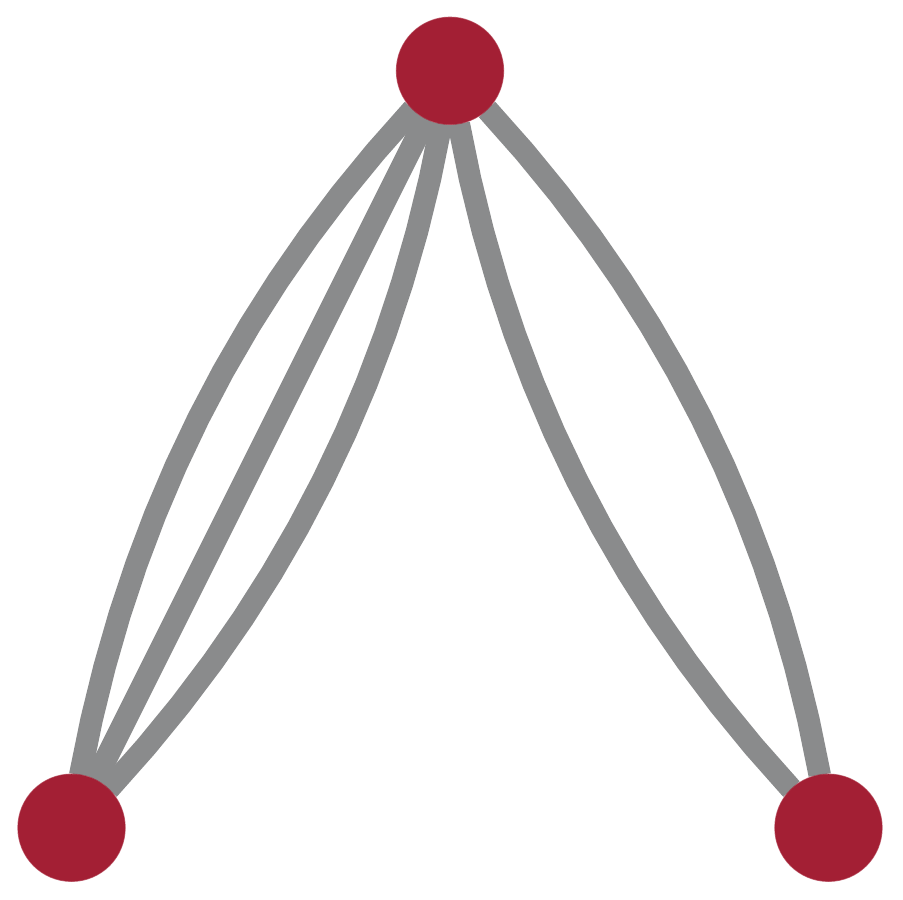}
                                     \rotatebox[origin=t]{270}{\includegraphics[scale=0.165]{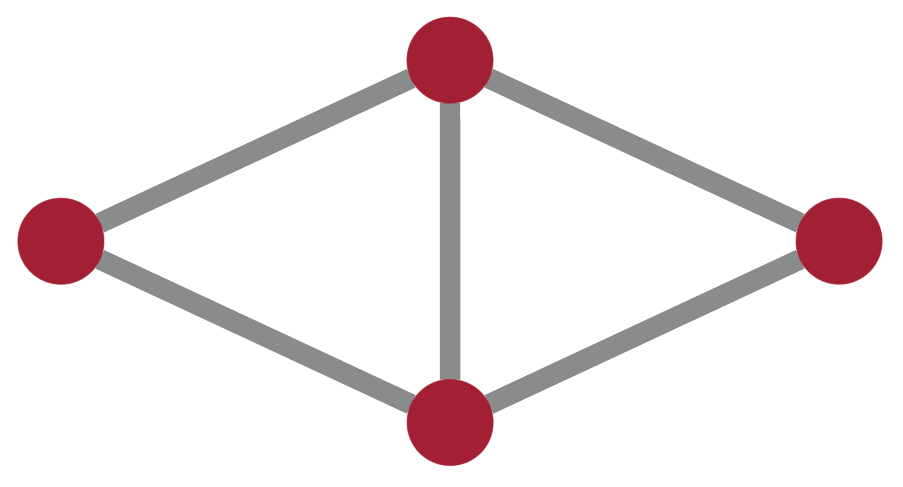}}
                                     \includegraphics[scale=0.175]{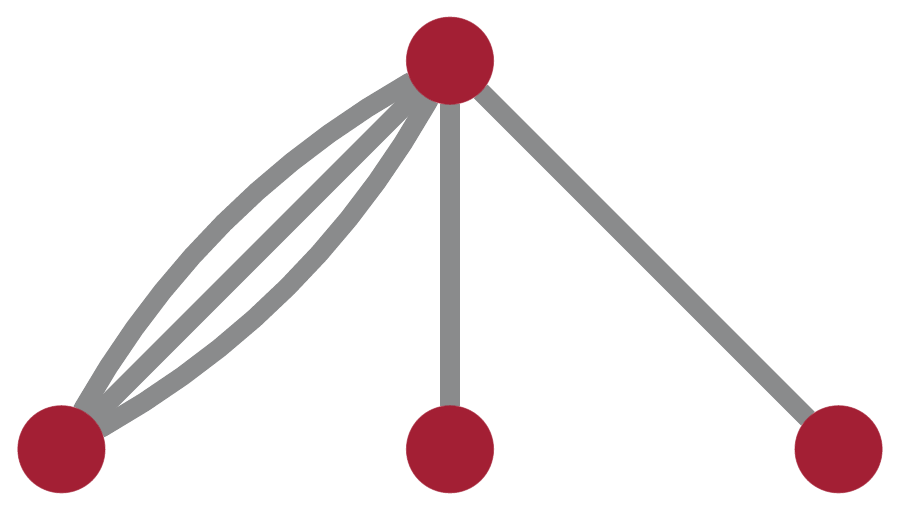}
                                     \includegraphics[scale=0.175]{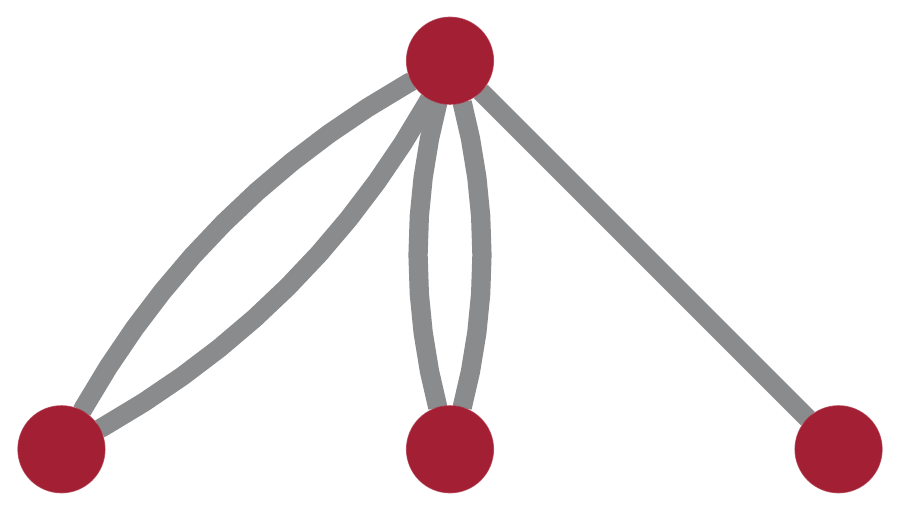}
                                     \tabularnewline &
                                     \includegraphics[scale=0.15]{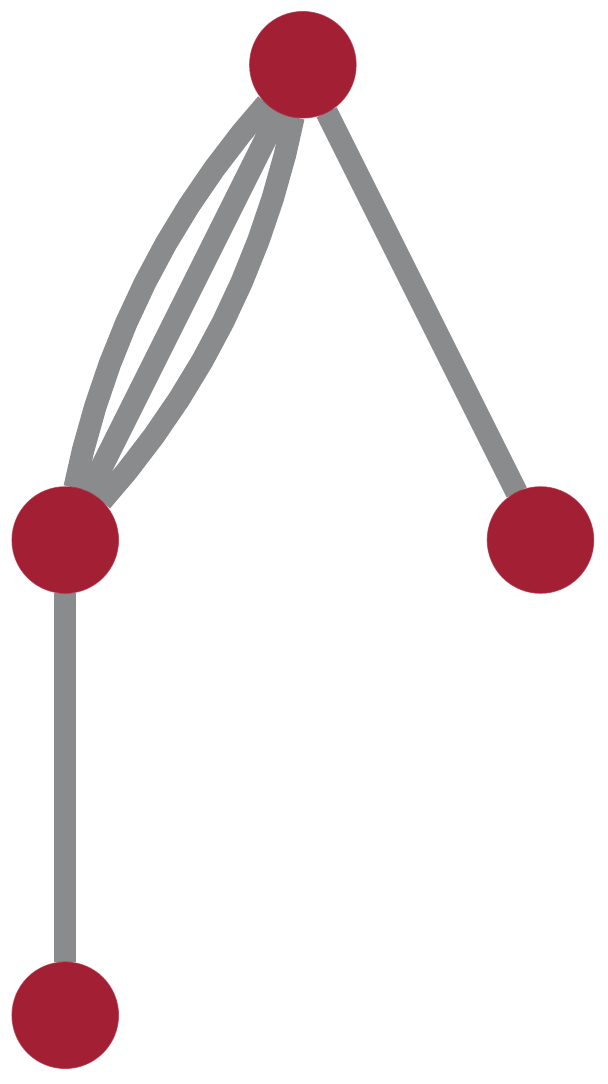}
                                     \rotatebox[origin=t]{270}{\includegraphics[scale=0.175]{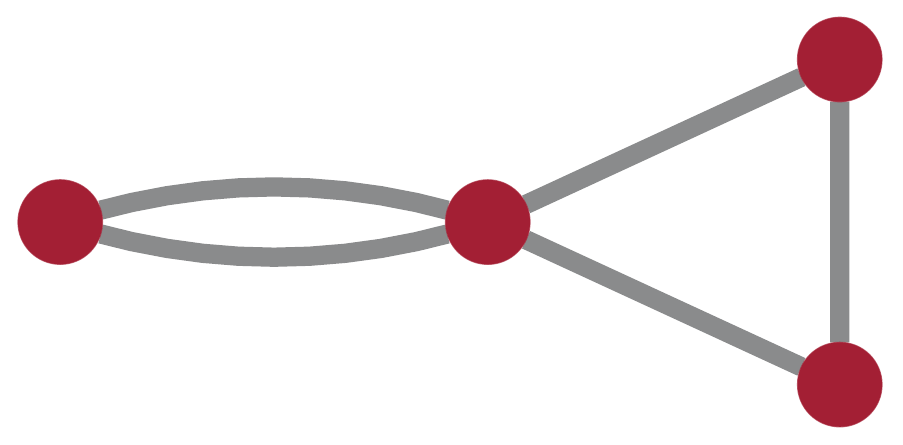}}
                                     \rotatebox[origin=t]{270}{\includegraphics[scale=0.175]{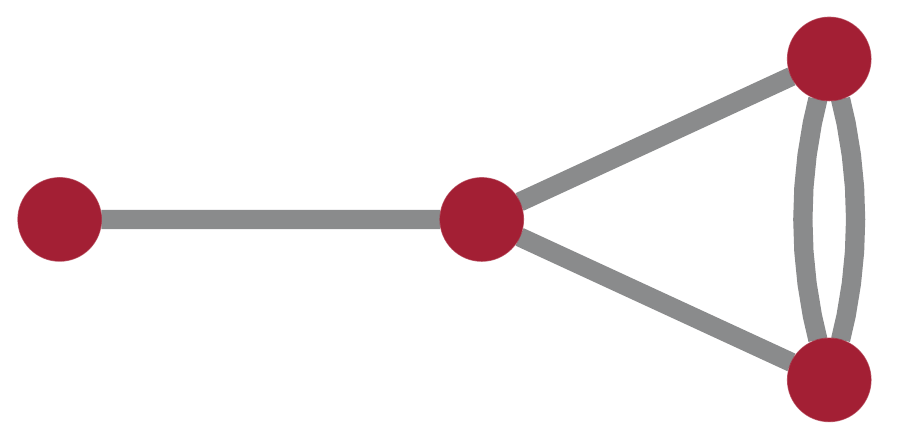}}
                                     \rotatebox[origin=t]{270}{\includegraphics[scale=0.175]{graphs/4_5_9}}
                                     \includegraphics[scale=0.15]{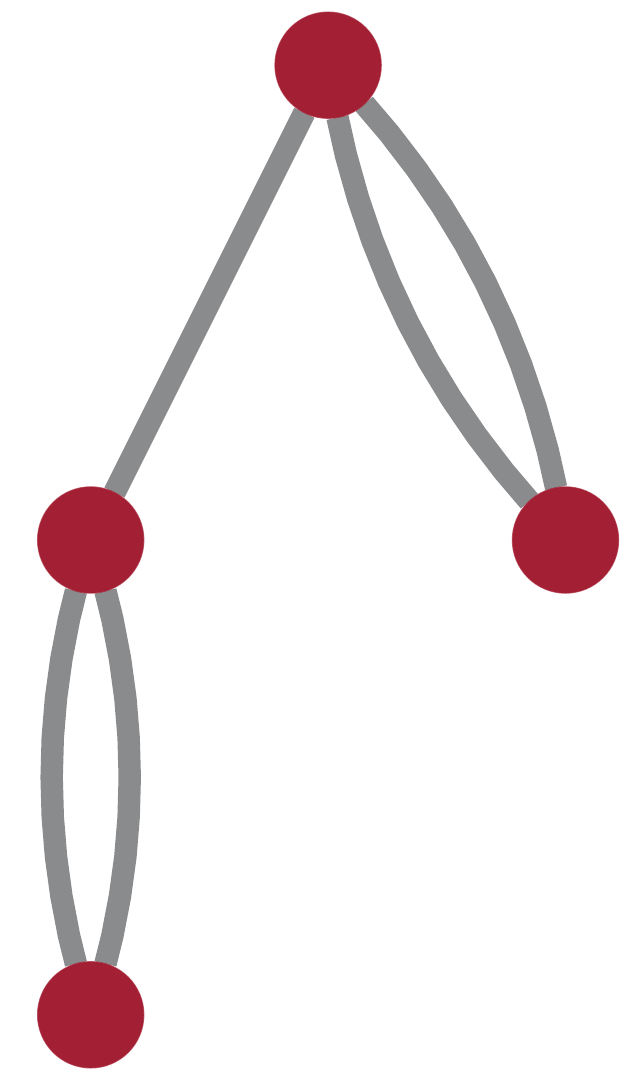}
                                     \includegraphics[scale=0.15]{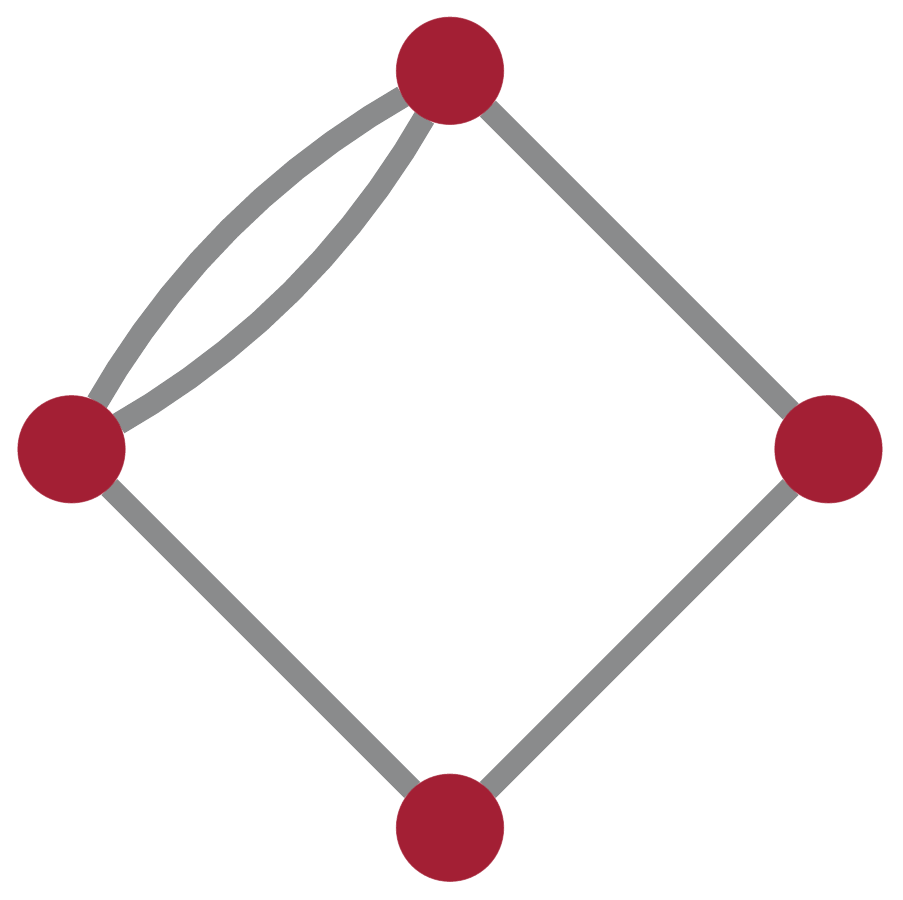}
                                     \includegraphics[scale=0.15]{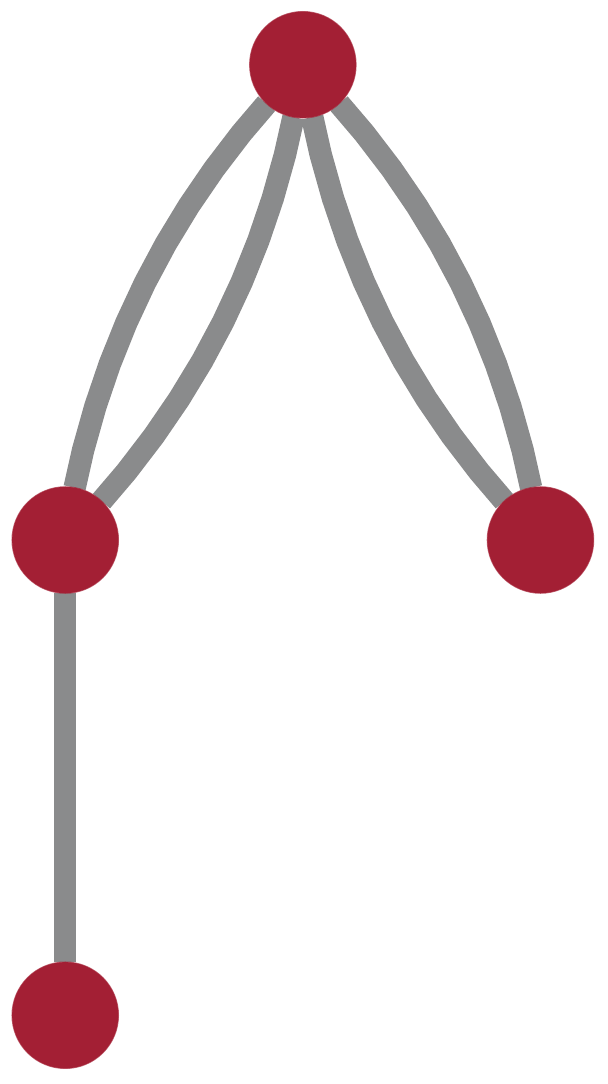}
                                     \includegraphics[scale=0.15]{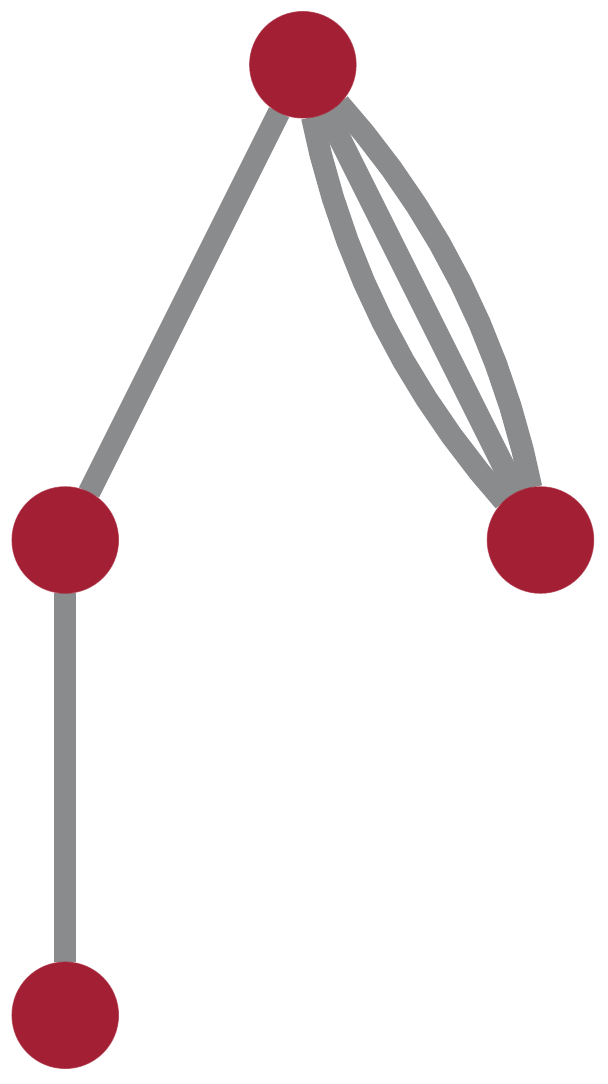}
                                     \includegraphics[scale=0.15]{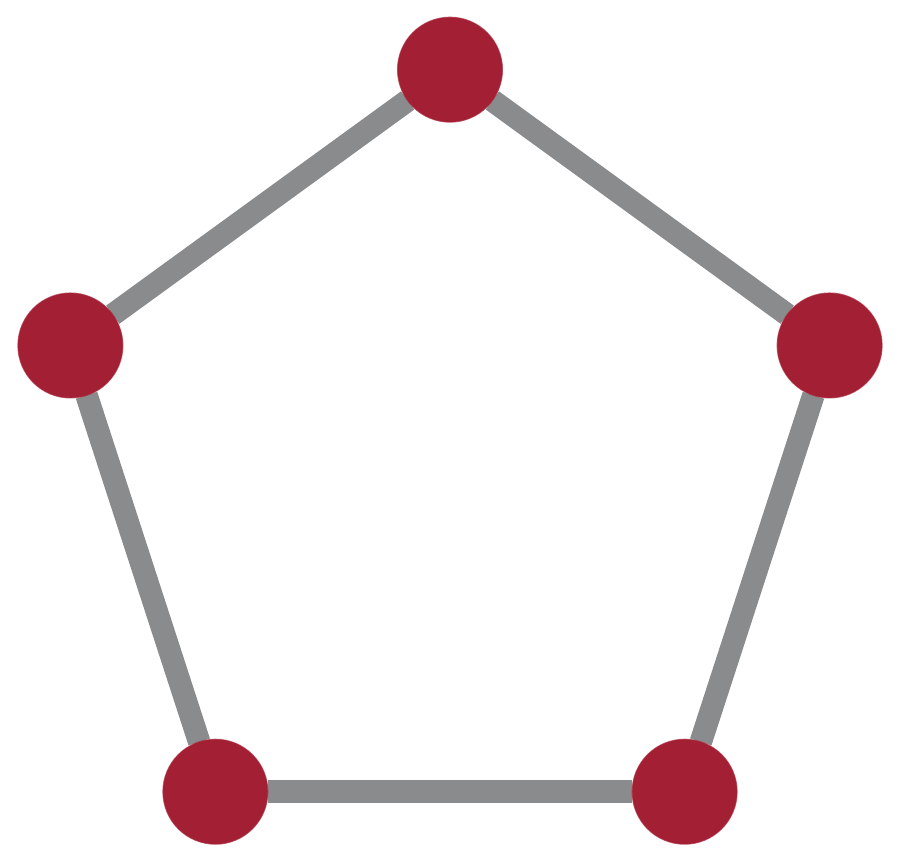}
                                     \includegraphics[scale=0.175]{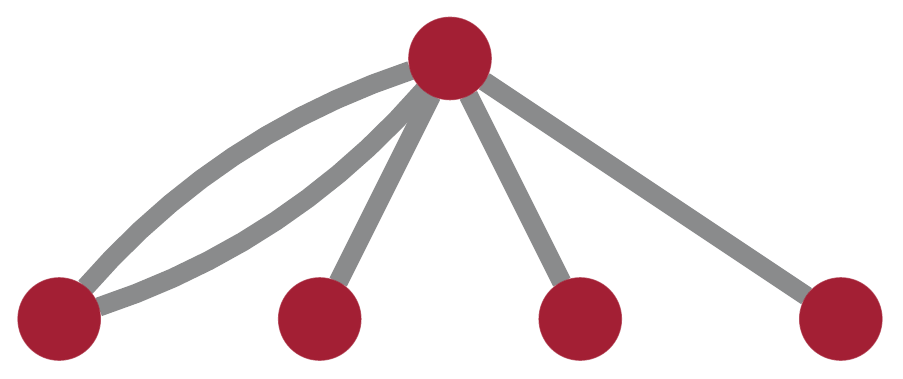}
                                     \tabularnewline &
                                     \rotatebox[origin=c]{270}{\includegraphics[scale=0.15]{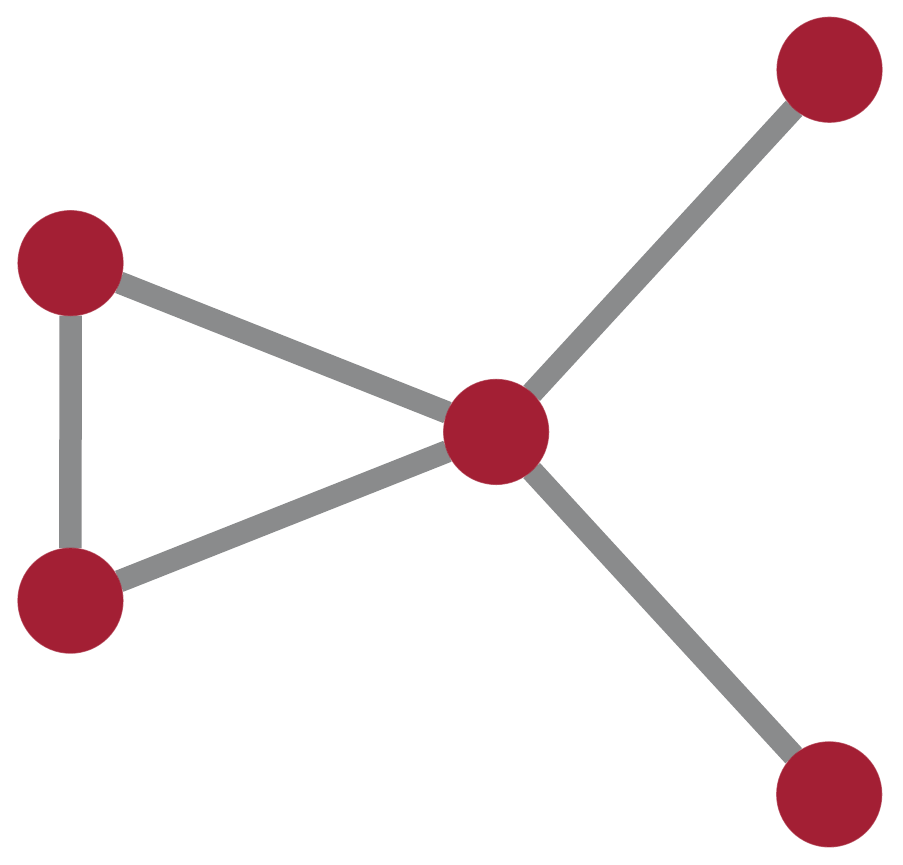}}
                                     \includegraphics[scale=0.15]{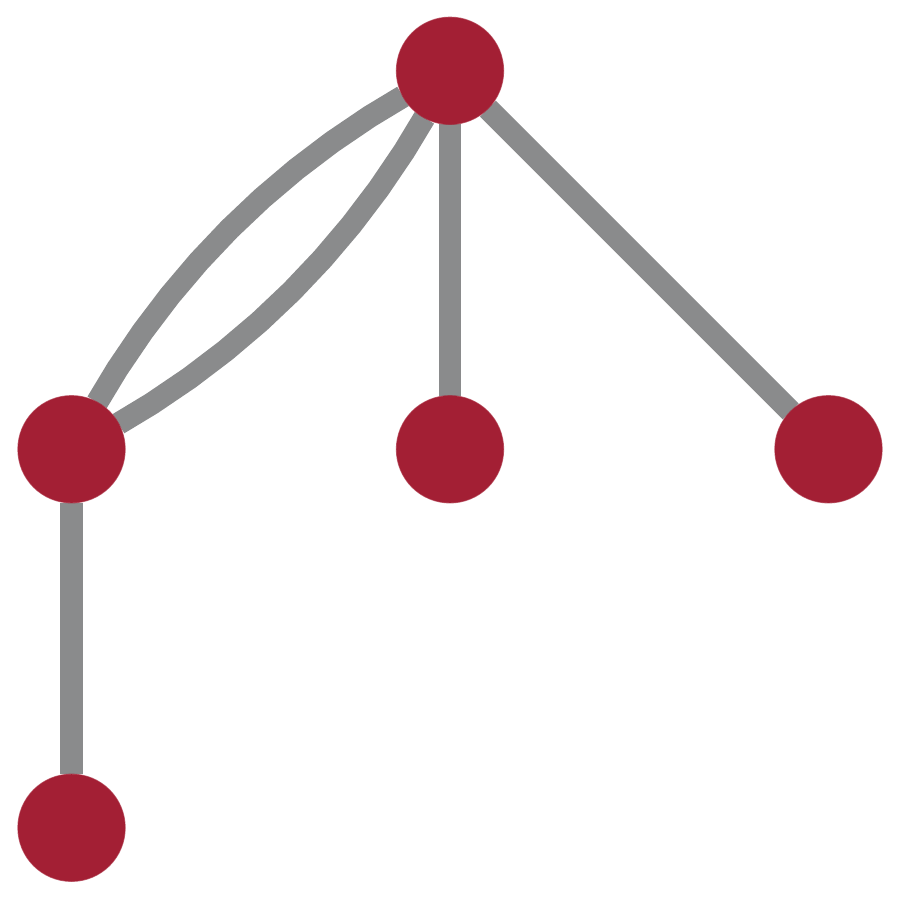}
                                     \includegraphics[scale=0.15]{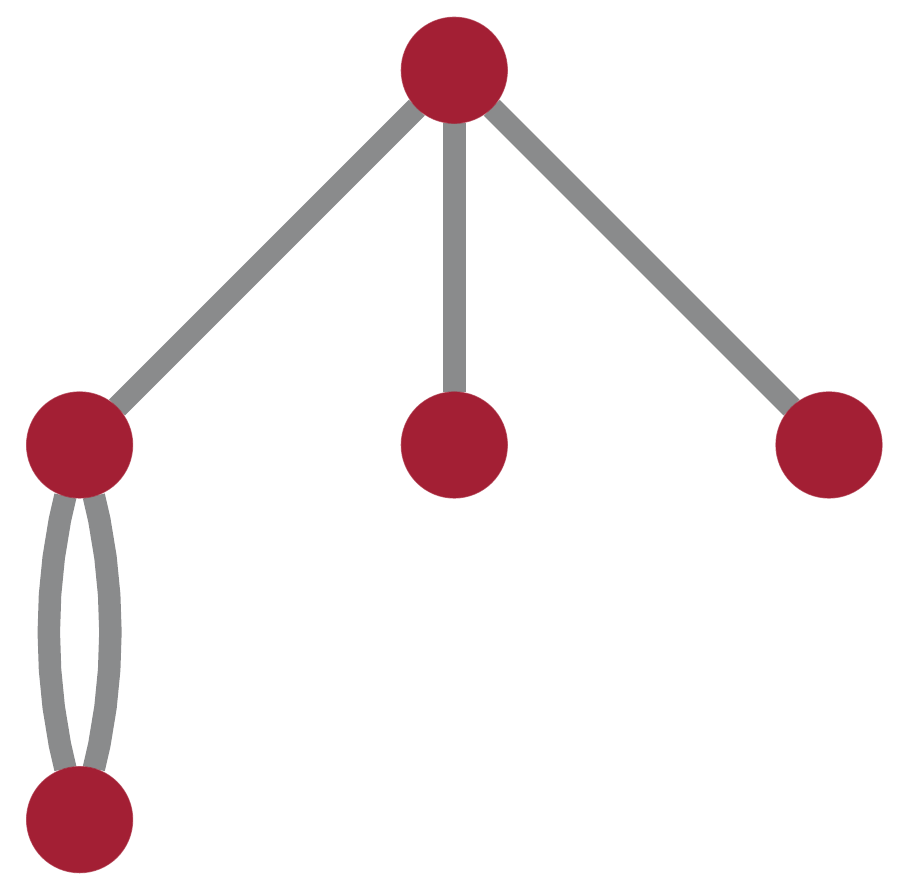}
                                     \rotatebox[origin=t]{270}{\includegraphics[scale=0.175]{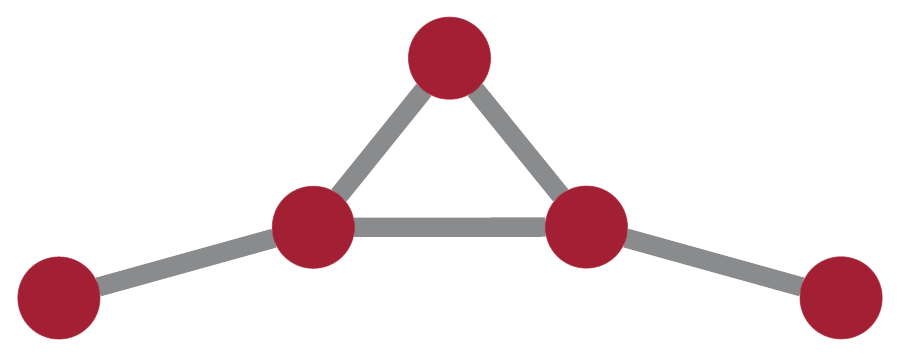}}
                                     \rotatebox[origin=t]{270}{\includegraphics[scale=0.175]{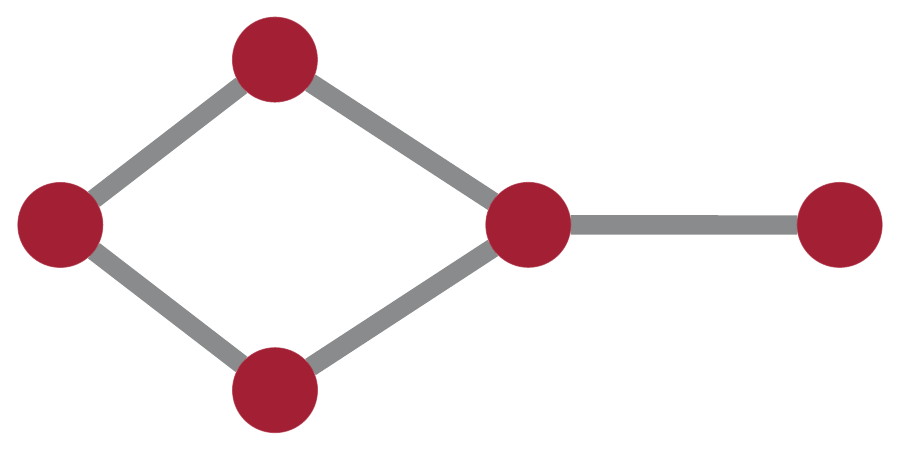}}
                                     \rotatebox[origin=t]{90}{\includegraphics[scale=0.175]{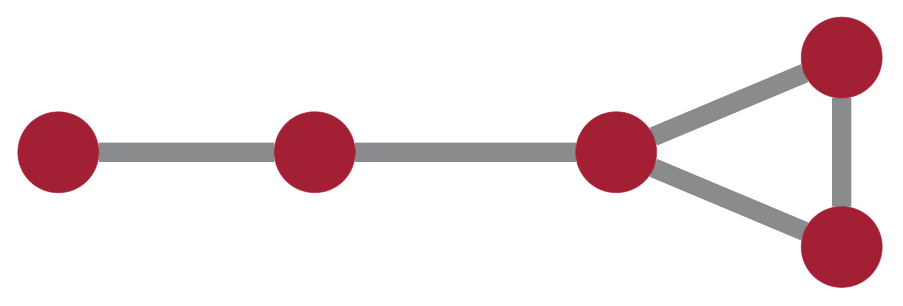}}
                                     \includegraphics[scale=0.15]{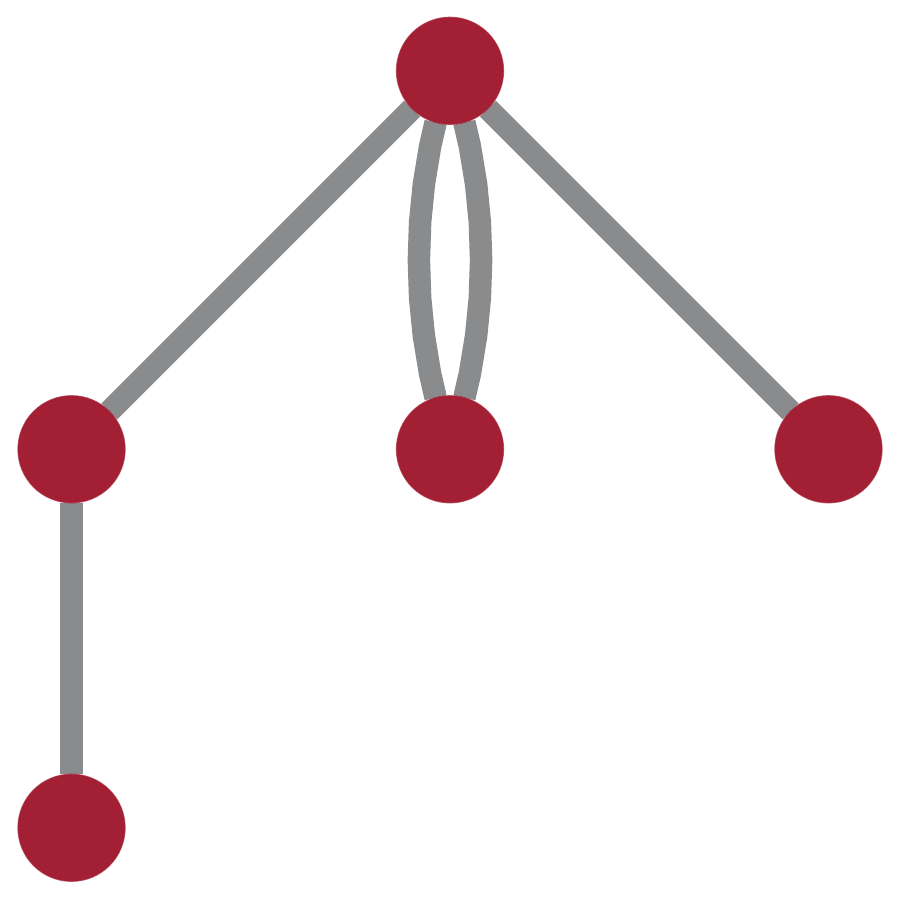}
                                     \includegraphics[scale=0.15]{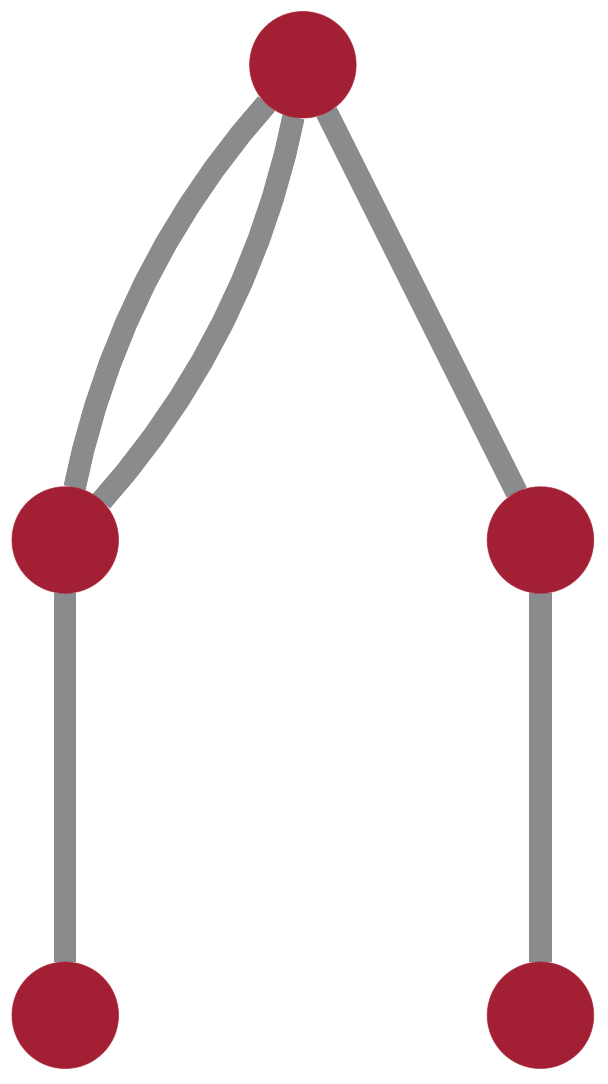}
                                     \includegraphics[scale=0.15]{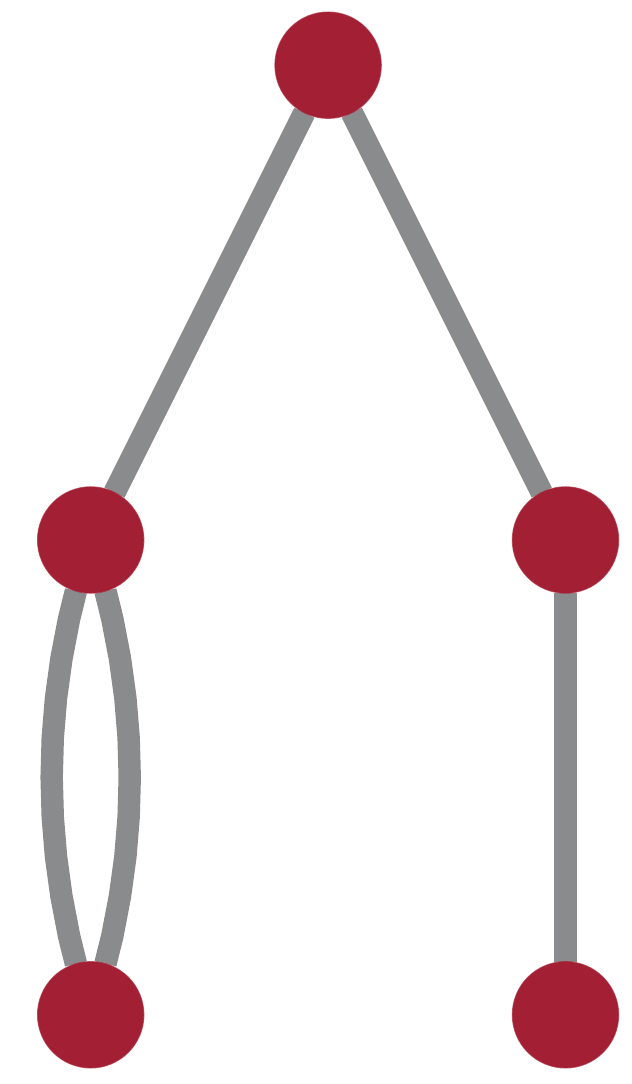}
                                     \includegraphics[scale=0.175]{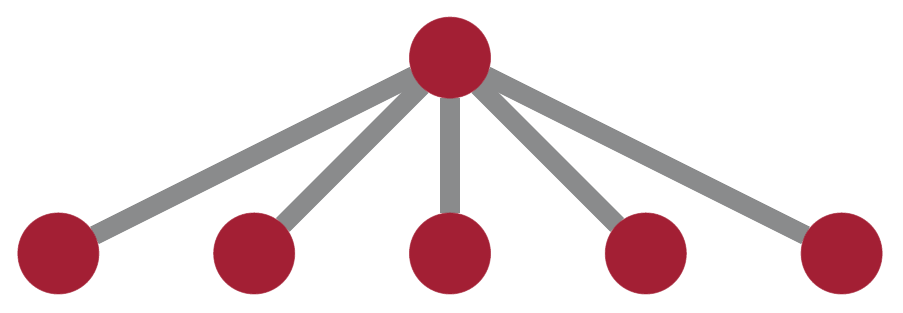}
                                     \tabularnewline & 
                                     \includegraphics[scale=0.15]{graphs/6_5_2}
                                     \includegraphics[scale=0.175]{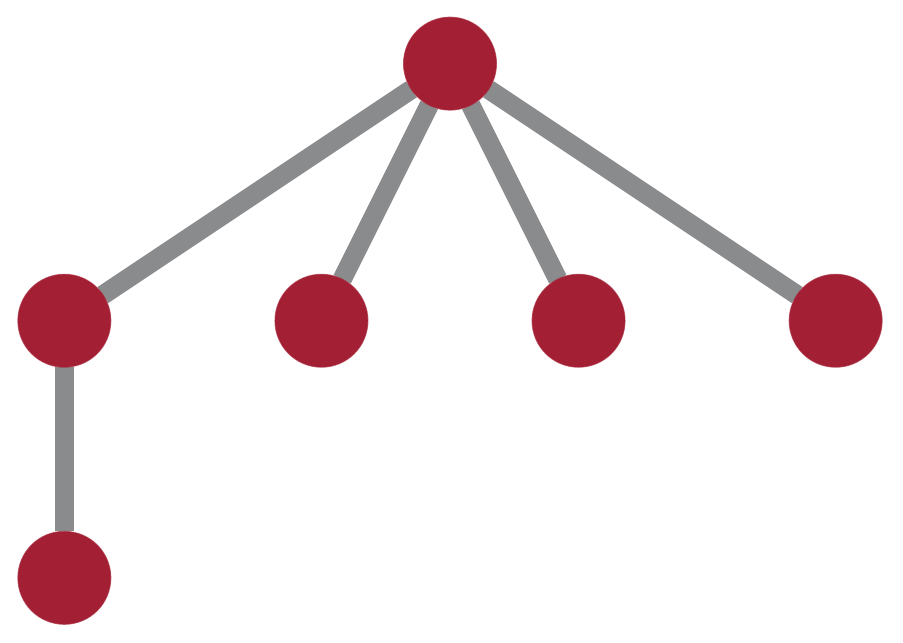}
                                     \includegraphics[scale=0.15]{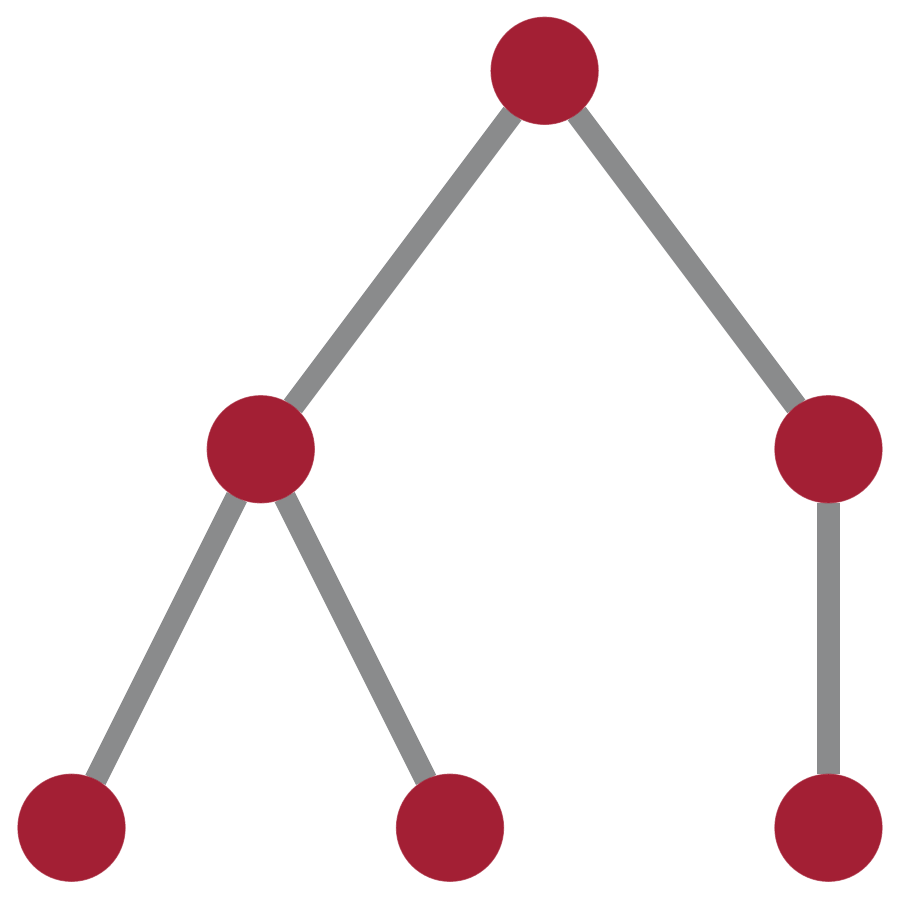}
                                     \includegraphics[scale=0.15]{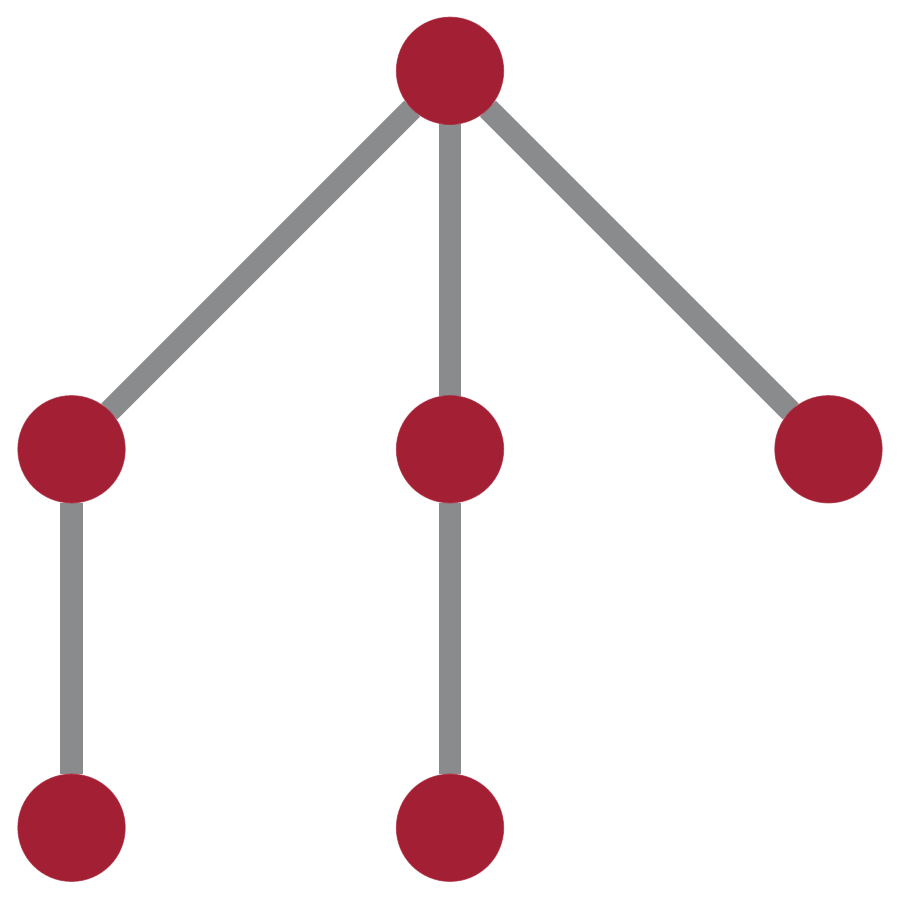}
                                     \includegraphics[scale=0.15]{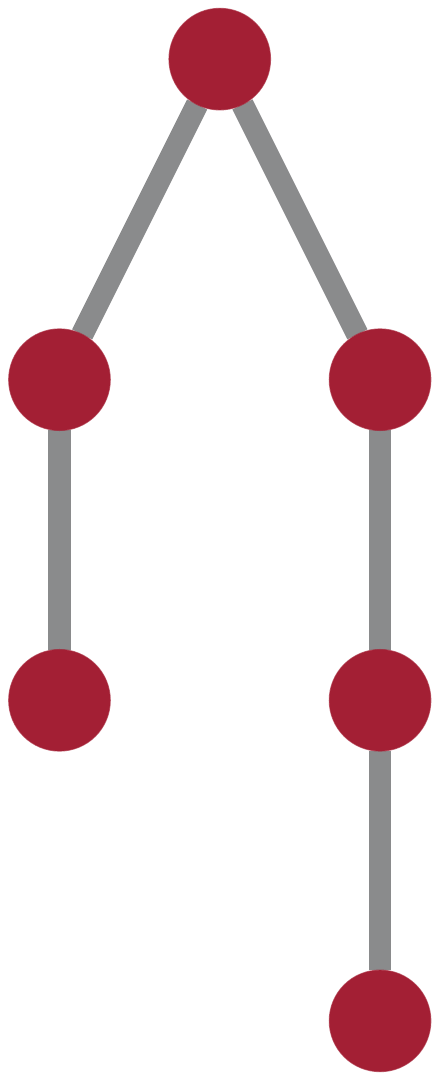}
\tabularnewline\hline
\end{tabular}
\caption{All non-isomorphic, loopless, connected multigraphs organized by the total number of edges $d$, up to $d=5$, sorted by their number of vertices $N$.  Note that for a fixed number of edges $d$, the total number of multigraphs (connected or not) is finite. These graphs correspond to the $d\le5$ prime \Bs counted in \Tab{tab:efpcounts:a}. Image files for all of the prime \B multigraphs up to $d=7$ are available \href{https://github.com/pkomiske/EnergyFlow/tree/images/graphs}{here}.}
\label{tab:graphs}
\end{table}

\afterpage{\clearpage}

\subsection{Energy and angular measures }
\label{sec:measures}

There are many possible choices for the energy fraction $z_i$ and angular measure $\theta_{ij}$ used to define the \Bs.
In the analysis of \Sec{sec:basis}, this choice arises because there are many systematic expansions of IRC-safe observables in terms of energy-like and angular-like quantities.
Typically, one wants to work with observables that respect the appropriate Lorentz subgroup for the collision type of interest.
For $e^+e^-$ colliders, the symmetries are the group of rotations about the interaction point, and for hadron colliders they are rotations about and boosts along the beam axis (sometimes with a reflection in the plane perpendicular to the beam). 
Therefore, the energy fractions $z_i$ usually use particle energies $E_i$ at an $e^+e^-$ collider and particle transverse momenta $p_{T,i}$ at a hadron collider.

For the angular weighting function $f_N$, though, there are many different angular structures one can build out of the particle directions $\hat{p}^\mu_i$.
The \Bs use the simplest and arguably most natural choice to expand the angular behavior:  \emph{pairwise} angular distances $\theta_{ij}$,  determined using spherical coordinates $(\theta,\phi)$ at an $e^+e^-$ collider and rapidity-azimuth coordinates $(y,\phi)$ at a hadron collider.
Other classes of observables, such as ECFs~\cite{Larkoski:2013eya} and ECFGs~\cite{Moult:2016cvt}, also use pairwise angles since they manifestly respect the underlying Lorentz subgroup.
For building the \Bs, is important that the $\theta_{ij}$, or any other choice of geometric object, be sufficient to reconstruct the value of the original function $f_N$ in terms of the $\hat p_i^\mu$.
For pairwise angles, this property can be shown by triangulation, under the assumption that the observable in question does not depend on the overall jet direction nor on rotations or reflections about the jet axis.
Since jets are collimated sprays of particles, the $\theta_{ij}$ are typically small and are good expansion parameters.

At various points in this chapter, we explore three different energy/angular measures.  For $e^+e^-$ collisions, our default is:
\begin{equation}\label{eq:eemeasure}
\boxed{\text{$e^+e^-$ Default}} \qquad
\begin{aligned}
\quad z_i &= \frac{E_i}{E_J}, \qquad  E_J\equiv\sum_{i=1}^M E_i, \\
\theta_{ij} &= \left(\frac{2\,p_i^\mu p_{j\mu}}{E_i  E_j} \right)^{\beta/2},
\end{aligned}
\end{equation}
where $\beta>0$ is an angular weighting factor. 
For the hadron collider studies in \Secs{sec:linreg}{sec:linclass}, we use:
\begin{equation}\label{eq:hadronicmeasure}
\boxed{\text{Hadronic Default}} \qquad
\begin{aligned}
\quad z_i &= \frac{p_{T,i}}{p_{T,J}}, \qquad  p_{T,J}\equiv\sum_{i=1}^Mp_{T,i}, \\
\theta_{ij} &= \left(\Delta y_{ij}^2+\Delta\phi_{ij}^2\right)^{\beta/2},
\end{aligned}
\end{equation}
where $\Delta y_{ij}\equiv y_i-y_j$, $\Delta\phi_{ij}\equiv\phi_i-\phi_j$ are determined by the rapidity $y_i$ and azimuth $\phi_i$ of particle $i$.
This measure is rotationally-symmetric in the $(y,\phi)$ plane, which is the most commonly used case in jet substructure.
For situations where this rotational symmetry is not desirable (such as for jet pull~\cite{Gallicchio:2010sw}), we can instead use a two-dimensional measure that treats the rapidity and azimuthal directions separately:
\begin{equation}\label{eq:2Dhadronicmeasure}
\boxed{\text{Hadronic Two-Dimensional}} \qquad
\begin{aligned}
\quad z_i &= \frac{p_{T,i}}{p_{T,J}}, \qquad  p_{T,J}\equiv\sum_{i=1}^Mp_{T,i}, \\
\theta_{ij} &= \Delta y_{ij} \text{ or } \Delta\phi_{ij},
\end{aligned}
\end{equation}
where each line on the multigraph now has an additional decoration to indicate whether it corresponds to $\Delta y$ or $\Delta \phi$.

We emphasize that the choice of measure is not unique, though it is constrained by the IRC-safety arguments in \Sec{sec:basis}.
For example, IRC safety requires that the energy-like quantities appear linearly in $z_i$.
For the default measures, the angular exponent $\beta$ can take on any positive value and still be consistent with IRC safety.
Depending on the context, different choices of $\beta$ can lead to faster or slower convergence of the \B expansion, with $\beta<1$ emphasizing smaller values of $\theta_{ij}$ and $\beta>1$ emphasizing larger values of $\theta_{ij}$.
For special choices of $z_i$ and $\theta_{ij}$, some \Bs may be linearly related, a point we return to briefly in \Sec{sec:algebraic}.

\subsection{Relation to existing substructure observables}
\label{sec:jetobs}

Many familiar jet observables can be nicely interpreted in the energy flow basis.
When an observable can be written as a simple expression in terms of particle four-momenta or in terms of energies and angles, the energy flow decomposition can often be performed exactly.
Some of the most well-known observables, such as jet mass and energy correlation functions, are exactly finite linear combinations of \Bs (with appropriate choice of measure), which one might expect since they also correspond to natural $C$-correlators.
Unless otherwise specified, the analysis below uses the default hadronic measure in \Eq{eq:hadronicmeasure} with $\beta=1$ and treats all particles as massless.\footnote{A proper treatment of non-zero particle masses would require an additional expansion in the \emph{velocities} of the particles (see related discussion in \Refs{Salam:2001bd,Mateu:2012nk}). To avoid these complications, one can interpret all particles as being massless in the $E$-scheme~\cite{Salam:2001bd}, i.e.\ $p_{\text{rescaled}}^\mu = E \, (1, \hat{p})$ with $\hat{p} = \vec{p} / |\vec{p}^{}|$.}

\subsubsection{Jet mass}
\label{sec:relation_jetmass}

Jet mass is most basic jet substructure observable, and not surprisingly, it has a nice expansion in the energy flow basis.
In particular, the squared jet mass divided by the jet energy squared is an exact $N=2$ \B using the $e^+e^-$ measure in \Eq{eq:eemeasure} with $\beta = 1$:
\begin{equation}\label{eq:massexp}
e^+e^-: \qquad \frac{m_J^2}{E_J^2}=\frac12\sum_{i_1 = 1}^M \sum_{i_2 = 1}^M z_{i_1}z_{i_2} \left(\frac{2\,p_{i_1}^\mu p_{i_2\mu}}{E_{i_1} E_{i_2}} \right) =\frac12\times
\begin{gathered}
\includegraphics[scale=.2]{graphs/2_2_1}
\end{gathered}.
\end{equation}
Note that mass is exactly an \B for any $\beta = 2 / N$ measure choice.

For the hadronic measure in \Eq{eq:hadronicmeasure} with $\beta= 1$, there is an approximate equivalence with the squared jet mass divided by the jet (scalar) transverse momentum:
\begin{equation}\label{eq:massexph}
\text{Hadronic}: \qquad \frac{m_J^2}{p_{TJ}^2}=\sum_{i_1 = 1}^M\sum_{i_2 = 1}^M z_{i_1} z_{i_2} (\cosh(\Delta y_{i_1i_2}) - \cos(\Delta\phi_{i_1i_2})) = \frac12\times
\begin{gathered}
\includegraphics[scale=.2]{graphs/2_2_1}
\end{gathered}
+ \cdots.
\end{equation}
Since the jet mass is not exactly rotationally symmetric in the rapidity-azimuth plane, the subleading terms in \Eq{eq:massexph} are not fully encompassed by the simplified set of hadronic observables depending only on $\{\Delta y_{ij}^2+\Delta\phi_{ij}^2\}$, but could be fully encompassed by using an expansion in $\{\Delta y_{ij},\Delta\phi_{ij}\}$ as in \Eq{eq:2Dhadronicmeasure}.
For narrow jets, these higher-order terms in the expansion become less relevant since $\Delta y_{ij},\,\Delta\phi_{ij}\ll1$.\footnote{Alternatively, we could use a measure with $\theta_{ij} = \left(\frac{2\,p_i^\mu p_{j\mu}}{p_{T,i}  p_{T,j}} \right)^{\beta/2}$, similar in spirit to the Conical Geometric measure of \Ref{Stewart:2015waa}, to exactly recover the jet mass.}

\subsubsection{Energy correlation functions}
\label{sec:equivECF}

The ECFs are designed to be sensitive to $N$-prong jet substructure~\cite{Larkoski:2013eya}.
They can be written as a $C$-correlator, \Eq{eq:genccorr}, with a particular choice of angular weighting function:
\begin{equation}
\label{eq:ecfs}
f_N^{(\beta)}(\{\theta_{ij}\})= \prod_{i<j} \theta^\beta_{ij},
\end{equation}
where $\theta_{ij}=(\Delta y_{ij}^2+\Delta\phi_{ij}^2)^{1/2}$.
In terms of multigraphs, the ECFs correspond to complete graphs on $N$ vertices:
\begin{equation}\label{eq:ecfgraphs}
e_2^{(\beta)} = \begin{gathered}
\includegraphics[scale=.2]{graphs/2_1_1}
\end{gathered}\,,\hspace{.5in}
e_3^{(\beta)} = \begin{gathered}
\includegraphics[scale=.2]{graphs/3_3_1}
\end{gathered}\,,\hspace{.5in}
e_4^{(\beta)} = \begin{gathered}
\includegraphics[scale=.2]{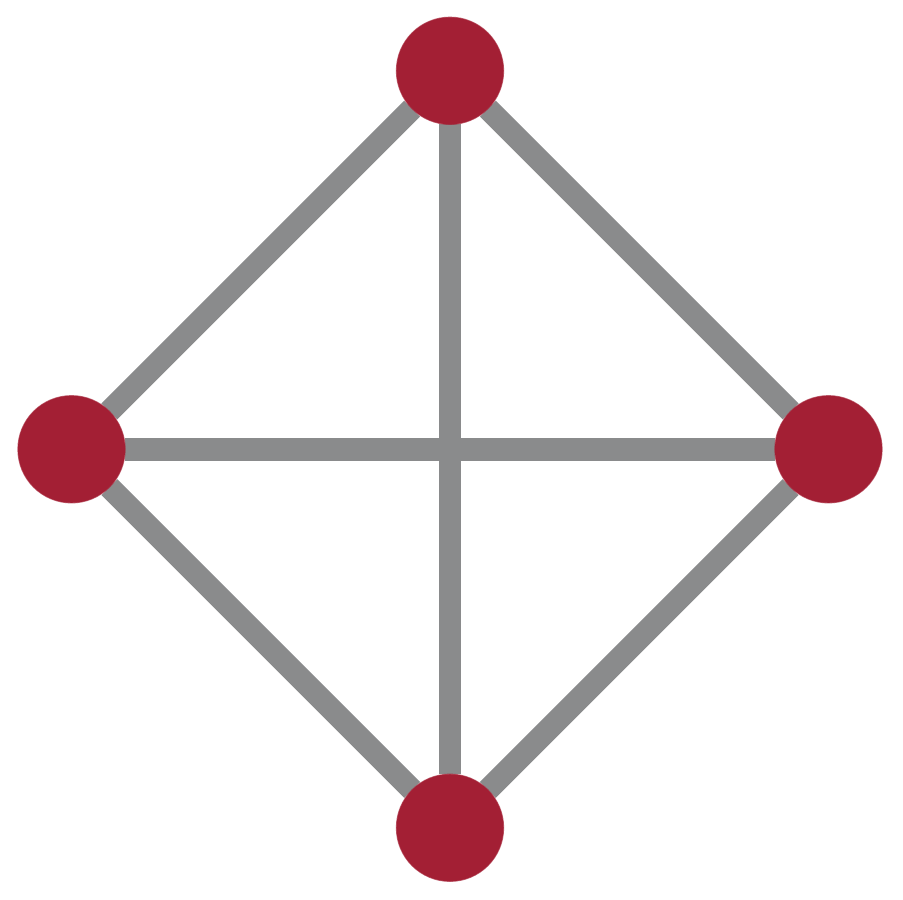}
\end{gathered},
\end{equation}
which are \Bs using the measure in \Eq{eq:hadronicmeasure} with exponent $\beta$.

The ECFs have since been expanded to a more flexible set of observables referred to as the ECFGs~\cite{Moult:2016cvt}. 
Letting $\min^{(m)}$ indicate the $m$-th smallest element in a set, the ECFGs are also $C$-correlators with angular weighting function:
\begin{equation}\label{eq:ecfgs}
\,_vf_N^{(\beta)}(\{\theta_{ij}\}) =  \prod_{m=1}^v \min_{i<j}^{(m)}\{ \theta^\beta_{ij}\}.
\end{equation}
The ECFGs do not have an exact multigraph correspondence due to the presence of the min function, but are evidently closely related to the \Bs since they share a common energy structure. 
The min function itself can be approximated by polynomials in its arguments, which induces an approximating series for the ECFGs in terms of \Bs when plugged into the common energy structure.

Both the \Bs and the ECFGs represent natural extensions of the ECFs but in different directions.
From our graph-theoretic perspective, the \Bs extend the ECFs to non-fully-connected graphs.
The ECFGs extend the scaling properties of the ECFs into observables with independent energy and angular scalings.
As discussed in \Sec{subsec:goingbeyond}, there are angular structures possible in the \Bs that are not possible in the ECFGs.
As with any jet substructure analysis, the choice of which set of observables to use depends on the physics of interest, with the \Bs designed for linear completeness and the ECFGs designed for nice power-counting properties.

\subsubsection{Angularities}
\label{sec:angularities}

Next, we consider the IRC-safe jet angularities~\cite{Larkoski:2014pca} (see also \Refs{Berger:2003iw,Almeida:2008yp,Ellis:2010rwa,Larkoski:2014uqa}) defined by:
\begin{equation}\lambda^{(\alpha)} = \sum_{i = 1}^M z_{i} \, \theta_{i}^\alpha,\end{equation}
where $\alpha>0$ is an angular exponent and $\theta_i$ denotes the distance of particle $i$ to the jet axis.
For concreteness and analytic tractability, we take the jet axis to be the $p_T$-weighted centroid in $( y,\phi)$-space, such that the jet axis is located at:
\begin{equation}
\label{eq:jetaxis}
y_J = \sum_{j=1}^M z_j  y_j, \qquad \phi_J = \sum_{j=1}^M z_j \phi_j.
\end{equation}
With this, the angularities can be expressed as:
\begin{align}
\lambda^{(\alpha)} &=\sum_{i_1=1}^Mz_{i_1}\left(( y_{i_1}- y_J)^2+(\phi_{i_1}-\phi_J)^2\right)^{\alpha/2}\nonumber
\\&=\sum_{i_1=1}^M z_{i_1} \left(\left(\sum_{i_2 = 1}^Mz_{i_2}\Delta y_{i_1i_2}\right)^2 + \left(\sum_{i_2 = 1}^Mz_{i_2}\Delta\phi_{i_1i_2}\right)^2\right)^{\alpha/2}\nonumber
\\&= \sum_{i_1=1}^M z_{i_1} \left(\sum_{i_2 = 1}^Mz_{i_2}\theta_{i_1i_2}^2 - \frac12 \sum_{i_2 = 1}^M\sum_{i_3 = 1}^M z_{i_2}z_{i_3} \theta_{i_2i_3}^2\right)^{\alpha/2}.\label{eq:angthetaijs}
\end{align}

For even $\alpha$, the parenthetical in \Eq{eq:angthetaijs} can be expanded and identified to be a linear combination of \Bs with $N = \alpha$ and $d=\alpha$ (see \Ref{GurAri:2011vx} for a related discussion).
For $\alpha = 2$, \Eq{eq:angthetaijs} implies:
\begin{equation}\label{eq:lam2}
\lambda^{(2)} = \frac12\sum_{i\in J} \sum_{j\in J} z_{i}z_{j} \theta_{ij}^2 =\frac12\times \begin{gathered}
\includegraphics[scale=.2]{graphs/2_2_1}
\end{gathered}.
\end{equation}
For $\alpha = 4$ and $\alpha = 6$,  \Eq{eq:angthetaijs} implies:
\begin{align}\label{eq:lam4}
\lambda^{(4)} &=
\begin{gathered}
\includegraphics[scale=.2]{graphs/3_4_1}
\end{gathered}
-\frac34 \times
\begin{gathered}
\includegraphics[scale=.2]{graphs/2_2_1}\hspace{0mm}
\includegraphics[scale=.2]{graphs/2_2_1}
\end{gathered}, 
\\
\label{eq:lam6}
\lambda^{(6)}&=
\begin{gathered}
\includegraphics[scale=.25]{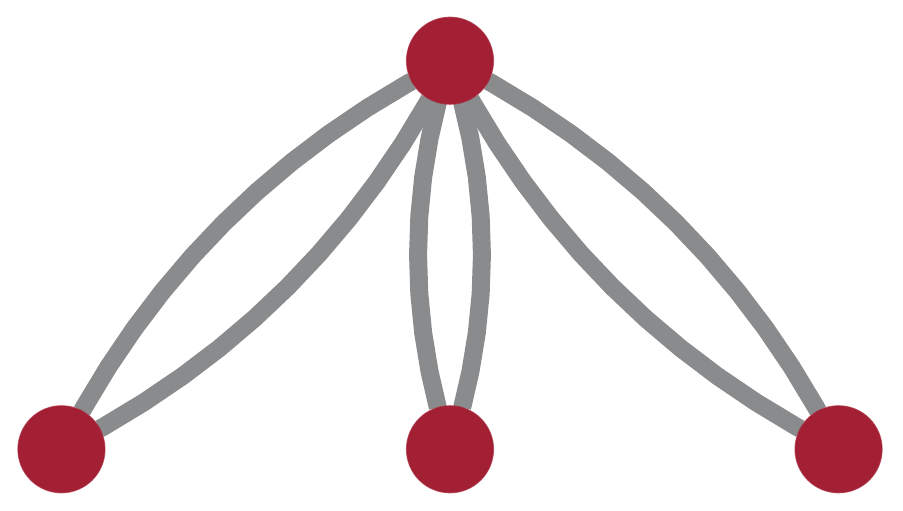}
\end{gathered}
-\frac32 \times
\begin{gathered}
\includegraphics[scale=.2]{graphs/3_4_1}
\end{gathered}
\hspace{0mm}
\begin{gathered}
\includegraphics[scale=.2]{graphs/2_2_1}
\end{gathered}\,
+\frac58\times
\begin{gathered}
\includegraphics[scale=.2]{graphs/2_2_1}\hspace{0mm}
\includegraphics[scale=.2]{graphs/2_2_1}\hspace{0mm}
\includegraphics[scale=.2]{graphs/2_2_1}
\end{gathered}\,.
\end{align}
This can be continued for arbitrarily high, even $\alpha$.
Thus, the even $\alpha$ angularities are exact, non-trivial linear combinations of \Bs, illustrating the close connections between the two classes of observables.
While angularities with odd or non-integer $\alpha$ do not have the same analytic tractability, the specific case of $\alpha=1/2$ is shown to be numerically well approximated by \Bs in \Sec{sec:regression}.

\subsubsection{Geometric moment tensors}
\label{sec:geometricmomenttensors}

Next, we consider observables based on the two-dimensional geometric moment tensor of the energy distribution in the $(y,\phi)$-plane~\cite{GurAri:2011vx,Gallicchio:2012ez}: 
\begin{align}
{\bf C}& = \sum_{i\in J} z_{i} \begin{pmatrix} \Delta y_i^2 & \Delta y_i\Delta\phi_i \\ \Delta\phi_i\Delta y_i & \Delta\phi_i^2 \end{pmatrix}= \begin{pmatrix}\frac12\sum_{i,j} z_{i}z_{j} \Delta y_{ij}^2 &\frac12\sum_{i,j} z_{i}z_{j} \Delta y_{ij}\Delta\phi_{ij} \\ \frac12\sum_{i,j} z_{i}z_{j} \Delta\phi_{ij}\Delta y_{ij} &\frac12\sum_{i,j} z_{i}z_{j} \Delta\phi_{ij}^2\end{pmatrix},
\label{eq:2dmomtens}
\end{align}
where the distances are measured with respect to the $p_T$-weighted centroid axis $(y_J,\phi_J)$ from \Eq{eq:jetaxis}.
Useful observables can be constructed from the trace and determinant of {\bf C}, such as planar flow $\text{Pf} = 4 \det{\bf C}/(\tr {\bf C})^2$~\cite{Thaler:2008ju,Almeida:2008yp}, which is a ratio of two IRC-safe observables.

We see that \Eq{eq:2dmomtens} is exactly a matrix of EFPs with $N=2$ and the two-dimensional hadronic measure from \Eq{eq:2Dhadronicmeasure}.
The trace $\tr {\bf C}$ and determinant $\det{\bf C}$ have the rotational symmetry in the $(y,\phi)$-plane of the default hadronic measure from \Eq{eq:hadronicmeasure}, allowing them to be written as linear combinations of EFPs with that measure:
\begin{align}\label{eq:momentpoly}
\tr \bf C&=\frac12\times
\begin{gathered}
\includegraphics[scale=.2]{graphs/2_2_1}
\end{gathered}\,,&\hspace{-.5in}
4\,\det{\bf C}&= 
\begin{gathered}
\includegraphics[scale=.2]{graphs/3_4_1}
\end{gathered}\,
-\frac12\times
\begin{gathered}
\includegraphics[scale=.2]{graphs/2_4_1}
\end{gathered}\,.
\end{align}

In \Ref{GurAri:2011vx}, a general class of energy flow moments was explored and categorized, with the goal of classifying observables according to their energy flow distributions.
These energy flow moments are defined with respect to a specified jet axis:
\begin{equation}\label{eq:hightens}
I_{k_1\cdots k_N} \equiv \sum_{i=1}^M z_i \, x_{k_1}^{(i)}\cdots x_{k_N}^{(i)},
\end{equation}
where $k_i \in \{1,2\}$, $x_1^{(i)} = \Delta  y_i= y_i- y_J$ and $x_2^{(i)} = \Delta\phi_i=\phi_i-\phi_J$.
Using the $p_T$-weighted centroid axis, this is the natural generalization of \Eq{eq:2dmomtens}, with the special case of $I_{k_1k_2} = ({\bf C})_{k_1k_2}$.
By performing a similar analysis to the one used to arrive at \Eq{eq:momentpoly}, one can show that any scalar constructed by contracting the indices of a product of objects in \Eq{eq:hightens} can be decomposed into an exact linear combination of \Bs.

\subsection{Going beyond existing substructure observables}
\label{subsec:goingbeyond}

Because the \Bs are $C$-correlators that span the space of IRC-safe observables, their angular structures should encompass all possible behaviors of $C$-correlators.
By contrast, the ECFs and ECFGs mentioned in \Sec{sec:equivECF} have more restricted behaviors, and it is illuminating to understand the new kinds of structures present in the \Bs.

Without loss of generality, the angular weighting function $f_N$ in \Eq{eq:genccorr} can be taken to be a symmetric function of the particle directions $\hat{p}_i^\mu$ due to the symmetrization provided by the sum structure (see \Eq{eq:symf} below).
The ECFs and ECFGs exhibit a stronger symmetry, though, since the angular functions in \Eqs{eq:ecfs}{eq:ecfgs} are invariant under the swapping any two pairwise angles $\theta_{ij}$.
This symmetry is manifested in the ECFs multigraphs in \Eq{eq:ecfgraphs} by the fact that all pairs of indices are connected by the same number of edges.

We can easily see that the pairwise swap symmetry of the ECFs is stronger than the full permutation symmetry of the \Bs: the group of permutations of the angular distances $\theta_{ij}$ has $\binom{N}{2}!$ elements, whereas the group of permutations of the indices $\{i_a\}$ has $N!$ elements.
An example of an \B that does not satisfy the stronger symmetry is the following $N=4$ graph:
\begin{equation}\label{eq:thebirdfoot}
\begin{gathered}
\includegraphics[scale=0.25]{graphs/4_3_1}
\end{gathered}
= \sum_{i_1=1}^M\sum_{i_2=1}^M\sum_{i_3=1}^M \sum_{i_4=1}^M z_{i_1}z_{i_2}z_{i_3} z_{i_4} \theta_{i_1i_2}\theta_{i_1 i_3}\theta_{i_1i_4}.
\end{equation}
The angular weighting function of the \B in \Eq{eq:thebirdfoot} is symmetric under the $4!$ permutations in the indices (vertices) $i_a \to i_{\sigma(a)}$ but not under the exchange of pairwise angles (edges) $\theta_{i_1 i_3} \to \theta_{i_2 i_3}$ which would result in a different \B, namely:
\begin{equation}
\begin{gathered}
\includegraphics[scale=0.2,angle=90]{graphs/4_3_2}
\end{gathered}
\neq
\begin{gathered}
\includegraphics[scale=0.25,angle=0]{graphs/4_3_1}
\end{gathered}.
\end{equation}

Another feature of the ECFs and ECFGs is that their angular weighting function $f_N$ vanishes whenever two of its arguments become collinear.
Indeed, \Ref{Moult:2016cvt} made the erroneous claim that this vanishing behavior was required by collinear safety.\footnote{If the sums are taken over distinct $N$-tuples as in \Ref{Moult:2016cvt}, then the angular function does have to vanish on collinearity for C safety. In general, non-collinearly-vanishing angular functions are C safe if the sum is taken over all $N$-tuples of particles, including sets with repeated indices.}
Instead, the argument in \Sec{sec:collinearsafety} shows this not to be the case, and observables defined by \Eq{eq:genccorr} are IRC safe for any sufficiently smooth and non-singular $f_N$.
An example of an \B that does not necessarily vanish when two of its arguments become collinear is the following $N=3$ graph:
\begin{equation}\label{eq:thewedge}
\begin{gathered}
\includegraphics[scale=0.2]{graphs/3_2_1}
\end{gathered}
= \sum_{i_1=1}^M\sum_{i_2=1}^M\sum_{i_3=1}^M z_{i_1}z_{i_2}z_{i_3}  \theta_{i_1i_2}\theta_{i_1 i_3},
\end{equation}
which does not vanish when $\hat p_{i_2}^\mu\to\hat p_{i_3}^\mu$.  
More generally, any non-fully-connected graph will not vanish in every collinear limit, but the corresponding \B will still be collinear safe.

By relaxing the restrictions on the angular weighting function $f_N$ to those minimally required by IRC safety, the energy flow basis captures all topological structures which can possibly appear in a $C$-correlator, beyond just the ones described by ECFs and ECFGs.

\section{Constructing a linear basis of IRC-safe observables}
\label{sec:basis}

Having introduced the \Bs, we now give a detailed argument that they linearly span the space of IRC-safe observables.
Due to its more technical nature, this section can be omitted on a first reading, and the reader may skip to \Sec{sec:complexity}.
\Refs{Tkachov:1995kk,Sveshnikov:1995vi,Cherzor:1997ak,Tkachov:1999py} argue that, from the point of view of quantum field theory, all IRC-safe information about the jet structure should be contained in the $C$-correlators.
In \Sec{sec:Eexpansion}, we independently arrive at the same conclusion by a direct application of IRC safety.
We then go on in \Sec{sec:angleexpansion} to expand the angular structure of the $C$-correlators to find a correspondence between multigraphs and \Bs.

An IRC-safe observable $\mathcal S$ depends only on the unordered set of particle four-momenta $\{p^\mu_i\}_{i=1}^M$, and not any non-kinematic quantum numbers.
An observable defined on $\{p^\mu_i\}_{i=1}^M$ can alternatively be thought of as a collection of functions, one for each number of particles $M$.
IRC safety then imposes constraints on this collection and thereby induces relations between the functions.
The requirement of IR safety imposes the constraint~\cite{sterman1995handbook}:
\begin{align}\label{eq:IRsafety}
\mathcal S(\{p_1^\mu,\ldots,p_M^\mu\}) & = \lim_{\varepsilon \to 0}\mathcal S(\{p_1^\mu,\ldots, p_M^\mu, \varepsilon \, p_{M+1}^\mu\}),&&\hspace{-.5in}\forall p_{M+1}^\mu,
\end{align}
while the requirement of C safety imposes the constraint:
\begin{align}
\label{eq:Csafety}
\mathcal S(\{p_1^\mu,\ldots,p_M^\mu\}) & = \mathcal S(\{p_1^\mu,\ldots,(1 - \lambda) p_M^\mu, \lambda p_M^\mu\}),&&\hspace{-.5in}\forall \lambda\in[0,1].
\end{align}
\Eq{eq:IRsafety} says that the observable is unchanged by the addition of infinitesimally soft particles, while \Eq{eq:Csafety} guarantees that the observable is insensitive to a collinear splitting of particles.

As written, only particle $M$ is affected in \Eq{eq:Csafety}.
The indexing used to identify particles, however, is arbitrary and these properties continue to hold when the particles are reindexed.
This \emph{particle relabeling symmetry} is not an additional constraint that is imposed but rather a consequence of assigning labels to an unordered set of particles.
These three restrictions---IR safety, C safety, and particle relabeling symmetry---are necessary and sufficient conditions for obtaining the energy flow basis.

Throughout this analysis, particles are treated as massless, $p^\mu_i = E_i \,\hat p_i^\mu$, where $\hat p_i^\mu$ is purely geometric.
Note that we could replace $E_i$ with any quantity linearly dependent on energy, such as the transverse momentum $p_{T,i}$, which corresponds to making a different choice of measure in \Sec{sec:measures}.

\subsection{Expansion in energy}
\label{sec:Eexpansion}

Consider an arbitrary IRC-safe observable $\mathcal{S}$, expanded in terms of the particle energies.
If the observable has a simple analytic dependence on the energies, then the usual Taylor expansion can be used:
\begin{align}\label{eq:taylorexpand}
\mathcal S=\left.\mathcal S_M\right|_{E=0}+\sum_{i_1=1}^ME_i\left.\pd{\mathcal S_M}{E_{i_1}}\right|_{E=0}+\frac{1}{2}\sum_{i_1=1}^M\sum_{i_2=1}^ME_{i_1}E_{i_2}\left.\pd{^2\mathcal S_M}{E_{i_1}\partial E_{i_2}}\right|_{E=0}+\cdots,
\end{align}
where $M$ is the particle multiplicity and the derivatives are evaluated at vanishing energies. 
An example of this is the jet mass from \Eq{eq:massexp}:
\begin{align}\label{eq:jetmass}
m_J^2&=\sum_{i=1}^M\sum_{j=1}^M \eta_{\mu\nu}p_i^\mu p_{j}^\nu=\sum_{i=1}^M\sum_{j=1}^ME_iE_j\eta_{\mu\nu}\, \hat p_i^\mu\hat p_j^\nu,
\end{align}
where $\eta_{\mu \nu}$ is the Minkowski metric.  
This expression is already in the form of \Eq{eq:taylorexpand} with:
\begin{align}
\pd{^2m_J^2}{E_{i_1}\partial E_{i_2}}=2\eta_{\mu\nu}\hat p_{i_1}^\mu\hat p_{i_2}^\nu,
\end{align}
and all other Taylor coefficients zero.
See \Sec{sec:jetobs} for additional examples of observables with explicit formulas for which \Eq{eq:taylorexpand} can be applied.

For some observables, though, a Taylor expansion may be difficult or impossible to obtain. 
The simplest example is a non-differentiable observable.
This is the case for $m_J$ (rather than $m_J^2$); the presence of the square root spoils the existence of a Taylor expansion, but the square root can be nonetheless approximated by polynomials arbitrarily well in a bounded interval.
A more complicated case is if the observable is defined in terms of an algorithm, such as a groomed jet mass~\cite{Butterworth:2008iy,Ellis:2009su,Ellis:2009me,Krohn:2009th,Dasgupta:2013ihk,Larkoski:2014wba}, and an explicit formula in terms of particle four-momenta would not be practical to differentiate or write down.
Similarly, the observable could be a non-obvious function of the particles, i.e.\ the optimal observable to accomplish some task.  

In cases without a Taylor expansion, the Stone-Weierstrass theorem~\cite{stone1948generalized} still guarantees that the observable can be approximated over some bounded energy range by polynomials in the energies.\footnote{A version of this theorem that suffices for our purposes can be phrased as follows: for any continuous, real-valued function $f$ defined on a compact subset  $X\subset\mathbb R^n$, for all $\epsilon>0$ there exists a polynomial $p$ of finite degree at most $N_{\rm max}$ such that $|p(\mathbf x)-f(\mathbf x)|<\epsilon$ for all $\mathbf x\in X$. Conceptually, this theorem is used to approximate any continuous function on a bounded region by a polynomial.} 
We write down such an expansion by considering all possible polynomials in the energies and multiplying each one by a different geometric function.
Combining all terms of degree $N$ into $\mathcal C_N$, the expansion is:
\begin{align}\label{eq:SWexpand}
\mathcal S&\simeq\sum_{N=0}^{N_{\rm max}}\mathcal C_N,\qquad
\mathcal C_N\equiv\sum_{i_1=1}^M \cdots \sum_{i_N=1}^M C^{(M)}_{i_1 \cdots i_N}(\hat p_1^\mu,\ldots, \hat p_M^\mu)\prod_{j=1}^N E_{i_j},
\end{align}
where $C_{i_1\cdots i_n}^{(M)}(\hat p_1^\mu,\ldots,\hat p_M^\mu)$ are geometric angular functions, which depend on the indices of the energy factors $i_1\cdots i_n$ and could in general be different for different multiplicities $M$.
The Stone-Weierstrass theorem guarantees that there is a maximum degree $N_{\rm max}$ in this energy expansion for any given desired accuracy, but places no further restrictions on the $\mathcal C_N$.

To derive constraints on these angular functions $C^{(M)}_{i_1 \cdots i_N}$, we impose the three key properties of IR safety in \Sec{sec:irsafety}, particle relabeling invariance in \Sec{sec:relabelsym}, and C safety in \Sec{sec:collinearsafety}, which we summarize in \Sec{subsec:final}.
In applying these properties, we will often use the fact that when setting two expressions for the observable $\mathcal S$ equal to each other, we can read off term-by-term equality by treating the particle energies as independent quantities:
\begin{equation}\label{eq:equality}
\mathcal S=\mathcal S'\quad\implies\quad \mathcal C_N=\mathcal C'_N,\quad\forall N\le N_{\rm max}.
\end{equation}
Note that the sum structure in \Eq{eq:SWexpand} implies that, without loss of generality, the angular functions can be taken to depend only on the labels $i_1,\ldots,i_N$ as an unordered set.

\subsubsection{Infrared safety}
\label{sec:irsafety}

IR safety constrains the angular functions appearing in the expansion of \Eq{eq:SWexpand} in two ways: by restricting which particle directions contribute to a particular term in the sum and by relating angular functions of different multiplicities.

First, consider a particular angular function, $C^{(M)}_{i_1\cdots i_N}$ in \Eq{eq:SWexpand}, and some particle $j\not\in\{ i_1,\ldots, i_N\}$. Consider particle $j$ in the soft limit: if $C^{(M)}_{i_1\cdots i_N}$ depends on $\hat p_j^\mu$ in any way, then IR safety is violated because $E_j$ does not appear in the product of energies but the value of the observable changes as the direction of $j$ is changed.
Hence, IR safety imposes the requirement that
\begin{align}\label{eq:IRresult1}
C^{(M)}_{i_1\cdots i_N}(\hat p_1^\mu,\ldots,\hat p_M^\mu)=C^{(M)}_{i_1\cdots i_N}(\hat p_{i_1}^\mu,\ldots,\hat p_{i_N}^\mu),
\end{align}
namely the indices of the arguments must match those of the angular function.
Note that we must always write $C^{(M)}_{i_1\cdots i_N}$ with $N$ arguments, even if some are equal due to indices coinciding.

Next, consider two polynomial approximations of the same observable: one as a function of $M$ particles and the other as a function of $M+1$ particles.
In the soft limit of particle $M+1$, $E_{M+1}\to 0$, the IR safety of $\mathcal S$, written formally in \Eq{eq:IRsafety}, guarantees that the function of $M+1$ particles approaches the function of $M$ particles. In terms of the corresponding polynomial approximations, we have that:
\begin{align}\label{eq:IRM+1}
\sum_{i_1=1}^{M+1} \hspace{-.05in}\cdots \hspace{-.05in}\sum_{i_N=1}^{M+1}  C^{(M+1)}_{i_1 \cdots i_N}&(\hat p_{i_1}^\mu,\ldots, \hat p_{i_N}^\mu) \prod_{j=1}^N E_{i_j} \\& =\sum_{i_1=1}^{M} \hspace{-.05in}\cdots \hspace{-.05in}\sum_{i_N=1}^{M} C^{(M)}_{i_1 \cdots i_N} (\hat p_{i_1}^\mu,\ldots, \hat p_{i_N}^\mu)\prod_{j=1}^N E_{i_j}+\mathcal O(E_{M+1}).
\end{align}

We see from \Eq{eq:IRM+1} that the same angular coefficients from the polynomial approximation of the function of $M+1$ particles can be validly chosen for the approximation of the function of $M$ particles, with the following equality of angular functions:
\begin{align}\label{eq:IRresult2}
C^{(M+1)}_{i_1 \cdots i_N} (\hat p_{i_1}^\mu,\ldots,\hat p_{i_N}^\mu)=C^{(M)}_{i_1\cdots i_N} (\hat p_{i_1}^\mu, \ldots, \hat p_{i_N}^\mu)\equiv C_{i_1\cdots i_N}(\hat p_{i_1}^\mu,\ldots,\hat p_{i_N}^\mu),
\end{align}
which says that the multiplicity label on the angular functions can be dropped.

As a result of enforcing IR safety, the dependence of the angular functions on multiplicity has been eliminated, as well as the dependence of a given angular function on any particles with indices not appearing in its subscripts.

\subsubsection{Particle relabeling symmetry }
\label{sec:relabelsym}

Now, using particle relabeling symmetry, for all $\sigma\in S_M$, where $S_M$ is the group of permutations of $M$ objects, we have that $\mathcal C_N$ is unchanged by the replacement $E_{i_j} \to E_{\sigma(i_j)}$ and $\hat p^\mu_{i_j} \to \hat p^\mu_{\sigma(i_j)}$.
With the angular functions as constrained by IR safety, the particle relabeling invariance of $\mathcal C_N$ can be written as:
\begin{align}\label{eq:updatedCn}
\mathcal C_N&=\sum_{i_1=1}^M\cdots\sum_{i_N=1}^MC_{i_1\cdots i_N}(\hat p_{i_1}^\mu,\ldots,\hat p_{i_N}^\mu)\prod_{j=1}^NE_{i_j}\\
&=\sum_{i_1=1}^M\cdots\sum_{i_N=1}^MC_{i_1\cdots i_N}(\hat p_{\sigma(i_1)}^\mu,\ldots, \hat p_{\sigma(i_N)}^\mu)\prod_{j=1}^NE_{\sigma(i_j)}\nonumber
\\&=\sum_{i_1=1}^M\cdots\sum_{i_N=1}^MC_{\sigma^{-1}(i_1)\cdots \sigma^{-1}(i_N)}(\hat p_{i_1}^\mu ,\ldots,\hat p_{i_N}^\mu)\prod_{j=1}^NE_{i_j},\label{eq:relabelsym1}
\end{align}
where the sums were reindexed according to $\sigma^{-1}$. In particular, from \Eq{eq:relabelsym1}, we have for any $\sigma \in S_M$ that:
\begin{align}\label{eq:relabelsym2}
C_{i_1\cdots i_N}(\hat p_{i_1}^\mu,\ldots,\hat p_{i_N}^\mu)=C_{\sigma(i_1)\cdots \sigma(i_N)}(\hat p_{i_1}^\mu,\ldots, \hat p_{i_N}^\mu).
\end{align}
\Eq{eq:relabelsym2} allows us to permute the indices of $C_{i_1\cdots i_N}$ within $S_M$, equating previously unrelated angular functions.

As written, $C_{i_1\cdots i_N}$ is not necessarily symmetric in its arguments.
Without loss of generality, though, we can symmetrize $C_{i_1\cdots i_N}$ without changing the value of $\mathcal C_N$ as follows:
\begin{align}\label{eq:symf}
\mathcal C_N&=\sum_{i_1=1}^M\cdots\sum_{i_N=1}^M\underbrace{\frac{1}{N!}\sum_{\sigma\in S_N}C_{i_1\cdots i_N}(\hat p_{\sigma(i_1)}^\mu,\ldots,\hat p_{\sigma(i_N)}^\mu)}_{C'_{i_1\cdots i_N}(\hat p_{i_1}^\mu,\ldots,\hat p_{i_N}^\mu)}\prod_{j=1}^NE_{i_j},
\end{align}
where $C'_{i_1\cdots i_N}$ is now symmetric in its arguments.
We assume in the next step of the derivation that the angular weighting functions are symmetric in their arguments.

\subsubsection{Collinear safety}
\label{sec:collinearsafety}

The key requirement for restricting the form of $C_{i_1\cdots i_N}$ is C safety.
If the angular weighting function(s) were required to vanish whenever two of the inputs were collinear, then the observable would be manifestly C safe~(see e.g.\ \cite{Moult:2016cvt}); this is a sufficient condition for C safety but not a necessary one.
More generally, one can have non-zero angular functions of $N$ arguments even when subsets of the arguments are collinear.

Using the IR safety argument of \Eq{eq:IRresult2} and the particle relabeling symmetry of \Eq{eq:relabelsym2}, we can relate any angular function $C_{i_1\cdots i_N}$ to one of the following:
\begin{align}\label{eq:Cexample}
&C_{123\cdots N}(\hat p_{i_1}^\mu,\hat p_{i_2}^\mu,\hat p_{i_3}^\mu,\ldots,\hat p_{i_N}^\mu),\nonumber\\
&C_{1123\cdots (N-1)}(\hat p_{i_1}^\mu,\hat p_{i_1}^\mu,\hat p_{i_2}^\mu,\hat p_{i_3}^\mu,\ldots,\hat p_{i_{N-1}}^\mu),\nonumber\\
&C_{112234\cdots (N-2)}(\hat p_{i_1}^\mu,\hat p_{i_1}^\mu,\hat p_{i_2}^\mu,\hat p_{i_2}^\mu,\hat p_{i_3}^\mu,\hat p_{i_4}^\mu,\ldots,\hat p_{i_{N-2}}^\mu),\nonumber\\
&C_{1112234\cdots(N-3)}(\hat p_{i_1}^\mu,\hat p_{i_1}^\mu,\hat p_{i_1}^\mu,\hat p_{i_2}^\mu,\hat p_{i_2}^\mu,\hat p_{i_3}^\mu,\hat p_{i_4}^\mu,\ldots,\hat p_{i_{N-3}}^\mu),\nonumber\\
&\vdots\nonumber\\
&C_{11\cdots 1}(\hat p_{i_1}^\mu,\hat p_{i_1}^\mu,\ldots,\hat p_{i_1}^\mu),
\end{align}
where there is one of these ``standard'' angular functions for each integer partition of $N$.
In particular, the length of the integer partition is how many unique indices appear in the subscript and the values of the partition indicate how many times each index is repeated.

The role of C safety is to impose relationships between these standard angular functions, eventually showing that the only required function is $C_{123\cdots N}$.
Intuitively, this means that as any set of particles become collinear, the angular dependence is that of collinear limit of $N$ arbitrary directions.
The proof that this follows from C safety, however, is the most technically involved step of this derivation.

The requirement of C safety in \Eq{eq:Csafety} implies that $\mathcal S$ is unchanged whether one considers $\{E_i,\hat p_i^\mu\}_{i=1}^M$ or the same particles with a collinear splitting of the first particle, $\{\tilde E_i, \hat p_i^\mu\}_{i=0}^M$, where:
\begin{equation}
\label{eq:csplit}
\tilde E_0=(1- \lambda) E_1,\qquad \tilde E_1 = \lambda E_1,\qquad \hat p_0^\mu=\hat p_1^\mu,\qquad \tilde E_i = E_i,
\end{equation}
for all $\lambda\in[0,1]$ and $i>1$.  Rewriting \Eq{eq:updatedCn}, we can explicitly separate out the terms of the sums involving $k$ collinearly split indices $\{0,1\}$:
\begin{align}
\mathcal C_N&=\sum_{i_1=0}^M\cdots\sum_{i_N=0}^MC_{i_1\cdots i_N}(\hat p_{i_1}^\mu,\ldots,\hat p_{i_N}^\mu)\prod_{j=1}^N\tilde E_{i_j}\label{eq:step1}\\
&=\sum_{k=0}^N\binom{N}{k}\sum_{i_1=0}^1\cdots\sum_{i_k=0}^1\sum_{i_{k+1}=2}^M\cdots\sum_{i_N=2}^MC_{i_1\cdots i_N}(\hat p_{i_1}^\mu,\ldots,\hat p_{i_N}^\mu)\prod_{j=1}^N\tilde E_{i_j}, \label{eq:step2}
\end{align}
where in going to this last expression, we have used the symmetry of $C_{i_1\cdots i_N}$ in its arguments and accounted for the degeneracy of such terms using the binomial factor $\binom{N}{k}$.
We then insert the collinear splitting kinematics of \Eq{eq:csplit} into \Eq{eq:step2},
\begin{align}
\mathcal C_N&=\sum_{k=0}^N\binom{N}{k}\sum_{i_1=0}^1\cdots\sum_{i_k=0}^1\lambda^{\sum_{a=1}^ki_a}(1-\lambda)^{k-\sum_{a=1}^ki_a}E_1^k\label{eq:step3}\\
&\hspace{.5in}\times\sum_{i_{k+1}=2}^M\cdots\sum_{i_N=2}^MC_{i_1\cdots i_N}(\hat p_{i_1}^\mu,\ldots,\hat p_{i_N}^\mu)\prod_{j=k+1}^N E_{i_j} \nonumber\\
&=\sum_{k=0}^N\binom{N}{k}\sum_{\ell=0}^k\binom{k}{\ell}\lambda^\ell(1-\lambda)^{k-\ell}E_1^k\label{eq:collinear}\\
&\hspace{.5in}\times\sum_{i_{k+1}=2}^M\cdots\sum_{i_N=2}^M C_{\underbrace{{}_{0\cdots0}}_{\ell}\underbrace{{}_{1\cdots1}}_{k-\ell}i_{k+1}\cdots i_n}(\hat p_{1}^\mu,\ldots,\hat p_{1}^\mu,\hat p_{i_{k+1}}^\mu,\ldots,\hat p_{i_N}^\mu)\prod_{j=k+1}^N E_{i_j},\nonumber
\end{align}
where in going to this last expression, we have used the particle relabeling symmetry of \Eq{eq:relabelsym2} to sort the $\{0,1\}$ subscript indices of the angular functions.

The constraint of C safety says that \Eq{eq:collinear} is equal to \Eq{eq:updatedCn} on the non-collinearly split event.
To make this constraint more useful, we use the binomial theorem to write 1 in a suggestive way:
\begin{align}\label{eq:binomial1}
1=(\lambda+1-\lambda)^k=\sum_{\ell=0}^k\binom{k}{\ell}\lambda^\ell(1-\lambda)^{k-\ell},
\end{align}
and insert this expression into \Eq{eq:updatedCn}, separating out factors where $k$ of the indices are equal to 1:
\begin{align}\label{eq:Fnmanip}
\mathcal C_N&=\sum_{k=0}^N\binom{N}{k}\sum_{\ell=0}^k\binom{k}{\ell}\lambda^\ell(1-\lambda)^{k-\ell}E_1^k\\
&\nonumber\hspace{.5in}\times \sum_{i_{k+1}=2}^M\cdots\sum_{i_N=2}^MC_{\underbrace{{}_{1\cdots1}}_{k}i_{k+1}\cdots i_N}(\hat p_{1}^\mu,\ldots,\hat p_{1}^\mu,\hat p_{i_{k+1}}^\mu,\ldots,\hat p_{i_N}^\mu)\prod_{j=k+1}^NE_{i_j}.
\end{align}
Subtracting~\Eq{eq:Fnmanip} from~\Eq{eq:collinear} and treating the energies as independent quantities, the following constraint can be read off:
\begin{align}\label{eq:Cconstraint}
\sum_{\ell=0}^k\binom{k}{\ell}\lambda^\ell(1-\lambda)^{k-\ell}\left(C_{\underbrace{{}_{0\cdots0}}_{\ell}\underbrace{{}_{1\cdots1}}_{k-\ell}i_{k+1}\cdots i_N}-C_{\underbrace{{}_{1\cdots 1}}_{k}i_{k+1}\cdots i_N}\right)=0,
\end{align}
where the identical arguments of the angular functions are suppressed for compactness.  

We would like to obtain that the quantity in parentheses in \Eq{eq:Cconstraint} vanishes since the equation holds for all $\lambda$. 
To see this, suppose that the quantity in parentheses does not vanish, and let $\hat\ell$ be the smallest such $\ell$ where this happens.
Consider the regime $0<\lambda\ll1$: by the definition of $\hat\ell$, there are no $\mathcal O(\lambda^\ell)$ terms for $\ell<\hat \ell$ and thus the left-hand side of \Eq{eq:Cconstraint} is $\mathcal O(\lambda^{\hat\ell})\neq 0$, contradicting \Eq{eq:Cconstraint}. 
We thus obtain:
\begin{align}\label{eq:Cresult}
C_{\underbrace{{}_{0\cdots0}}_{\ell}\underbrace{{}_{1\cdots1}}_{k-\ell}i_{k+1}\cdots i_N}&(\hat p_{1}^\mu,\ldots,\hat p_{1}^\mu,\hat p_{i_{k+1}}^\mu,\ldots,\hat p_{i_n}^\mu) \\& =C_{\underbrace{{}_{1\cdots 1}}_{k}i_{k+1}\cdots i_N}(\hat p_{1}^\mu,\ldots,\hat p_{1}^\mu,\hat p_{i_{k+1}}^\mu,\ldots,\hat p_{i_N}^\mu),
\end{align}
for $0\le\ell\le k$.  Note that in this expression, the first $k$ arguments of the functions are identical.

The constraint in~\Eq{eq:Cresult} is very powerful, especially when combined with the relabeling symmetry of~\Eq{eq:relabelsym2}.
While we obtained \Eq{eq:Cresult} using the collinear limit, the particle direction $\hat{p}_{0}$ appears nowhere in this expression, so the $0$ subscript is simply an index on the angular function.
Therefore, when any $k$ arguments of one of the angular functions become collinear, any $\ell\le k$ of the corresponding subscript labels may be swapped out for values not appearing anywhere else in the indices.  
A concrete example of this is
\begin{align}\label{eq:relabelex}
C_{1123\cdots N-1}(\hat p_{i_1}^\mu,\hat p_{i_1}^\mu,\hat p_{i_2}^\mu,\ldots,\hat p_{i_{N-1}}^\mu)=C_{1234\cdots N}(\hat p_{i_1}^\mu,\hat p_{i_1}^\mu,\hat p_{i_2}^\mu,\ldots,\hat p_{i_{N-1}}^\mu),
\end{align}
where the $N$ index here plays the role of the $0$ index in \Eq{eq:Cresult}.
This then implies that all of the angular functions in~\Eq{eq:Cexample} can related to a single function:
\begin{align}\label{eq:fNresult}
C_{i_1\cdots i_N}(\hat p_{i_1}^\mu, \ldots, \hat p_{i_N}^\mu) = C_{123\cdots N}(\hat p_{i_1}^\mu,\ldots,\hat p_{i_N}^\mu) \equiv f_N(\hat p_{i_1}^\mu,\ldots,\hat p_{i_N}^\mu),
\end{align}
yielding the intuitive result that the angular dependence when some number of particles become collinear should follow from the collinear limit of $N$ arbitrary directions.

\subsubsection{A new derivation of $C$-correlators}
\label{subsec:final}

Finally, substituting \Eq{eq:fNresult} into \Eq{eq:updatedCn} implies that 
\begin{align}\label{eq:Eexpansionresult}
\mathcal S\simeq\sum_{N=0}^{N_{\rm max}}\mathcal C_N^{f_N},\qquad \mathcal C_N^{f_N}=\sum_{i_1=1}^M\cdots\sum_{i_N=1}^M E_{i_1}\cdots E_{i_N} f_N(\hat p_{i_1}^\mu,\ldots,\hat p_{i_N}^\mu),
\end{align}
where we recognize $\mathcal C_N^{f_N}$ as the $C$-correlators of \Eq{eq:genccorr}.
This expression says that an arbitrary IRC-safe observable can be approximated arbitrarily well by a linear combination of $C$-correlators.
In this way, we have given a new derivation that $C$-correlators linearly span the space of IRC-safe observables by directly imposing the constraints of IRC safety and particle relabeling symmetry on an arbitrary observable.

The argument presented here suffices to show the IRC-safety of the $C$-correlators with any continuous angular weighting function, even if it is not symmetric.
Though we used the symmetrization in \Eq{eq:symf} to aid the C-safety derivation in \Sec{sec:collinearsafety}, it is now perfectly valid to relax this constraint on $f_N$.
In particular, we can simply consider \Eq{eq:symf} applied in reverse and select a single term in the symmetrization sum to represent $f_N$.
Thus we are not constrained merely to symmetric $f_N$, which will be helpful in obtaining the \Bs.

\subsection{Expansion in geometry}
\label{sec:angleexpansion}

Having now established that the $C$-correlators linearly span the space of IRC-safe observables, we now expand the angular weighting function $f_N$ in \Eq{eq:Eexpansionresult} in terms of a discrete linear angular basis.\footnote{Our approach here turns out to be similar to the construction of kinematic polynomial rings for operator bases in \Ref{Henning:2017fpj}.}
By virtue of the sum structure of the $C$-correlators, this angular basis directly translates into a basis of IRC-safe observables, i.e.\ the energy flow basis.

Following the discussion in \Sec{sec:measures}, we take the angular function $f_N$ to depend only on the pairwise angular distances $\theta_{ij}$.
Note that the results of \Sec{sec:Eexpansion} continue to hold with pairwise angular distances in place of particle directions, as long as $\theta_{ij}$ is a dimensionless function of $\hat{p}_i^\mu$ and $\hat{p}_j^\mu$ with no residual dependence on energy.
Of course, this choice would not be valid for expanding IRC-safe observables that do not respect the symmetries implied by $\theta_{ij}$, such as trying to use the default hadronic measure in \Eq{eq:hadronicmeasure} for observables that depend on the overall jet rapidity.
In such cases, one can perform an expansion directly in the $\hat p_{i}^\mu$, though we will not pursue that here.

Expanding the angular function $f_N$ in terms of polynomials up to order $d_{\rm max}$ in the pairwise angular distances yields:
\begin{align}\label{eq:angexp}
f_N(\hat p_{i_1}^\mu,\ldots,\hat p_{i_N}^\mu)&\simeq\sum_{d=0}^{d_{\rm max}} \sum_{\mathcal M\in \Theta_d} b_{\mathcal M} \,\mathcal M,
\end{align}
where $\Theta_d$ is the set of monomials in $\{\theta_{ij}\,|\,i<j\in \{i_1,\ldots,i_N\}\}$ of degree $d$, $\mathcal M$ is one of these monomials, and the $b_{\mathcal M}$ are numerical coefficients.
While this is a perfectly valid expansion, it represents a vast overcounting of the number of potential angular structures.

Our goal is to substitute \Eq{eq:angexp} into the definition of a $C$-correlator in \Eq{eq:Eexpansionresult} and identify the \emph{unique} analytic structures that emerge.
Note that two monomials $\mathcal M_1,\mathcal M_2\in \Theta_d$ that are related by a permutation $\sigma\in S_N$ with action $\theta_{i_ai_b}\to\theta_{i_{\sigma(a)} i_{\sigma(b)}}$ give rise to identical $C$-correlators, $\mathcal C^{\mathcal M_1}=\mathcal C^{\mathcal M_2}$, as a result of the relabeling symmetry in \Sec{sec:relabelsym}.
Thus, we can greatly simplify the angular expansion by summing only over equivalence classes of monomials not related by permutations, which we write as $\Theta_d/S_N$.
Writing this out in terms of $\mathcal{E} \in\Theta_d/S_N$:
\begin{align}
\mathcal C_N^{f_N} & \simeq \sum_{i_1 = 1}^M \cdots \sum_{i_N = 1}^M E_{i_1}\cdots E_{i_N} \sum_{d=0}^{d_{\rm max}} \sum_{\mathcal E\in\Theta_d/S_N}\sum_{\mathcal M\in\mathcal E} \, b_{\mathcal M} \, \mathcal M\label{eq:sub2}\\
&=\sum_{d=0}^{d_{\rm max}}\sum_{\mathcal E\in\Theta_d/S_N}b_{\mathcal E}\sum_{i_1 = 1}^M \cdots \sum_{i_N = 1}^M E_{i_1}\cdots E_{i_N}\mathcal M_{\mathcal E}\label{eq:sub3},
\end{align}
where, by the relabeling symmetry, $\mathcal M_{\mathcal E}$ can be any representative monomial in the equivalence class $\mathcal{E}$, and the coefficient $b_{\mathcal E} = |\mathcal E | \, b_{\mathcal M}$ absorbs the size $|\mathcal E |$ of the equivalence class.

As described in \Sec{sec:efbasis}, the set of monomials $\Theta_d$ is in bijection with the set of multigraphs with $d$ edges and $N$ vertices, and the set of equivalence classes $\Theta_d/S_N$ is in bijection with the set of non-isomorphic multigraphs with $d$ edges and $N$ vertices. 
In particular, each edge $(k,\ell)$ in a multigraph $G$ corresponds to a factor of $\theta_{i_ki_\ell}$ in the monomial $\mathcal M_{\mathcal E}$:
\begin{equation}\label{eq:mon2graph}
\mathcal M_{\mathcal E}=\prod_{(k,\ell)\in G}\theta_{i_ki_\ell},
\end{equation}
where $G$ corresponds to the equivalence class $\mathcal E$. 
By substituting \Eq{eq:mon2graph} into \Eq{eq:sub3} and relabeling the coefficient $b_{\mathcal E}$ to $b_G$, we can identify the resulting analytic structures that linearly span the space of $C$-correlators as the (unnormalized) \Bs:
\begin{equation}\label{eq:efpspanCN}
\mathcal C_N^{f_N}\simeq \sum_{d=0}^{d_{\rm max}}\sum_{G\in\mathcal G_{N,d}}b_G\,\B_G,\qquad \B_G\equiv\sum_{i_1=1}^M\cdots\sum_{i_N=1}^ME_{i_1}\cdots E_{i_N}\prod_{(k,\ell)\in G}\theta_{i_ki_\ell},
\end{equation}
where $\mathcal G_{N,d}$ is the set of non-isomorphic multigraphs with $d$ edges on $N$ vertices.

In \Sec{subsec:final}, it was shown that the set of IRC-safe observables is linearly spanned by the set of $C$-correlators, summarized in \Eq{eq:Eexpansionresult}. 
In this section, we have shown in \Eq{eq:efpspanCN} that the $C$-correlators themselves are linearly spanned by the \Bs, whose angular structures are efficiently encoded by multigraphs. 
By linearity, the \Bs therefore form a complete linear basis for all IRC-safe observables, completing our argument.

\section{Computational complexity of the energy flow basis}
\label{sec:complexity}

Since we would like to apply the energy flow basis in the context of jet substructure, the efficient computation of \Bs is of great practical interest. 
Naively, calculating an \B whose graph has a large number of vertices requires a prohibitively large amount of computation time, especially as the number of particles in the jet grows large.
In practice, though, we can dramatically speed up the implementation of the \Bs by making use of the correspondence with multigraphs.
Code to calculate the \Bs using these methods is available through our {\tt \href{https://pkomiske.github.io/EnergyFlow}{EnergyFlow}} module.

\subsection{Algebraic structure}
\label{sec:algebraic}

The set of \Bs has a rich algebraic structure which will allow in some cases for faster computation. 
Firstly, they form a monoid (a group without inverses) under multiplication. 
In analogy with the natural numbers, the composite \Bs, those with disconnected multigraphs, can be expressed as a product of the prime \Bs corresponding to the connected components of a disconnected graph:
\begin{equation}\label{eq:efpprod}
\B_G=\prod_{g\in C(G)}\B_g,
\end{equation}
where $C(G)$ is the set of connected components of the multigraph $G$. 

As a concrete example of \Eq{eq:efpprod}, consider:
\begin{align}
\begin{gathered}
\includegraphics[scale=.2]{graphs/3_3_2}
\includegraphics[scale=.2]{graphs/2_4_1}
\end{gathered}
&= \left(\sum_{i_1=1}^M\sum_{i_1=1}^M\sum_{i_3=1}^Mz_{i_1}z_{i_2}z_{i_3}\theta_{i_1i_2}^2\theta_{i_2i_3}\right)\left(\sum_{i_4=1}^M\sum_{i_5=1}^Mz_{i_4}z_{i_5}\theta_{i_4i_5}^4\right).
\end{align}
Thus, we only need to perform summations for the computation of prime \Bs, with the composite ones given by \Eq{eq:efpprod}. 
Note that if one were combining \Bs with a nonlinear method, such as a neural network, the composite \Bs would not be needed as separate inputs since the model could in principle learn to compute them on its own. 
The composite \Bs are, however, required to have a linear basis and should be included when linear methods are employed, such as those in \Secs{sec:linreg}{sec:linclass}.

The relationship between prime and composite \Bs is just the simplest example of the algebraic structure of the energy flow basis. 
The \Bs depend on $M$ energies and $M\choose2$ pairwise angles, but there are only $3M-4$ degrees of freedom for the phase space of $M$ massless particles, leading generically to additional (linear) relations among the \Bs. 
Hence, the \Bs are an \emph{over}complete linear basis. 
We leave further analysis and exploration of these relations to future work, and simply remark here that linear methods continue to work even if there are redundancies in the basis elements.

\subsection{Dispelling the $\mathcal O(M^N)$ myth for $N$-particle correlators}
\label{sec:dispel}

\begin{table}[t]
\centering
\subfloat[]{\label{tab:numchi:a} 
\begin{tabular}{|c|cc||*{10}{r}|}
\hline
\multicolumn{3}{|c}{$d$} & \multicolumn{1}{c}{\bf1}&\multicolumn{1}{c}{\bf2}&\multicolumn{1}{c}{\bf3}&\multicolumn{1}{r}{\bf4}&\multicolumn{1}{c}{\bf5}&\multicolumn{1}{c}{\,\,\bf6}&\multicolumn{1}{c}{\bf7}&\multicolumn{1}{c}{\,\,\,\bf8}&\multicolumn{1}{c}{\bf9}&\multicolumn{1}{c|}{\bf10} \\ \hhline{:===:t:*{10}{=}:}
\multirow{5}{*}{Prime} & \multirow{4}{*}{$\chi$} 
  & 2 & 1 & 2 & 4 & 9 & 21 & 55 & 146 & 415 & 1\,212 & 3\,653   \\
&& 3 &    &    & 1 & 3 & 12 & 47 & 185 & 757 & 3\,181 & 13\,691 \\
&& 4 &    &    &    &    &      & 1   & 2     & 11   & 49       & 231       \\ 
&& 5 &    &    &    &    &      &      &        &        &            & 1           \\ 
\hhline{|~|--||*{10}{-}|}
&\multicolumn{2}{c||}{Total} & 1 & 2 & 5 & 12 & 33 & 103 & 333 & 1\,183 & 4\,442 & 17\,576 \\ 
\hhline{:===::*{10}{=}:}
\multirow{5}{*}{All} & \multirow{4}{*}{$\chi$}
  & 2 & 1 & 3 & 7 & 19 & 48 & 135 & 371 & 1\,077 & 3\,161 & 9\,539  \\
&& 3 &    &    & 1 & 4  & 18 & 76   & 312 & 1\,296 & 5\,447 & 23\,268 \\
&& 4 &    &    &    &     &      & 1     & 3     & 16       & 74       & 352       \\
&& 5 &    &    &    &     &      &        &        &            &            & 1           \\ 
\hhline{|~|--||*{10}{-}|}
&\multicolumn{2}{c||}{Total} & 1 & 3 & 8 & 23 & 66 & 212 & 686 & 2\,389 & 8\,682 & 33\,160 \\ 
\hhline{---||*{10}{-}}
\end{tabular}}
\\
\subfloat[]{\label{tab:numchi:b} 
\begin{tabular}{|cc||*{15}{r}|}
\hhline{--::*{15}{-}}
&& \multicolumn{15}{c|}{$N$} \\ 
$d$ & $\chi$
\unskip\textcolor{white!5}{\makebox[0pt]{\smash{\rule[27pt]{20pt}{3pt}}}}  
& \bf2 & \bf3 & \bf4 & \bf5 & \bf6 & \bf7 & \bf8 & \bf9 & \bf10 & \bf11 & \bf12 & \bf13 & \bf14 &\hspace{-2mm} \bf15 \hspace{-2mm} & \hspace{-2mm} \bf16 \\ 
\hhline{|--|:*{15}{=}:}
\multirow{1}{*}{\bf1} & 2 & 1 & & & & & & & & & & & & & & \\ \hhline{|-|-||*{15}{-}|}
\multirow{1}{*}{\bf2} & 2 & 1 & 1 & 1 & & & & & & & & & & & & \\ \hhline{|-|-||*{15}{-}|}
\multirow{2}{*}{\bf3} & 2 & 1 & 1 & 3 & 1 & 1 & & & & & & & & & & \\ 
                            & 3 & & 1 & & & & & & & & & & & & & \\ \hhline{|-|-||*{15}{-}|}
\multirow{2}{*}{\bf4} & 2 & 1 & 2 & 5 & 5 & 4 & 1 & 1 & & & & & & & & \\ 
                            & 3 & & 1 & 2 & 1 & & & & & & & & & & & \\ \hhline{|-|-||*{15}{-}|}
\multirow{2}{*}{\bf5} & 2 & 1 & 2 & 8 & 10 & 14 & 7 & 4 & 1 & 1 & & & & & & \\ 
                            & 3 & & 2 & 5 & 7 & 3 & 1 & & & & & & & & & \\ \hhline{|-|-||*{15}{-}|}
\multirow{3}{*}{\bf6} & 2 & 1 & 3 & 12 & 21 & 33 & 30 & 21 & 8 & 4 & 1 & 1 & & & & \\ 
                            & 3 & & 3 & 12 & 23 & 23 & 11 & 3 & 1 & & & & & & & \\
                            & 4 & & & 1 & & & & & & & & & & & & \\ \hhline{|-|-||*{15}{-}|}
\multirow{3}{*}{\bf7} & 2 & 1 & 3 & 16 & 35 & 71 & 82 & 81 & 45 & 23 & 8 & 4 & 1 & 1 & & \\ 
                            & 3 & & 4 & 23 & 65 & 92 & 76 & 36 & 12 & 3 & 1 & & & & & \\
                            & 4 & & & 1 & 1 & 1 & & & & & & & & & & \\ \hhline{|-|-||*{15}{-}|}
\multirow{3}{*}{\bf8} & 2 & 1 & 4 & 21 & 58 & 134 & 205 & 245 & 197 & 122 & 52 & 24 & 8 & 4 & 1 & 1 \\ 
                            & 3 & & 5 & 41 & 153 & 311 & 355 & 257 & 118 & 40 & 12 & 3 & 1 & & & \\
                            & 4 & & & 3 & 5 & 5 & 2 & 1 & & & & & & & & \\ \hhline{|-|-||*{15}{-}|}
\end{tabular}}
\caption{(a) The number of prime/all \Bs binned by degree $d$ and complexity $\chi$ up to $d=10$. The complexity is that of our \href{https://pkomiske.github.io/EnergyFlow}{{\tt EnergyFlow}} implementation, running in time $\mathcal O(M^\chi)$. The partial sums of the ``Total'' rows are the entries of \Tab{tab:efpcounts:a}. (b) The number of \Bs binned by degree $d$, complexity $\chi$, and $N$ up to $d=8$. Note that the majority of \Bs shown here have $N>4$, which would be computationally intractable without algorithmic speedups such as VE.}
\label{tab:numchi} 
\end{table}

\afterpage{\clearpage}

It is useful to analyze the complexity of computing an \B.\footnote{The title of this section is inspired by \Ref{Cacciari:2005hq}.}
A naive implementation of \Eq{eq:introefp} runs in $\mathcal O(M^N)$ due to the $N$ nested sums over $M$ particles.
There is a computational simplification, however, that can be used to tremendously speed up calculations of certain \Bs by making use of the graph structure of $G$.
As an example, consider the following EFP:
\begin{equation}\label{eq:wedgesimple}
\begin{gathered}
\includegraphics[scale=.3]{graphs/4_3_1}
\end{gathered}
=\sum_{i_1=1}^M \sum_{i_2=1}^M \sum_{i_3=1}^M\sum_{i_4=1}^M z_{i_1}z_{i_2}z_{i_3}z_{i_4}\theta_{i_1i_2}\theta_{i_1i_3}\theta_{i_1i_4} = \sum_{i_1=1}^M z_{i_1} \left(\sum_{i_2=1}^M z_{i_2}\theta_{i_1i_2}\right)^3,
\end{equation}
which can be computed in $\mathcal O(M^2)$ rather than $\mathcal O(M^4)$ by first computing the $M$ objects in parentheses in \Eq{eq:wedgesimple} and then performing the overall sum.

In general, since the summand is a product of factors, the distributive property allows one to put parentheses around combinations of sum operators and factors. 
A clever choice of such parentheses, known as an \emph{elimination ordering}, can often be used to perform the $N$ sums of \Eq{eq:introefp} in a way which greatly reduces the number of operations needed to obtain the value of the \B for a given set of particles. 
This technique is known as the Variable Elimination (VE) algorithm~\cite{zhang1996exploiting} (see also \Ref{murphy2012machine} for a review). 

When run optimally, the VE algorithm reduces the complexity of computing $\B_G$ to $\mathcal{O}(M^{\text{tw}(G)+1})$ where $\text{tw}(G)$ is the \emph{treewidth} of the graph $G$, neglecting multiple edges in the case of multigraphs.
The treewidth is a measure which captures how tangled a graph is, with trees (graphs with no cycles) being the least tangled (with treewidth 1) and complete graphs the most tangled (with treewidth $N-1$). 
Additionally, we have that for graphs with a single cycle the treewidth is 2 and for complete graphs minus one edge the treewidth is $N-2$. 
Thus the \Bs corresponding to tree multigraphs can be computed with VE in $\mathcal O(M^2)$ whereas complete graphs do require the naive $\mathcal O(M^N)$ to compute with VE. 
Since the ECFs correspond to complete graphs (see \Eq{eq:ecfgraphs}), they do not benefit from VE.  Similarly, VE cannot speed up the computation of ECFGs, since the ECFGs do not have a factorable summand.

Finding the optimal elimination ordering and computing the treewidth for a graph $G$ are both NP-hard. 
In practice, heuristics are used to decide on a pretty-good elimination ordering (which for the small graphs we consider here is often optimal) and to approximate the treewidth. 
In principle, these orderings need only be computed once for a fixed set of graphs of interest. 
Similarly, many algebraic structures reappear when computing a set of \Bs for the same set of particles, making dynamic programming a viable technique for further improving the computational complexity of the method.

\begin{figure}[t]
\centering
\includegraphics[scale=.76]{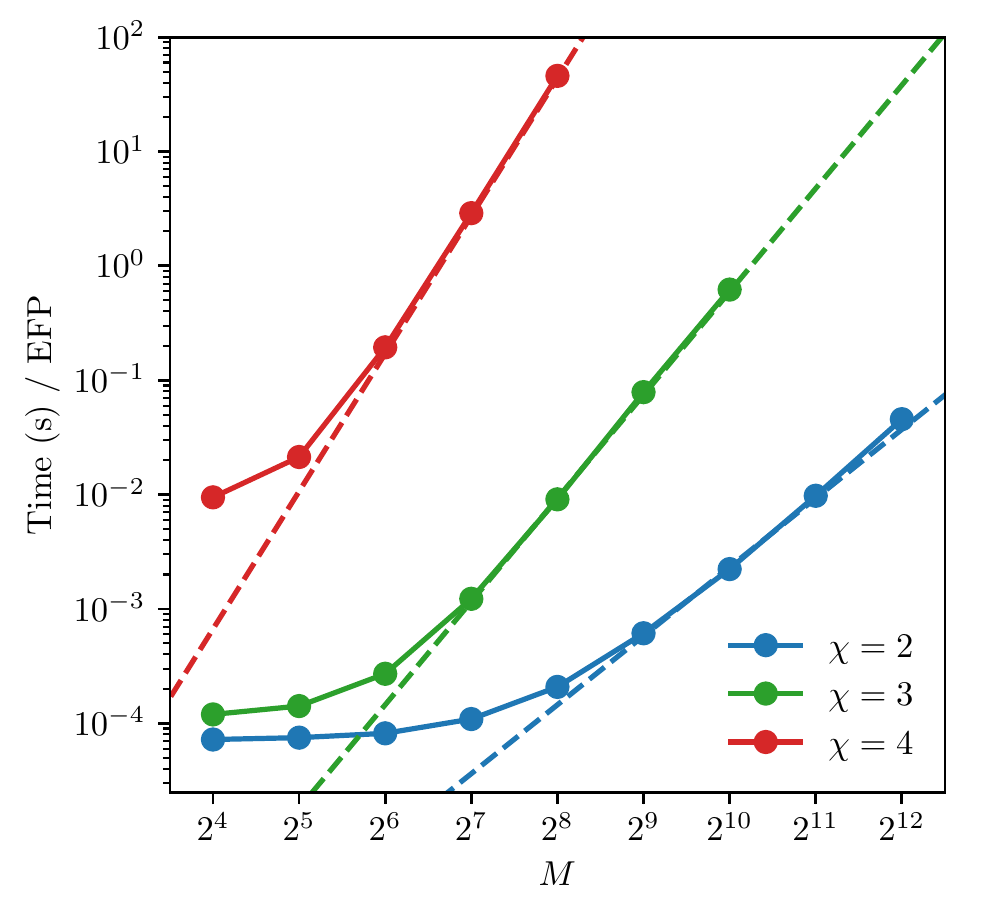}
\caption{Compute time (in seconds) per \B for different VE complexities $\chi$ as a function of the number of inputs $M$.  The quoted value is based on all \Bs with $d\le7$, and each data point is the average of 10 computations.  The dashed lines show the expected $\mathcal{O}(M^\chi)$ scaling behavior.  As $\chi$ increases, the relative amount of overhead decreases and the asymptotic behavior is achieved more rapidly than for smaller $\chi$. Computations were run with Python 3.5.2 and {\tt NumPy} 1.13.3 on a 2.3 GHz Intel Xeon E5-2673 v4 (Broadwell) processor on Microsoft Azure using our \href{https://pkomiske.github.io/EnergyFlow}{{\tt EnergyFlow}} module.}
\label{fig:times}
\end{figure}

\Tab{tab:numchi:a} shows the number of \Bs listed by degree $d$ and VE complexity $\chi$ (with respect to the heuristics used in our implementation), and \Tab{tab:numchi:b} further breaks up the \B counts by $N$. 
\Fig{fig:times} shows the time to compute the average $d\le7$ \B as a function of multiplicity $M$ for different VE complexity $\chi$. 
Finally, we note that though VE often provides a significant speedup over the naive algorithm, there may be even faster ways of computing the \Bs.\footnote{At the risk of burying the lede in a footnote, we have found that with certain choices of the angular measure, it is possible to compute all \Bs in $\mathcal O(M)$.  An exploration of these interesting special cases is performed in \Ref{Komiske:2019asc}.}

\section{Linear regression with jet observables}
\label{sec:linreg}

Regression, classification, and generation are three dominant machine learning paradigms. 
Machine learning applications in collider physics have been largely focused on classification (e.g.\ jet tagging)~\cite{Komiske:2016rsd,Almeida:2015jua,Baldi:2016fql,Kasieczka:2017nvn,Pearkes:2017hku,Butter:2017cot,Aguilar-Saavedra:2017rzt,Guest:2016iqz,Louppe:2017ipp} with recent developments in regression~\cite{Komiske:2017ubm} and generation~\cite{deOliveira:2017pjk, Paganini:2017hrr}. 
For a more complete review of modern machine learning techniques in jet substructure, see \Ref{Larkoski:2017jix}. 
The lack of established regression problems in jet physics is due in part to the difficulty of theoretically probing multivariate combinations as well as the challenges associated with extracting physics information from trained regressions models.

In this section, we show that the linearity of the energy flow basis mitigates many of these problems, providing a natural regression framework using simple linear models, probing the learned observable combinations, and gaining insight into the physics of the target observables. 
Since regression requires training samples, we observe how the regression performance compares on jets with three characteristic phase-space configurations: one-prong QCD jets, two-prong boosted $W$ jets, and three-prong boosted top jets. 
We use linear regression to demonstrate convergence of the energy flow basis on IRC-safe observables, while illustrating their less-performant behavior for non-IRC-safe observables.

\subsection{Linear models with the energy flow basis}
\label{sec:linmod}

Linear models assume a linear relationship between the input and target variables, making them the natural choice for (machine) learning with the energy flow basis for both regression and classification. 
A linear model $\mathscr M$ with \Bs as the inputs is defined by a finite set $\mathcal G$ of multigraphs and numerical coefficients ${\bf w}=\{w_G\}_{G\in\mathcal G}$:
\begin{equation}\label{eq:linmod}
\mathscr M = \sum_{G\in\mathcal G}w_G \,  \B_G.
\end{equation}
The fundamental relationship between \Bs, linear models, and IRC-safe observables is highlighted by comparing \Eq{eq:linmod} to \Eq{eq:introefpsspan}, where the linear model $\mathscr M$ in \Eq{eq:linmod} takes the place of the IRC-safe observable $\mathcal S$ in \Eq{eq:introefpsspan}. 
Because the \Bs are a complete linear basis, $\mathscr M$ is capable of approximating any $\mathcal S$ for a sufficiently large set of \Bs.

The linear structure of \Eq{eq:linmod} allows for an avenue to ``open the box'' and interpret the learned coefficients as defining a unique multiparticle correlator for each $N$. 
To see this, partition the set $\mathcal G$ into subsets $\mathcal G_N$ of graphs with $N$ vertices.
The sum in \Eq{eq:linmod} can be broken into two sums, one over $N$ and the other over all graphs in $\mathcal G_N$. 
The linear energy structure of the \Bs in \Eq{eq:introefp} allows for the second sum to be pushed inside the product of energies onto the angular weighting function:
\begin{equation}\label{eq:linmodefp}
\mathscr M=\sum_{N=0}^{N_{\rm max}}\sum_{i_1=1}^M\cdots\sum_{i_N=1}^Mz_{i_1}\cdots z_{i_N}\left(\sum_{G\in\mathcal G_N}w_G\prod_{(k,\ell)\in G}\theta_{i_ki_\ell}\right),
\end{equation}
where $N_{\rm max}$ is the maximum number of vertices of any graph in $\mathcal G$. 
The quantity in parentheses in \Eq{eq:linmodefp} may be though of as a \emph{single} angular weighting function. 
The linear model written in this way reveals itself to be a sum of $C$-correlators (similar to \Eq{eq:Eexpansionresult}), one for each $N$, where the linear coefficients within each $\mathcal G_N$ parameterize the angular weighting function $f_N$ of that $C$-correlator. 
This arrangement of the learned parameters of the linear model into $N_{\rm max}$ $C$-correlators contrasts sharply with the lack of a physical organization of parameters in nonlinear methods such as neural networks or boosted decision trees.

\subsection{Event generation and \B computation}
\label{sec:eventgen}

For the studies in this section and in \Sec{sec:linclass}, we generate events using \textsc{Pythia} 8.226~\cite{Sjostrand:2006za,Sjostrand:2007gs,Sjostrand:2014zea} with the default tunings and shower parameters at $\sqrt{s}=\SI{13}{TeV}$.
Hadronization and multiple parton interactions (i.e.\ underlying event) are included, and a $\SI{400}{GeV}$ parton-level $p_T$ cut is applied.
For quark/gluon distribution, quark (signal) jets are generated through $pp \to q Z(\to \nu\bar\nu)$, and gluon (background) jets through $pp \to g Z(\to \nu\bar\nu)$, where only light-quarks ($uds$) appear in the quark sample.
For $W$ and top tagging, signal jets are generated through $pp\to W^+W^-(\to\text{hadrons})$ and $pp\to t\bar t(\to\text{hadrons})$, respectively.
For both $W$ and top events, the background consists of QCD dijets.

Final state, non-neutrino particles were made massless, keeping $y$, $\phi$, and $p_T$ fixed,\footnote{Using massless inputs is not a requirement for using the \Bs, but for these initial \B studies, we wanted to avoid the caveats associated with massive inputs for the validity of \Sec{sec:basis}.} and then were clustered with \textsc{FastJet 3.3.0}~\cite{Cacciari:2011ma} using the anti-$k_T$ algorithm~\cite{Cacciari:2008gp} with a jet radius of $R=0.4$ for quark/gluon samples and $R = 0.8$ for $W$ and top samples (and the relevant dijet background). 
The hardest jet with rapidity $|y|<1.7$ and $500\text{ GeV}\le p_T\le550$ GeV was kept. 
For each type of sample, 200k jets were generated. 
For the regression models, 75\% were used for training and 25\% for testing.

For these events, all \Bs up to degree $d\le7$ were computed in Python using our \href{https://pkomiske.github.io/EnergyFlow}{{\tt EnergyFlow}} module making use of {{\tt NumPy}}'s einsum function. 
See \Tabs{tab:efpcounts}{tab:numchi} for counts of \Bs tabulated by various properties such as $N$, $d$, and $\chi$. 
Note that all but 4 of the 1000 $d\le 7$ \Bs can be computed in $\mathcal O(M^2)$ or $\mathcal O(M^3)$ in the VE paradigm, making the set of \Bs with $d\le 7$ efficient to compute.

\subsection{Spanning substructure observables with linear regression}
\label{sec:regression}

We now consider the specific case of training linear models to approximate substructure observables with linear combinations of \Bs. 
For an arbitrary observable $O$, we use least-squares regression to find a suitable set of coefficients ${\bf w}^*$:
\begin{equation}\label{eq:linregdef}
{\bf w}^* = \underset{\bf w}{\arg\min} \left\{\sum_{J\in\text{jets}}\left(O(J) - \sum_{G\in\mathcal G} w_G \,\B_G(J)\right)^2\right\},
\end{equation}
where $O(J)$ is the value of the observable and $\B_G(J)$ the value of the \B given by multigraph $G$ on jet $J$. 
There are possible modifications to \Eq{eq:linregdef} which introduce penalties proportional to $\|{\bf w}\|_1$ or $\|{\bf w}\|_2^2$ where $\|\cdot\|_1$ is the 1-norm and $\|\cdot\|_2$ is the 2-norm. 
The first of these choices, referred to as lasso regression~\cite{tibshirani1996regression}, may be particularly interesting because of the variable selection behavior of this model, which would aid in selecting the most important \Bs to approximate a particular observable. 
We leave such investigation to future work. See \Ref{bishop2006pattern} for a review of linear models for regression.

We use the {\tt LinearRegression} class of the {\tt scikit-learn} python module~\cite{scikit-learn} to implement \Eq{eq:linregdef} with no regularization on the samples described in \Sec{sec:eventgen}. 
In general, the smallest possible regularization which prevents overfitting (if any) should be used.
Because of the linear nature of linear regression and the analytic tractability of \Eq{eq:linregdef}, the ${\bf w}^*$ corresponding to the global minimum of the squared loss function can be found efficiently using convex optimization techniques. 
Such techniques include closed-form solutions or convergent iterative methods.

\begin{table}[t]
\centering
\begin{tabular}{|r||l|l|}
\hhline{~|-|-|}
\multicolumn{1}{c|}{}& \textbf{Observable} & \textbf{Properties}  \\ \hhline{-::==:}
$\frac{m_J}{p_{T,J}}$ 
\unskip\textcolor{white!5}{\makebox[0pt]{\smash{\rule[12pt]{14pt}{3pt}}}}
& Scaled jet mass &  No Taylor expansion about zero energy limit \\ 
$\lambda^{(\alpha=1/2)}$ & Les Houches angularity & No analytic relationship beyond even integers \\ 
$\tau_2^{(\beta = 1)}$ & $2$-subjettiness & Algorithmically defined IRC-safe observable \\
\hhline{|-||-|-|}
$\tau_{21}^{(\beta=1)}$ & $N$-subjettiness ratio & Sudakov safe, safe for two-prong kinematics\\
$\tau_{32}^{(\beta = 1)}$ & $N$-subjettiness ratio & Sudakov safe, safe for three-prong kinematics\\
\hhline{|-||-|-|}
$M$ & Particle multiplicity & IRC unsafe\\
\hhline{|-||-|-|}
\end{tabular}
\caption{The six substructure observables used as targets for linear regression, listed with relevant properties. The first three are IRC safe, the next two are Sudakov safe in general (and IRC safe in the noted regions of phase space), and particle multiplicity is IRC unsafe. The Les Houches Angularity is calculated with respect to the $p_T$-weighted centroid axis in \Eq{eq:jetaxis}, and the $N$-subjettiness observables~ are calculated using $k_T$ axes.}
\label{tab:observables}
\end{table}

As targets for the regression, we consider the six jet observables in \Tab{tab:observables} to highlight some interesting test cases. 
As our measure of the success of the regression, we use a variant of the correlation coefficient between the true and predicted observables that is less sensitive to outliers than the unadulterated correlation coefficient.
When evaluating the trained linear model on the test set, only test samples with predicted values within the 5$^\text{th}$ and 95$^\text{th}$ percentiles of the predictions are included.
In the contexts considered in this chapter, narrowing this percentile range lowers the correlation coefficient and widening the range out toward all of the test set increases the correlation coefficient.
The qualitative nature of the results are insensitive to the specific choice of percentile cutoffs. 
We perform this regression using \Bs of degree up to $d$ for $d$ from 2 to 7 on all three jet samples, with the results shown in \Fig{fig:robustcorr}.
Histograms of the true and predicted distributions for a subset of these observables are shown in \Fig{fig:reghists} for the three types of jets considered here. 

\begin{figure}[t]
\centering
\subfloat[]{\includegraphics[scale=.52]{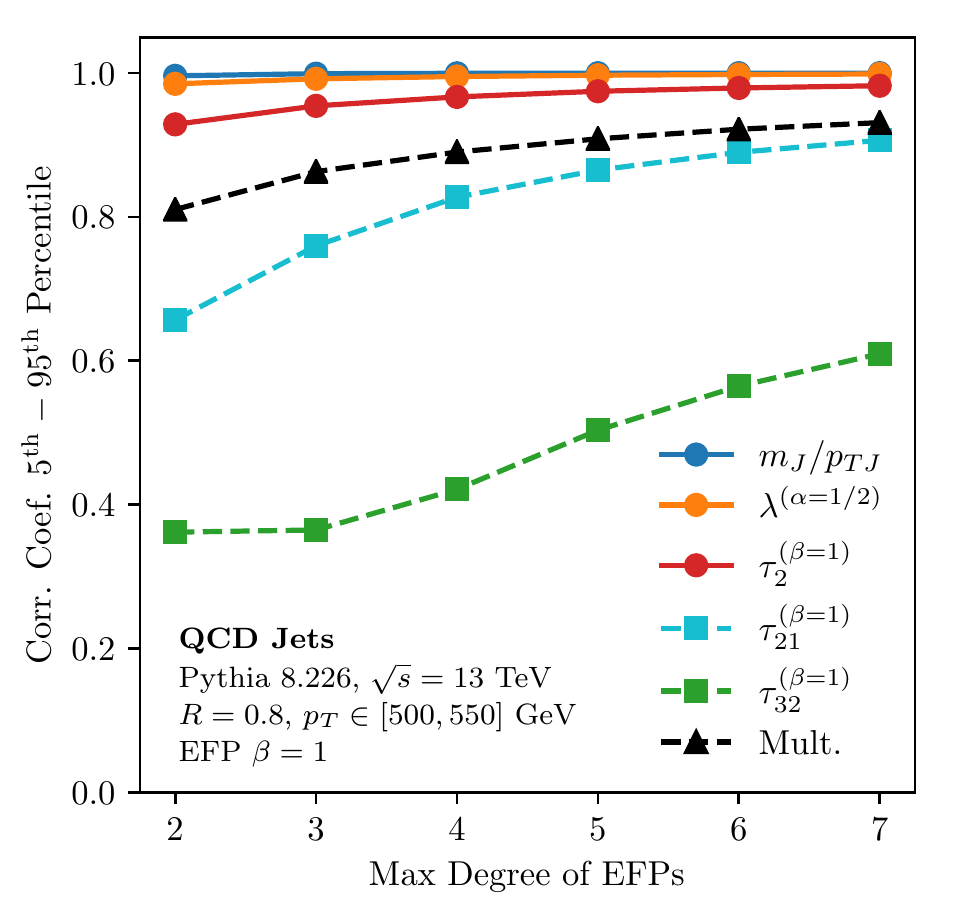}}
\subfloat[]{\includegraphics[scale=.52]{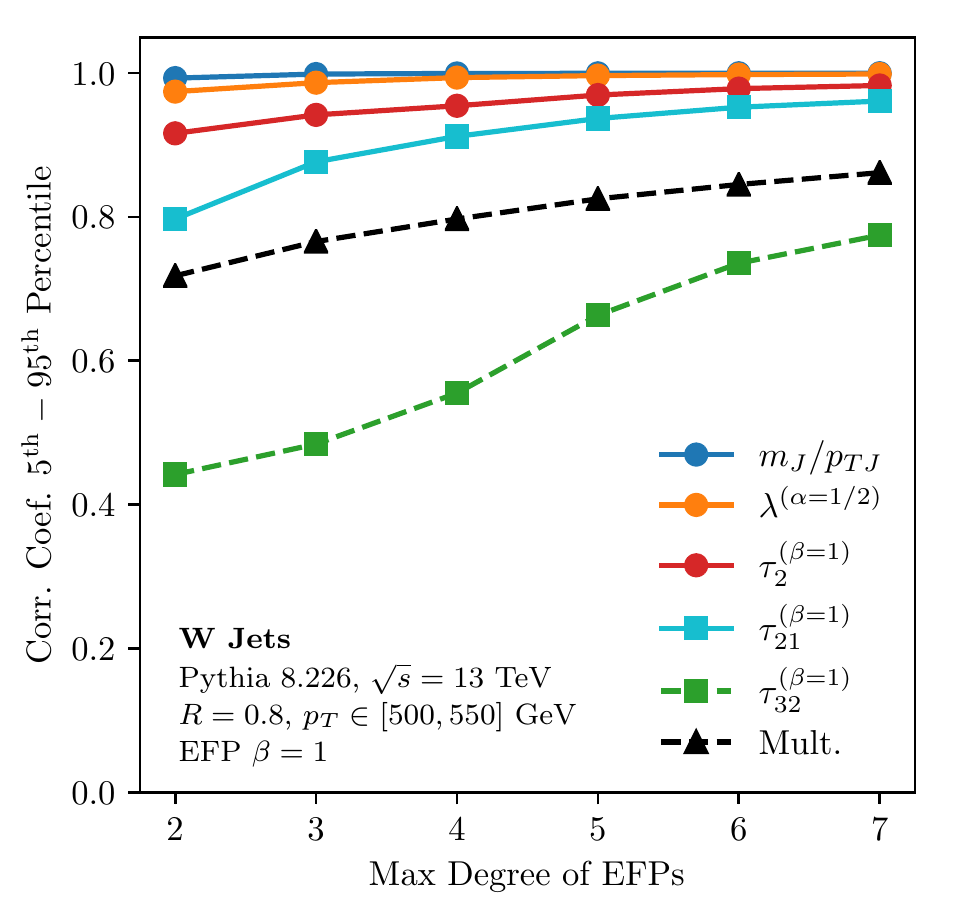}}
\subfloat[]{\includegraphics[scale=.52]{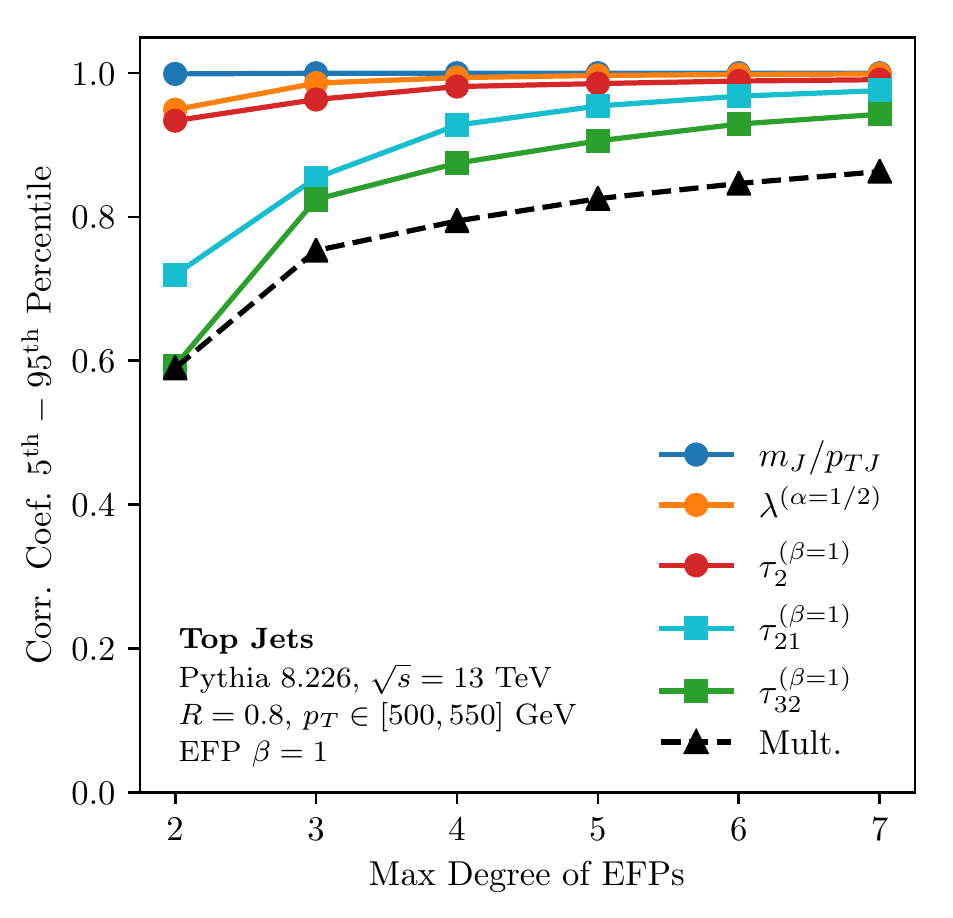}}
\caption{Correlation coefficients between true and predicted values for the jet observables in \Tab{tab:observables}, plotted as a function of maximum \B degree. Shown are the (a) QCD dijet, (b) $W$ jet, and (c) top jet samples, and as explained in the text, we restrict to predictions in the 5$^\text{th}$--95$^\text{th}$ percentiles. Observables in IRC-safe regions of phase space are shown with solid lines and those in IRC-unsafe regions (including Sudakov-safe regions) are shown with dashed lines. The IRC-safe observables are all learned with correlation coefficient above 0.98 in all three cases by $d=7$. Multiplicity (black triangles) sets the scale for the regression performance on IRC-unsafe observables. Note that $\tau_{21}$ has performance similar to the IRC-safe observables only when jets are characteristically two-pronged or higher ($W$ and top jets), and similarly for $\tau_{32}$ when the jets are characteristically three-pronged (top jets).}
\label{fig:robustcorr}
\end{figure}

\begin{figure}[t]
\centering
\subfloat[]{\includegraphics[scale=.5]{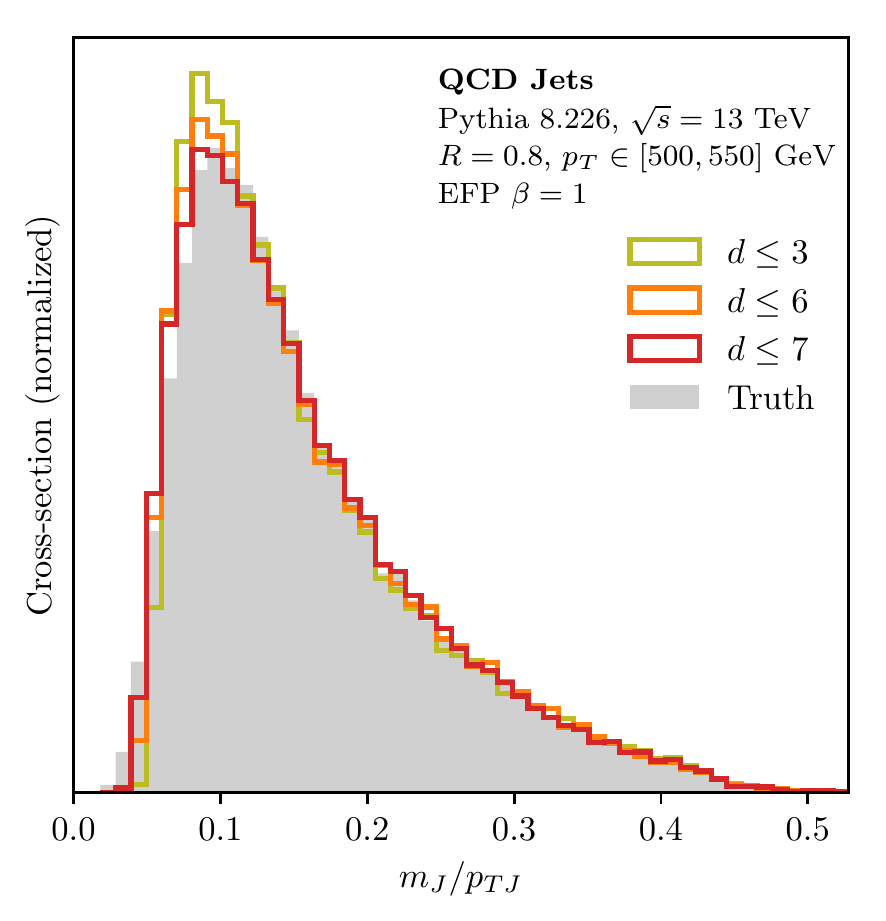}}
\subfloat[]{\includegraphics[scale=.5]{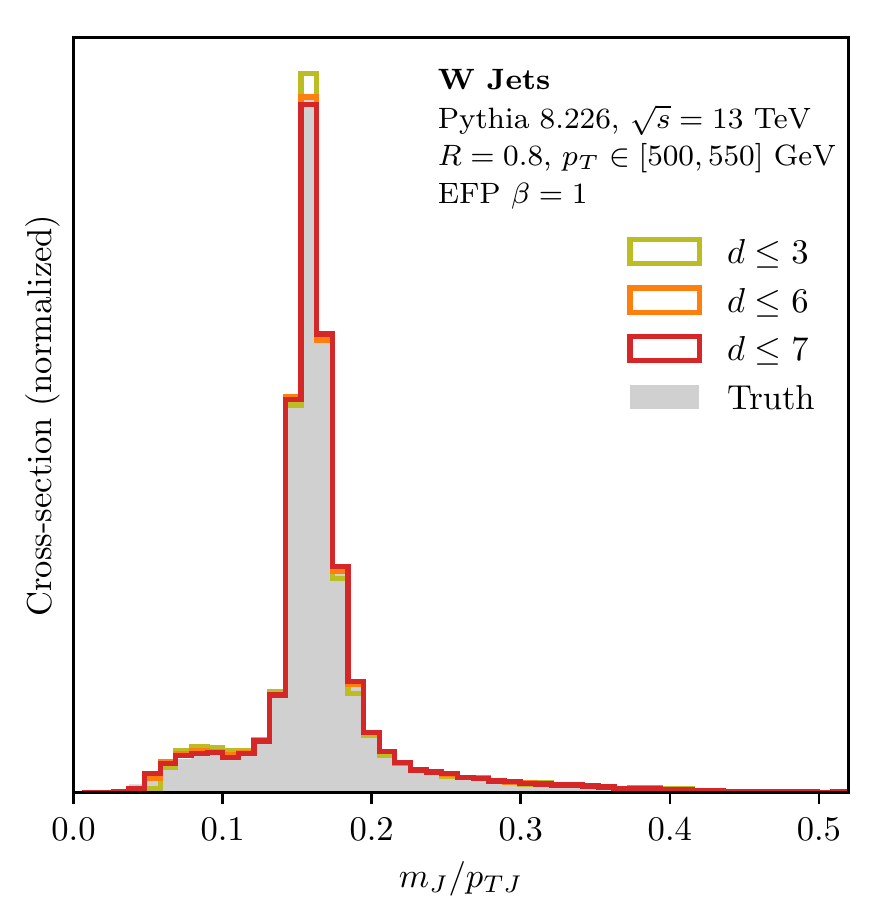}}
\subfloat[]{\includegraphics[scale=.5]{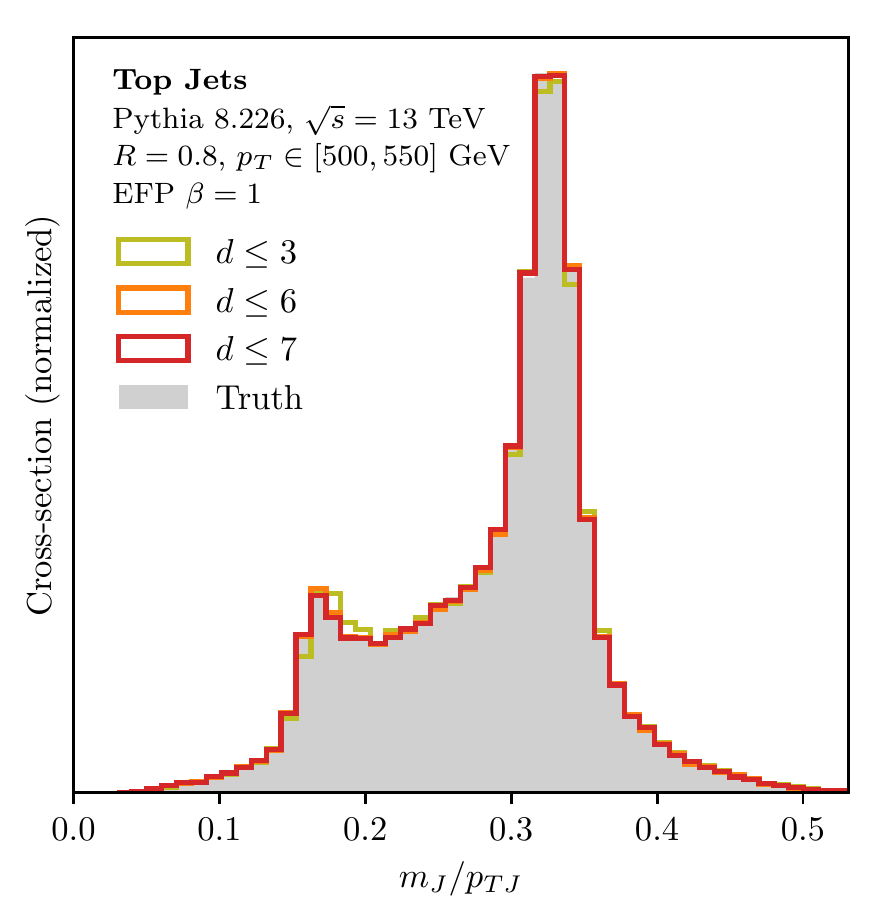}}
\\
\subfloat[]{\includegraphics[scale=.5]{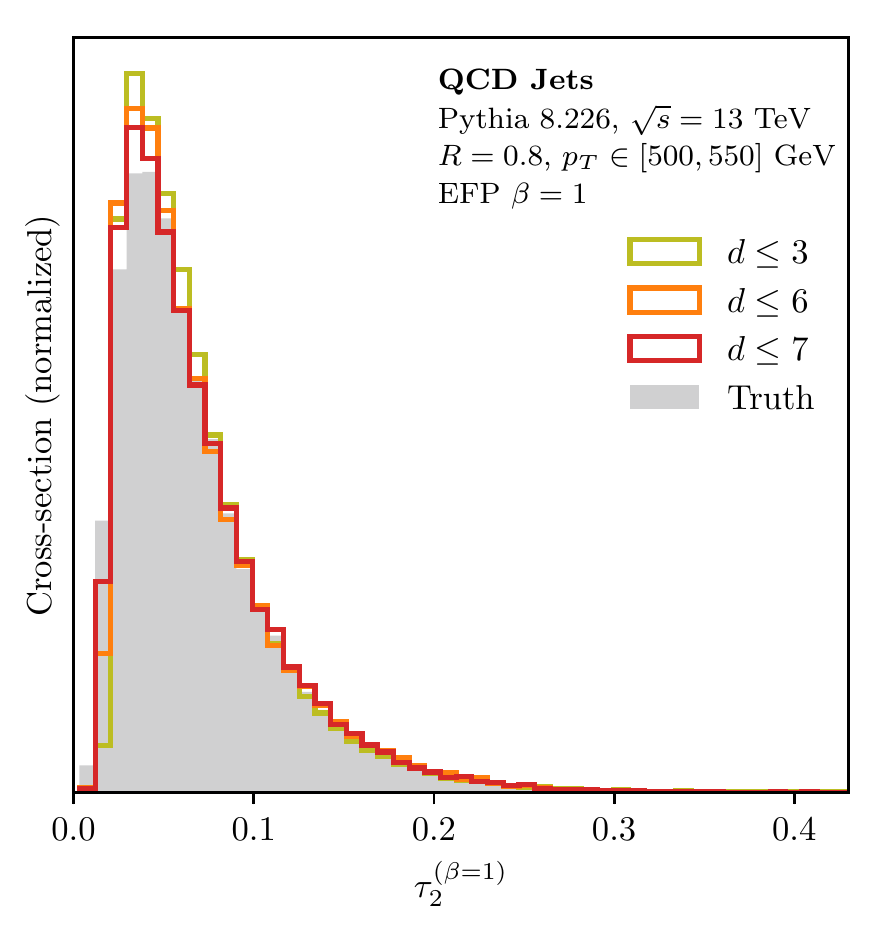}}
\subfloat[]{\includegraphics[scale=.5]{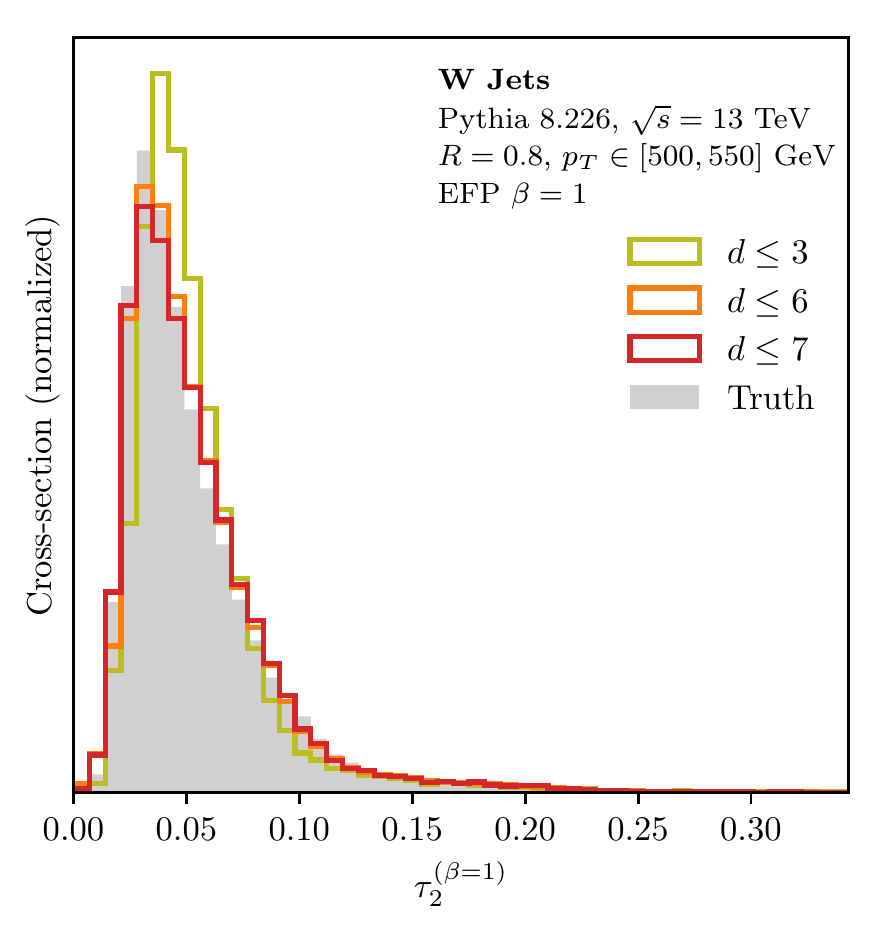}}
\subfloat[]{\includegraphics[scale=.5]{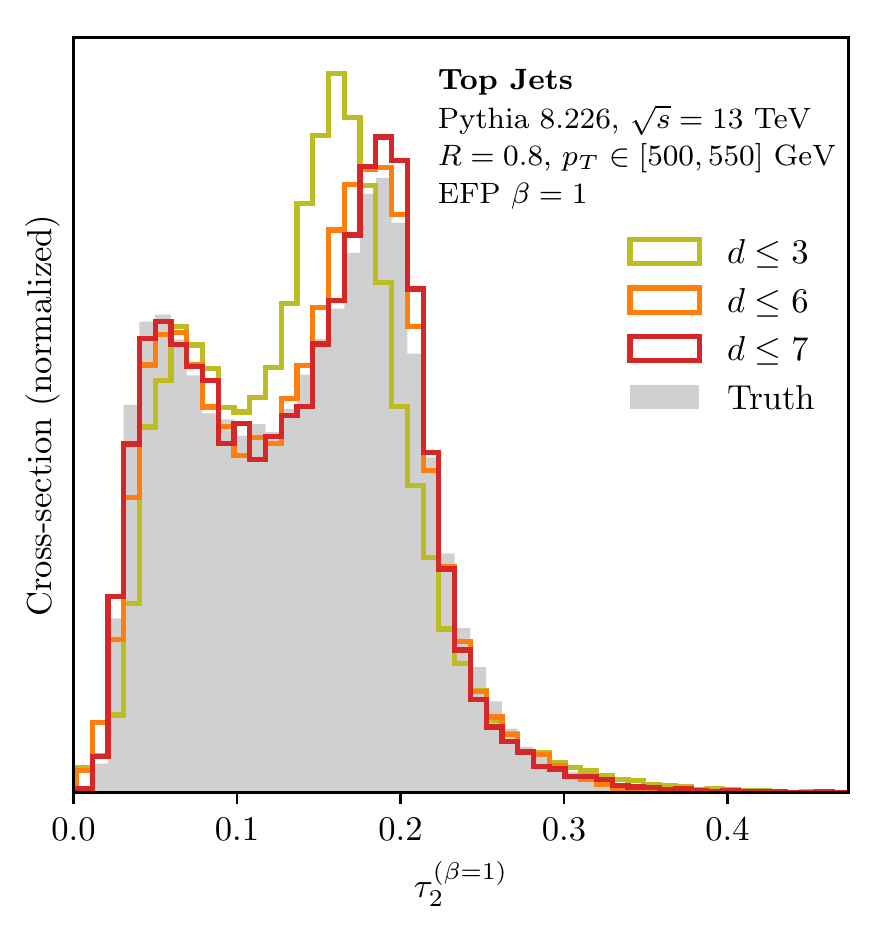}}
\\
\subfloat[]{\includegraphics[scale=.5]{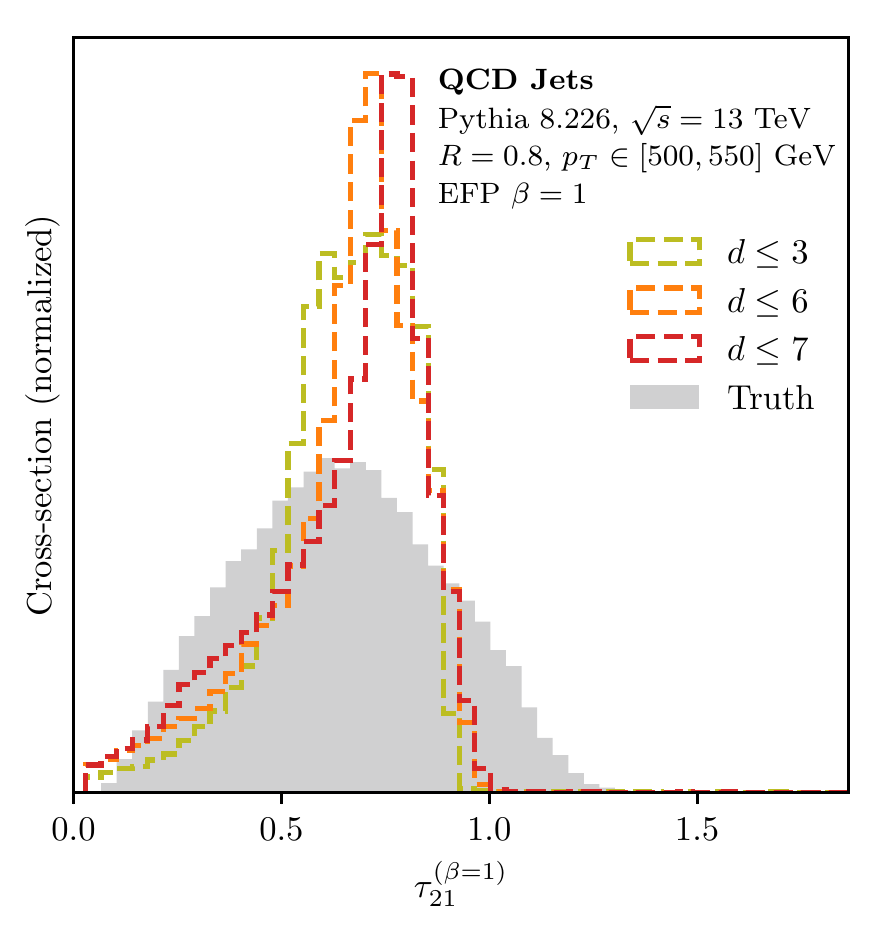}}
\subfloat[]{\includegraphics[scale=.5]{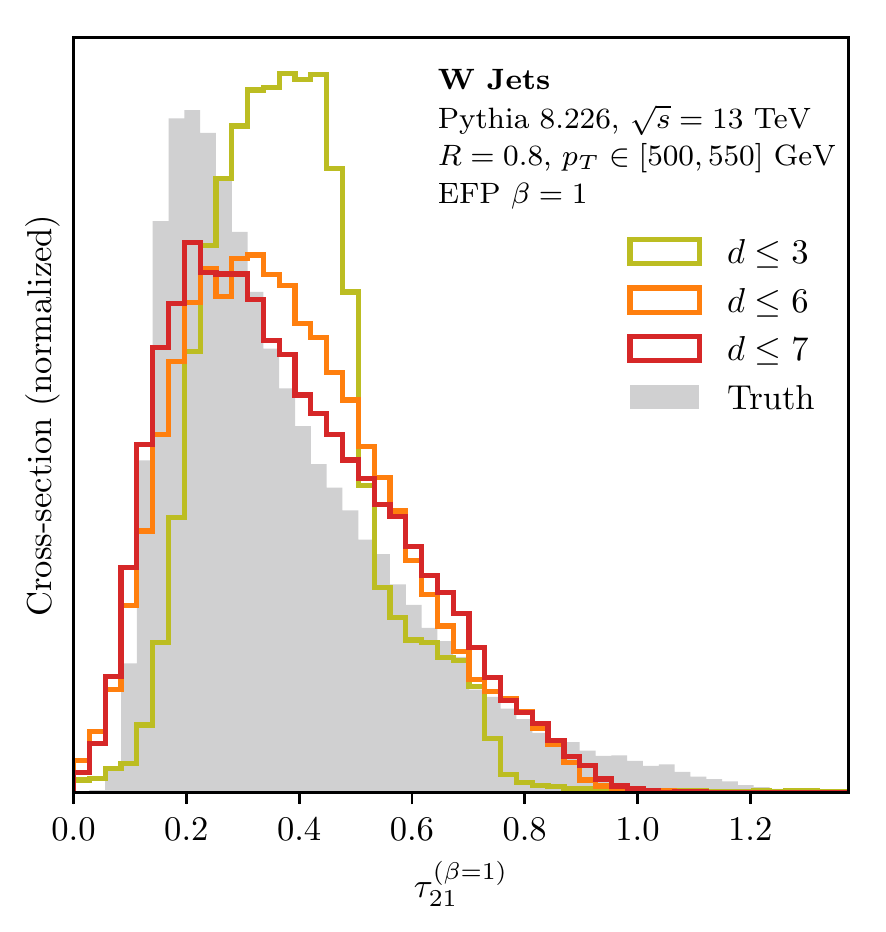}}
\subfloat[]{\includegraphics[scale=.5]{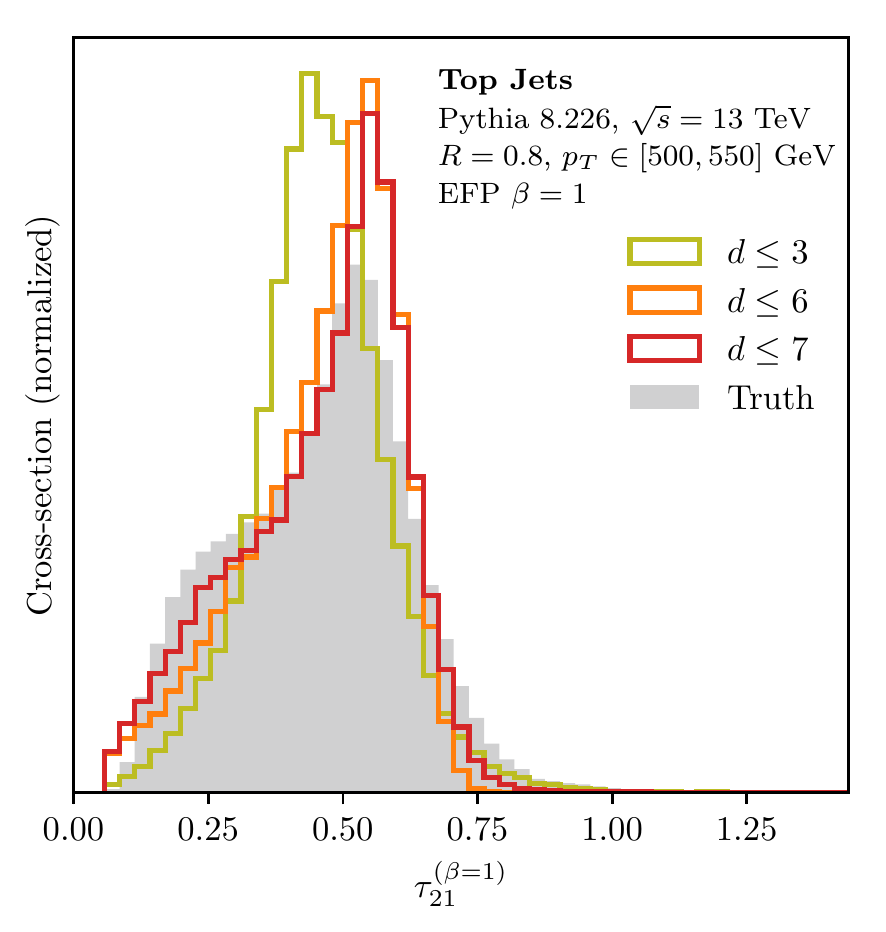}}
\caption{The distributions of true and predicted scaled jet mass (top), $\tau_2^{(\beta=1)}$ (middle), and $\tau_{21}^{(\beta=1)}$ (bottom) using linear regression with \Bs up to different maximum degrees $d$ on QCD jets (left), $W$ jets (center), and top jets (right). Note the excellent agreement for the IRC-safe observables in the first two rows. Observables in IRC-safe regions of phase space are shown with solid lines and those in IRC-unsafe regions are shown with dashed lines. The Sudakov-safe $\tau_{21}^{(\beta=1)}$ predicted distributions match the true distributions for jets typically with two or more prongs ($W$ and top jets) better than for typically one-pronged (QCD) jets.}
\label{fig:reghists}
\end{figure}
\afterpage{\clearpage}

Since the learned coefficients depend on the training set, in principle different linear combinations may be learned to approximate the substructure observables in different jet contexts.
This stands in contrast to the analysis in \Sec{sec:jetobs}, where many jet substructure observables were identified as exact linear combinations of \Bs, independent of the choice of inputs.
The IRC-safe observables---mass, Les Houches angularity, and $2$-subjettiness---are all learned with a correlation coefficient above 0.98 in all three cases by $d=7$.

The IRC-unsafe multiplicity sets the scale of performance for observables that are not IRC safe.
For the $N$-subjettiness ratios, the regression performance depends on whether the observable is IRC safe or only Sudakov safe~\cite{Larkoski:2013paa,Larkoski:2015lea}.
The ratio $\tau_{21}$ is only IRC safe for regions of phase space with two prongs or more (i.e.\ the $W$ and top samples), and $\tau_{32}$ is only IRC safe for three prongs or more (i.e.\ just the top sample).
In cases where the $N$-subjettiness ratio is IRC safe, the regression performs similarly to the other IRC-safe observables, whereas for the cases where the $N$-subjettiness ratio is only Sudakov safe, the regression performance is poor (even worse than for multiplicity).
It is satisfying to see the expected behavior between the safety of the observable and the quality of the regression with \Bs.

As a final cross check of the regression, we can use the linear model in \Eq{eq:linmod} to confirm some of the analytic results of \Sec{sec:jetobs}.
Specifically, we perform a linear regression with the target observable being the even-$\alpha$ angularities with respect to the $p_T$-weighted centroid axis.
These were shown to be non-trivial linear combinations of \Bs in \Sec{sec:angularities}.
Regressing onto $\lambda^{(2)},\,\lambda^{(4)},$ and $\lambda^{(6)}$, the linear model learned the observables with effectively 100\% accuracy and the learned linear combination was exactly that predicted by \Eqss{eq:lam2}{eq:lam4}{eq:lam6}, up to a precision of $10^{-6}$.
\Fig{fig:linspec} shows the learned linear combinations of \Bs for the $W$ jet sample.

\begin{figure}[t]
\centering
\subfloat[]{\includegraphics[scale=.51]{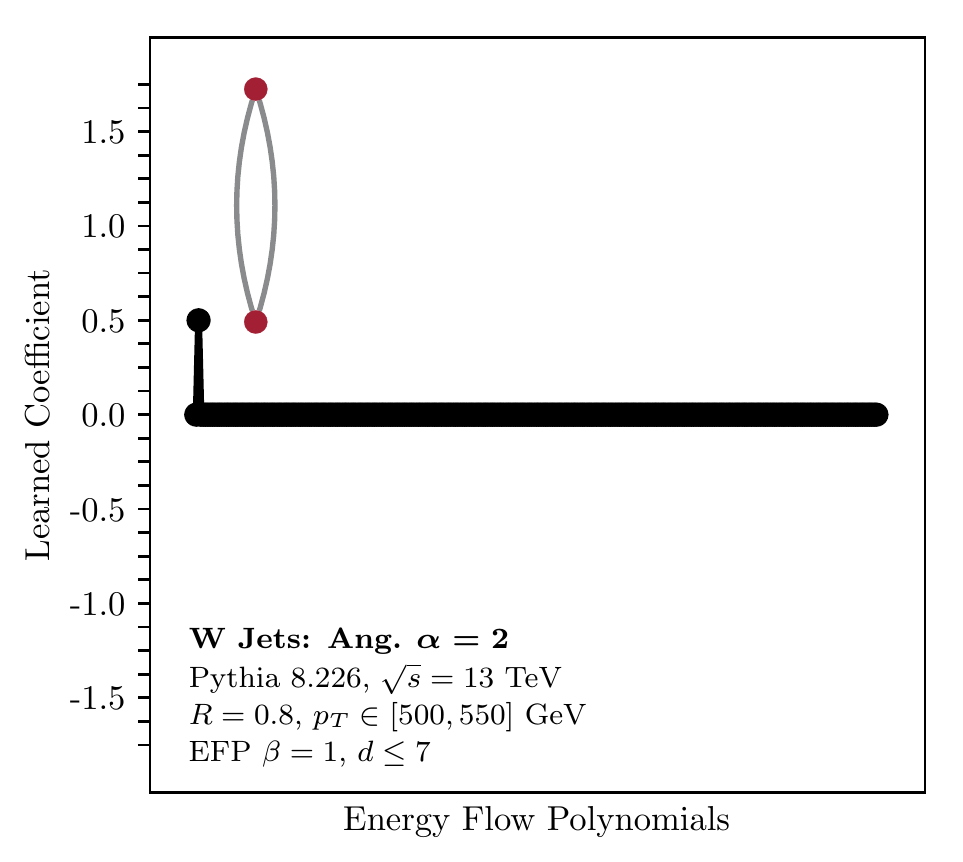}}
\subfloat[]{\includegraphics[scale=.51]{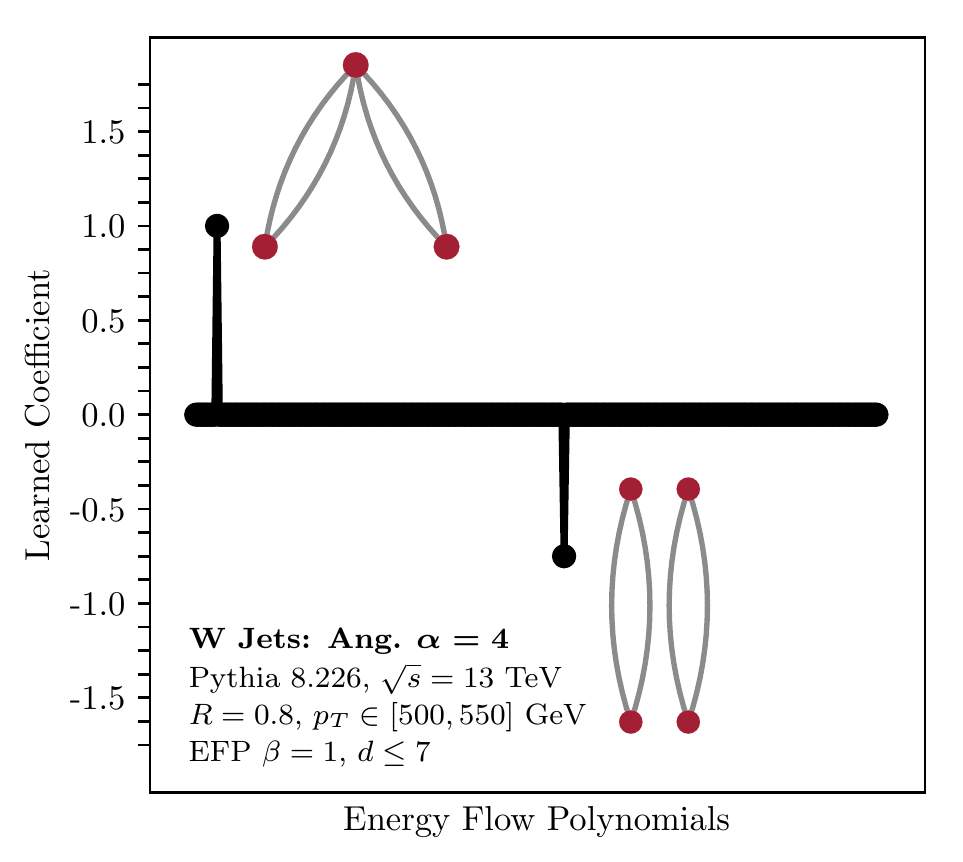}}
\subfloat[]{\includegraphics[scale=.51]{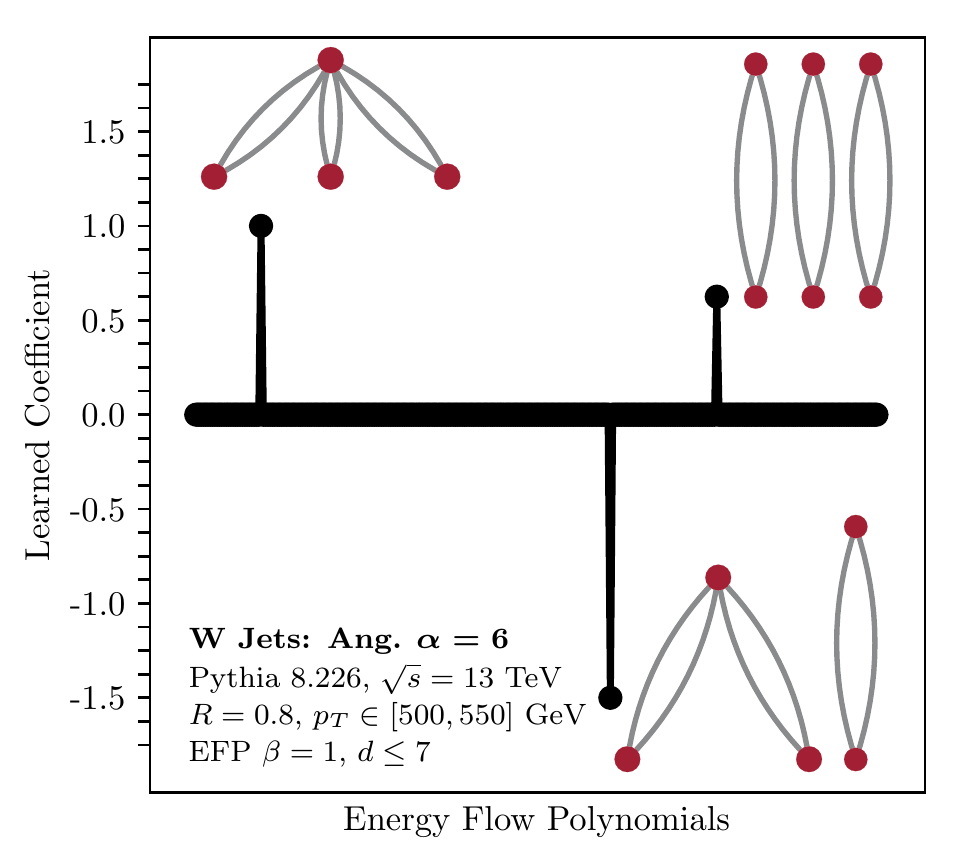}}
\caption{The linear combinations of \Bs learned by linear regression for even-$\alpha$ angularities with the $W$ jet samples.   Shown are (a) $\alpha = 2$, (b) $\alpha = 4$, and (c) $\alpha = 6$.  All but the highlighted \B coefficients are learned to be near zero. The \Bs corresponding to those non-zero coefficients are illustrated directly on the figure. The learned linear coefficients are exactly those predicted analytically in \Eqss{eq:lam2}{eq:lam4}{eq:lam6}. The same behavior is found with the QCD and top jet samples.}
\label{fig:linspec}
\end{figure}

\section{Linear jet tagging}
\label{sec:linclass}

We now apply the energy flow basis to three representative jet tagging problems---light-quark/gluon classification, $W$ tagging, and top tagging---providing a broad set of contexts in which to study the \Bs.
Since the energy flow basis is linear, we can (in principle) access the optimal IRC-safe observable for jet tagging by training a linear classifier for this problem.
As mentioned in \Sec{sec:regression}, one benefit of linear models, in addition to their inherent simplicity, is that they are typically convex problems which can be solved exactly or with gradient descent to a global minimum.
See \Ref{bishop2006pattern} for a review of linear models for classification.

A (binary) linear classifier learns a vector ${\bf w}^*$ that defines a hyperplane orthogonal to the vector. 
A bias term, which can be related to the distance of this hyperplane from the origin, sets the location of the decision boundary, which is the hyperplane translated away from the origin. 
The decision function for a particular point in the input space is the normal distance to the decision boundary. 
In contrast with regression, where the target variable is usually continuous, classification predictions are classes, typically 0 or 1 for a binary classifier. 

Different methods of determining the vector ${\bf w}^*$---such as logistic regression, support vector machines, or linear discriminant analysis---may learn different linear classifiers since the methods optimize different loss functions. 
For our linear classifier, we use Fisher's linear discriminant~\cite{fisher1936use} provided by the {\tt LinearDiscriminantAnalysis} class of the {\tt scikit-learn} python module~\cite{scikit-learn}.  
The choice of logistic regression was also explored, and jet tagging performance was found to be insensitive to which type of linear classifier was used.

The details of the event generation and \B computation are the same as in \Sec{sec:eventgen}. 
To avoid a proliferation of plots, we present only the case of $W$ tagging in the text and refer to \App{app:moretagging} for the corresponding results for quark/gluon classification and top tagging.
Qualitatively similar results are obtained on all three tagging problems, with the conclusion that linear classification with \Bs yields comparable classification performance to other powerful machine learning techniques.
This is good evidence that the \Bs provides a suitable linear expansion of generic IRC-safe information relevant for practical jet substructure applications.

\subsection{Alternative jet representations}

In order to benchmark the \Bs, we compare them to two alternative jet tagging paradigms:
\begin{itemize}
\item The \textbf{jet images} approach~\cite{Cogan:2014oua} treats calorimeter deposits as pixels and the jet as an image, often using convolutional neural networks to determine a classifier. 
Jet images have been applied successfully to the same tagging problems considered here:  quark/gluon classification~\cite{Komiske:2016rsd}, $W$ tagging~\cite{deOliveira:2015xxd}, and top tagging~\cite{Kasieczka:2017nvn,Baldi:2016fql}.
\item The \textbf{$\boldsymbol{N}$-subjettiness basis} was introduced for $W$ tagging in \Ref{Datta:2017rhs} and later applied to tagging non-QCD jets~\cite{Aguilar-Saavedra:2017rzt}.
We use the same choice of $N$-subjettiness basis elements as \Ref{Datta:2017rhs}, namely:
\begin{equation}\label{eq:nsubsspan}
\{\tau_1^{(1/2)}, \tau_1^{(1)}, \tau_1^{(2)}, \tau_2^{(1/2)}, \tau_2^{(1)}, \tau_2^{(2)}, \cdots, \tau_{N-2}^{(1/2)}, \tau_{N-2}^{(1)}, \tau_{N-2}^{(2)}, \tau_{N-1}^{(1)}, \tau_{N-1}^{(2)}\},
\end{equation}
with $3N - 4$ elements needed to probe $N$-body phase space.  These are then used as inputs to a DNN.\end{itemize}
Both of these learning paradigms are expected to perform well, and we will see below that this is the case.
As a strawman, we also consider linear classification with the $N$-subjettiness basis elements in \Eq{eq:nsubsspan}, which is not expected to yield good performance.
For completeness, we also perform DNN classification with the energy flow basis.

We now summarize the technical details of these alternative jet tagging approaches.
For jet images, we create $33\times33$ jet images spanning $2R\times 2R$ in the rapidity-azimuth plane.
Motivated by \Ref{Komiske:2016rsd}, both single-channel ``grayscale'' jet images of the $p_T$ per pixel and two-channel ``color'' jet images consisting of the $p_T$ channel and particle multiplicity per pixel were used.
The $p_T$-channel of the jet image was normalized such that the sum of the pixels was one. 
Standardization was used to ensure that each pixel had zero mean and unit standard deviation by subtracting the training set mean and dividing by the training set standard deviation of each pixel in each channel.
A jet image CNN architecture similar to that used in \Ref{Komiske:2016rsd} was employed: three 36-filter convolutional layers with filter sizes of $8\times 8$, $4\times 4$, and $4\times 4$, respectively, followed by a 128-unit dense layer and a 2-unit softmaxed output. 
A rectified linear unit (ReLU) activation~\cite{nair2010rectified} was applied to the output of each internal layer.
Maxpooling of size $2\times2$ was performed after each convolutional layer with a stride length of 2.
The dropout rate was taken to be 0.1 for all layers.
He-uniform initialization~\cite{heuniform} was used to initialize the model weights. 

For the DNN (both for the $N$-subjettiness basis and for the \Bs), we use an architecture consisting of three dense layers of 100 units each connected to a 2-unit softmax output layer, with ReLU activation functions applied to the output of each internal layer.
For the training of all networks, 300k samples were used for training, 50k for validation, and 50k for testing.
Networks were trained using the Adam algorithm~\cite{adam} using categorical cross-entropy as a loss function with a learning rate of $10^{-3}$ and a batch size of 100 over a maximum of 50 epochs. 
Early stopping was employed, monitoring the validation loss, with a patience parameter of 5.
The python deep learning library {\tt Keras}~\cite{keras} with the {\tt Theano} backend~\cite{bergstra2010theano} was used to instantiate and train all neural networks.
Training of the CNNs was performed on Microsoft Azure using NVIDIA Tesla K80 GPUs and the NVIDIA CUDA framework.
Neural network performance was  checked to be mildly insensitive to these parameter choices, but these parameter choices were not tuned for optimality.
As a general rule, the neural networks used here are employed to give a sense of scale for the performance attainable with jet images and the $N$-subjettiness basis using out-of-the-box techniques; improvements in classification accuracy may be possible for these methods with additional hyperparameter tuning.

\subsection{$W$ tagging results and comparisons}
\label{sec:perf}

We present results for the $W$ tagging study here, with the other two classification problems discussed in \App{app:moretagging}.
The performance of a binary classifier is encapsulated by the background mistag rate $\varepsilon_b$ at a given signal efficiency $\varepsilon_s$.
For all of the figures below, we plot inverse receiver operator characteristic (ROC) curves, $1/\varepsilon_b$ as a function of $\varepsilon_s$, on a semi-log scale; a higher ROC curve indicates a better classifier.
The corresponding standard ROC ($\varepsilon_b$ vs.\ $\varepsilon_s$) and significance improvement ($\varepsilon_s/\sqrt{\varepsilon_b}$ vs.\ $\varepsilon_s$) curves are available in the source files of the {\tt arXiv} preprint as additional pages in the figure.

\begin{figure}[t]
\centering
\includegraphics[scale=.76]{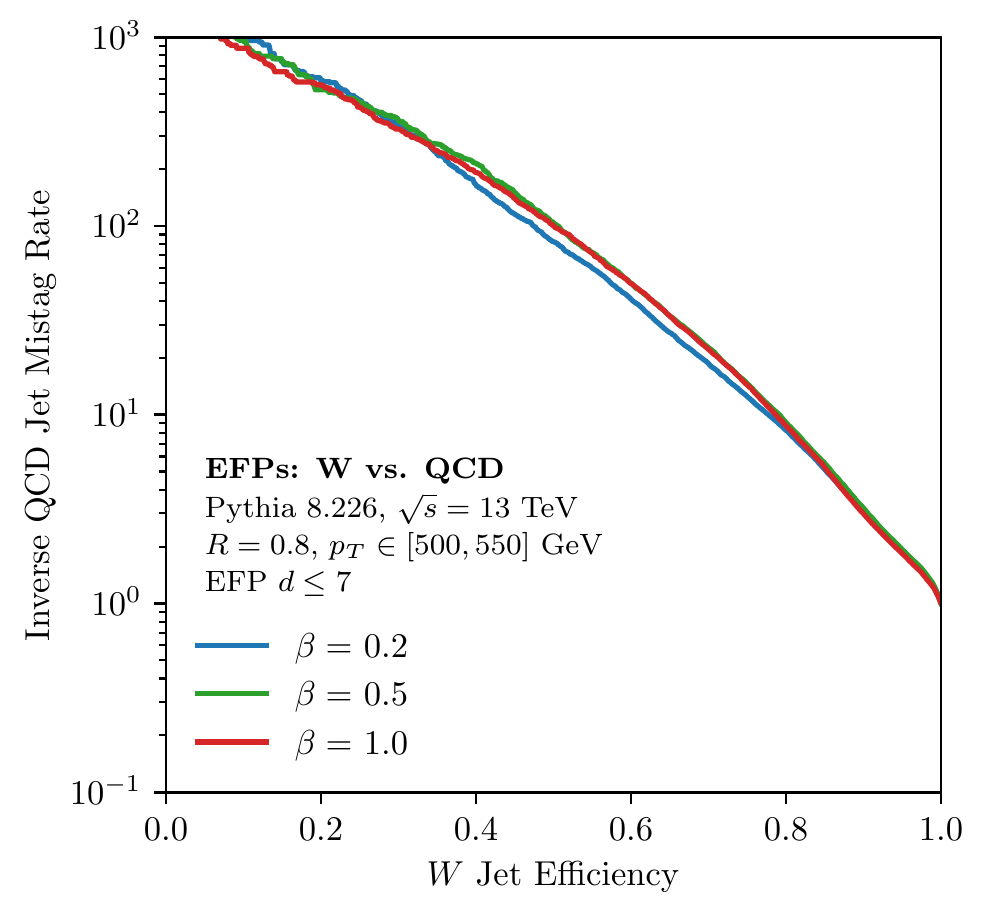}
\caption{Inverse ROC curves for linear $W$ tagging with the energy flow basis using different choices of angular exponent $\beta$ in \Eq{eq:hadronicmeasure}. Though the improvement is mild, $\beta=0.5$ shows the best overall performance. See \Fig{fig:appbetasweep} for the corresponding quark/gluon classification and top tagging results, where $\beta=0.5$ is also the best choice by a slight margin.}
\label{fig:Wbetasweep} 
\end{figure}

We begin by studying the performance for different choices of angular exponent $\beta$ in the default hadronic measure from \Eq{eq:hadronicmeasure}.
\Fig{fig:Wbetasweep} shows ROC curves for the choices of $\beta=0.2$, $\beta=0.5$, and $\beta=1$, using all \Bs with $d \le 7$.
The differences in performance are mild, but $\beta=0.5$ slightly improves the ROC curves for $W$ tagging, so we use $\beta = 0.5$ for the remainder of our studies.
The choice of $\beta = 0.5$ was also found to be optimal for the cases of quark/gluon and top tagging discussed in \App{app:moretagging}.

\begin{figure}[t]
\centering
\subfloat[]{\label{fig:Wefpsweep}\includegraphics[scale=.6]{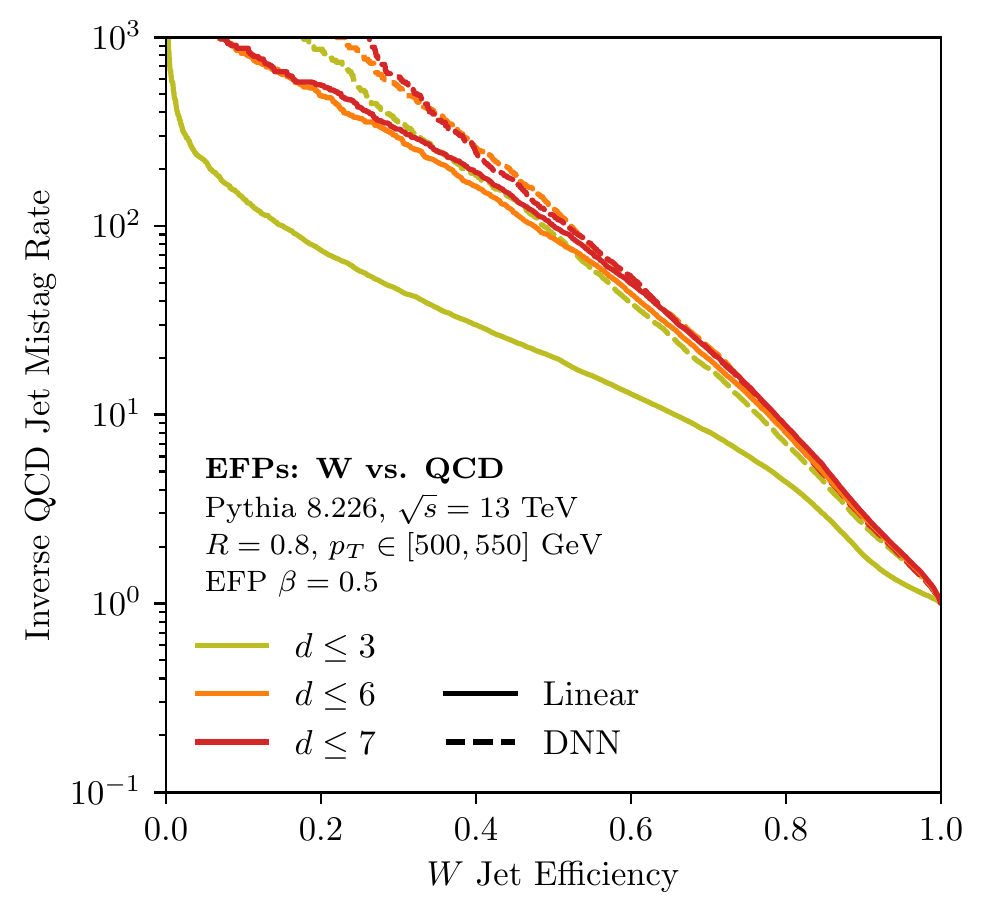}}
\subfloat[]{\label{fig:Wnsubsweep}\includegraphics[scale=.6]{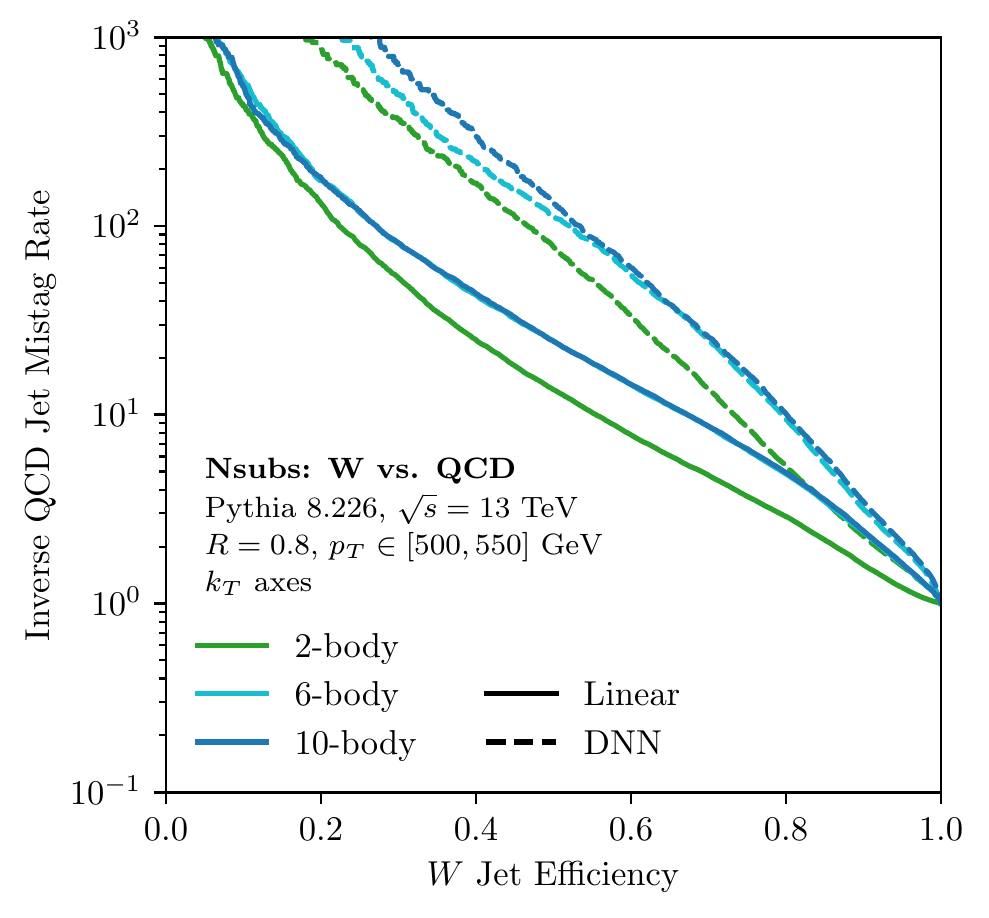}}
\caption{Inverse ROC curves for $W$ tagging with (a) the energy flow basis including degrees up to $d=7$ and (b) the $N$-subjettiness basis up to 10-body phase space information.  In both cases, we show the observables combined linearly (solid) and with a DNN (dashed).  The linear combinations of \Bs approach the nonlinear combinations, particularly for higher signal efficiencies, while the linear combinations of the $N$-subjettiness basis saturate well below the nonlinear combinations as the number of observables is increased. See \Fig{fig:appefpnsubsweep} for the corresponding quark/gluon classification and top tagging results.}
\label{fig:Wefpnsubsweep}
\end{figure}

Next, in \Fig{fig:Wefpsweep}, we test the linear spanning nature of the \Bs by comparing the ROC curves of the linear and nonlinear models trained on \Bs up to different $d$.
With linear regression, there is a large jump in performance in going from $d\le3$ (13 \Bs) to $d\le6$ (314 \Bs), and a slight increase in performance from $d\le6$ to $d\le7$ (1000 \Bs), indicating good convergence to the optimal IRC-safe observable for $W$ jet classification.
To avoid cluttering the plot, $d\le4$ and $d\le5$ are not shown in \Fig{fig:Wefpsweep}, but their ROC curves fall between those of $d\le3$ and $d\le6$, highlighting that the higher $d$ \Bs carry essential information for linear classification. 
By contrast, using nonlinear classification with a DNN, the \Bs performance with $d \le 3$ is already very good, since functions of the low $d$ \Bs can be combined in a nonlinear fashion to construct information contained in higher $d$ composite \Bs.
The linear and nonlinear performance is similar with the $d\le7$ \Bs for operating points of $\varepsilon_s\gtrsim0.5$, though the nonlinear DNN outperforms the linear classifier in the low signal efficiency region.
It should be noted that the linear classifier is not trained specifically for the low signal efficiency region and it may be possible that choosing a different hyperplane could boost performance there.
We leave to future work a more detailed investigation of optimizing the choice of linear classifier.

The performance of the $N$-subjettiness basis with both linear and nonlinear classifiers is shown in \Fig{fig:Wnsubsweep}.
For both linear classification and the DNN, performance appears to saturate with the 6-body (14 $\tau_N$s) phase space, with not much gained in going to 10-body (26 $\tau_N$s) phase space, except for a small increase in the low signal efficiency region for the DNN; we confirmed up to 30-body (86 $\tau_N$s) phase space that no change in ROC curves was observed compared to 10-body phase space.
As expected, there is relativity poor performance with linear classification even as the dimension of phase space is increased.
Classification with a DNN, though, shows an immense increase in performance over linear classification, as expected since the $N$-subjettiness basis is expected to nonlinearly capture all of the relevant IRC-safe kinematic information~\cite{Datta:2017rhs}.
This illustrates that nonlinear combinations of the $N$-subjettiness observables are crucial for extracting the full physics information.

The corresponding quark/gluon and top tagging plots in \Fig{fig:appefpnsubsweep} effectively tell the same story as \Fig{fig:Wefpnsubsweep}, robustly demonstrating the linear spanning nature of the \Bs used for classification across a wide variety of kinematic configurations.
As a side note, in \App{app:moretagging} there are sometimes cases where a linear combination of \Bs yields \emph{improved} performance compared to a DNN on the same inputs, particularly at medium to high signal efficiencies.
Since even a one-node DNN should theoretically be able to learn the linear combination of \Bs learned by the linear classifier, regimes where the linear classifier outperforms the DNN demonstrate the inherent difficulty of training complex multivariate models.

\begin{figure}[t]
\centering
\includegraphics[scale=.6]{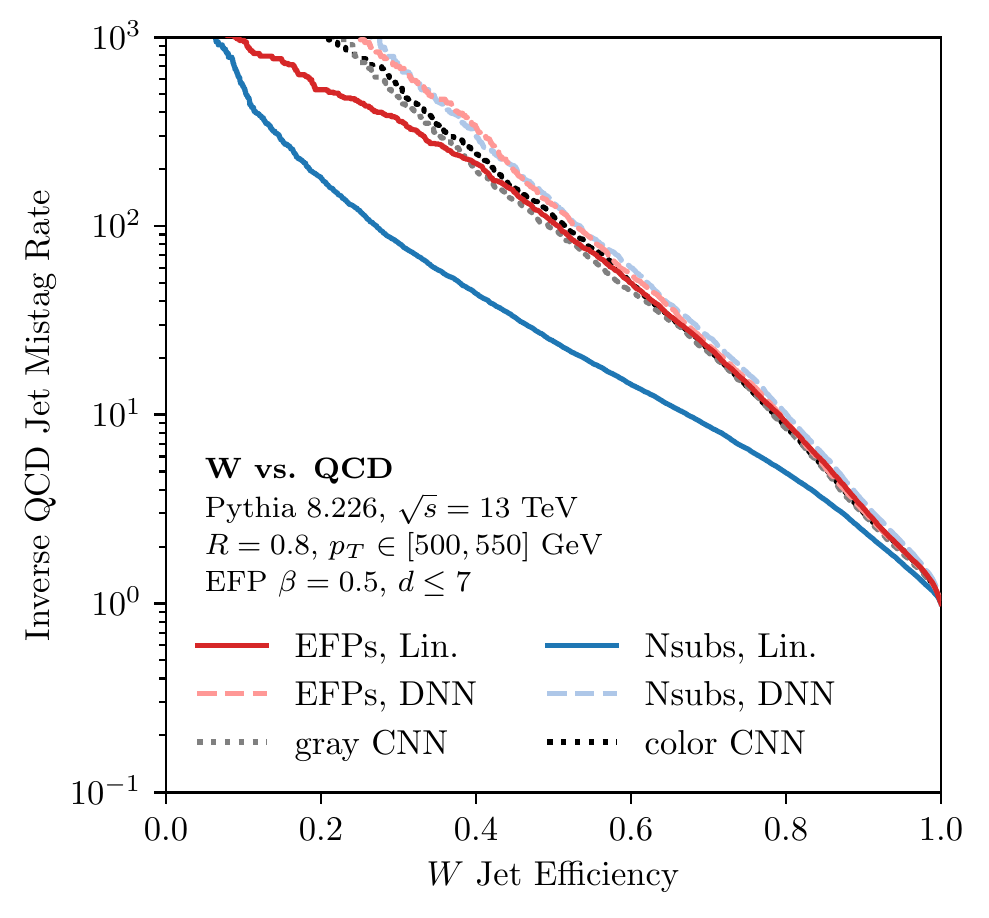}
\caption{Inverse ROC curves for $W$ tagging comparing six different methods:  linear and DNN classification with the energy flow basis up to $d\le 7$, linear and DNN classification with the $N$-subjettiness basis up to 10-body phase space, and grayscale and color jet images with CNNs. The most evident gap is between the linearly-combined $N$-subjettiness basis and the remaining curves, which achieve similar classification performance for medium and high signal efficiencies. See \Fig{fig:apptagcomp} for the corresponding quark/gluon classification and top tagging results.}
\label{fig:Wtagcomp}
\end{figure}

In \Fig{fig:Wtagcomp} we directly compare the \B classification power against the $N$-subjettiness basis and the 1-channel (``grayscale'') and 2-channel (``color'') CNNs. 
For operating points with $\varepsilon_s\gtrsim0.5$, all methods except the linear $N$-subjettiness classifier show essentially the same performance.
The worse performance of the linear \B classifier at low signal efficiencies is expected, since the Fisher linear discriminant is not optimized for that regime.
Overall, it is remarkable that similar classification performance can be achieved with these three very different learning paradigms, especially considering that the DNNs and grayscale CNN implicitly, and the color CNN explicitly, have access to non-IRC-safe information (including Sudakov-safe combinations of the IRC-safe inputs).
This agreement gives evidence that the tagging techniques have approached a global bound on the maximum possible classification power achievable, at least in the context of parton shower simulations.

Once again, the analogous quark/gluon and top tagging plots, shown in \Fig{fig:apptagcomp}, show very similar behavior to the $W$ tagging case in \Fig{fig:Wtagcomp}.
Linear classification with the \Bs performs similarly to the DNNs and CNNs, tending to slightly outperform at high signal efficiencies and underperform at low signal efficiencies.
Ultimately, the choice of tagging method comes down to a trade off between the simplicity of the inputs and the simplicity of the training method, with the \Bs presently requiring more inputs than the $N$-subjettiness basis but with the benefit of using a linear model.
In the future, we plan to study ways of reducing the size of the \B basis by exploiting linear redundancies among the \Bs and using powerful linear methods to automatically select the most important observables for a given task.

\subsection{Opening the energy flow box}
\label{sec:openbox}

\begin{figure}
\centering
\subfloat[]{\label{fig:Wefpnsweep:a}\includegraphics[scale=.76]{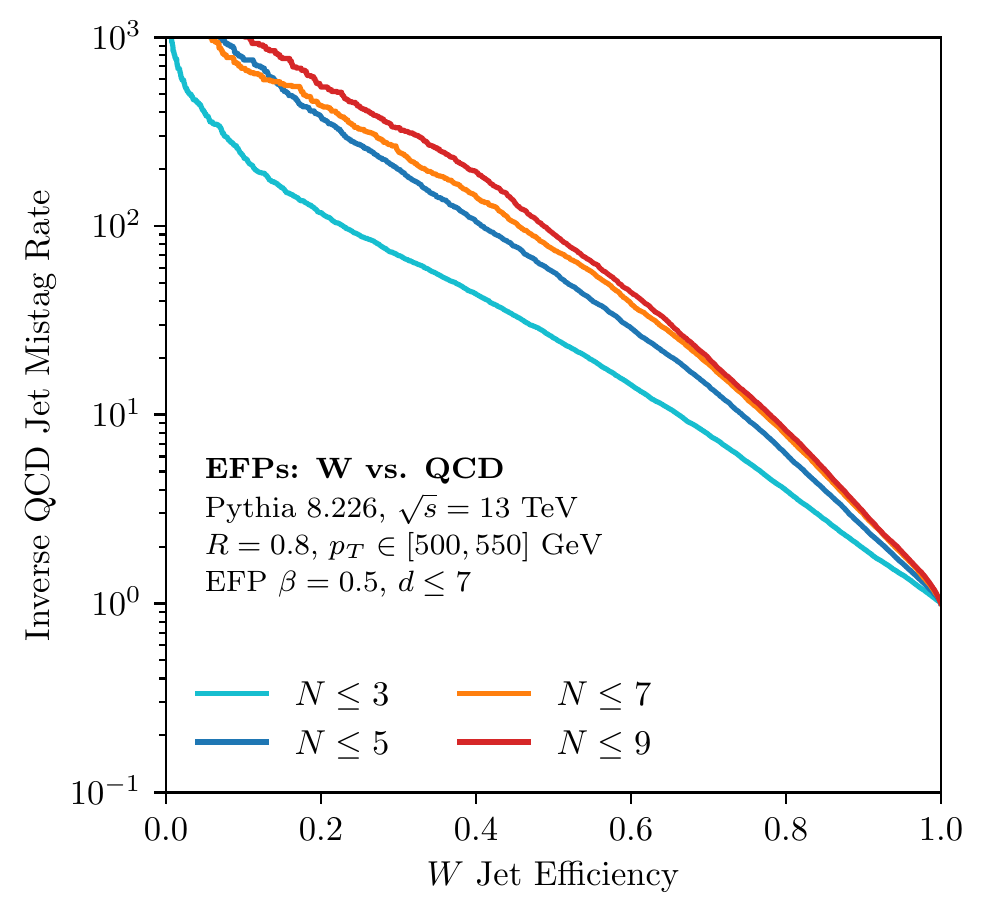}}
\subfloat[]{\label{fig:Wefpnsweep:b}\includegraphics[scale=.76]{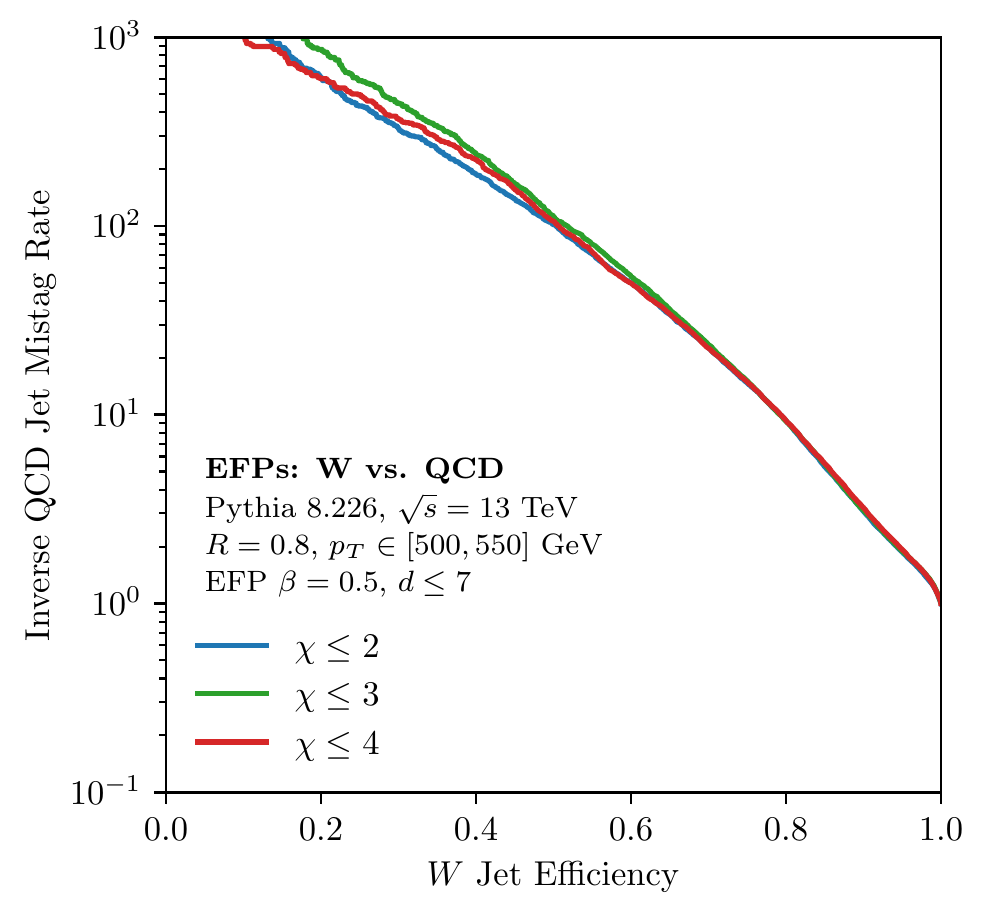}}
\caption{Inverse ROC curves for linear $W$ tagging with the energy flow basis with $d\le 7$, sweeping over (a) which $N$-point correlators and (b) observables of which VE computational complexity $\mathcal O(M^\chi)$ are included in the linear fit.  It is clear that important information is contained in the higher $N$-particle correlators, which can be included because the algorithm in \Sec{sec:complexity} evades the naive $\mathcal O(M^N)$ scaling.
See \Fig{fig:appnchisweep} for the corresponding quark/gluon classification and top tagging results.}
\label{fig:efpnsweep}
\end{figure}

As argued in \Eq{eq:linmodefp}, one of the main advantages of linear methods with the energy flow basis is that one can attempt to ``open the box'' and directly explore what features have been learned.
We leave to future work a full exploration of this possibility, but here we attempt to probe which topological structures within the \B basis carry the classification power for the different tagging problems.
Since we have shown that the \Bs with $d\le 7$ have sufficient classification power to qualitatively match the performance of alternative tagging methods, we will restrict to this set of observables.

In \Fig{fig:Wefpnsweep:a}, we vary the maximum number of vertices in the \B graphs, where the maximum $N$ is $14$ for $d \le 7$, finding that the performance roughly saturates at $N=9$, highlighting the importance of higher $N$ \Bs.
The algorithmic advances described in \Sec{sec:complexity} allow for the efficient computation of these higher $N$ \Bs, which have complexities as intractable as $\mathcal O(M^9)$ with the naive algorithm.
Additionally, note that nearly every \B (all except those corresponding to complete graphs) has a non-vanishing angular weighting function, which is a new feature compared to the ECFs and ECFGs (see \Sec{subsec:goingbeyond}).
In \Fig{fig:Wefpnsweep:b}, we vary the maximum computational complexity $\chi$ of the \B graphs, where the maximum $\chi$ is $4$ for $d \le 7$.
Remarkably, the full performance of linear classification with the $d\le7$ \Bs can be obtained with merely those observables calculable in $\mathcal O(M^2)$ with VE.
Thus, fortuitously for the purposes of jet tagging, it seems that restricting to the most efficiently computable \Bs (in the VE paradigm) is sufficient for extracting the near-optimal IRC-safe observable for jet classification.
Similar results hold for quark/gluon classification and top tagging, shown in \Fig{fig:appnchisweep}.

\section{Conclusions}
\label{sec:conclusion}

In this chapter, we have introduced the \Bs, which linearly span the space of IRC-safe observables.
The core argument, presented in \Sec{sec:basis}, is that one can systematically expand an arbitrary IRC-safe observable in terms of energies and angles and read off the unique resulting analytic structures.
This expansion yields a new way to understand the importance of $C$-correlators \cite{Tkachov:1995kk,Sveshnikov:1995vi,Cherzor:1997ak,Tkachov:1999py} for IRC safety, and it enables a powerful graph-theoretic representation of the various angular structures.
The multigraph correspondence makes manifest a more efficient algorithm than the naive $\mathcal O(M^N)$ one for computing \Bs, overcoming a primary obstacle to exploring higher-$N$ multiparticle correlators for jet substructure.

To demonstrate the power of the energy flow basis, we performed a variety of representative regression and classification tasks for jet substructure.
Crucially, linear methods were sufficient to achieve good performance with the \Bs.
As a not-quite apples-to-apples comparison in three representative jet tagging applications, linear classification with 1000 \Bs achieved comparable performance to a CNN acting on a jet image with $33 \times 33 = 1089$ pixels.
Because of the wide variety of linear learning methods available~\cite{bishop2006pattern}, we expect that the \Bs will be a useful starting point to explore more  applications in jet substructure and potentially elsewhere in collider physics.

There are many possible refinements and extensions to the energy flow basis.
In this chapter, we truncated the \Bs at a fixed maximum degree $d$; alternatively, one could truncate the prime \Bs at a fixed $d$ and compute all composite \Bs up to a specified cutoff.
Since the EFPs yield an overcomplete basis, it could be valuable to cull the list of required multigraphs.
A similar problem of overcompleteness was solved for kinematic polynomial rings in \Ref{Henning:2017fpj}, and that strategy may be relevant for EFPs with a suitable choice of measure.
In the other direction, it may be valuable to make the energy flow basis even more redundant by including EFPs with multiple measures.
With a vastly overcomplete basis, one could use techniques like lasso regression~\cite{tibshirani1996regression} to zero out unnecessary terms.
While we have restricted our attention to IRC-safe observables, it would be straightforward to relax the restriction to just infrared safety.
In particular, the set of IR-safe (but C-unsafe) functions in \Eq{eq:Cexample} can be expanded into multigraphs that have an extra integer decoration on each vertex to indicate the energy scaling.
Finally, the EFPs are based on an expansion in pairwise angles, but one could explore alternative angular expansions in terms of single particle directions or multiparticle factors.

To gain some perspective, we find it useful to discuss the \Bs in the broader context of machine learning for jet substructure.
Over the past few years, there has been a surge of interest in using powerful tools from machine learning to learn useful observables from low-level or high-level representations of a jet~\cite{Cogan:2014oua,deOliveira:2015xxd,Komiske:2016rsd,Almeida:2015jua,Baldi:2016fql,Kasieczka:2017nvn,Pearkes:2017hku,Butter:2017cot,Aguilar-Saavedra:2017rzt,Guest:2016iqz,Louppe:2017ipp,Datta:2017rhs, Baldi:2014kfa,Baldi:2014pta,Gallicchio:2010dq,Gallicchio:2012ez,Gallicchio:2011xq}.
The power of these machine learning methods is formidable, and techniques like neural networks and boosted decision trees have shifted the focus away from single- or few-variable jet substructure taggers to multivariate methods.
On the other hand, multivariate methods can sometimes obscure the specific physics information that the model learns, leading to recent efforts to ``open the box'' of machine learning tools~\cite{deOliveira:2015xxd,Butter:2017cot,Chang:2017kvc,Komiske:2017ubm,Metodiev:2018ftz}.
Even with an open box, though, theoretical calculations of multivariate distributions are impractical (if not impossible).
Furthermore, training multivariate models is often difficult, requiring large datasets, hyperparameter tuning, and preprocessing of the data.

The EFPs represent both a continuation of and a break from these machine learning trends.
The EFPs continue the trend from multivariate to hypervariate representations for jet information, with $\mathcal{O}(100)$ elements needed for effective regression and classification.  
On the other hand, the linear-spanning nature of the EFPs make it feasible to move away from ``black box'' nonlinear algorithms and return to simpler linear methods (explored previously for jet substructure in e.g.\ \cite{Thaler:2011gf,Cogan:2014oua}) without loss of generality.
Armed with the energy flow basis, there is a suite of powerful tools and ideas from linear regression and classification which can now be fully utilized for jet substructure applications, with simpler training processes compared to DNNs and stronger guarantees of optimal training convergence. 
Multivariate methods would ideally be trained directly on data to avoid relying on imperfect simulations, as discussed in \Ref{Komiske:2018oaa}.
The energy flow basis may be compelling for recent data-driven learning approaches~\cite{Komiske:2018oaa,Dery:2017fap, Metodiev:2017vrx} due to its completeness, the simplicity of linear learning algorithms, and a potentially lessened requirement on the size of training samples.

As with any jet observable, the impact of non-perturbative effects on the \Bs is important to understand.
Even with IRC safety, hadronization modifies the distributions predicted by pQCD and therefore complicates first-principles calculations. 
It would be interesting to see if the shape function formalism~\cite{Korchemsky:1999kt,Korchemsky:2000kp} could be used to predict the impact of non-perturbative contributions to \B distributions.
Alternatively, one standard tool that is used to mitigate non-perturbative effects is jet grooming~\cite{Butterworth:2008iy,Ellis:2009su,Ellis:2009me,Krohn:2009th,Dasgupta:2013ihk,Larkoski:2014wba}, which also simplifies first-principles calculations and allows for ``quark'' and ``gluon'' jets to be theoretically well-defined~\cite{Frye:2016aiz}.
We leave a detailed study of the effects of non-perturbative contributions and jet grooming on \Bs to future work.

Eventually, one hopes that the EFPs will be amenable to precision theoretical calculations of jet substructure (see e.g.\ \Refs{Feige:2012vc, Dasgupta:2013ihk,Dasgupta:2013via, Larkoski:2014tva, Procura:2014cba,Larkoski:2015kga,Dasgupta:2015lxh,Frye:2016aiz,Dasgupta:2016ktv,Marzani:2017mva,Larkoski:2017cqq}).
This is by no means obvious, since generic EFPs have different power-counting structures from the ECFs~\cite{Larkoski:2013eya} or ECFGs~\cite{Moult:2016cvt}.
That said, phrasing jet substructure entirely in the language of energy flow observables and energy correlations may provide interesting new theoretical avenues to probe QCD, realizing the $C$-correlator vision of \Refs{Tkachov:1995kk,Sveshnikov:1995vi,Cherzor:1997ak,Tkachov:1999py}.
Most IRC-safe jet observables rely on particle-level definitions and calculations, but there has been theoretical interest in directly analyzing the correlations of energy flow in specific angular directions~\cite{basham1978energy,basham1979energy,Belitsky:2001ij}, particularly in the context of conformal field theory~\cite{Hofman:2008ar,Engelund:2012re,Zhiboedov:2013opa,Belitsky:2013xxa,Belitsky:2013bja}.
The energy flow basis is a step towards connecting the particle-level and energy-correlation pictures, and one could even imagine that the energy flow logic could be applied directly at the path integral level.
Ultimately, the structure of the EFPs is a direct consequence of IRC safety, resulting in a practical tool for jet substructure at colliders as well as a new way of thinking about the space of observables more generally.

\chapter{A Definition of Particle Flavor from Observables}

\section{Introduction}
\label{sec:intro}

Quarks and gluons are fundamental, color-charged particles that are copiously produced at colliders like the Large Hadron Collider (LHC).
Despite their ubiquity, these high-energy quarks and gluons are never observed directly.
Instead, they fragment and hadronize into sprays of color-neutral hadrons, known as \emph{jets}, via quantum chromodynamics (QCD).
As the majority of jets originate from light (up, down, strange) quarks or gluons, a firm understanding of quark and gluon jets is important to many analyses at the LHC.
There has been tremendous recent theoretical and experimental progress in analyzing jets and jet substructure~\cite{Seymour:1991cb,Seymour:1993mx,Butterworth:2002tt,Butterworth:2007ke,Butterworth:2008iy,Abdesselam:2010pt,Altheimer:2012mn,Altheimer:2013yza,Adams:2015hiv,Larkoski:2017jix,Asquith:2018igt}, with a variety of observables~\cite{Berger:2003iw,Almeida:2008yp,Ellis:2010rwa,Thaler:2010tr,Thaler:2011gf,Krohn:2012fg,Larkoski:2013eya,Larkoski:2014uqa,Larkoski:2014pca,Moult:2016cvt,Komiske:2017aww} and algorithms~\cite{Krohn:2009th,Ellis:2009me,Ellis:2009su,Dasgupta:2013ihk,Larkoski:2014wba} developed to expose and probe the underlying physics.
Despite decades of using the notions of ``quark'' and ``gluon'' jets \cite{Nilles:1980ys,Jones:1988ay,Fodor:1989ir,Jones:1990rz,Lonnblad:1990qp,Pumplin:1991kc,Gallicchio:2011xq,Gallicchio:2012ez, Bhattacherjee:2015psa,FerreiradeLima:2016gcz,Bhattacherjee:2016bpy,Komiske:2016rsd,Davighi:2017hok,Cheng:2017rdo,Sakaki:2018opq}, a precise and practical hadron-level definition of jet flavor has not been formulated.

Even setting aside the issue of jet flavor, ambiguity is already present whenever one wants to identify jets in an event~\cite{Salam:2009jx}.
Nonetheless, jets can be made perfectly well-defined: any hadron-level algorithm for finding jets that is infrared and collinear (IRC) safe provides an operational jet definition that can be compared to perturbative predictions.
While different algorithms result in different jets, specifying a jet algorithm allows one to make headway into comparing theoretical calculations and experimental measurements.
Meanwhile, in the case of jet flavor, the lack of a precise, hadron-level definition of ``quark'' and ``gluon'' jets has artificially hindered progress by precluding separate comparisons of quark and gluon jets between theory and experiment.

Typical applications involving ``quark'' and ``gluon'' jets in practice often rely on ill-defined or unphysical parton-level information, such as from the event record of a parton shower event generator.
Progress has been made in providing sharp definitions at the parton-level~\cite{Banfi:2006hf,Buckley:2015gua}, in the context of factorization theorems~\cite{Gallicchio:2011xc,Frye:2016okc,Frye:2016aiz}, and at the conceptual level~\cite{Badger:2016bpw}, but an operational definition, to our knowledge, has never been developed (see \Ref{Gras:2017jty} for a review).
A quark/gluon jet definition%
\footnote{While in some contexts ``jet definition'' means a procedure for finding jets in an event, in this chapter we use ``quark/gluon jet definition'' to mean a definition of jet flavor.}
should ideally work at the hadron level, regardless of whether a rigorous factorization theorem exists, and be practically implementable in both theoretical and experimental settings.

In this chapter, we develop an operational definition of quark and gluon jets that is formulated solely in terms of experimentally-accessible quantities, does not rely on specific theoretical constructs such as factorization theorems, and can be readily implemented in a realistic context.
Intuitively, we define quark and gluon jets as the ``pure'' categories that emerge from two different jet samples.
Our definition operates at the aggregate level, avoiding altogether the troublesome and potentially impossible notion of a per-jet flavor label in favor of quantifying quark and gluon jets by their distributions.

Specifically, given two jet samples $M_1$ and $M_2$ (e.g.\ $Z+$jet and dijet) in a narrow transverse momentum $(p_T)$ bin, with $M_1$ taken to be more ``quark''-like, and a jet substructure feature space $\mathcal O$, we define quark ($q$) and gluon ($g$) jet distributions in the following way:
\begin{align}
\label{eq:opdefintro}
&p_q(\O)\equiv\frac{p_{M_1}(\O)-\kappa_{12}\,p_{M_2}(\O)}{1-\kappa_{12}},&&p_g(\O)\equiv\frac{p_{M_2}(\O)-\kappa_{21}\,p_{M_1}(\O)}{1-\kappa_{21}},
\end{align}
where $\kappa_{12}$ and $\kappa_{21}$ are known as \emph{reducibility factors} and are directly obtainable from the probability distributions $p_{M_1}(\O)$ and $p_{M_2}(\O)$.
The reducibility factors are defined as:
\begin{align}\label{eq:opkappaintro}
&\kappa_{12} \equiv \min_\mathcal O \frac{p_{M_1}(\mathcal O)}{p_{M_2}(\mathcal O)}, &&\kappa_{21} \equiv \min_\mathcal O \frac{p_{M_2}(\mathcal O)}{p_{M_1}(\mathcal O)}.
\end{align}
The reducibility factors in \Eq{eq:opkappaintro} identify the most $M_1$-like and $M_2$-like regions of the substructure phase space by extremizing the sample likelihood ratio.
We take these phase space regions to \emph{define} what it means to be quark-like and gluon-like.
The subtractions in \Eq{eq:opdefintro} then proceed to ``demix'' the two sample distributions as if they were statistical mixtures.
The quark and gluon distributions are defined solely in terms of hadronic fiducial cross section measurements of the two samples, ensuring that our definition is manifestly fully data-driven and non-circular.
This definition relies on a jet algorithm to define the jets in the jet samples, which also allows for further hadron-level processing, such as jet grooming techniques~\cite{Krohn:2009th,Ellis:2009me,Ellis:2009su,Dasgupta:2013ihk,Larkoski:2014wba}, to be folded directly into the quark/gluon jet definition.

One main goal of this chapter is to argue that our operational definition, combined with existing tools, provides a way to obtain information about the likelihood, quark fractions, and quark and gluon distributions in a fully data-driven way, without reference to unphysical notions such as generator labels.
The concepts appearing in our definition are directly related to methods already in use in experimental quark/gluon jet analysis efforts~\cite{CMS-PAS-JME-13-002,Aad:2014gea,Aad:2016oit,CMS-DP-2016-070,ATL-PHYS-PUB-2017-009,Sirunyan:2018asm}.
Quark-gluon likelihood ratios, obtained from parton shower generators, have been implemented by both ATLAS and CMS as optimal discriminants in low-dimensional feature spaces.
Quark fractions, obtained from event generators, for several jet samples have successfully allowed for separate determination of quark and gluon jet properties by solving linear equations.
These analyses already use a statistical-mixture picture of quark and gluon jets, which is a direct consequence of our definition.


Many physics analyses at the LHC would benefit from a clear definition of quark and gluon jets that allows for unambiguous extraction of separate quark and gluon jet distributions and fractions.
Fully data-driven quark/gluon jet taggers have the potential to increase the sensitivity of a variety of new physics searches~\cite{FerreiradeLima:2016gcz,Bhattacherjee:2016bpy}, and related ideas have been developed for model-independent searches for new physics~\cite{Collins:2018epr}.
Experimentally measuring separate quark and gluon distributions of jet observables would significantly improve attempts to extract the strong coupling constant from jet substructure~\cite{Bendavid:2018nar} and to constrain parton shower event generators~\cite{Reichelt:2017hts,Gras:2017jty}.
Extracting data-driven fractions of quark and gluon jets could improve the determination of parton distribution functions and allow for separate measurement of quark and gluon cross sections.
These ideas may also be relevant in the context of heavy ion collisions, where quarks and gluons are expected to be modified differently by the medium and probing the separate modifications to quark and gluon jets would be of significant interest.


We now give a brief summary of the rest of this chapter.
In \Sec{sec:def}, we provide a self-contained overview, motivation, and exploration of our quark/gluon jet definition.
We discuss recent work in \Ref{Gras:2017jty} that developed a ``conceptual'' definition of quark/gluon jets, falling short of providing a full definition that can be reliably used in practice, but highlighting the key elements required of a sensible quark/gluon jet definition.
We then develop the intuition and mathematical tools necessary to construct our operational definition, which satisfies the core conceptual principles while being precise and practically implementable.
After stating our operational definition, we examine its physical and statistical properties in detail.
An exploration of the definition in the context of simple jet substructure observables at leading-logarithmic accuracy is left to \App{sec:explore}.

In \Sec{sec:topicwola}, we discuss how our quark/gluon jet definition benefits from, and provides a foundation for, recent work on data-driven machine learning for jet physics.
The classification without labels (CWoLa) paradigm~\cite{Metodiev:2017vrx} for training classifiers on mixed samples can be used to approximate the mixed-sample likelihood ratio, a key part of implementing our definition.
The jet topics framework~\cite{Metodiev:2018ftz} extracts underlying mutually irreducible distributions from mixture histograms, yielding a practical method to obtain the reducibility factors in \Eq{eq:opkappaintro}.
Using jet topics with the approximated mixed-sample likelihood ratio, obtained from the data via CWoLa, allows for more robust fraction and distribution extraction.
With quark fractions, obtained from the data via jet topics, CWoLa classifiers can be (self-)calibrated in a fully data-driven way.
More broadly, the assumptions required for CWoLa and jet topics---that QCD jet samples are statistical mixtures of mutually irreducible quark and gluon jets---are satisfied by construction with our definition.

In \Sec{sec:qgex}, we showcase a practical implementation of our definition using jet samples from two different processes: $Z$+jet and dijets.
Using six trained models detailed in \App{sec:train}, we apply the procedure outlined in \Sec{sec:topicwola} to extract quark fractions by combining the CWoLa and jet topics methods, finding more robust performance than when using single jet substructure observables.
With the reducibility factors and quark fractions in hand, we extract separate quark and gluon distributions for a variety of jet substructure observables, even those that do not exhibit mutual irreducibility.
We compare the results of using our data-driven definition of quark and gluon jets with a per-jet \pythia-parton definition, finding qualitative and quantitative agreement between the two.
The potential to self-calibrate CWoLa classifiers is also shown with an explicit example.
While our studies are based on parton-shower samples, all of these analyses can be performed in data with the experimental tools already developed for quark and gluon jet physics at the LHC.

We present our conclusions in \Sec{sec:conc}, discussing potential new applications made feasible by this work.
Possible future developments and extensions are highlighted.
%
\clearpage

\section{Defining quark and gluon jets}
\label{sec:def}

\subsection{Review of a conceptual quark/gluon jet definition}
\label{sec:conceptdef}

Due to the complicated radiative showering and fundamentally non-perturbative hadronization that occurs in the course of jets emerging from partons, there is no unambiguous definition of ``quark'' or ``gluon'' jets at the hadron-level.
Despite this challenge, the importance of a clear, well-defined, and practical definition of quark and gluon jets at modern colliders cannot be overstated.
In \Ref{Gras:2017jty}, a significant effort was made to summarize and comment on the concepts of ``quark jet'' and ``gluon jet''.
The authors of \Ref{Gras:2017jty} settled on the following statement as the best way to conceptually define quark jets (and, analogously, gluon jets):

\begin{conceptdef}
A phase space region (as defined by an unambiguous hadronic fiducial cross section measurement) that yields an enriched sample of quarks (as interpreted by some suitable, though fundamentally ambiguous criterion).
\end{conceptdef}

This definition is attractive for numerous reasons.
First, it is explicitly tied to hadronic final states, avoiding dependence, for example, on the unphysical event record of a parton shower generator.
Further, it is specific to the context of a particular measurement and is thus defined regardless of whether the observable and processes in question have rigorous factorization theorems.
Finally, its goal is to tag a region of phase space as quark- or gluon-like rather than to specify a per-jet truth definition of quark and gluon jets.
The main difficulty with this conceptual definition, as noted in \Ref{Gras:2017jty}, is determining the criterion that corresponds to successful quark or gluon jet enrichment.

Despite its attractive qualities, without a practical proposal for implementing this conceptual definition on data, the case studies in \Ref{Gras:2017jty} operationally fell back on less well-defined definitions, such as using initiating parton information from a parton shower generator to tag a quark/gluon jet.
Further, the definition only tags specific regions of phase space as ``quark'' or ``gluon'', such as low or high values of some substructure observable, and provides no framework for discussing jet flavor outside of these regions.
To remedy this issue, we seek to upgrade the conceptual definition to an operational one by giving a concrete, data-driven method for optimally identifying quark- or gluon-enriched regions of phase space and obtaining full quark and gluon jet distributions.

\subsection{Motivating the operational definition}
\label{sec:motivation}

To motivate our definition, suppose that we have two QCD jet samples $M_1$ and $M_2$ in a narrow $p_T$ bin.
One of the mixed samples ($M_1$ without loss of generality) should be ``quark-enriched'' and the other ``gluon-enriched'' relative to each other according to some qualitative criterion.
\Ref{Gras:2017jty} took $M_1$ and $M_2$ to be, respectively, $Z$+jet and dijet samples, a case that we further investigate in \Sec{sec:qgex}.

Assume for now that $M_1$ and $M_2$ are statistical mixtures of quark and gluon jets---an assumption that will \emph{not} be made in our final definition.
Letting the quark fractions of the two mixtures be $f_1$ and $f_2$, the relationship between the distribution of substructure observables in  mixture $M_i$ in terms of the quark and gluon jet distributions is:
\begin{equation}
\label{eq:mixps}
p_{M_i}(\O)=f_i \, p_q(\O)+(1-f_i)\,p_g(\O),
\end{equation}
where the feature space $\O$ is, for our purposes, a set of jet substructure observables taken to be sufficiently rich to encode all relevant information about jet flavor.

\begin{figure}[t]
\centering
\includegraphics[scale=0.9]{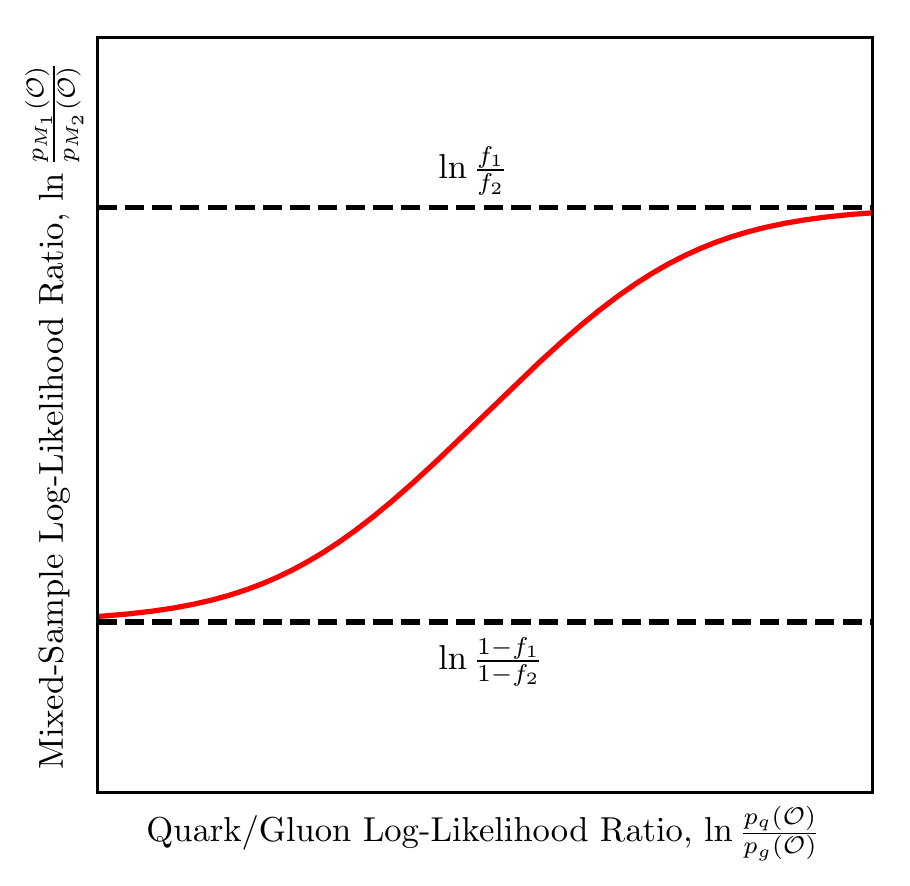}
\caption{
The monotonic relationship between the mixed-sample log-likelihood ratio and the quark-gluon log-likelihood ratio from \Eq{eq:mixnp} for  illustrative fraction values.
The relationship between the maximum and minimum values of the mixed-sample and quark/gluon log-likelihoods from \Eq{eq:kfs} is visually evident in that the red curve horizontally asymptotes to the two black dashed curves.
The plots are shown in terms of the logarithms of the likelihood ratios so that exchanging $M_1\leftrightarrow M_2$ or $q\leftrightarrow g$ simply corresponds to a reflection of the curve.} 
\label{fig:mixloglikes}
\end{figure}

Following the outline of the Conceptual Definition, we consider classification of quark and gluon jets and examine the relationship of this task with classification of one mixture from the other.
By the Neyman-Pearson lemma~\cite{NPlemma}, an optimal classifier for discriminating two classes is their likelihood ratio (or any monotonically-related quantity).
In the case of quark and gluon jets, the likelihood ratio is:
\begin{equation}
\label{eq:np}
L_{q/g}(\O)\equiv\frac{p_q(\O)}{p_g(\O)},
\end{equation}
and, similarly, the optimal classifier for discriminating between $M_1$ and $M_2$ is:
\begin{equation}
\label{eq:mixnp}
L_{M_1/M_2}(\O)\equiv\frac{p_{M_1}(\O)}{p_{M_2}(\O)}=\frac{f_1 \, L_{q/g}(\O)+(1-f_1)}{f_2 \, L_{q/g}(\O)+(1-f_2)}.
\end{equation}
It is easily verified that the mixed-sample likelihood ratio in \Eq{eq:mixnp} is a monotonic function of the quark-gluon likelihood ratio in \Eq{eq:np} as long as $f_1\neq f_2$ (see \Refs{blanchard2016classification,Metodiev:2017vrx}).
The relationship between the mixed-sample likelihood ratio and the quark-gluon likelihood ratio of \Eq{eq:mixnp} is depicted in \Fig{fig:mixloglikes}.
This cleanly demonstrates that the optimal mixed-sample classifier is also the optimal quark-gluon classifier.

Supposing that we can approximate the mixture likelihood ratio sufficiently well, we have distilled the (potentially huge) substructure feature space to a single number which is provably optimal for identifying quark- and gluon-enriched phase space regions.
However, we still lack a procedure for actually identifying the enriched regions; we solely know that they are given by some cut on $L_{q/g}(\O)$, or equivalently a cut on $L_{M_1/M_2}(\O)$.
The key insight for moving closer toward an operational definition is that $L_{q/g}(\O)$, being the optimal discriminant of quark and gluon jets, can be immediately used to identify the most quark-enriched (gluon-enriched) regions as those where $L_{q/g}(\O)$ is at its maximum (minimum). 
In the case that we can find regions of phase space $\mathcal O_q$ and $\mathcal O_g$ where quark and gluon jets respectively are pure, we have that $L_{q/g}(\O_g)=0$ and $L_{g/q}(\O_q)=0$ and we say that the quark and gluon categories are \emph{mutually irreducible} (see \Ref{blanchard2016classification,Metodiev:2018ftz}).

The extrema of the quark/gluon likelihood ratio $L_{q/g}$, corresponding to the enriched regions of phase space, are naturally related to the extrema of the mixture likelihood ratio $L_{M_1/M_2}$.
To this end, it is helpful to define the \emph{reducibility factor} between distributions $A$ and $B$, $\kappa_{AB}$, as:
\begin{equation}
\label{eq:kappa}
\kappa_{AB}\equiv\min_{\O}\frac{p_A(\O)}{p_B(\O)},
\end{equation}
which is the minimum (or more precisely, the infimum) of the likelihood ratio of $A$ and $B$.
Supposing that quarks and gluons are mutually irreducible in the feature space $\O$, the reducibility factors of quark jets to gluon jets (and vice versa) vanish:
\begin{equation}
\label{eq:qgks}
\textbf{Quark and Gluon Jet Mutual Irreducibility}:\quad\quad\kappa_{qg}=0,\quad\quad\kappa_{gq}=0.
\end{equation}

We now show how, assuming quark/gluon mutual irreducibility, the mixture reducibility factors can be related to mixture fractions.
The reducibility factors of the mixed samples can be written down by treating them as mixtures of quarks and gluons as in \Eq{eq:mixps}:
\begin{equation}
\label{eq:mixkappas}
\kappa_{M_iM_j}=\min_{\O} L_{M_i/M_j}(\O)=\min_{\O}\frac{f_i \, L_{q/g}(\O)+(1-f_i)}{f_j \, L_{q/g}(\O)+(1-f_j)}.
\end{equation}
Using our assumptions that $M_1$ is quark-enriched relative to $M_2$, we can write \Eq{eq:mixkappas} as a relation between the mixed-sample reducibility factors and the quark/gluon reducibility factors:
\begin{equation}
\label{eq:mixqgkappas}
\kappa_{M_1M_2}=\frac{f_1\,\kappa_{qg}+(1-f_1)}{f_2\,\kappa_{qg}+(1-f_2)},\quad\quad \kappa_{M_2M_1}=\frac{f_2+(1-f_2)\, \kappa_{gq}}{f_1+(1-f_1)\, \kappa_{gq}},
\end{equation}
where the monotonicity of $L_{M_i/M_j}(\O)$ with $L_{q/g}(\O)$ has been used to push the minimum operation onto the quark-gluon likelihood ratio in \Eq{eq:mixkappas}.
If quarks and gluons are mutually irreducible, we can plug \Eq{eq:qgks} into \Eq{eq:mixqgkappas} to find the reducibility factors of the mixtures:\footnote{An analogous analysis carries through even if non-zero reducibility factors $\kappa_{qg}$ and $\kappa_{gq}$ are specified.}
\begin{equation}
\label{eq:kfs}
\kappa_{12}\equiv\kappa_{M_1M_2}=\frac{1-f_1}{1-f_2},\quad\quad\kappa_{21}\equiv\kappa_{M_2M_1}=\frac{f_2}{f_1}.
\end{equation}
\Fig{fig:mixloglikes} demonstrates that \Eq{eq:mixkappas} defines the asymptotic behavior of the mixed-sample log-likelihood ratio.

Combining the reducibility factors of \Eq{eq:kfs} with the mixture relationship of \Eq{eq:mixps}, we can solve for the underlying quark and gluon jet distributions solely in terms of the well-defined mixture distributions $p_{M_i}(\mathcal O)$ and mixture reducibility factors $\kappa_{ij}$:
\begin{equation}
\label{eq:pqgks}
p_q(\O)=\frac{p_{M_1}(\O)-\kappa_{12}\,p_{M_2}(\O)}{1-\kappa_{12}},\quad\quad p_g(\O)=\frac{p_{M_2}(\O)-\kappa_{21}\,p_{M_1}(\O)}{1-\kappa_{21}}.
\end{equation}
Remarkably, \Eq{eq:pqgks} exposes the underlying quark and gluon jet distributions in terms of experimentally well-defined quantities such as the distribution of jets in mixed samples and their reducibility factors.
Notice also that the quark and gluon distributions each depend on only one of the two mixed-sample reducibility factors. 
Thus, even if only one reducibility factor can be reliably extracted, the corresponding quark or gluon jet distribution can nevertheless be obtained.

Here, we have made several simplifying assumptions, namely that quark and gluon jets can be made well-defined, that $M_1$ and $M_2$ are statistical mixtures of quark and gluon jets, and that quark and gluon jets are mutually irreducible in the feature space $\O$.
\Eq{eq:pqgks} then followed as a consequence, demonstrating that, under these assumptions, it is possible to get access to pure quark and gluon distributions.
What if, on the contrary, we do not make these assumptions, while also requiring that our definition of quark and gluon jets not be circular?
We now proceed to thoroughly explore this idea.

\subsection{An operational definition of quark and gluon jets}
\label{sec:opdef}

We now provide our \emph{operational definition} of quark and gluon jets that builds upon the Conceptual Definition in \Sec{sec:conceptdef} but can be used for practical applications at the LHC and future colliders.
We begin by stating the definition in terms of the notation developed in \Sec{sec:motivation}, and then we proceed to a detailed discussion of its features.

In the absence of any certainty about the underlying structure of samples $M_1$ and $M_2$, we choose to start at the end of \Sec{sec:motivation}, letting \Eq{eq:pqgks} provide a fully-operational definition of quark and gluon jets in terms of experimentally well-defined quantities:
\begin{framed}
\begin{opdef}
Given two samples $M_1$ and $M_2$ of QCD jets at a fixed $p_T$ obtained by a suitable jet-finding procedure, taking $M_1$ to be ``quark-enriched'' compared to $M_2$, and a jet substructure feature space $\O$, the quark and gluon jet distributions are defined to be:
\begin{align}
\label{eq:opdef}
p_q(\O)\equiv\frac{p_{M_1}(\O)-\kappa_{12}\,p_{M_2}(\O)}{1-\kappa_{12}},&&&p_g(\O)\equiv\frac{p_{M_2}(\O)-\kappa_{21}\,p_{M_1}(\O)}{1-\kappa_{21}},
\end{align}
where $\kappa_{12}$, $\kappa_{21}$, $p_{M_1}(\O)$, and $p_{M_2}(\O)$ are directly obtainable from $M_1$ and $M_2$.
\end{opdef}
\end{framed}

There are two immediate points to note about the Operational Definition.
First, it does not attempt to define quark and gluon jets at the level of individual jets, but rather it defines them in aggregate as two well-defined probability distributions.
This is in keeping with the spirit of the Conceptual Definition in \Sec{sec:conceptdef}, which sought to identify enriched regions of phase space rather than to determine a per-jet truth label.
It is also in concert with the basic construction of quantum field theory, which only provides theoretical access to distributional quantities such as cross sections rather than making predictions for individual events.\footnote{Note that (non-deterministic) per-jet labels can be obtained from this definition if needed. For a jet with observable value $O$, one can assign it a ``quark'' label with probability $f\, p_q(O)/(f\, p_q(O) + (1-f)\, p_g(O))$ by using the extracted quark and gluon distributions, $p_q$ and $p_g$, and extracted quark fraction $f$ of the sample. These labels are universal if the observable is monotonically related to the likelihood ratio.}

Second, the Operational Definition does not rely on assumptions of mutual irreducibility of quarks and gluons or the factorization of jet samples as mixtures, instead turning them into derived properties of the definition, as we show below.
In the limit where factorization holds and quarks and gluons are mutually irreducible in the feature space $\mathcal O$, the Operational Definition returns precisely the quark and gluon jets which make sense in that context.
Outside of these potentially-restrictive limits, the definition nonetheless returns two well-defined categories which can be fairly called quark and gluon jets.
The Operational Definition essentially takes the vague notion of ``quark-like'' from the Conceptual Definition and injects mathematical substance by specifying how to extract the quark and gluon distributions.

With the Operational Definition in hand, we now turn the reasoning of \Sec{sec:motivation} on its head to \emph{derive} the mutual irreducibility of quarks and gluons and the mixture nature of the two jet samples $M_1$ and $M_2$.
Using the quark/gluon jet definition in \Eq{eq:opdef}, we can write down the quark/gluon reducibility factors as:
\begin{equation}
\label{eq:derivekqgs}
\kappa_{qg}=\min_{\O}L_{q/g}(\O)=\min_{\O}\frac{(1-\kappa_{21})(L_{M_1/M_2}(\O)-\kappa_{12})}{(1-\kappa_{12})(1-\kappa_{21}L_{M_1/M_2}(\O))}=0,
\end{equation}
where we have used the monotonicity of $L_{q/g}(\mathcal O)$ in $L_{M_1/M_2}(\O)$ and the definition of $\kappa_{12}$ to see that the numerator vanishes while the denominator is non-zero.
An analogous calculation shows that $\kappa_{gq}=0$, and therefore that the distributions of quark and gluon jets as defined by the Operational Definition are always mutually irreducible.

Next, we demonstrate that $M_1$ and $M_2$ are mixtures of the defined quark and gluon jet distributions.
Solving \Eq{eq:opdef} for the distributions of $M_1$ and $M_2$ in terms of the quark/gluon distributions yields:
\begin{align}
\label{eq:solve4m1}
&p_{M_1}(\O)=f_1\, p_q(\O)+(1-f_1)\, p_g(\O),& f_1&\equiv\frac{1-\kappa_{12}}{1-\kappa_{12}\kappa_{21}},\\
&p_{M_2}(\O)=f_2 \, p_q(\O)+(1-f_2)\, p_g(\O),& f_2&\equiv\frac{\kappa_{21}(1-\kappa_{12})}{1-\kappa_{12}\kappa_{21}},
\label{eq:solve4m2}
\end{align}
where we have introduced two numbers $f_1$ and $f_2$ such that $f_1,f_2\in[0,1]$.
We see from \Eqs{eq:solve4m1}{eq:solve4m2} that under the Operational Definition, $M_1$ and $M_2$ have the interpretation of being statistical mixtures of quark and gluon jets where the quark fractions of each sample are $f_1$ and $f_2$, respectively.
Note that while this was entirely anticipated, given the motivation provided in \Sec{sec:motivation}, the Operational Definition manages to avoid the circular reasoning of that section, where a well-defined notion of quark and gluon jets and the statistical-mixture nature of $M_1$ and $M_2$ were assumed to exist before we were able to specify a rigorous procedure to determine them.

There are several additional properties of the Operational Definition worth noting.
First, any additional preprocessing of the jets in $M_1$ and $M_2$ which is operationally defined at the hadron level, such as jet grooming, can be folded into the jet-finding procedure and thus incorporated directly into our definition.
Second, which of $M_1$ or $M_2$ is more ``quark-enriched'' only serves to label which of the resulting distributions is ``quark'' and which is ``gluon'' and does not change the distributions which are produced by this definition.
Finally, while \Eq{eq:opdef} implies the vanishing of the quark/gluon reducibility factors, if a different, non-zero quark/gluon reducibility factor is desired a priori, then the definition may be suitably modified to accommodate those non-zero values.
Thus, the assertion of quark-gluon mutual irreducibility, which is supported by evidence from case studies, can be relaxed to any specified quark/gluon reducibility factors which may then be thought of as inputs to the definition.

In \Sec{sec:topicwola}, we connect the Operational Definition to machinery that has already been developed in the jet substructure and statistical literature, finding that the tools needed to implement the Operational Definition, true to the name, are readily available.
In \App{sec:explore}, we gain some additional insight into the Operational Definition by theoretically exploring it with simple jet substructure observables in a tractable limit of perturbative QCD.

\section{Data-driven jet taggers and topics}
\label{sec:topicwola}

In this section, we connect our Operational Definition of quark and gluon jets to recent developments at the intersection of jet physics and statistical methods, particularly the data-driven paradigms of CWoLa~\cite{Metodiev:2017vrx} and jet topics~\cite{Metodiev:2018ftz}.
CWoLa provides a method to approximate the quark/gluon likelihood ratio by distilling the available information in a huge feature space of jet substructure observables~\cite{Metodiev:2017vrx, Cohen:2017exh, Komiske:2018oaa}.
The jet topics method was introduced and developed in \Ref{Metodiev:2018ftz}, where it was shown that statistical methods could be used to ``disentangle'' quark and gluon jets from mixtures.
We will show how these methods can be combined to form a concrete implementation of the Operational Definition.

\subsection{Classification without labels: Training classifiers on collider data}
\label{sec:cwola}

Recently, there has been an effort to train physics classifiers directly on data despite the lack of labeled truth information, going under the broad term of \emph{weak supervision}.
\Ref{Dery:2017fap} was the first to apply weak supervision methods in a particle physics context, showing that given mixed samples with known signal fractions, a quark/gluon classifier on a few high-level inputs could be trained without access to per-jet truth labels, a paradigm termed learning from label proportions (LLP).
\Ref{Metodiev:2017vrx} developed CWoLa as a method to train a jet classifier via weak supervision on a few generalized angularities~\cite{Berger:2003iw,Almeida:2008yp,Ellis:2010rwa,Larkoski:2014uqa,Larkoski:2014pca}, where signal fractions do not need to be known in order to train the classifier.
\Ref{Komiske:2018oaa} investigated both CWoLa and LLP in the context of high-dimensional, modern machine learning methods, finding that while both methods were performant, CWoLa generalized better and more simply to complex models.
CWoLa has since given rise to new techniques to search for signals of new physics in model-independent ways~\cite{Collins:2018epr}.
These methods are an important step towards making classification at colliders fully data-driven.
Here, we review the CWoLa paradigm in preparation for incorporating it as part of the implementation of our Operational Definition.

Conceptually, CWoLa is extremely simple: given two mixtures $M_1$ and $M_2$ of signal (quark) and background (gluon) jets, train a classifier to distinguish jets in $M_1$ from jets in $M_2$.
This procedure has the attractive property of being able to immediately use any model which can be trained with full supervision.
Furthermore, in the limit that $M_1$ and $M_2$ become pure signal and background, CWoLa smoothly approaches full supervision.
With enough statistics, a feature space that captures all relevant information, and a suitable training procedure, a CWoLa classifier should approach the optimal discriminant between the two mixed samples.\footnote{The generalization to learning from multiple mixtures of signal and background is possible as long as each mixture is assigned a label that is (on average) monotonically related to its signal fraction.}
By the Neyman-Pearson lemma~\cite{NPlemma}, the optimal discriminant between two binary classes is the likelihood ratio.
As discussed in \Sec{sec:motivation}, the mixed-sample likelihood ratio is monotonically related to the quark/gluon jet likelihood ratio.
Thus, CWoLa provides a way of approximating the optimal discriminant between quark and gluon jets given access only to mixed samples.

There are potential concerns, though, that one might have regarding CWoLa in particular and weak supervision in general.
Are enough statistics and a rich-enough feature space available?
Do we have a suitable training procedure?
\Refs{Metodiev:2017vrx, Cohen:2017exh, Komiske:2018oaa} address these concerns and demonstrate that CWoLa indeed works in realistic cases.
For example, CWoLa was used in \Ref{Komiske:2018oaa} to obtain a performant quark/gluon jet classifier by discriminating $Z$+jet and dijet samples using jet images and convolutional neural networks.
As described in \App{sec:train}, there are many other jet representations and machine learning models that are suitable to be trained with CWoLa.
Additionally, previous uses of CWoLa have made assumptions about the samples $M_1$ and $M_2$ being mixtures of well-defined quark and gluon jets, without specifying which definition is being used or attempting to quantify what happens if quark and gluon jets are not the same in the two samples (i.e.\ sample dependence).
From the perspective of this work, those concerns are removed by using the Operational Definition, which turns the problem on its head and lets the samples $M_1$ and $M_2$ \emph{define} quark and gluon jets.
The notion of sample dependence manifests in a new way with our Operational Definition, which we discuss more in our conclusions in \Sec{sec:conc}.

\subsection{Jet topics: Extracting categories from collider data}
\label{sec:topics}

Building on a rich analogy between mixed jet samples and textual documents, \Ref{Metodiev:2018ftz} introduced jet topics and demonstrated how topic modeling could be used to obtain quantitative information about the signal and background distributions from the mixed sample distributions.
The present work extends and elaborates on this approach in order to formulate a practical implementation the Operational Definition of quark and gluon jets in \Sec{sec:opdef}.

Given two samples of quark and gluon jets $M_1$ and $M_2$, the jet topics technique seeks to extract two mutually irreducible categories such that the samples are mixtures of these categories.
To the extent that quark and gluon jets are themselves mutually irreducible, they will correspond to the extracted topics.
There are various procedures for extracting the topics from mixed samples.
\Ref{Metodiev:2018ftz} used a method known as ``demixing'' that was developed in \Ref{katz2017decontamination} in order to obtain the topics.
Other procedures (e.g.\ non-negative matrix factorization~\cite{Arora:2012:LTM:2417500.2417847}) that are popular for textual topic modeling could in principle also be used.
Demixing works by searching for ``anchor bins'' in the mixed sample distributions over a feature space $\O$, which are histogram bins for which the likelihood of $M_1$ to $M_2$ is maximized or minimized.

In the language of \Sec{sec:motivation}, demixing returns reducibility factors $\kappa_{12}$ and $\kappa_{21}$.
With the reducibility factors in hand, the fractions of topic $T_1$ in each mixed sample, $f_{T_1}^{(1)}$ and $f_{T_1}^{(2)}$, can be obtained by solving equations analogous to \Eq{eq:kfs}, and the topic distributions $p_{T_1}(\O)$ and $p_{T_2}(\O)$ are given by \Eq{eq:pqgks} where $q$ is replaced by $T_1$ and $g$ by $T_2$:
\begin{align}
\label{eq:topic1}
&p_{T_1}(\O)=\frac{p_{M_1}(\O)-\kappa_{12}\, p_{M_2}(\O)}{1-\kappa_{12}},&f_{T_1}^{(1)}&=\frac{1-\kappa_{12}}{1-\kappa_{12}\kappa_{21}},\\
&p_{T_2}(\O)=\frac{p_{M_2}(\O)-\kappa_{21}\, p_{M_1}(\O)}{1-\kappa_{21}},&f_{T_1}^{(2)}&=\frac{\kappa_{21}(1-\kappa_{12})}{1-\kappa_{12}\kappa_{21}},
\label{eq:topic2}
\end{align}
where we have assumed without loss of generality that $f_{T_1}^{(1)}>f_{T_1}^{(2)}$.

The jet topics method provides a simple example of the fascinating mileage one is able to achieve from the picture of jets as statistical mixtures.
If the signal (quark) and background (gluon) distributions are mutually irreducible, the topic fractions are the signal fractions, $f_S^{(1)}=f_{T_1}^{(1)}$ and $f_S^{(2)}=f_{T_1}^{(2)}$, from which a number of other useful quantities may be computed.
First, consider some observable $O$ that we wish to cut on to make a signal/background classifier. 
For a given threshold $t$, let the fraction of jets in $M_i$ for which $O$ is greater than $t$ be $f_{M_i}(O>t)$.
Let $\varepsilon_{s}(t)$ be the rate that the signal is correctly identified (the true positive rate) and $\varepsilon_{b}(t)$ be the rate that the background is identified as signal (the false positive rate) by the classifier $(O, t)$.
We then have the equations:
\begin{align}
\label{eq:calibrate}
&f_{M_1}(O>t)=f_S^{(1)}\varepsilon_s(t) + (1-f_S^{(1)})\, \varepsilon_b(t),\\
&f_{M_2}(O>t)=f_S^{(2)}\varepsilon_s(t) + (1-f_S^{(2)})\, \varepsilon_b(t),
\end{align}
which can be solved to give signal and background efficiencies at the given threshold:
\begin{align}
\label{eq:eps}
\varepsilon_s(t)&=\frac{f_{M_1}(O>t)(1-f_S^{(2)}) - f_{M_2}(O>t)(1-f_S^{(1)})}{f_S^{(1)} - f_S^{(2)}},\\
\label{eq:epb}
\varepsilon_b(t)&=\frac{f_{M_2}(O>t)f_S^{(1)} - f_{M_1}(O>t)f_S^{(2)}}{f_S^{(1)} - f_S^{(2)}}.
\end{align}
In this way, the extracted fractions can be used to calibrate the classifier.
Additionally, the pure signal and background distributions of any observable can be obtained from the reducibility factors (or equivalently the extracted fractions): simply change the feature space $\O$ in \Eqs{eq:topic1}{eq:topic2} to whatever observable is desired.

There are several issues to address in attempting to use topic modeling for quark and gluon jets.
How do we know that quark and gluon jets are mutually irreducible in our feature space?
In \App{sec:explore}, we show that quark and gluon jets are {\it not} mutually irreducible in the leading-logarithmic limit of individual Casimir-scaling or Poisson-scaling observables, though this calculation strongly suggests that mutual irreducibility could be achieved in a larger feature space.
\Ref{Metodiev:2018ftz} showed that quark and gluon jets appear to be mutually irreducible in practice for the constituent multiplicity observable, but did not offer a way to fold in additional information.
If we attempt to use multiple observables in the topic modeling procedure, how do we deal with the curse of dimensionality that results from attempting to fill multi-dimensional histograms?
As we now discuss, CWoLa can be combined with jet topics to efficiently use arbitrarily large feature spaces to determine the optimal quark and gluon jet topics.

\subsection{Optimal taggers for optimal topics}
\label{sec:optimal}

To summarize, the CWoLa framework allows trained models to approximate a function monotonic to the quark/gluon likelihood ratio, which is the optimal quark/gluon jet classifier.
Further, the jet topics technique allows for signal and background distributions to be extracted from a given low-dimensional feature space.
Here, we demonstrate how CWoLa and jet topics can be combined into a direct implementation of the Operational Definition of quark and gluon jets from \Sec{sec:opdef}.

When viewed as a likelihood-ratio approximator, a CWoLa-trained model can do more than per-jet classification: it is an efficient method for compressing information in a (potentially) huge but sparsely-populated feature space down to the provably optimal single observable for quark/gluon jet separation.
This approach of taking a CWoLa-trained model output as an interesting observable in its own right solves the curse of dimensionality mentioned at the end of \Sec{sec:topics}.
Furthermore, the guarantee of optimality for the likelihood ratio by the Neyman-Pearson lemma carries over to the jet topics context in that the mutual irreducibility of quark and gluon jets is maximized when the optimal discriminant is used. 
In this sense, optimal taggers give rise to optimal topics.

The marriage of CWoLa and jet topics yields more fruit: since the signal fractions extracted by the topics procedure can be used to calibrate a classifier, the requirement that a CWoLa-trained model be calibrated using known signal fractions is removed.
A CWoLa model now has the potential to be \emph{self-calibrating} in the sense that the model is used to extract the signal fractions, and then the fractions are used to calibrate that same model (other models can also be calibrated).
Furthermore, the optimal topic fractions can be used to extract the pure distribution of any desired observable in a straightforward manner.

This combined paradigm provides a new way to use fully data-driven classifiers in high-energy particle physics, namely as optimal observables for topic fraction extraction. 
The fully data-driven aspect of the entire procedure cannot be emphasized enough as application of these methods to data is the ultimate goal.
The black-box nature of complex classifiers becomes less disturbing in this context, since we can think of the role of the classifier as simply to regress onto the likelihood ratio, without much concern as to how this is done.
As with \Ref{Andreassen:2018apy}, understanding of both the inputs and outputs of a machine learning model allows us to be agnostic with respect to the internal details.

Where does the Operational Definition in \Sec{sec:opdef} fit into this picture?
If we adopt the Operational Definition and define quark and gluon jets to be the categories returned by the topic-finding procedure, this addresses the first issue with jet topics referenced at the end of \Sec{sec:topics}, that we do not know the relation between the extracted topics and quark and gluon jets.
Also, since under this definition the samples $M_1$ and $M_2$ are mixtures of exactly the same quark and gluon jets, the sample dependence concerns mentioned at the end of \Sec{sec:cwola} are alleviated.
The optimality guarantee resulting from the Neyman-Pearson lemma and the good practical performance lend support to the Operational Definition being useful both in theory and practice.
It is no coincidence that the Operational Definition, CWoLa, and jet topics share the same property: they work well when notions of sample independence and mutual irreducibility exist, but still return something sensible as the situation is detuned away from this nice limit.

\section{Quark and gluon jets from dijets and $Z$+jet}
\label{sec:qgex}

In this section, we apply the combined paradigm of CWoLa and jet topics to the realistic context of $Z$+jet and dijet samples, obtaining the distributions of quark and gluon jets via the Operational Definition.%
\footnote{We also investigated applying the Operational Definition to CMS jet mass measurements on similar samples~\cite{Chatrchyan:2013vbb}.  In the dijet sample, though, only average jet mass (instead of individual jet mass) is reported.} 

\subsection{Event generation}
\label{sec:eventgen}

We generated events using \pythia 8.230~\cite{Sjostrand:2014zea} with the default tunings and shower parameters at $\sqrt{s}=\SI{14}{TeV}$.
Hadronization and multiple parton interactions (i.e.\ underlying event) were included and a parton-level $p_T$ cut of $\SI{400}{GeV}$ was applied. 
The $Z$+jet sample was obtained using the \texttt{WeakBosonAndParton:qg2gmZq} and \texttt{qqbar2gmZg} processes, ignoring the photon contribution and requiring the $Z$ to decay invisibly.
The dijet sample was obtained using the \texttt{HardQCD:all} process, excluding bottom quarks.

Final state, non-neutrino particles were clustered with \textsc{FastJet} 3.3.0~\cite{Cacciari:2011ma} using the anti-$k_T$ algorithm~\cite{Cacciari:2008gp} with a jet radius of $R=0.4$.
All jets were required to have $p_T\in[500,550]$ GeV and rapidity $|y|<2.5$.
The hardest jet for $Z$+jet and the hardest two jets for dijets were considered and kept if they passed the above specified cuts.
The unphysical parton-shower-labeled jet flavor was determined by matching the clustered jet to the \pythia parton(s) by requiring that the jet lie within $2R$ of the parton direction from the hard process.
Events in which none of the jets passed this criteria were not considered.
One million jets passing all cuts were retained for both the dijet and $Z$+jet samples. 
The \pythia-labeled quark fraction was 86.3\% for the $Z$+jet sample and 49.8\% for the dijet sample.

\begin{table}[t]
\centering
\begin{tabular}{|c|l|l|}
\hline
 Symbol & Name &Short Description  \\ \hline \hline
$n_\text{const}$ & Constituent Multiplicity & Number of particles in the jet \\
$n_\text{SD}$ & Soft Drop Multiplicity & Probes number of perturbative emissions \\
$N_{95}$ & Image Activity & Number of pixels containing $95\%$ of jet $p_T$\\
$\tau_2^{(\beta=1)}$ & 2-subjettiness & Probes the two-prong nature of the jet\\
$w$ &  Width & Angularity measuring the girth of the jet\\
$m$ & Jet Mass & Mass of the jet\\
\hline \hline
PFN-ID & Particle Flow Net with ID & Particle three-momentum + ID inputs\\
PFN & Particle Flow Net & Particle three-momentum inputs \\
EFN & Energy Flow Net & Using only IRC-safe information\\
EFPs & Energy Flow Polynomials & Linear classification with EFPs\\
CNN & Convolutional Neural Net & Trained on $33\times 33$ 2-channel jet images\\
DNN & Deep Neural Net & Trained on an $N$-subjettiness basis \\
\hline
\end{tabular}
\caption{The individual jet substructure observables (top) and machine learning models (bottom) considered in this study, along with their corresponding symbols and short descriptions.
A full discussion of the observables and models is given in \App{sec:train}.}
\label{tab:obsandmodels}
\end{table}

\subsection{Extracting reducibility factors and fractions}
\label{sec:extract}

For the jet substructure feature space $\mathcal O$, we consider a variety of individual jet substructure observables and trained models.
In \Tab{tab:obsandmodels}, we summarize the observables and models used for our study.
Details of the observable computation, model training, and model architectures are given in \App{sec:train}.

For each of the observables and trained models, we proceed to extract the topic fractions from the $Z$+jet and dijet samples.
We implement a version of the demixing procedure used in \Ref{Metodiev:2018ftz} and described in \Ref{katz2017decontamination}.
Below, we describe the practical procedure used for the studies in this section, including the determination of uncertainties.
Here, we let $O$ indicate either a single observable or the output of a trained model.

\begin{enumerate}
\item {\bf Histograms}: The histograms for $p_\text{$Z+$jet}(O)$ and $p_\text{dijets}(O)$ are computed for a specified binning. Statistical uncertainties are taken to be $\sqrt{N_\text{$Z+$jet}}$ and $\sqrt{N_\text{dijets}}$ coming from one-sigma count uncertainties within each bin.\footnote{These uncertainties, and those derived from them, should only be used to give a sense of scale on the plots. Implementing the Operational Definition in LHC data will require careful consideration of other sources of statistical and systematic uncertainties. For instance, using unfolded distributions may mitigate artificial differences in the samples due to detector effects.}
\item {\bf Likelihood Ratios}: The mixed-sample log-likelihood ratio $\ln p_\text{dijets}(O)/p_\text{$Z+$jet}(O)$ is calculated. The statistical uncertainty is estimated from uncertainty propagation per bin to be:
\begin{equation}
\sigma_{\ln p_\text{dijets}/p_\text{$Z+$jet}} = \sqrt{\frac{1}{N_\text{dijets}} + \frac{1}{N_\text{$Z+$jet}}}.
\end{equation}
\item {\bf Anchor Bins}: Noisy, low-statistics bins are neglected by only considering bins with more than 50 events in each sample. The upper (lower) anchor bin is obtained by finding the maximum (minimum) bin for the log-likelihood ratio minus (plus) its uncertainty.
\item {\bf Reducibility Factors}: The lower (upper) reducibility factor $\kappa_{21}$ ($\kappa_{12}$) is obtained by exponentiating (minus) the log-likelihood ratio evaluated at the lower (upper) anchor bin. Uncertainties on the reducibility factors are obtained by standard uncertainty propagation.
\item {\bf Topics}: The jet topics are obtained from the reducibility factors $\kappa_{12}$ and $\kappa_{21}$ according to the definition in \Eq{eq:opdef}, with uncertainties propagated from the reducibility factors.
\item {\bf Fractions}: Topic fractions are obtained from the reducibility factors $\kappa_{12}$ and $\kappa_{21}$ according to \Eqs{eq:solve4m1}{eq:solve4m2}, with uncertainties propagated from the reducibility factors.
In this study, the topic fraction always corresponds to the quark fraction.
\end{enumerate}

While we use the concrete method above to showcase the viability of our method, there may of course be alternative ways to obtain the anchor bins and reducibility factors.
For instance, it may be interesting to a pursue a binning-free method, where a cumulative density function is used instead of a binned histogram.
Similarly, there may be more suitable ways to ignore low-statistics phase space regions and determine anchor bins.
We leave detailed optimizations of the method for future developments.

In \Fig{fig:histratios}, we show the mixed-sample log-likelihood ratios $\ln p_\text{dijets}(O)/p_\text{$Z+$jet}(O)$ for various jet substructure observables and model outputs.
Overall, we see excellent confirmation that the mixed-sample log likelihood is bounded between the predicted extrema according to the \pythia fractions.
To extract these fractions in a data-driven way, we must of course obtain these extrema from the measured log-likelihood ratios.
Using the procedure outlined above, the resulting anchor bins are shown in the right-most portion of \Fig{fig:histratios}.
Interestingly and satisfyingly, many of the individual observables and essentially all of the models extract extrema consistent with the \pythia fractions.
It is important to note, though, that the \pythia fractions are not fully well-defined hadron-level concepts and are shown solely to provide a conceptual and semi-quantitative guideline for the performance of the method.

\begin{figure}[t]
\centering
\subfloat[]{\includegraphics[scale=.735]{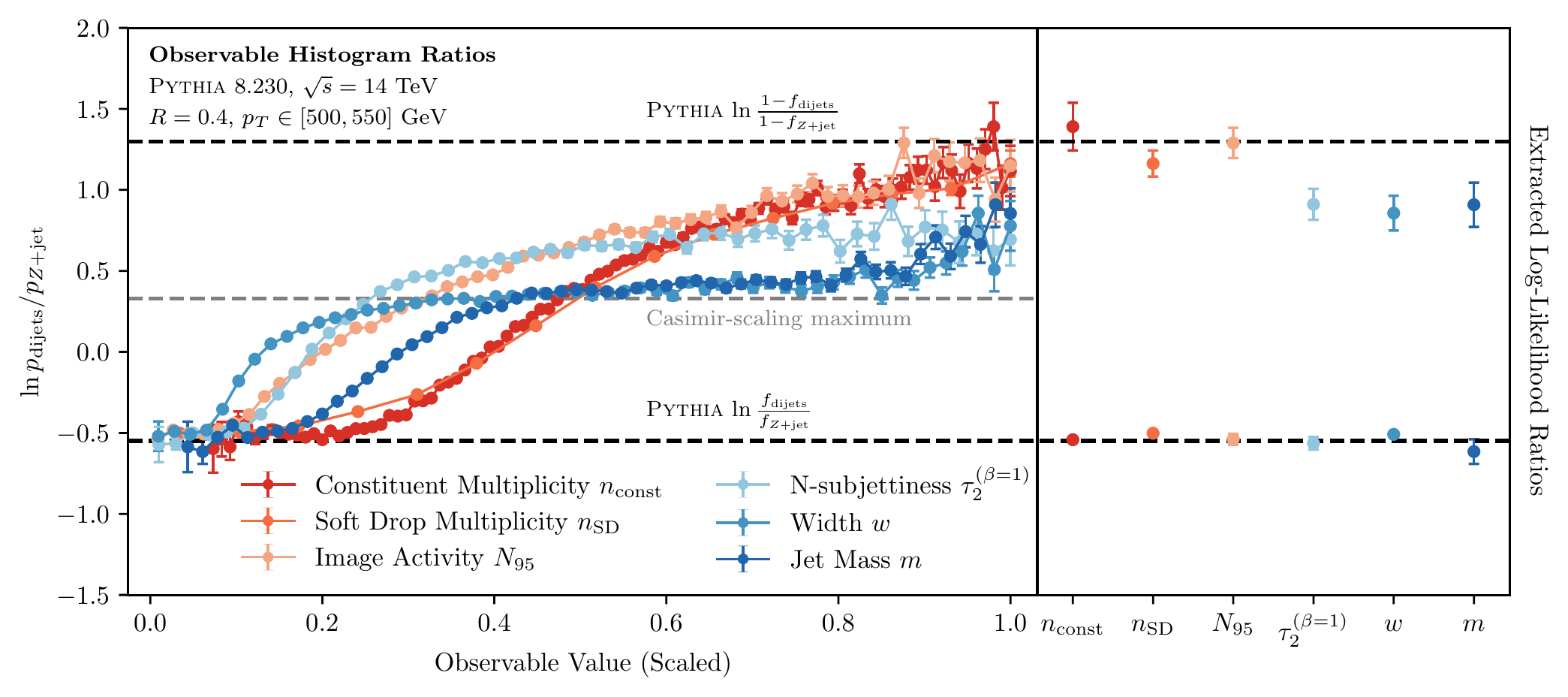}\label{fig:histratios-obs}}

\subfloat[]{\includegraphics[scale=.735]{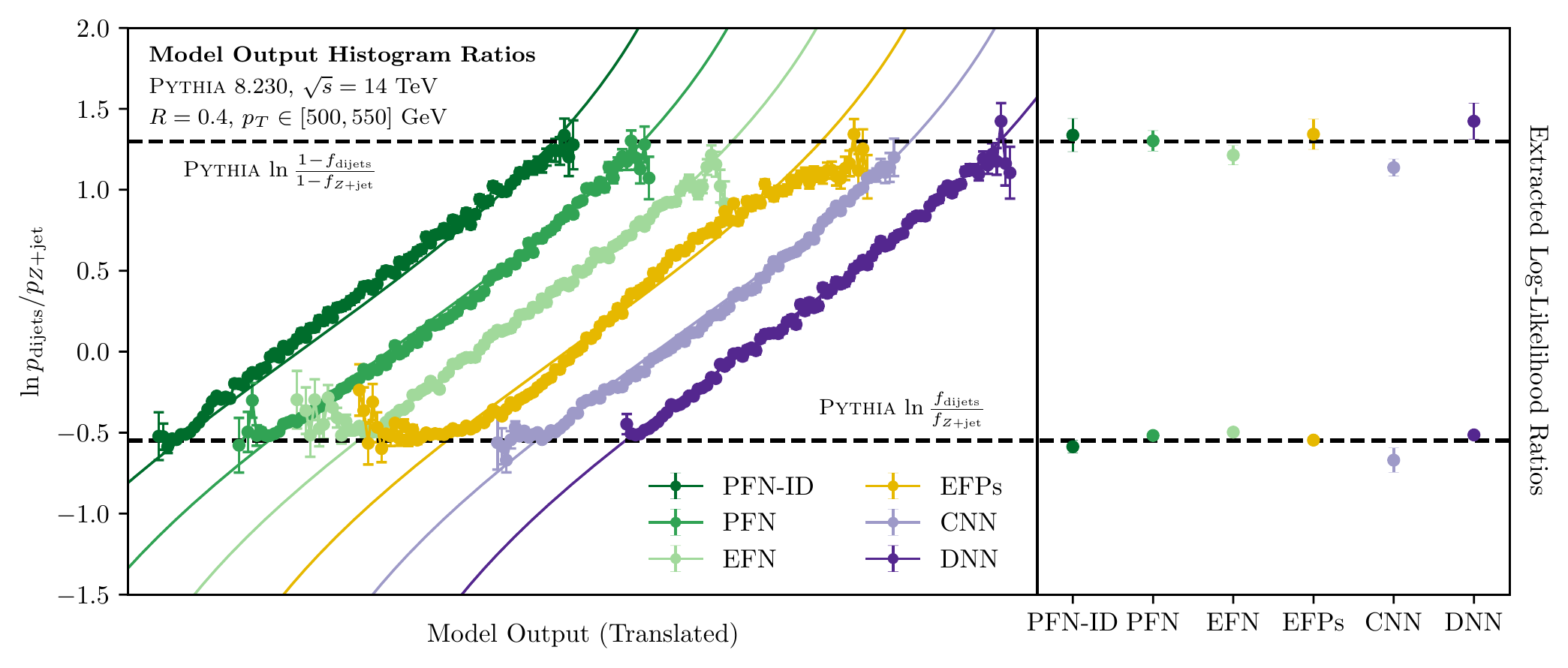}\label{fig:histratios-model}}
\caption{
Mixture log-likelihood ratios and their extrema for (a) individual jet substructure observables and (b) trained models, the latter of which have been translated along the horizontal axis for clarity. 
The black dashed lines indicate the maximum and minimum of the mixture likelihood ratio determined using the \pythia fractions.
The gray dashed line in the observable plot indicates the upper bound obtained for a Casimir-scaling observable from \App{sec:explore}; as expected, jet mass and width approach and remain near the gray line for much of their domain.
While all individual observables asymptote well to the lower black line, only the count observables ($n_{\rm const}$, $n_{\rm SD}$, $N_{95}$) come close to the upper black line, indicating that gluons are more irreducible than quarks.
By contrast, the minimum and maximum for each trained model appear to achieve extremal values close to the black limits.
The solid colored lines in the lower plot indicate the behavior of the optimal classifier, closely related to \Fig{fig:mixloglikes}.
}
\label{fig:histratios}
\end{figure}
\afterpage{\clearpage}

For the substructure observables in \Fig{fig:histratios-obs}, it is evident that the count observables of constituent multiplicity, soft drop multiplicity, and image activity come closest to saturating both the upper and lower bounds.
For mass and width, a clear plateau is exhibited close to the leading logarithmic expectation for Casimir-scaling observables (see \App{sec:explore}).
This difference is reflected in the fact that the count observables extract extrema of the log-likelihood ratio consistent with the \pythia fractions, while the remaining observables systematically underestimate the upper bound.
One feature worth noting is that the lower bound is accurately extracted by every observable; it is the upper bound that is more difficult to saturate with a generic observable.
This indicates that gluon jets are evidently more irreducible than quark jets, and therefore that gluon jet distributions are easier to extract.

For the trained model outputs in \Fig{fig:histratios-model}, we see that the mixed-sample log-likelihood ratios are clearly bounded as expected and agree with the prediction for a well-trained classifier.
The slight deviations from the solid curve in the case of the EFPs arise from the fact that they are trained using Fisher's Linear Discriminant, which optimizes a different objective function, but nonetheless the EFPs exhibit qualitatively similar behavior to the other classifiers.
Compared to the individual substructure observables, the models more robustly saturate the upper and lower bounds of the log-likelihood ratio and demonstrate less sensitivity to changes in the binning of the histograms.
The extracted extrema of the log-likelihood ratio based on the trained models (with the exception of the CNN) are all consistent with one another as well as with the \pythia fractions.
This agreement, present in the variety of different models which process information in very different ways, indicates that there is indeed a robust sense in which ``quark'' and ``gluon'', as qualitatively described by the parton-matched labels, are latent within the mixed samples.

\begin{figure}[t]
\centering
\subfloat[]{\includegraphics[scale=.755]{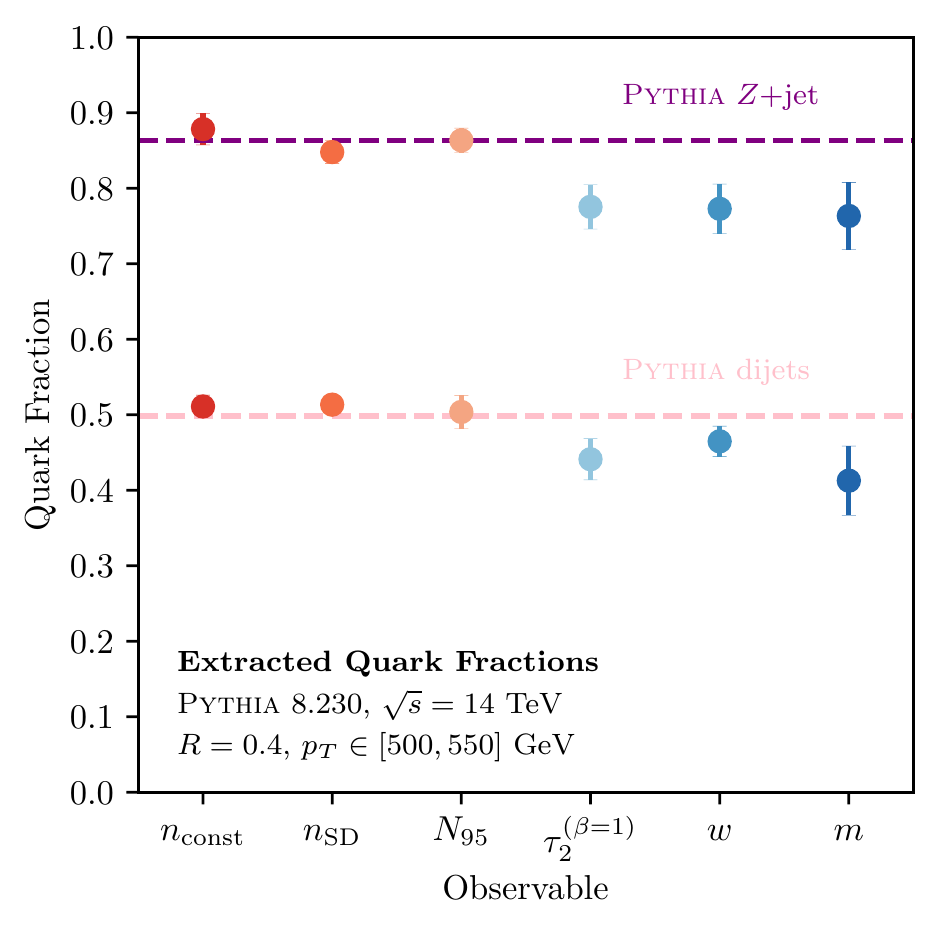}\label{fig:fracs-obs}}
\subfloat[]{\includegraphics[scale=.755]{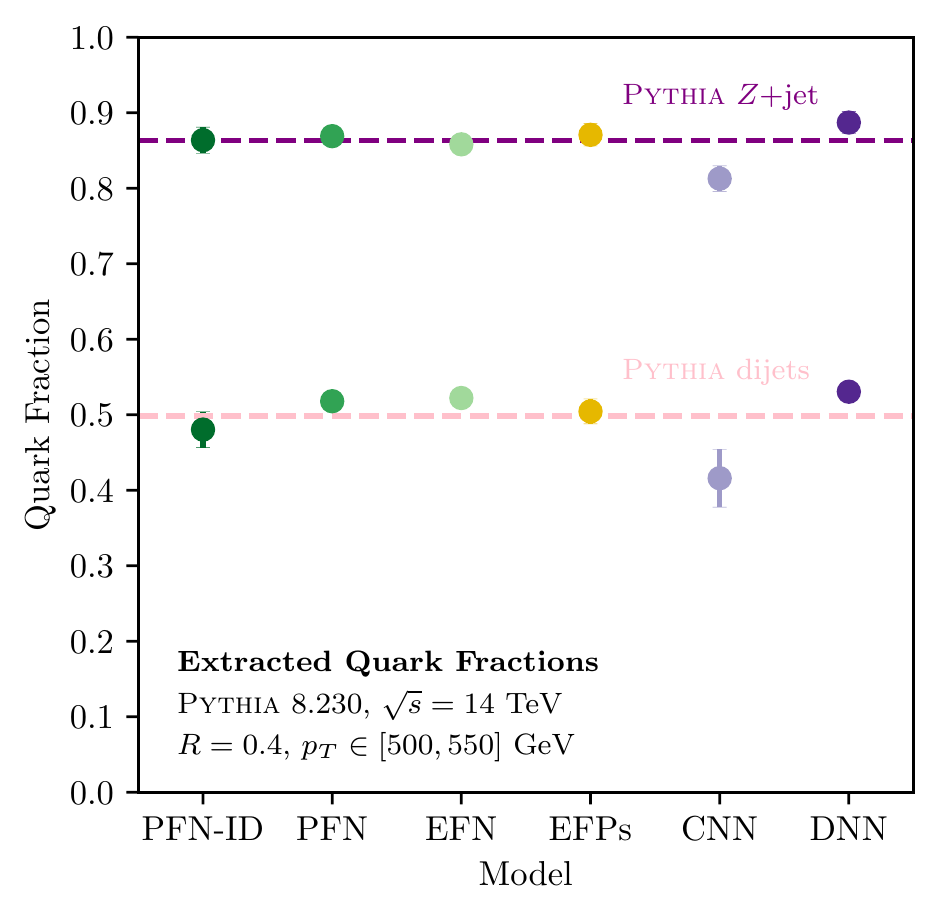}\label{fig:fracs-model}}
\caption{
Extracted quark fraction values for the (a) individual observables and (b) trained models as calculated using the log-likelihood extrema of \Fig{fig:histratios} inserted into in \Eqs{eq:solve4m1}{eq:solve4m2} to obtain the fractions.
}
\label{fig:fracs}
\end{figure}

\begin{figure}[t]
\centering
\subfloat[]{\includegraphics[scale=.755]{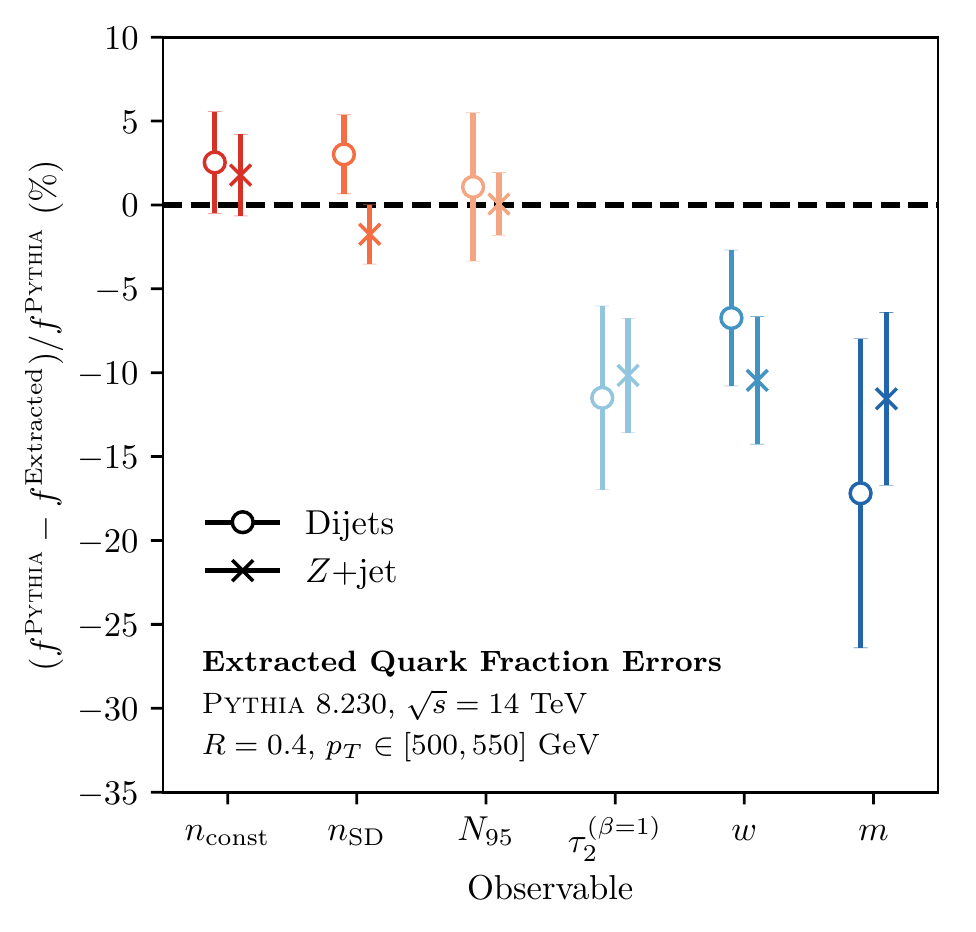}\label{fig:fracerrors-obs}}
\subfloat[]{\includegraphics[scale=.755]{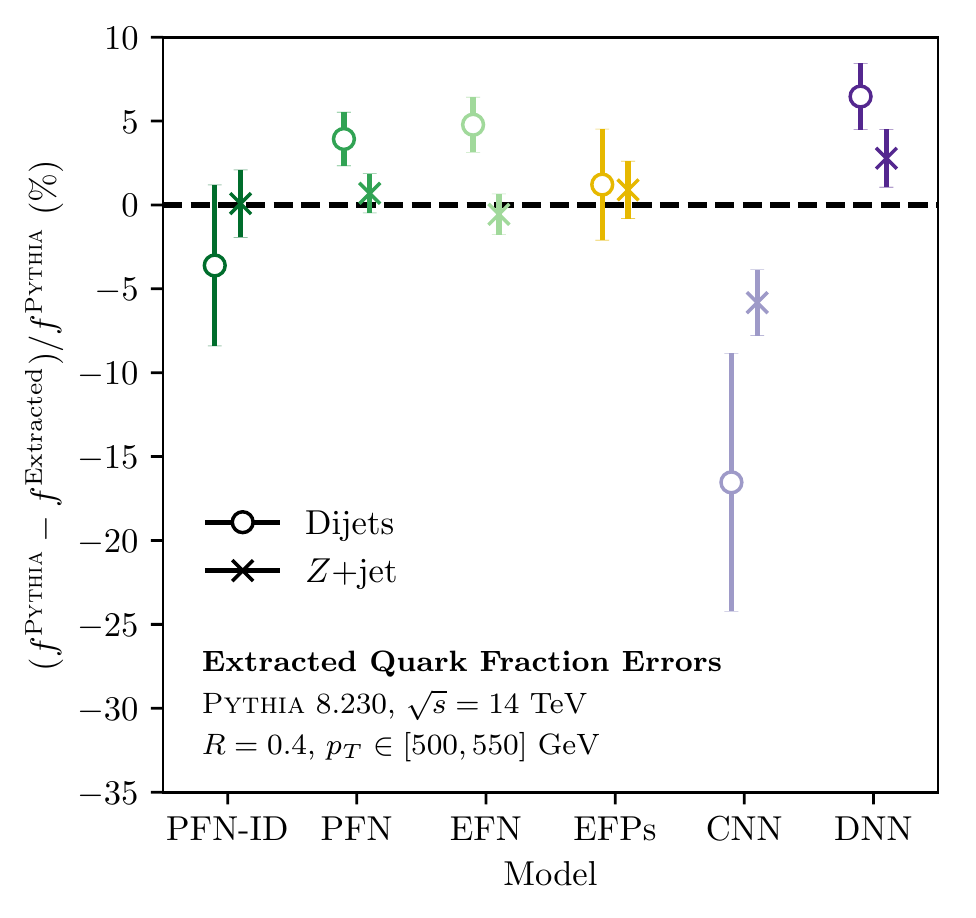}\label{fig:fracerrors-model}}
\caption{
The percent error of the extracted quark fractions (see \Fig{fig:fracs}) relative to the \pythia fractions, obtained using the (a) individual observables and (b) trained models.
By this measure, the best jet observable appears to be $N_{95}$ and the best model is the linear EFP model.
}
\label{fig:fracerrors}
\end{figure}

Using the extracted extrema of the mixed-sample log-likelihood ratio, the reducibility factors can be obtained by appropriate exponentiation.
The quark fractions can then be calculated according to \Eqs{eq:solve4m1}{eq:solve4m2}.
These are shown in \Fig{fig:fracs-obs} for the individual observables and \Fig{fig:fracs-model} for the trained models.
We see that the trained models all extract fractions largely consistent with one another and with the \pythia fractions.
The count substructure observables also extract consistent fractions, while the shape observables exhibit Casimir-scaling behavior, making them unsuitable for identifying mutually-irreducible quark and gluon jets. 
The fractions obtained from the trained models were consistently more robust to different choices of topic extraction procedures, such as the histogram binning.
Despite having little to no handle on the details of the trained models, we are able to obtain important constraints on their behavior and use them to obtain quark/gluon fractions, which are evidently insensitive to these details.

As a more quantitative measure of the quality of the extracted quark fractions, the percent error of the extracted fractions relative to the (unphysical) \pythia fractions is shown in \Figs{fig:fracerrors-obs}{fig:fracerrors-model}.
The count observables and trained models agree within several statistical uncertainties of one another and the \pythia fractions, in many cases achieving $\mathcal O(1\%)$ fidelity.
Again, we caution that the \pythia fractions solely provide a heuristic to demonstrate the performance of the method and should not be taken as fundamental to quark and gluon jets.

\subsection{Self-calibrating classifiers}
\label{sec:selfcal}

\begin{figure}[t]
\centering
\includegraphics[scale=0.85]{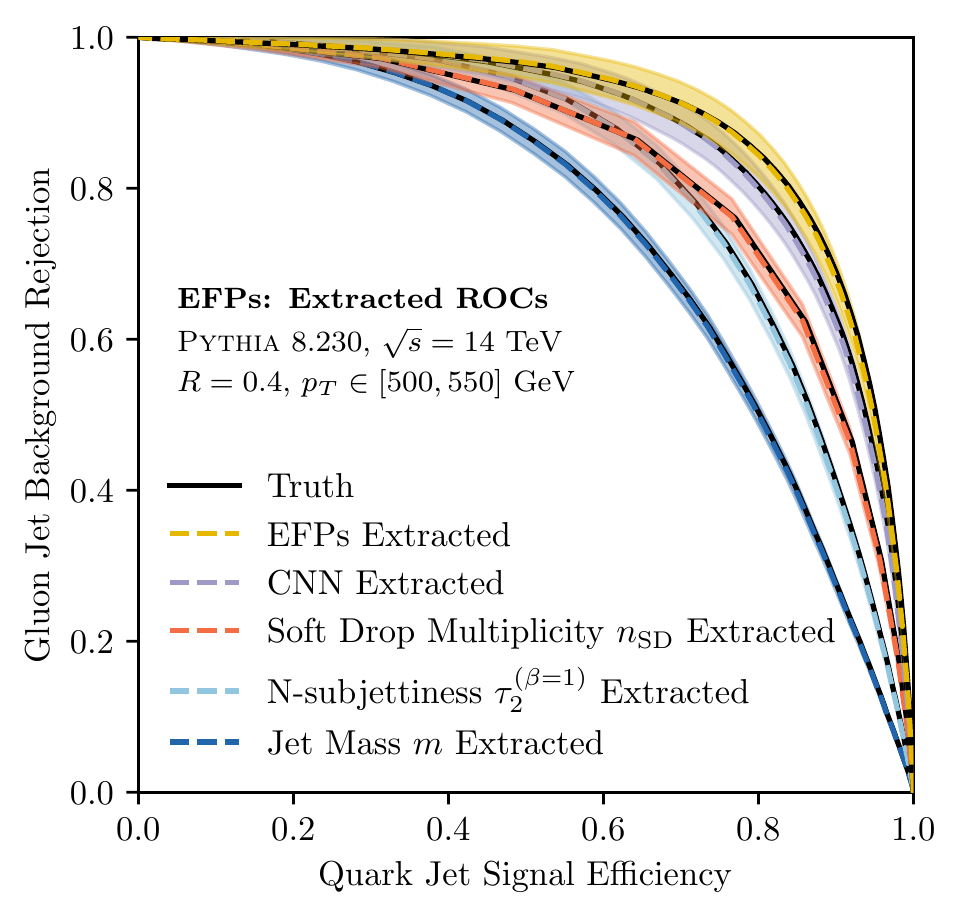}
\caption{
The ROC curves for several substructure observables and trained models using the quark fractions estimated from the EFPs.
The ``Truth'' corresponds to using the \pythia fractions to obtain the ROC curve.
We see good agreement between the data-driven ROC curves and the \pythia-labeled ROC curves.
Further, we see that the CWoLa-trained EFP classifier has effectively self-calibrated itself in this way.}
\label{fig:extractedrocs}
\end{figure}

With the quark fractions of the mixtures in hand, one immediate application is to use them to calibrate the quark/gluon classifiers, as discussed in \Sec{sec:optimal}.
Since uncalibrated classifiers can be used to obtain these fractions, this allows for self-calibrating classifiers in the CWoLa framework.
This liberates the CWoLa framework from necessarily requiring a small test set with known fractions (c.f.~\Ref{Metodiev:2017vrx}).
In the present picture, this ability to self-calibrate is conceptually clear since a sample with ``known'' fractions is equivalent to providing a definition of the underlying categories.

Beyond solely self-calibration of classifiers, the extracted fractions can be used to obtain the receiver operating characteristic (ROC) curves for other trained models or substructure observables, even those that do not themselves exhibit quark/gluon mutual irreducibility.
The extracted ROC curves of a variety of trained model and substructure observables using the EFP-extracted quark fractions are shown in \Fig{fig:extractedrocs}, with estimated uncertainty bands coming from uncertainties in the extracted fractions.
They are compared to the calibrated ROC curve using the \pythia-labeled fractions, achieving very good agreement between the two.
Note that the uncertainties are smaller for worse classifiers, which is intuitive given the limit that a perfectly-random classifier can be identified as such without any fraction information.
Overall, this concretely demonstrates that the self-calibration of CWoLa-trained classifiers can be achieved in a purely data-driven way.

\subsection{Obtaining observable distributions from extracted fractions}
\label{sec:obs}

With the reducibility factors of the mixtures, the distributions of substructure observables can be extracted for quark and gluon jets separately.
This corresponds to a direct application of the Operational Definition of quarks and gluons in \Eq{eq:opdef}.
This is similar in spirit to the procedure implemented in \Refs{Aad:2014gea,ATL-PHYS-PUB-2017-009} of using quark/gluon fractions estimated by convolving matrix elements and parton distribution functions and then solving systems of linear equations.
The key distinction is that, in our case, the fractions (and reducibility factors) themselves are data-driven.

In \Fig{fig:extractedhists}, we use the reducibility factors defined by the EFP classifier to extract quark and gluon distributions for the six individual substructure observables. 
We see excellent agreement between the data-driven, operationally-defined quark and gluon distributions and the ones specified by the \pythia fractions.
Importantly, this procedure works for any substructure observable, even ones such as jet mass and width which do not manifest quark/gluon mutual irreducibility.

\begin{figure}[t]
\centering
\includegraphics[scale=.66]{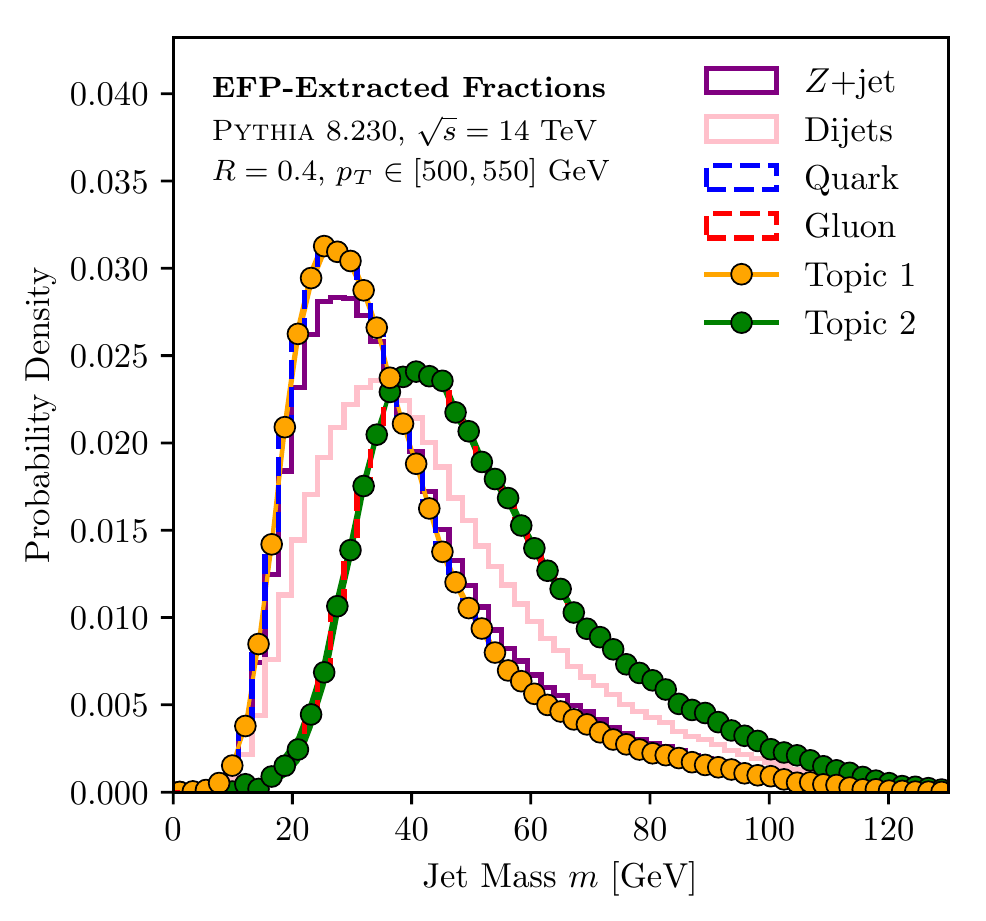}
\includegraphics[scale=.66]{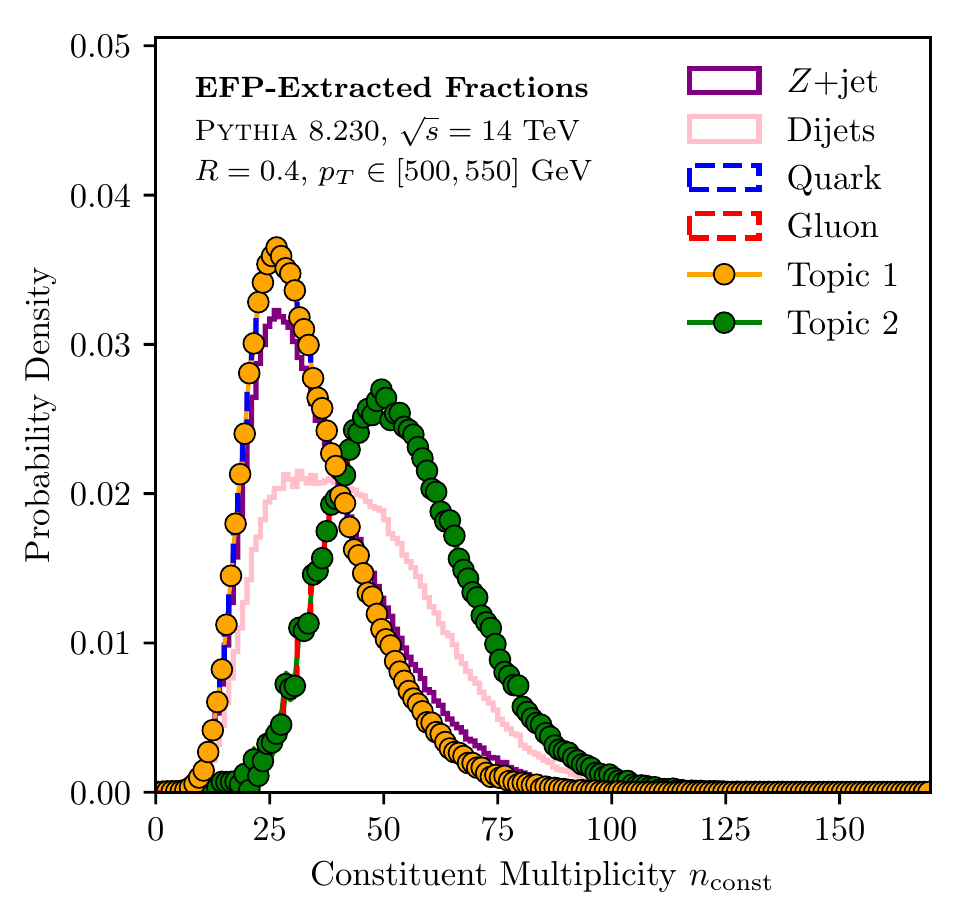}

\includegraphics[scale=.66]{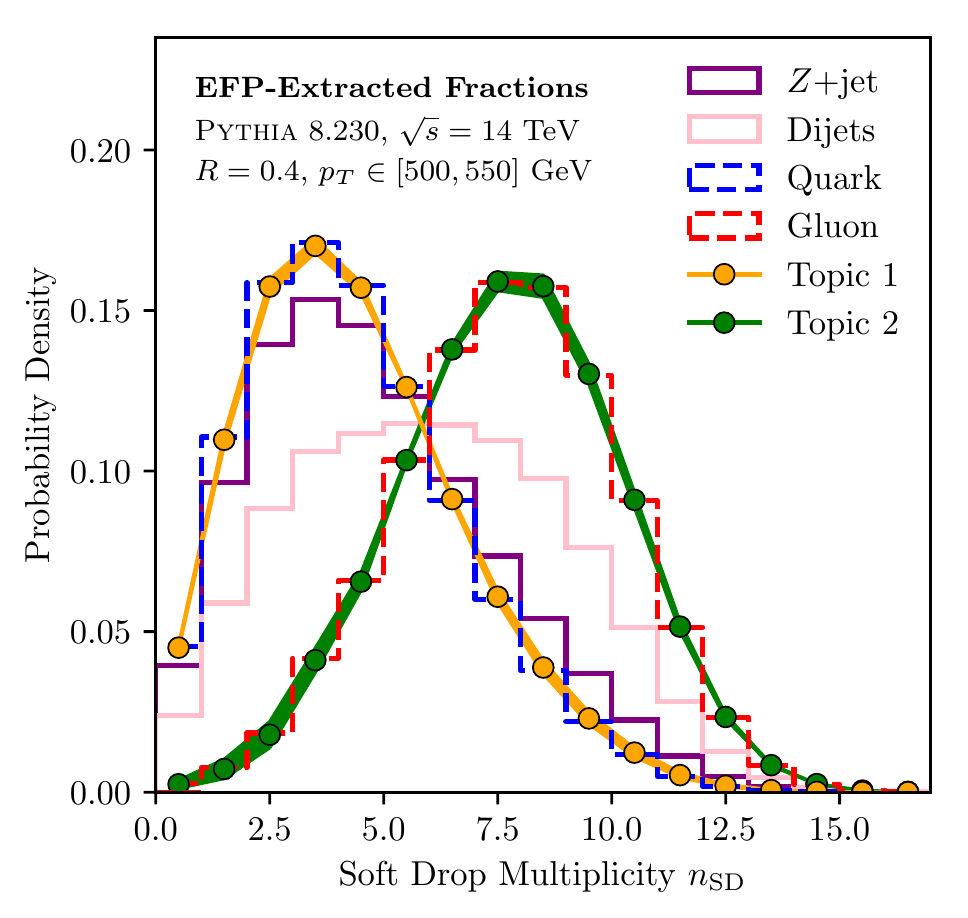}
\includegraphics[scale=.66]{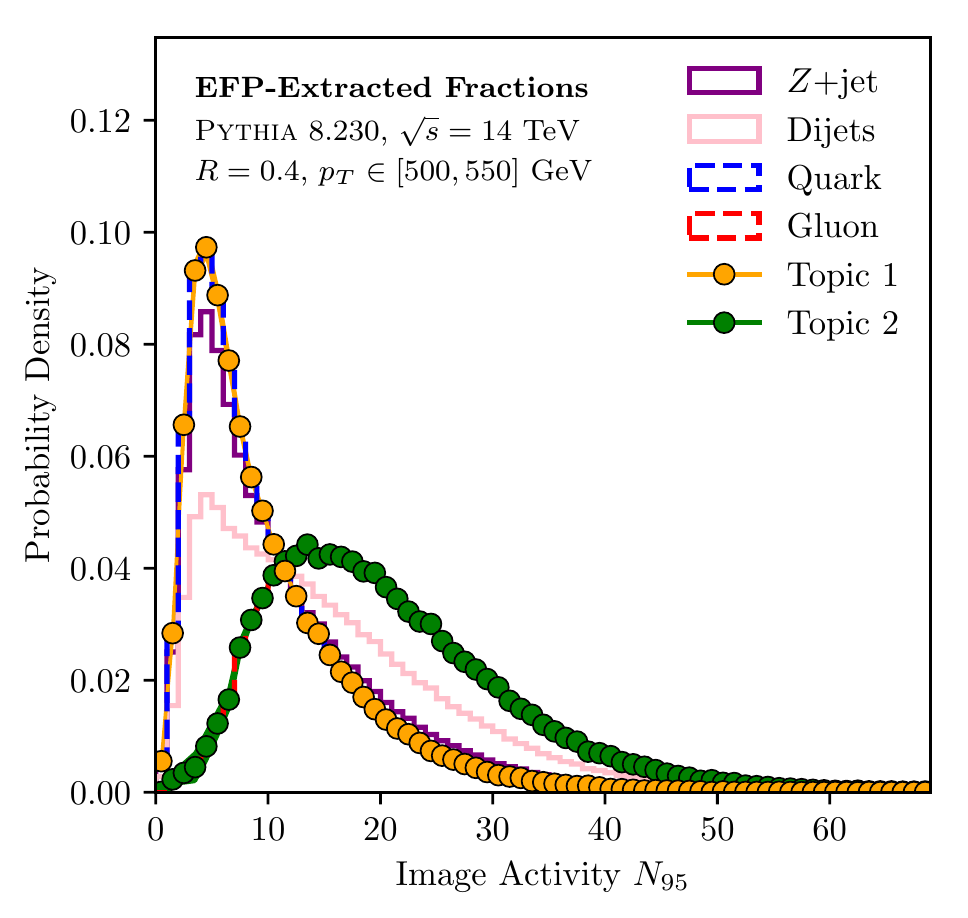}

\includegraphics[scale=.66]{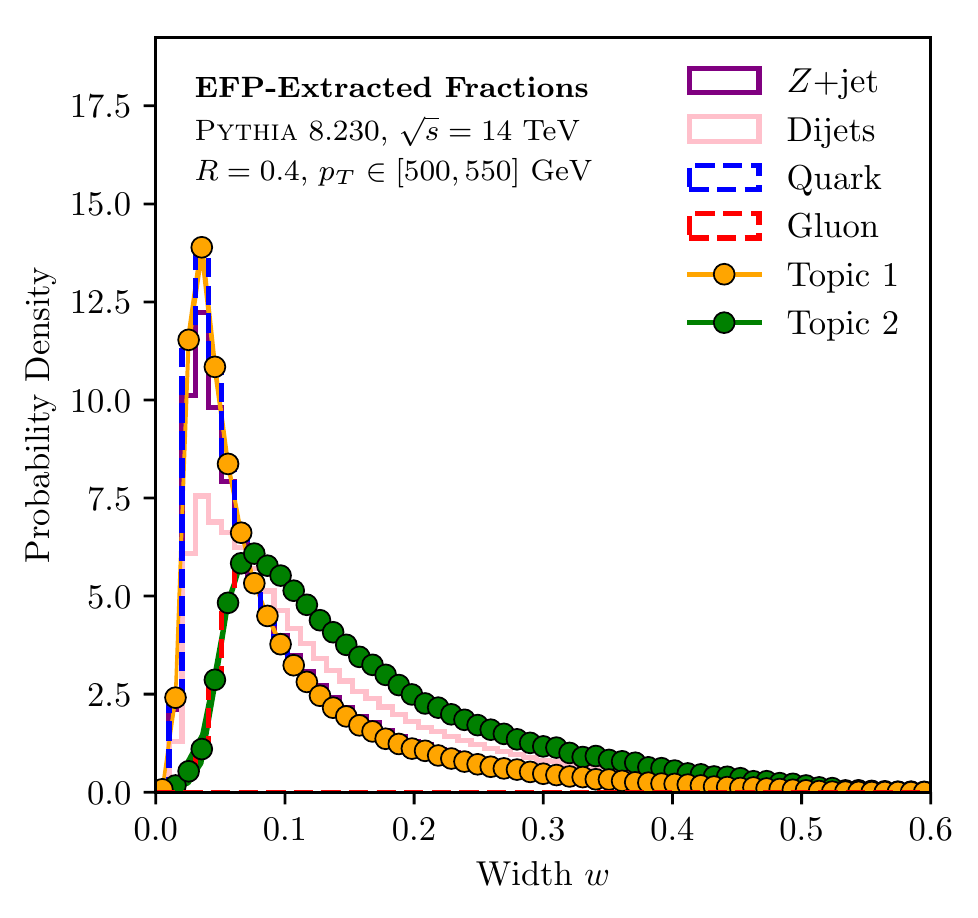}
\includegraphics[scale=.66]{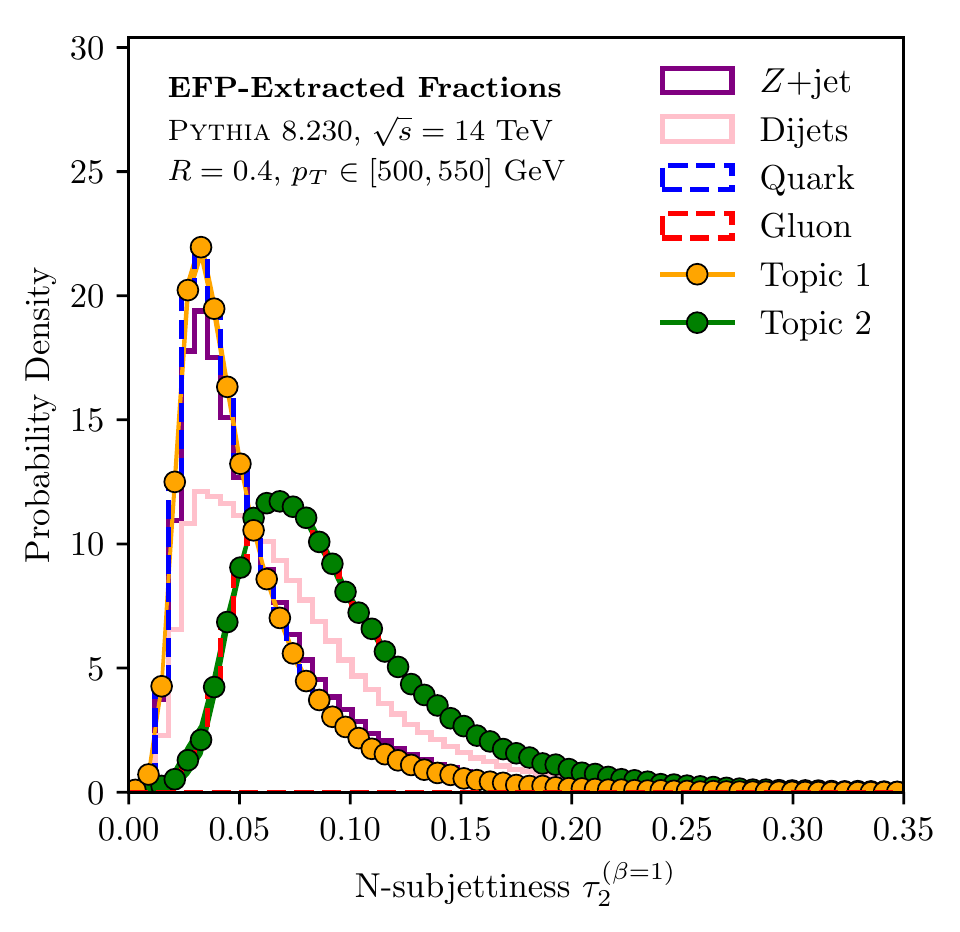}
\caption{
The distributions of the six substructure observables in the $Z+$jet sample (purple) and dijet sample (pink), with the quark and gluon distributions determined from the \pythia fractions (blue and red, respectively) and the jet topics (orange and green) using EFP-extracted reducibility factors.
We see excellent agreement between the jet topics and the \pythia-determined distributions of quark and gluon jets.
}
\label{fig:extractedhists}
\end{figure}
\afterpage{\clearpage}

\section{Conclusions}
\label{sec:conc}

In this chapter, we provided an Operational Definition of quark and gluon jets, based solely on physical cross section measurements. 
We connected our definition to the existing CWoLa and jet topics paradigms, showing how they each fit naturally into the implementation of the definition.
Taking two mixed samples, for which there is a qualitative notion that one is more ``quark-like'' than the other, the Operational Definition returns a quantitive understanding through mutually-irreducible quark and gluons distributions.
Practically, we implemented this definition by approximating the mixed-sample likelihood ratio, relating it to the pure quark/gluon likelihood ratio, and finding its extrema to determine mixed-sample reducibility factors.
With the reducibility factors in hand, the quark fractions for the mixed samples can be readily obtained.
In a broad sense, our Operational Definition harmonizes with the statistical picture of jet samples at colliders, where individual jets do not carry intrinsic flavor labels and one only ever has access to mixed samples in data.

To illustrate the power of the Operational Definition, we tested it in the realistic context of $Z$+jet and dijet processes.
We applied our quark/gluon jet definition to twelve different observables:  six individual substructure observables, and six trained machine learning models which distilled a huge feature space down to a single optimal observable.
The six individual observables naturally fall into two categories, count and shape observables, and we confirmed that the count observables yield much more accurate quark fractions (relative to a \pythia baseline).
With the minor exception of the CNN, the machine learning models all did well at extracting the fractions.
While the performance of the best individual observable ($N_{95}$) and the best machine learning model (linear EFPs) were comparable, the machine learning models were overall more robust to changes in histogram binning and to the technique used for determining the reducibility factors; this in turn contributes to the robustness of the Operational Definition.
Having determined the quark fractions, we extracted pure quark and gluon distributions for various jet substructure observables.
Crucially, this worked even for observables that do not exhibit quark/gluon mutually irreducibility, as long as the observable used to extract the fractions does.
Additionally, we demonstrated that CWoLa classifiers could be self calibrated using fractions obtained from an uncalibrated classifier, thereby removing a potential hurdle in using CWoLa in practice.

The techniques in this chapter represent a novel use of classification in particle physics.
Instead of tagging quark and gluon jets, we used a CWoLa-trained deep learning classifier to approximate the full mixed-sample likelihood ratio.
This is in the same spirit of recent work on deep learning~\cite{Andreassen:2018apy,Komiske:2017ubm,Komiske:2017aww,Chang:2017kvc,Roxlo:2018adx,deOliveira:2017pjk,Paganini:2017hrr,Paganini:2017dwg,DAgnolo:2018cun}, where the ``black box'' nature of the trained model is not of central importance to the success or understanding of the method.
No longer is the output of a neural network viewed as an arbitrary quantity used only for classification, but rather as a robust approximation to the likelihood ratio, which turns out also to be optimal for extracting categories from the data.
Surprisingly, while individual quark and gluon jets cannot be tagged perfectly, we were able to use a data-driven classifier to extract the full quark and gluon distributions of an observable to percent-level accuracy.
This approach paves the way for fully data-driven collider physics, making use of machine learning techniques trained directly on data while producing results insensitive to the details of the ``black box''.

We conclude by discussing potential extensions of the methods used in this chapter.
As mentioned in \Sec{sec:topicwola}, a key concern in jet tagging is sample dependence, i.e.\ whether a ``quark jet'' in one sample is the same as a ``quark jet'' in another.
While the Operational Definition sidesteps the issue of sample dependence in the case of two mixed samples, it is natural to ask what happens with three or more mixed samples.
Concretely, once the Operational Definition is applied to two mixed jet samples, one can ask the degree to which a third sample $M$ is explained by the existing quark and gluon distributions.
It turns out that there is a unique and well-defined generalization of the reducibility factor, discussed in \Ref{katz2017decontamination}, that precisely captures this notion and yields a quantifiable notion of sample dependence:
\begin{equation}\label{eq:other}
\kappa \equiv \max_{f_q,\,f_g}\{f_q + f_g \,|\,\exists \text{ dist. } p_o(\O) \text{ s.t. } p_{M}(\O) = f_q p_q(\O) + f_g p_g(\O) + (1-f_q-f_g)p_o(\O)\},
\end{equation}
where $0\le f_q,f_g\le 1$ and $f_q+f_g\le1$.
In \Eq{eq:other}, $\kappa$ is the maximum amount of $M$ explainable by the quark and gluon distributions, requiring minimal addition of an ``other'' distribution $p_o(\mathcal O)$.
Understanding sample dependence is a general challenge, even with parton-shower-extracted templates, so it is gratifying that our framework naturally suggests a tool to address this problem.
Sample dependence can also be studied by directly comparing the quark and gluon jet definitions provided by different pairs of jet samples ($Z$+jet, dijets, $\gamma$+jet, etc.)\ at different transverse momenta and jet radii.
We leave explorations of these important ideas, as well as more detailed optimizations of the method, to future work.

Extending this thinking, one might attempt to provide a concrete jet flavor definition beyond the two-category case of quarks and gluons.
For instance, while the difference in radiation patterns between different-flavor light-quark jets is much smaller than between quark and gluon jets, it may be possible to use the techniques described in this chapter to define differently-flavored quark jets.
The subtle difference in radiation patterns between different light-quark has been studied in the context of jet charge observables in \Ref{Krohn:2012fg} and in the context of machine learning in \Ref{Fraser:2018ieu}.
To use our methods in this case would require advances in multiple-category CWoLa and jet topics, though the conceptual underpinnings would be the same as for the two-category case studied here.
Further, one could extend such a definition to provide well-defined jet flavor definitions for a variety of other boosted hadronic objects, potentially including subtle distinctions like longitudinal versus transverse polarization of boosted $W/Z$ bosons.
More broadly, the concept of mutual irreducibility as a means of defining categories may find additional applications in high-energy physics due to its utility in disentangling overlapping distributions using pure phase space signatures.
\chapter{Conclusions}

In this thesis, I approached quantum field theory and collider physics from a new perspective using only the observable energy flow information of an event.

In Chapter 2, I established a metric space for events using the energy mover's distance by combining the energy flow with ideas from optimal transport.
I unified many classic observables and jet definitions, as well as infrared and collinear safety, as geometric objects in this new space.
I also lifted this reasoning to construct a space of theories by treating a theory as a distribution over energy flows.
This new event geometry formalism not only allows for a clarified understanding of existing concepts, but also enables meaningful new geometrically-inspired developments for understanding particle interactions.
A fascinating avenue for further exploration is to circumvent scattering amplitudes and perform theoretical calculations directly in this new space of events.

In Chapter 3, I developed the energy flow polynomials as a basis of infrared- and collinear-safe observables by systematically expanding in particle energies and angles.
I demonstrated that this basis encompasses a variety of existing observables and includes entirely new observable structures.
I applied these observables to a variety of classification tasks for jets and jet substructure, such as quark versus gluon jet classification and boosted object identification.
Infrared and collinear safety emerged as a consistency condition due to the redunancy of describing an event using particles rather than its energy flow.
This is similar to how gauge symmetry emerges from the redundancy of describing spin-one massless particles using four-vectors, highlighting the power of using manifestly observable information with the energy flow formalism.

In Chapter 4, I defined particle flavor using observable information by connecting factorization with concepts from topic modeling, refining the notion of flavor from a per-event label to a statistical category.
I applied this definition to the specific case of quark and gluon jets at colliders, resolving a long-standing ambiguity in the literature and showcasing how this data-driven flavor definition can be used to statistically disentangle different types of particles.
More broadly, the statistical structure of factorized observables that I have highlighted has also been used to develop new methods for training data-driven collider classifiers and for searching for new physics in a model-independent way.

Throughout the thesis, many of the developments were enabled by connecting concepts from particle physics with ideas and techniques from statistics and computer science.
While there have been many instances of problems in the natural sciences being cast as machine learning problems, it has been much rarer for these connections to provide purely theoretical or conceptual advances in our understanding of quantum field theory or collider physics.
I hope that the perspective in this thesis is the first of many profound interdisplinary insights which will allow us to discover new particles and to better understand the fundamental interactions of nature.

\appendix
\chapter{Energy Moving with Massive Particles}
\label{sec:mass}

In this appendix, I explore an alternative definition of energy flow appropriate for massive particles, with a corresponding change in the measures used to define the EMD.
The energy flow in \Eq{eq:energyflow} treats events as sets of particles that have energy-like weights $\{E_i\}$ and geometric directions $\{\hat{n}_i\}$.
The EMD in \Eq{eq:emd} is based on pairwise distances $\{\theta_{ij}\}$ that are only functions of the $\hat{n}_i$ and $\hat{n}_j$ directions.
The exact definitions of $E_i$, $\hat{n}_i$, and $\theta_{ij}$ may vary depending on the collider context and other choices.
For massless final-state particles in $e^+e^-$ collisions, it is typical to take the energy $E$ to be equal to the total momentum $|\vec p\,|$, and the geometric direction $\hat{n}$ to be equal to the unit vector $\vec{p} / E$.
For massless particles in hadronic collisions, it is natural to use transverse momentum $p_T$ and a geometric direction based on azimuth $\phi$ and pseudorapidity $\eta$.

It is straightforward to adapt the energy flow to massive particles (see related discussion in \Ref{Mateu:2012nk}).
For the energy measure, the natural choices are energy in the $e^+e^-$ case and transverse energy in the hadronic case:
\begin{equation}
E_i = \sqrt{|\vec p_i|^2+m_i^2}, \qquad
E_{Ti}=\sqrt{p_{Ti}^2+m_i^2}.
\end{equation}
Both of these reduce nicely to the expected expressions in the $m_i\to0$ limit.
For the geometric direction, the natural choices are velocity and transverse velocity, written in four-vector notation:
\begin{equation}
n^\mu_i = \frac{p_i^\mu}{E_i} = (1,\vec{v})^\mu, \qquad
n^\mu_{Ti} = \frac{p_i^\mu}{E_{Ti}} = (\cosh y_i, \vec{v}_{Ti}, \sinh y_i)^\mu,
\end{equation}
where $\vec{v} = \vec{p}_i / E_i$ is the particle three-velocity, $\vec{v}_{Ti} = \vec{p}_{Ti} / E_{Ti}$ is the particle transverse two-velocity, and $y_i$ is the particle rapidity.
Again, these have the expected behavior in the $m_i\to0$ limit, and for finite mass, the velocities are bounded as $|\vec{v}| \in [0,1]$ and $|\vec{v}_T| \in [0,1]$.

To define the EMD, we choose the following pairwise angular distance:
\begin{equation}
\label{eq:theta_massive}
\theta_{ij}=\sqrt{ - (n_i^\mu-n_j^\mu)^2},
\end{equation}
where one replaces $n^\mu$ with $n_T^\mu$ in the hadronic case.
The first minus sign is needed because the difference between two time-like vectors with $n^2 \in [0,1]$ is space-like.
This expression reduces to the usual expression $\theta_{ij}=\sqrt{2n_i^\mu n_{j\mu}}$ in the massless limit.

\begin{figure}[t]
\centering
\includegraphics[scale=0.7]{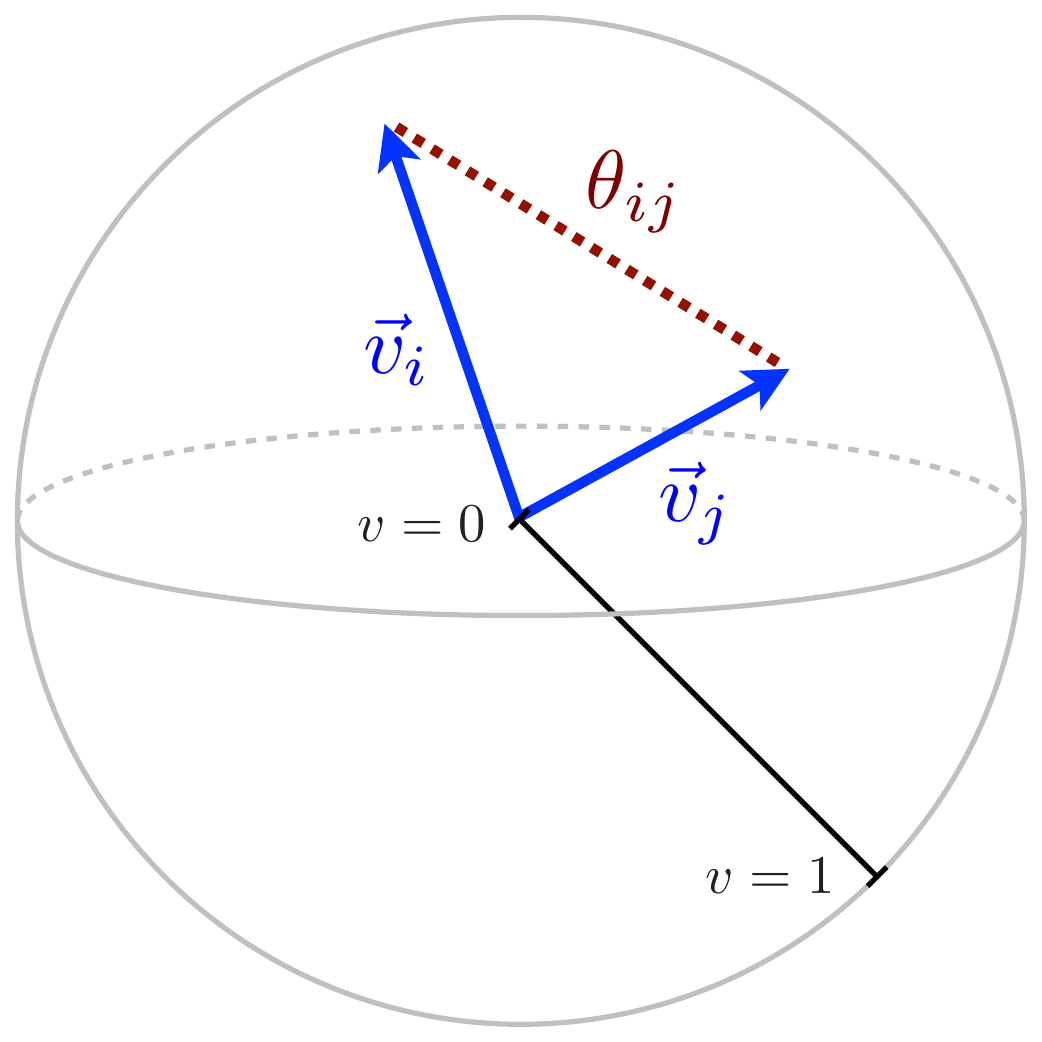}
\caption{\label{fig:sphere} The space of (massive) particle kinematics, with pairwise distances corresponding to Euclidean distances in this space.
Massless particles with $v=1$ live on the boundary and particles at rest with $v=0$ are at the origin.
One can interpret this figure as a snapshot of the event taken at a time $t$ after the collision, when the particles have traveled a distance $v t$.
}
\end{figure}

To gain intuition for this geometric distance between massive particles, it is instructive to expand out \Eq{eq:theta_massive} in the $e^+e^-$ case:
\begin{equation}
\theta_{ij}=\sqrt{\left(\vec v_i-\vec v_j\right)^2}=\sqrt{v_i^2+v_j^2 - 2 \, v_i \, v_j \cos \Omega_{ij}},
\end{equation}
where $\Omega_{ij}=\arccos \hat n_i\cdot \hat n_j$ is the purely geometric angle between particles $i$ and $j$.
We see that the velocity magnitude $v = |\vec{v}|$ acts as a radial coordinate on the sphere, and the pairwise distances $\theta_{ij}$ are just the Euclidean distances between two points in the unit ball, with distances $v_i$ and $v_j$ from the origin and angle $\Omega_{ij}$ between them.
Massless particles live entirely on the boundary with $v=1$ and massive particles live inside the ball with $0\le v<1$.
An illustration of this massive particle phase space is shown in \Fig{fig:sphere}.

The use of this massive distance measure has an interesting interplay with some of the studies in the body of the paper.
For example, the analysis of thrust in \Sec{subsubsec:thrust} involved finding the EMD to the manifold of back-to-back massless particle configurations of potentially unequal energy.
Using the massive particle distance, one could consider finding the EMD to the manifold of all possible two-particle configurations, including massive particles.
For $\beta = 2$, this is equivalent to partitioning the event into two halves with masses $M_A$ and $M_B$ and corresponding energies $E_A$ and $E_B$, and minimizing the quantity $M_A^2 / E_A + M_B^2 / E_B$.
A nice feature of this approach is that the optimal two particle configuration has the same energies and velocities as one would get from clustering the particles in each half.
Note that this approach is closely related to (but not identical to) the original definition of heavy jet mass in \Ref{Clavelli:1981yh} based on minimizing $M_1^2 + M_2^2$.

The idea of optimizing jet regions based on $M^2 / E$ also appears in the jet maximization approach~\cite{Georgi:2014zwa}.
In fact, using the massive distance measure in \Eq{eq:XConeasEMD} with $\beta = 2$ and $N=1$, and repeating the logic in \Ref{Thaler:2015uja}, we recover precisely the algorithm in \Ref{Georgi:2014zwa}, where the parameter $R$ controls the size of the resulting jet region.
The EMD approach yields a natural way to extend the jet maximization algorithm to $N > 1$, and also allows for an alternative definition of $N$-jettiness from \Eq{eq:Njettiness_with_beam_asEMD} based on time-like axes.

As another example, the analysis of recombination schemes in \Sec{sec:seqrec} involved minimizing the transportation cost to merge two particles into one.
For $\beta = 2$, this was equivalent to the $E$-scheme, namely $\kappa = 1$ in \Eq{eq:recomb}, up to the subtlety noted in footnote~\ref{footnote:Escheme}.
Using the massive particle distance, the merged particle in the $E$-scheme has the energy and direction:
\begin{equation}
E_c = E_i + E_j, \qquad n^\mu_c = \frac{E_i n_i^\mu + E_j n_j^\mu}{E_i + E_j},
\end{equation}
where appropriate $T$ subscripts should be included in the hadronic case.
The combined four-vector is
\begin{equation}
p^\mu_c = E_c n^\mu_c = p_i^\mu + p_j^\mu,
\end{equation}
which is a valid expression in both the $e^+e^-$ and hadronic cases.
Thus, the combined four-vector is just the sum of the two particles, which is indeed the desired $E$-scheme behavior.
Note, however, that the interpretation of the jet radius is very different if one uses the massive particle distance, since clustering happens in velocity space.
We leave further studies of the massive particle distance to future work.

\chapter{Energy Flow and the Energy-Momentum Tensor}
\label{sec:stressenergy}

\begin{figure}[t]
\centering
\includegraphics[scale=0.8]{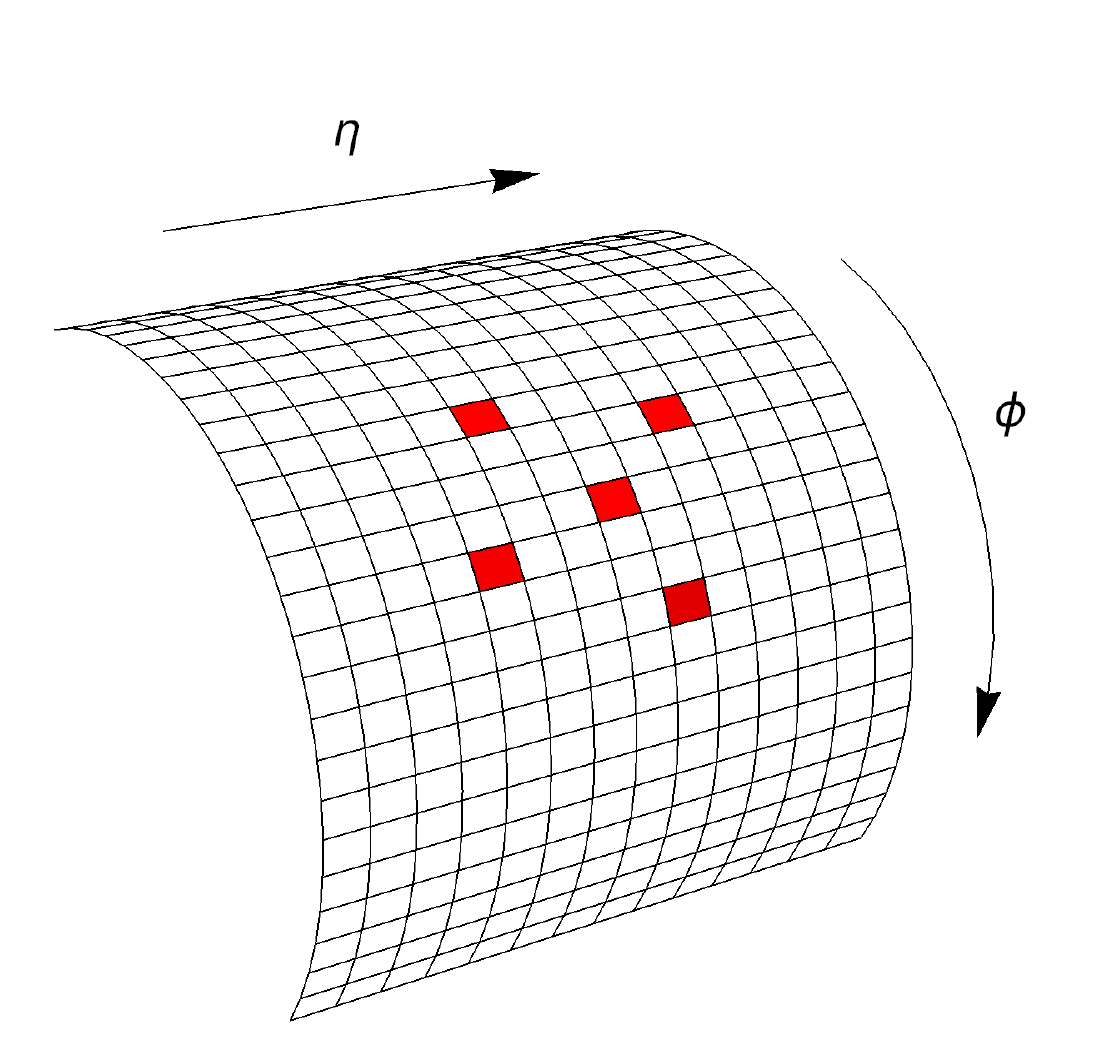}
\caption{An example calorimeter cell configuration to measure a 5-point energy correlator. The red regions indicate the five calorimeter cells chosen to measure the energy infinitely far from the interaction. For each event, the values of the five energy deposits are multiplied together to obtain the value of the observable in \Eq{eq:npointcorr}.}
\label{fig:5correxample}
\end{figure}

In this appendix, I review the connection between the energy flow of an event, as described by the energy-momentum tensor, to multiparticle energy correlators~\cite{Tkachov:1995kk,Sveshnikov:1995vi,Cherzor:1997ak,Tkachov:1999py}.

Consider an idealized hadronic calorimeter cell at position $\hat n$ in pseudorapidity-azimuth $(\eta,\phi)$-space, spanning a patch of size $d\eta\,d\phi$. The \emph{energy flow operator} $\mathcal E_T(\hat n)$ corresponding to the total transverse momentum density flowing into the calorimeter cell can be written in terms of the energy-momentum tensor $T_{\mu\nu}$~\cite{Sveshnikov:1995vi,Korchemsky:1997sy,Lee:2006nr,Bauer:2008dt,Mateu:2012nk} as:
\begin{equation}\label{eq:energyflow}
\mathcal E_T(\hat n) = \frac{1}{\cosh^3  \eta} \lim_{R\to\infty} R^2\int_0^\infty dt\, \hat n_i\, T^{0i}(t, R\hat n),
\end{equation}
with its action on a state $\ket X$ of $M$ massless particles given by:
\begin{equation}\label{eq:transcal}
\mathcal E_T(\hat n) \ket X = \sum_{i\in X} p_{T,i} \delta(\eta -  \eta_i)\delta(\phi - \phi_i) \ket X.
\end{equation}

Next, consider $N$ calorimeter cells at positions $(\hat n_1, \cdots, \hat n_N)$. 
An illustration of an example calorimeter cell configuration is shown in \Fig{fig:5correxample}. 
For an event $X$, multiply together the measured energy deposits in each of these $N$ cells. 
The corresponding observable is then the energy $N$-point correlator as defined in \Refs{basham1978energy,basham1979energy}:
\begin{equation}\label{eq:npointcorr}
\mathcal E_T(\hat n_1) \cdots \mathcal E_T(\hat n_N)\ket X = \sum_{i_1 \in X} \cdots \sum_{i_N \in X} \left[\prod_{j=1}^Np_{T,i_j}\delta^2(\hat n_j-\hat p_{i_j})\right]\ket X,
\end{equation}
where $\hat n_a=( \eta_a,\phi_a)$ for the calorimeters cells and $\hat p_a=( \eta_a,\phi_a)$ for the particles in the event.

We can define a new set of observables in terms of the $N$-point correlators in \Eq{eq:npointcorr}. 
Consider averaging \Eq{eq:npointcorr} over all calorimeter cells with an arbitrary angular weighting function $f_N(\hat n_1,\ldots,\hat n_N)$. 
The resulting observables are then of the form:
\begin{align}\label{eq:corravg}
\mathcal C_N^{f_N} \ket X &= \int d^2\hat n_1 \cdots d^2 \hat n_N \, f_N(\hat n_1,\ldots, \hat n_N)\mathcal E_T(\hat n_1) \cdots \mathcal E_T(\hat n_N) \ket X
\\& = \sum_{i_1\in X} \cdots \sum_{i_N \in X} p_{T,i_1} \cdots p_{T,i_N} f_N(\hat p_{i_1},\ldots, \hat p_{i_N})\ket X,\label{eq:ccorr}
\end{align}
namely, these observables $\mathcal C_N^{f_N}$ written in the form of \Eq{eq:ccorr} are exactly the $C$-correlators defined in \Eq{eq:genccorr}. 
Thus the averaging procedure in \Eq{eq:corravg} relates the particle-level $C$-correlators of \Eq{eq:genccorr} to the energy flow of the energy-momentum tensor $T_{\mu\nu}$.

\chapter{Quark versus Gluon Classification and Top Tagging Results}
\label{app:moretagging}

\begin{figure}[t]
\centering
\subfloat[]{\includegraphics[scale=.76]{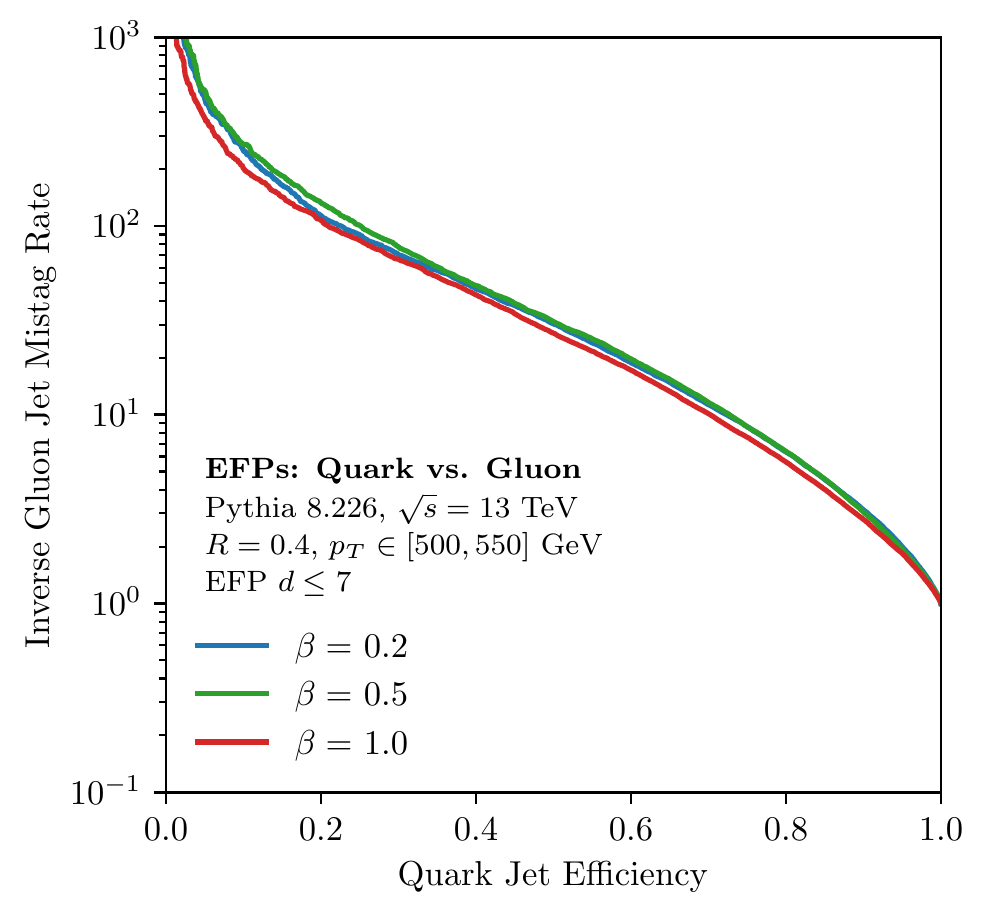}}
\subfloat[]{\includegraphics[scale=.76]{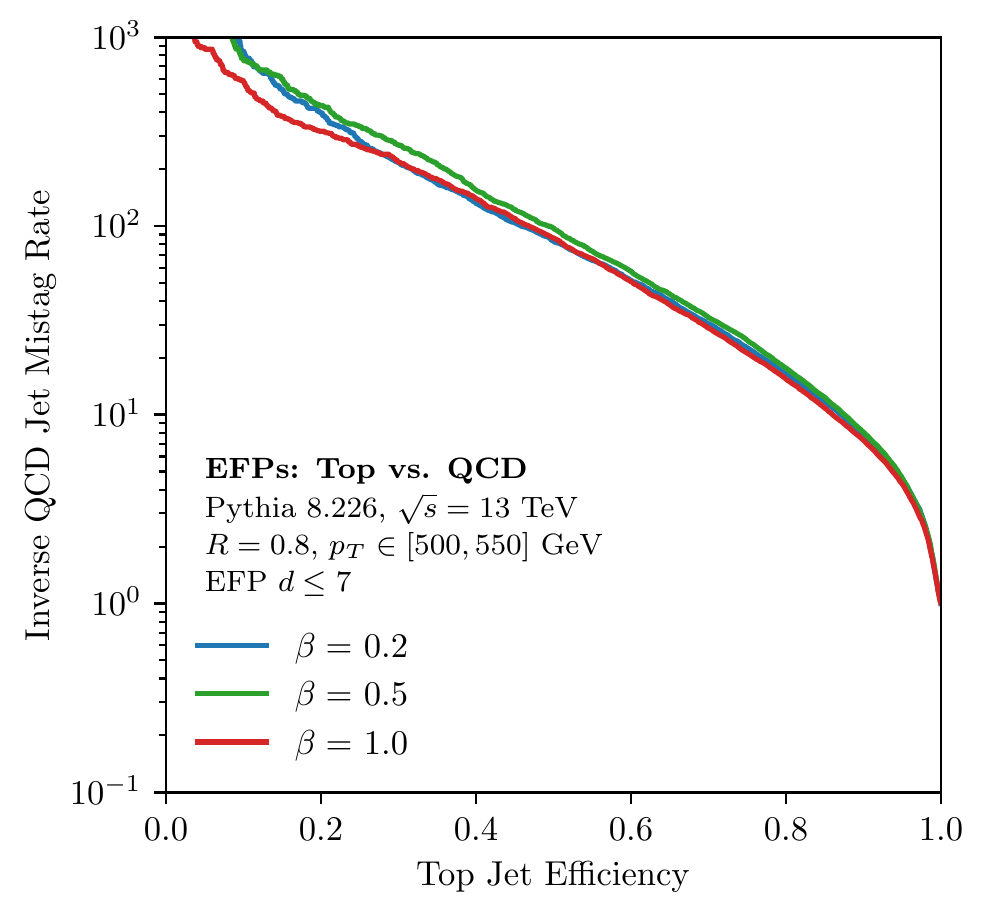}}
\caption{Same as \Fig{fig:Wbetasweep}, but for (a) quark/gluon classification and (b) top tagging.  Similar to the $W$ tagging case, the $\beta = 0.5$ choice has the best performance (marginally) for both tagging problems.}
\label{fig:appbetasweep}
\end{figure}

In this appendix, I supplement the $W$ tagging results of \Sec{sec:linclass} with the corresponding results for quark/gluon classification and top tagging.
The details of the event generation are given in \Sec{sec:eventgen}.

We compare the \B linear classification performance with $\beta = 0.2$, $\beta = 0.5$, and $\beta = 1$ in \Fig{fig:appbetasweep}.  
Consistent with the $W$ tagging case in \Fig{fig:Wbetasweep}, we find that the optimal performance is achieved with $\beta = 0.5$.  We therefore use $\beta = 0.5$ for the remainder of this study.

\begin{figure}[t]
\centering
\subfloat[]{\includegraphics[scale=.76]{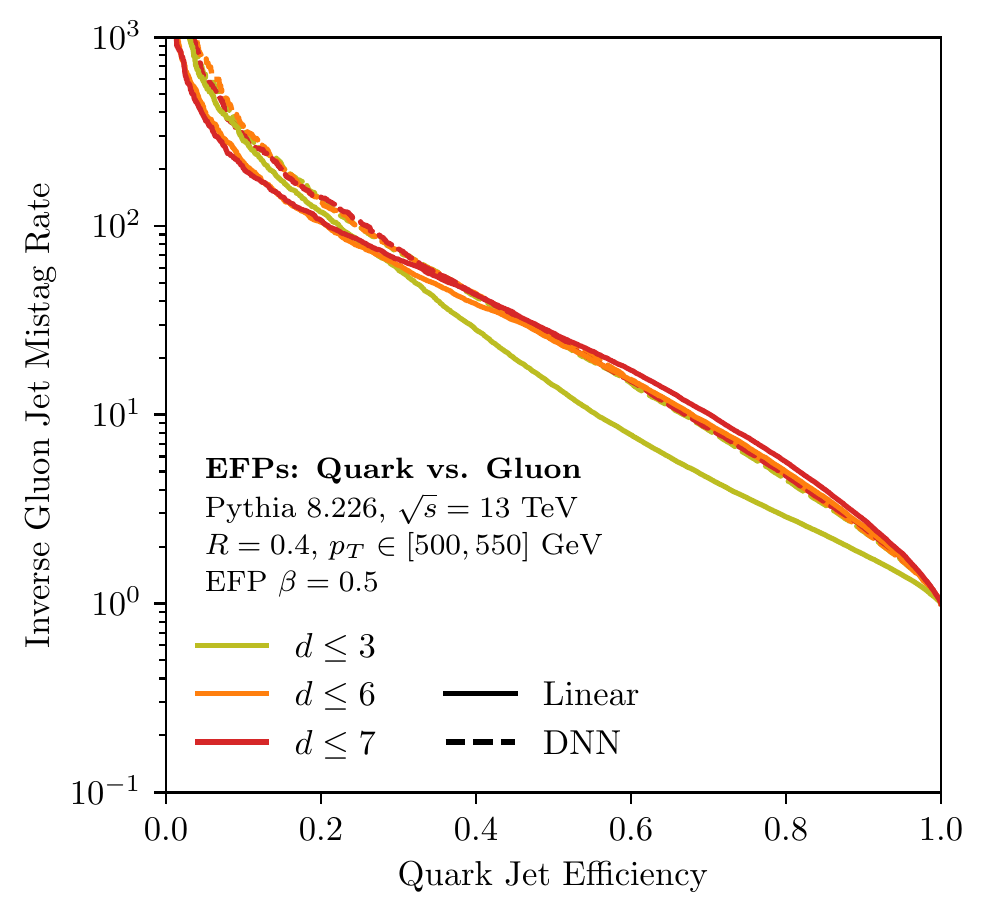}}
\subfloat[]{\includegraphics[scale=.76]{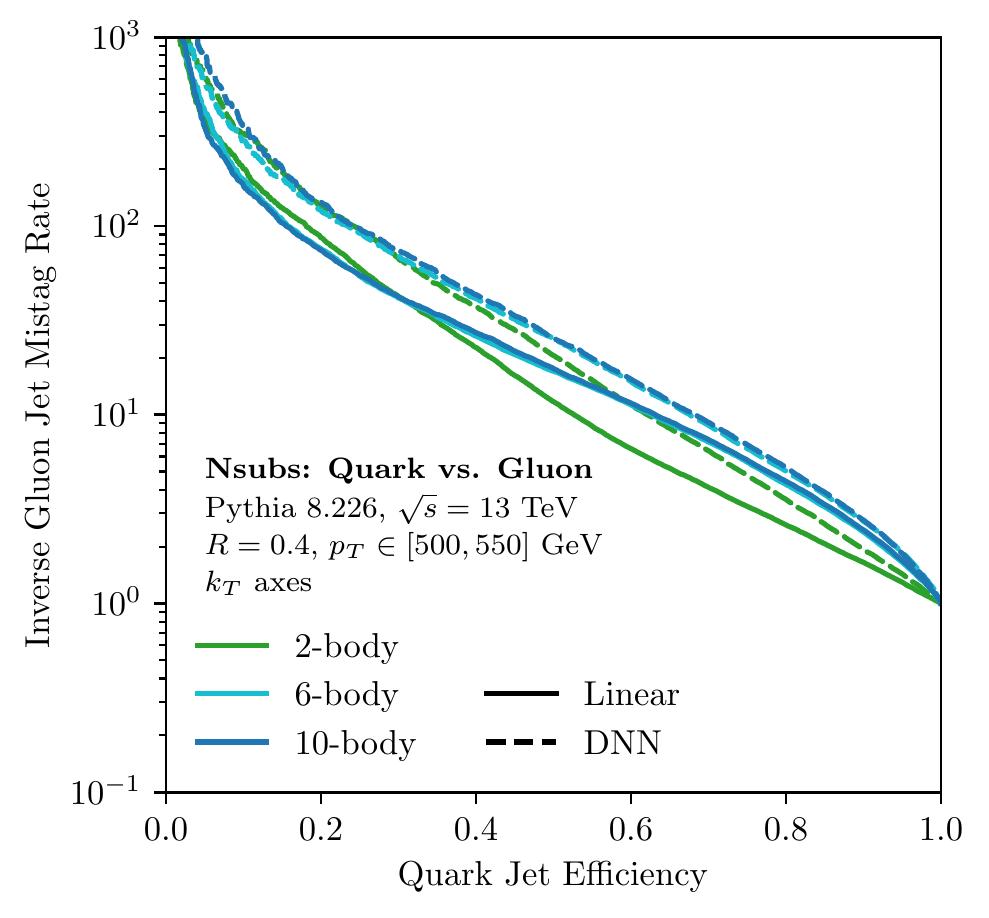}}

\subfloat[]{\includegraphics[scale=.76]{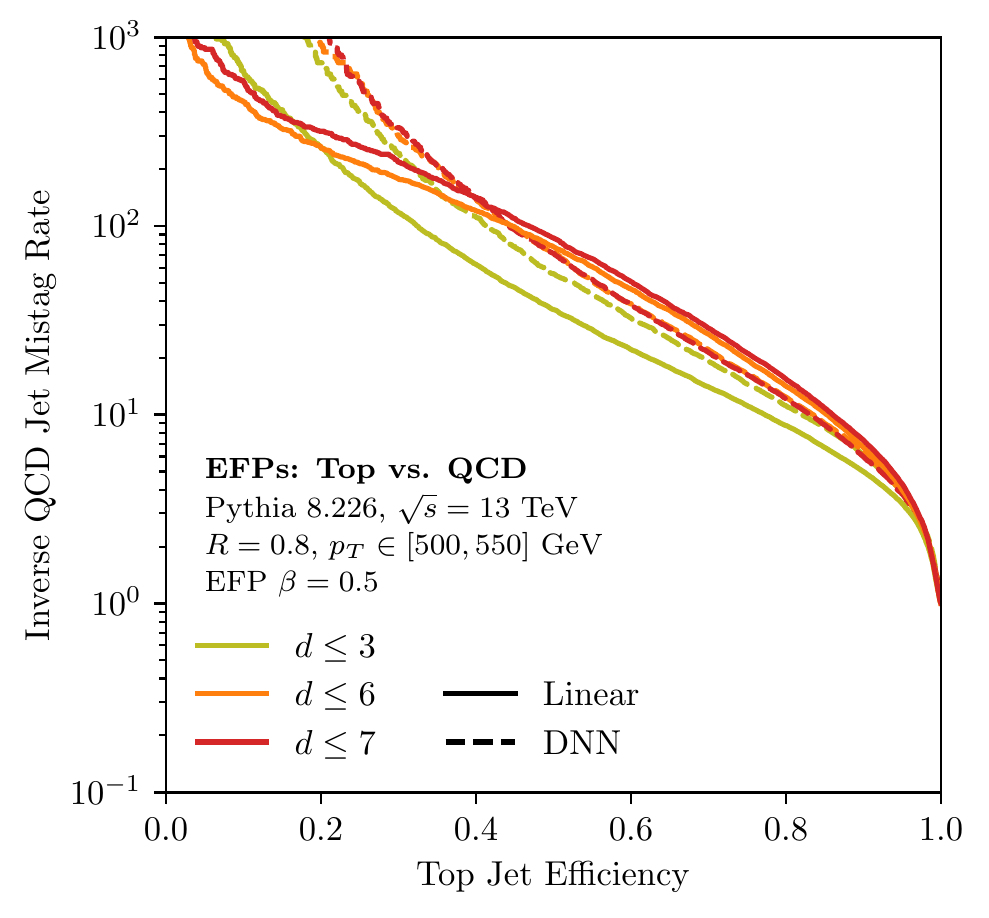}}
\subfloat[]{\includegraphics[scale=.76]{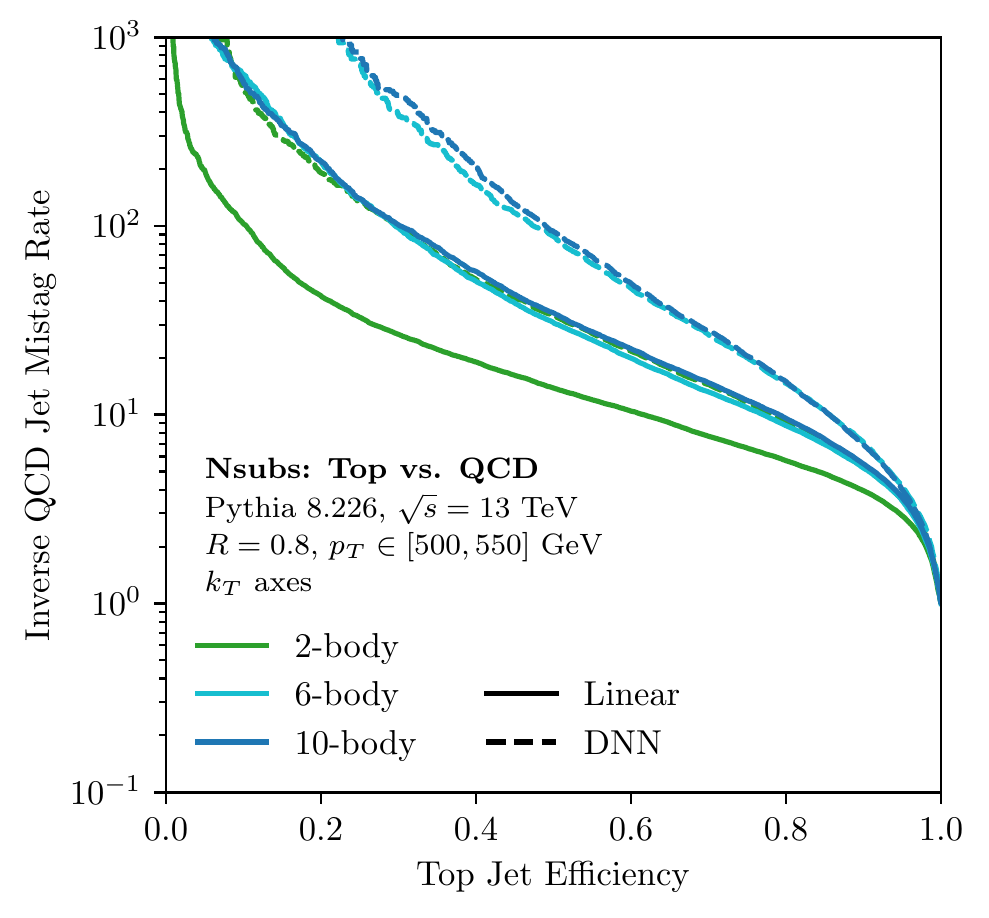}}
\caption{Same as \Fig{fig:Wefpnsubsweep}, but for quark/gluon classification (top) and top tagging (bottom).  As in the $W$ tagging case, the linear combinations of \Bs can be seen to approach (or even exceed) the nonlinear combinations, particularly for higher signal efficiencies.
\label{fig:appefpnsubsweep}}
\end{figure}

In \Fig{fig:appefpnsubsweep} we compare the linear and nonlinear performances of the energy flow basis and the $N$-subjettiness basis.
There is a clear gap between the linear and nonlinear $N$-subjettiness classifiers, whereas no such gap exists for the \Bs.  Interestingly, the linear classifier of \Bs tends to outperform the DNN at medium and high signal efficiencies, indicating the difficulty of training high-dimensional neural networks.
This behavior was not seen in \Fig{fig:Wefpnsubsweep}, most likely because the achievable efficiency is overall higher in the $W$ tagging case.

\begin{figure}[t]
\centering
\subfloat[]{\includegraphics[scale=.76]{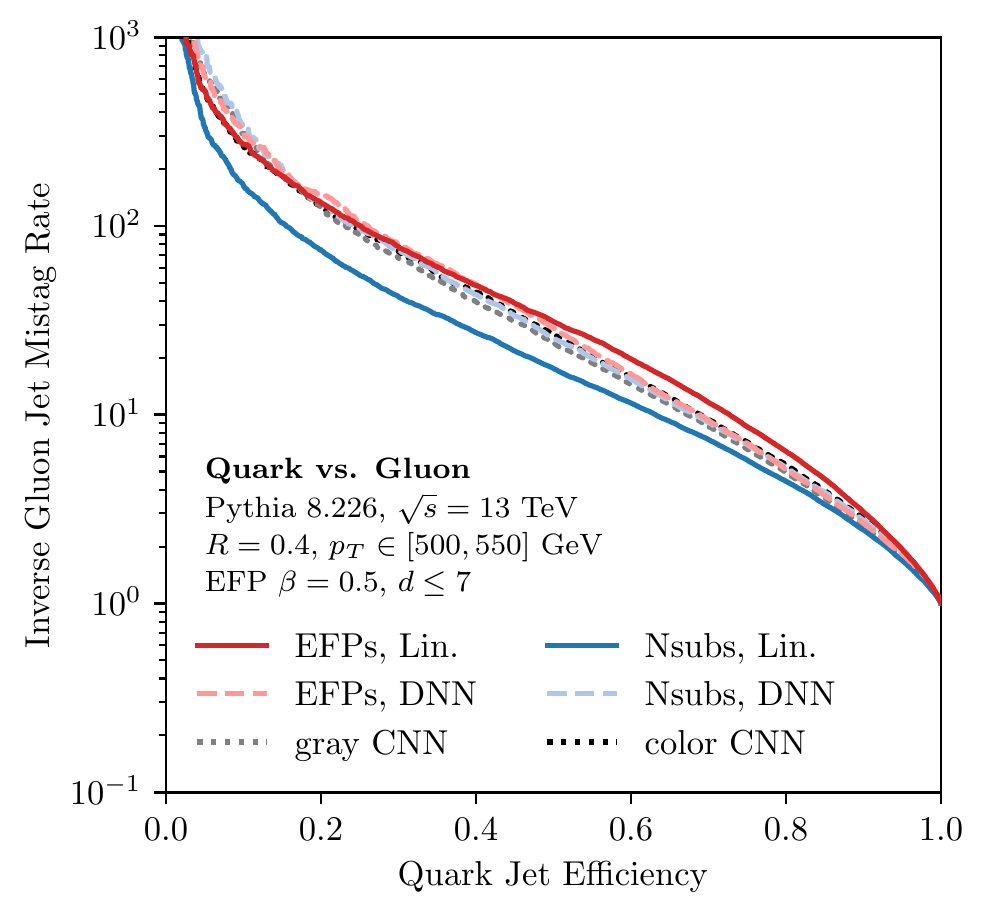}}
\subfloat[]{\includegraphics[scale=.76]{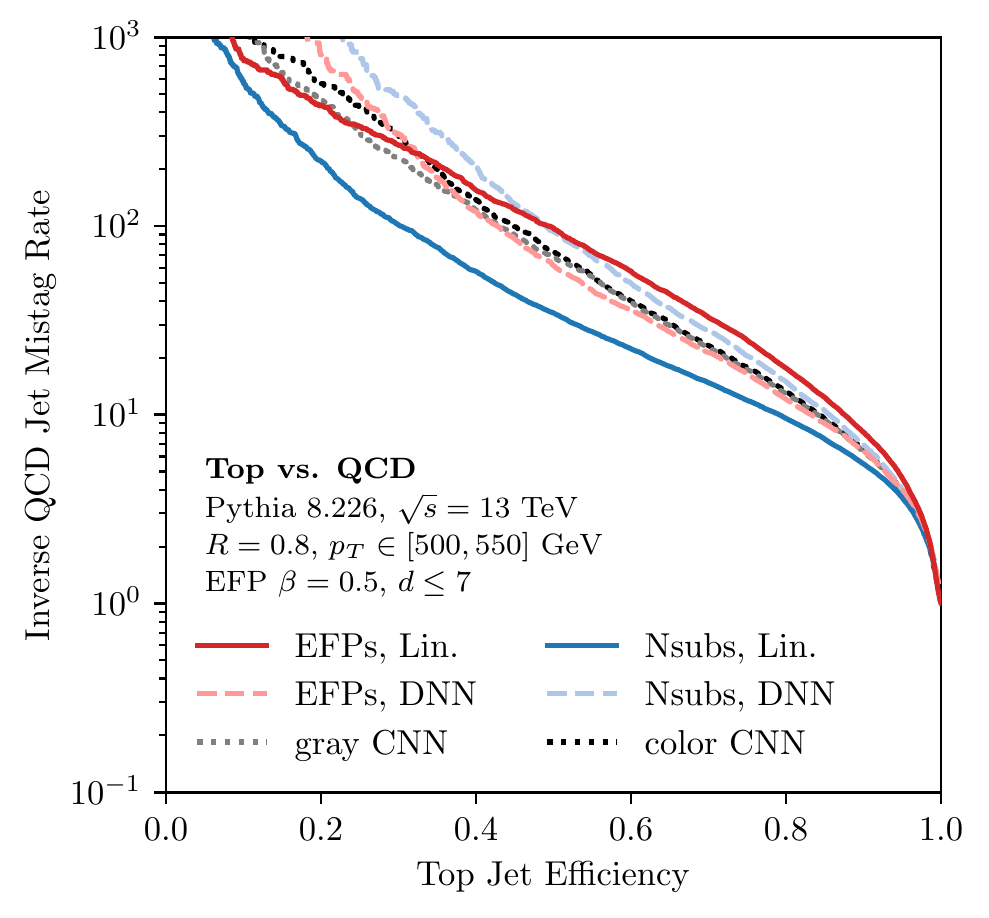}}
\caption{Same as \Fig{fig:Wtagcomp} for (a) quark/gluon classification and (b) top tagging.
As in the $W$ tagging case, the linear classification with \Bs can match (or even outperform) the other methods at high signal efficiencies.}
\label{fig:apptagcomp}
\end{figure}

A summary of the six tagging methods is shown in \Fig{fig:apptagcomp}, comparing linear and nonlinear combinations of the energy flow basis and $N$-subjettiness basis to grayscale and color jet images. 
As in \Fig{fig:Wtagcomp}, linear combinations of \B tend to match or outperform the other methods, especially at high signal efficiencies.

\begin{figure}[t]
\centering
\subfloat[]{\includegraphics[scale=.76]{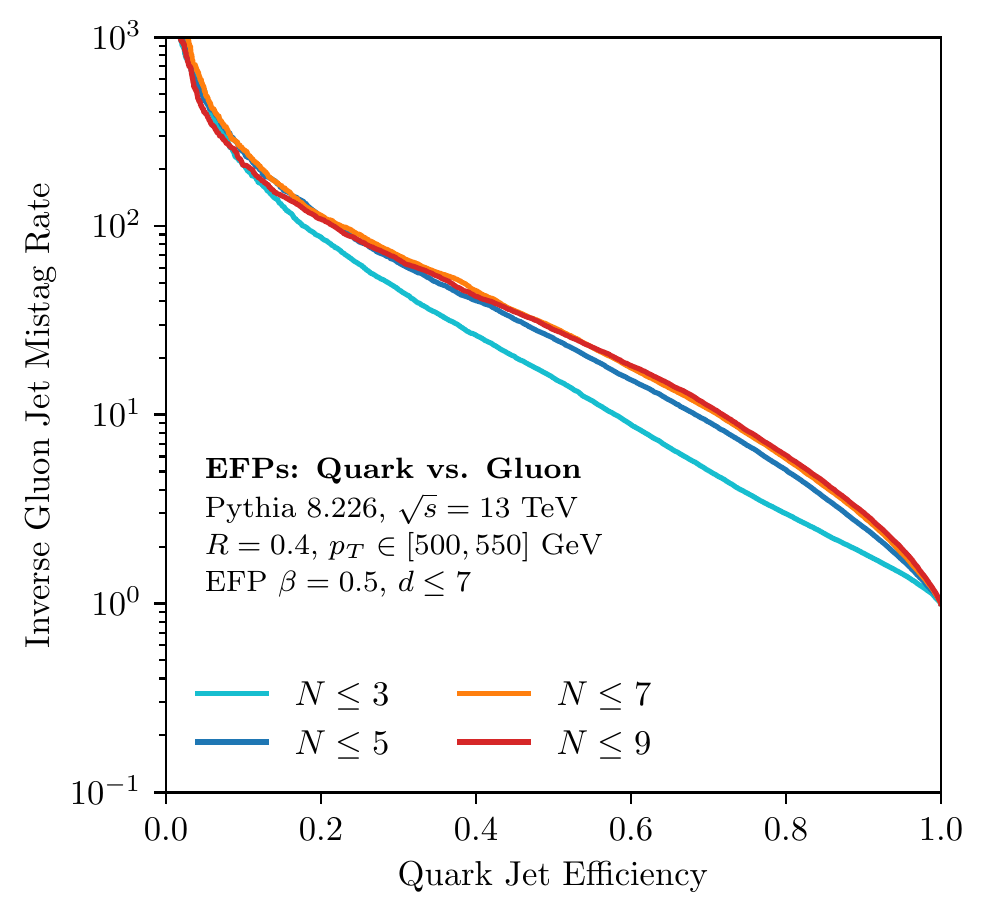}}
\subfloat[]{\includegraphics[scale=.76]{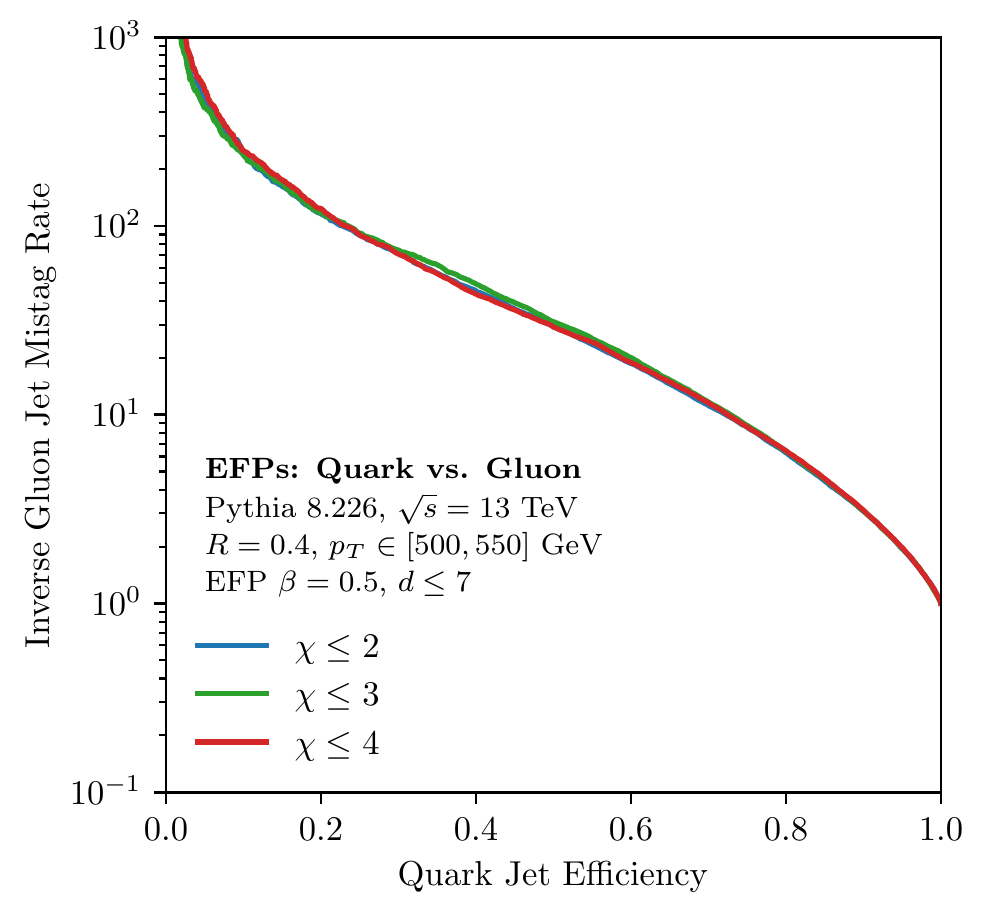}}

\subfloat[]{\includegraphics[scale=.76]{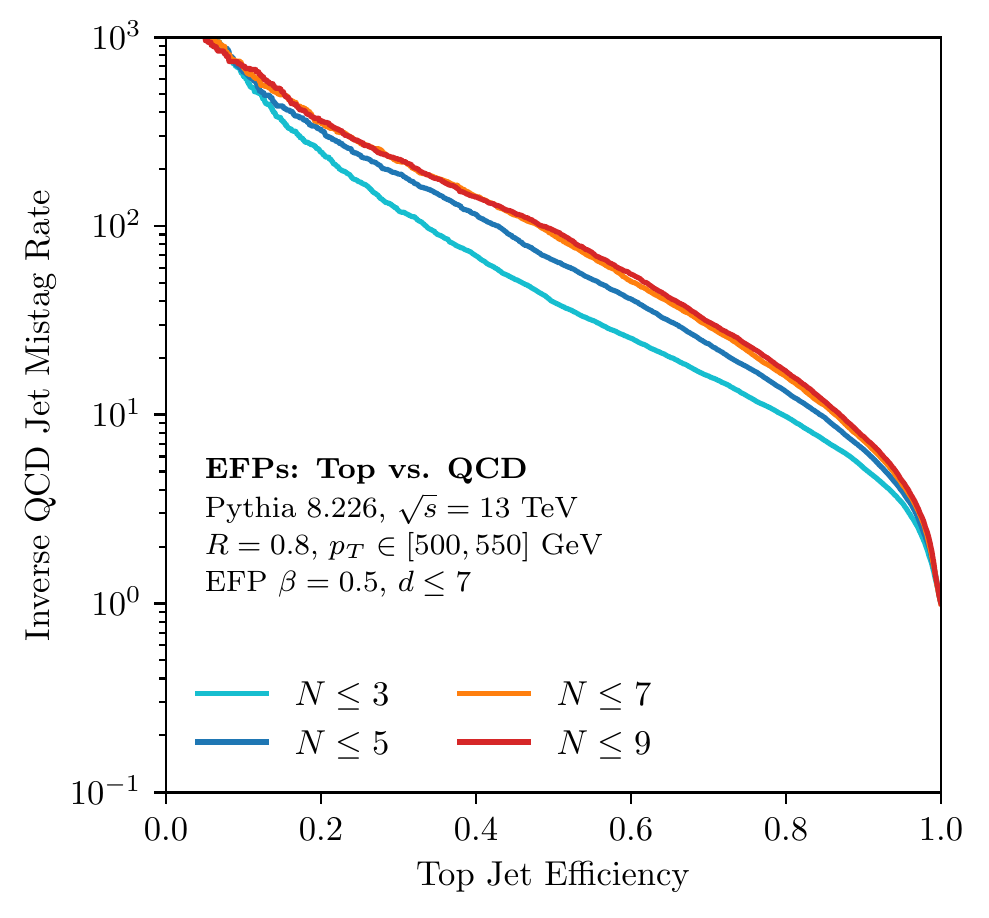}}
\subfloat[]{\includegraphics[scale=.76]{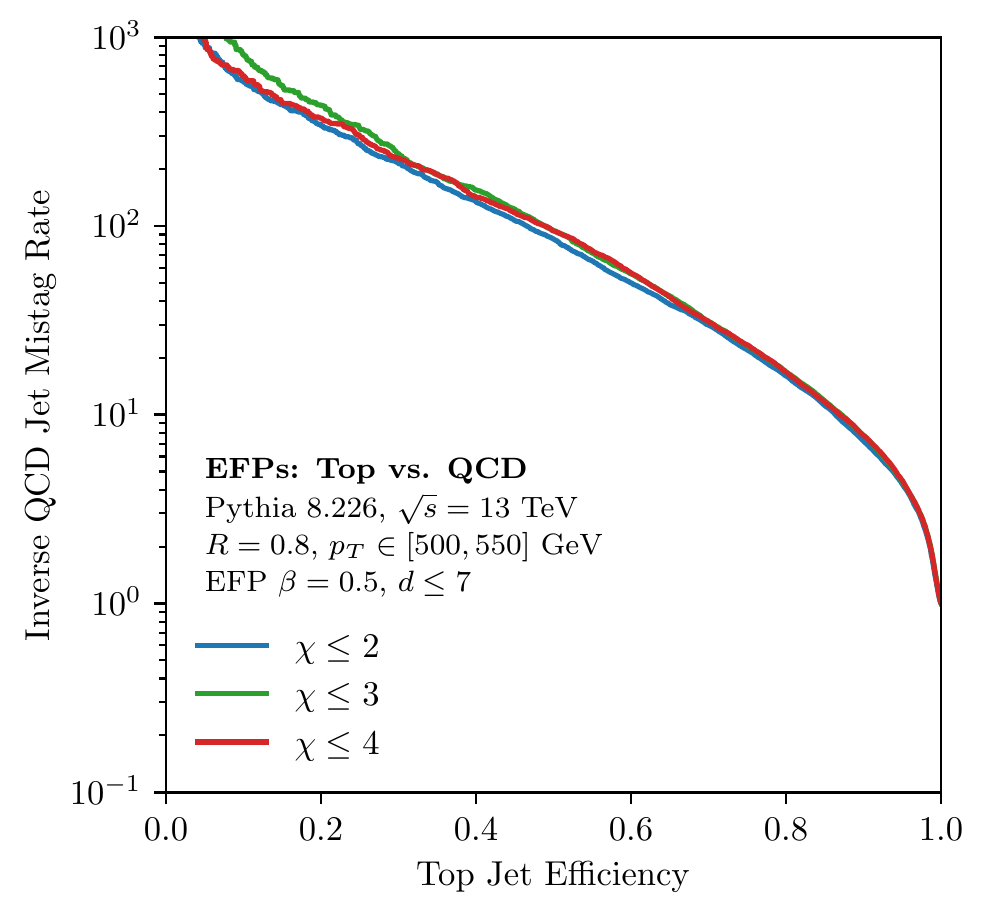}}
\caption{Same as \Fig{fig:efpnsweep} but for (top) quark/gluon classification and (bottom) top tagging.}
\label{fig:appnchisweep}
\end{figure}
\afterpage{\clearpage}

Finally, we truncate the set of \Bs with $d\le 7$ by the number of vertices $N$ and by the VE computational complexity $\chi$ in \Fig{fig:appnchisweep}.  
As in \Fig{fig:efpnsweep}, the higher $N$-particle correlators contribute to the classification performance up to at least $N=7$, whereas the higher-complexity \Bs beyond $\chi = 2$ do not significantly contribute to the classification performance.

\chapter{Exploration of Casimir- and Poisson-scaling Observables}
\label{sec:explore}

In this appendix, I explore the Operational Definition of quark and gluon jets in the leading-logarithmic (LL) limit, focusing on two theoretically-tractable classes of jet observables: Casimir-scaling and Poisson-scaling observables.
Though I only work to lowest non-trivial order, these calculations demonstrate that our framework for defining quark and gluon jets is suitable to theoretical exploration in addition to practical experimental implementation.
In the LL limit of perturbative QCD, quarks and gluons differ in their emission profiles only by their color charges: $C_F=4/3$ for quarks and $C_A=3$ for gluons.
Thus, in the LL limit, quarks and gluons are well-defined (at least at the parton level), providing a simplified context to explore the Operational Definition.
We find different non-zero quark/gluon reducibility factors for Casimir-scaling and Poisson-scaling observables, substantiating the need to use a richer space of jet substructure observables to approximate the full likelihood ratio.

Casimir-scaling observables include common jet substructure observables, such as the jet mass $m$ or IRC-safe angularities~\cite{Berger:2003iw,Almeida:2008yp,Ellis:2010rwa,Larkoski:2014uqa,Larkoski:2014pca}, that are dominated at LL accuracy by a single hard emission.
Their cumulative distributions satisfy $\Sigma_g (m)= \Sigma_q(m)^{C_A/C_F}$, where $p_i(m) = d\Sigma_i/dm$.
Solely using this scaling property, the quark/gluon reducibility factors of Casimir-scaling observables are:
\begin{align}\label{eq:kqgcas}
&\kappa_{qg}^\text{Cas.} = \min_m\frac{p_q(m)}{p_g(m)} = \min_m\frac{\frac{d\Sigma_q}{dm}}{\frac{C_A}{C_F} \Sigma_q^{C_A/C_F - 1}\frac{d\Sigma_q}{dm}}  = \frac{C_F}{C_A} \min_m\Sigma_q^{1-C_A/C_F} = \frac{C_F}{C_A}, \\
&\kappa_{gq}^\text{Cas.} = \min_m\frac{p_g(m)}{p_q(m)} = \min_m\frac{\frac{C_A}{C_F} \Sigma_q^{C_A/C_F - 1}\frac{d\Sigma_q}{dm}}{\frac{d\Sigma_q}{dm}}  = \frac{C_A}{C_F} \min_m\Sigma_q^{C_A/C_F-1} = 0, \label{eq:kgqcas}
\end{align}
where $C_A/C_F > 1$ and $\min_m\Sigma_i(m)=0$ have been used to obtain the last equality.
These results are universal to all Casimir-scaling observables and are independent of the remaining details of the observables at LL accuracy.

\begin{figure}[t]
\centering
\subfloat[]{\includegraphics[scale=0.75]{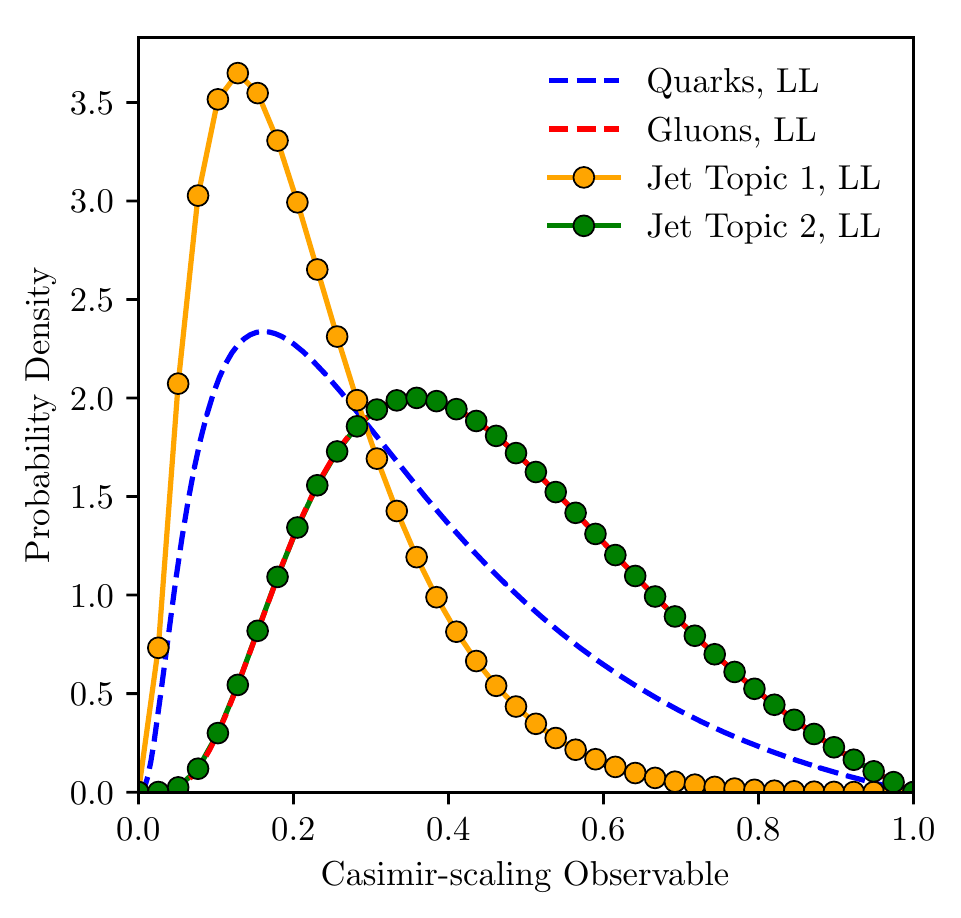}\label{fig:CStopics}}
\subfloat[]{\includegraphics[scale=0.75]{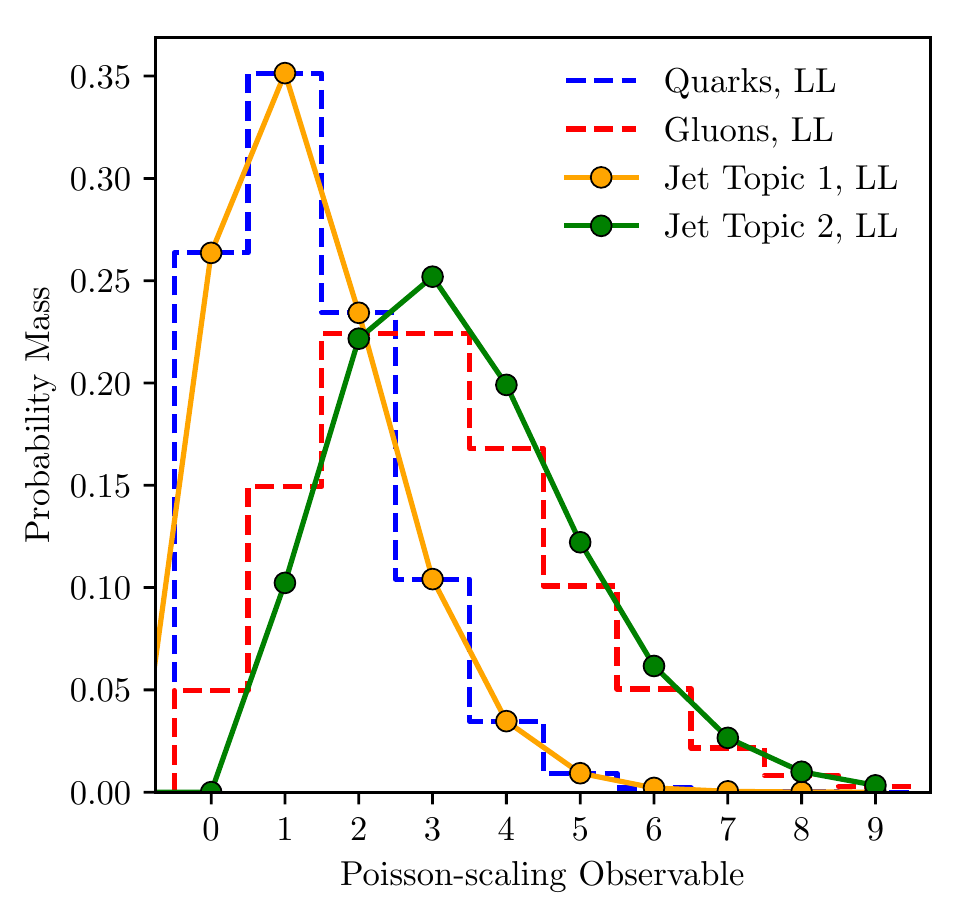}\label{fig:PStopics}}
\caption{
Quark and gluon distributions at LL accuracy for (a) Casimir-scaling and (b) Poisson-scaling observables, together with the corresponding jet topics.
The reducibility of the quark Casimir-scaling distribution and the gluon Poisson-scaling distribution are evident.
While neither of these observables individually results in mutually irreducible quarks and gluons, considering them jointly does.
}
\label{fig:LLtopics}
\end{figure}

The non-zero reducibility factor in \Eq{eq:kqgcas} indicates that quark and gluon jets are not mutually irreducible in the space of Casimir-scaling observables.
In particular, the quark distribution of any Casimir-scaling observable is a mixture of the (irreducible) gluon distribution and some other distribution, as shown in \Fig{fig:CStopics}.
Note that this {\it does not} imply that quark jets are fundamentally reducible, since this is just a property derived from Casimir-scaling observables in the LL limit.
That said, as noted at the end of \Sec{sec:opdef}, if \Eq{eq:kqgcas} were fundamental to quark and gluon jets, one could simply include this reducibility factor in the Operational Definition.

We next consider Poisson-scaling observables, which count the number of perturbative emissions and have qualitatively different quark-gluon reducibility factors.
One example is the soft drop multiplicity $n_{\rm SD}$~\cite{Frye:2017yrw}, which counts the number of emissions restricted to a certain phase space region.
At LL, Poisson-scaling observables are distributed according to Poissonian distributions with means $C_F \lambda$ for quarks and $C_A \lambda$ for gluons, where $\lambda$ is a constant proportional to the area of the emission plane that is counted.
The quark-gluon reducibility factors corresponding to these distributions are then:
\begin{align}\label{eq:kqgpois}
&\kappa_{qg}^\text{Pois.} = \min_n\frac{p_q(n)}{p_g(n)} = \min_n\frac{(C_F \lambda)^n e^{- C_F \lambda}}{(C_A \lambda)^n e^{- C_A \lambda}} =e^{-(C_F - C_A)\lambda}  \min_n \left(\frac{C_F}{C_A}\right)^n = 0,\\
&\kappa_{gq}^\text{Pois.} = \min_n\frac{p_g(n)}{p_q(n)} = \min_n\frac{(C_A \lambda)^n e^{- C_A \lambda}}{(C_F \lambda)^n e^{- C_F \lambda}} = e^{-(C_A-C_F) \lambda} \min_n \left(\frac{C_A}{C_F}\right)^n = e^{-(C_A-C_F)\lambda},\label{eq:kgqpois}\end{align}
since $C_A/C_F > 1$ and $n$ can take any non-negative integer value.

Evidently, Poisson-scaling observables display the {\it opposite} behavior of Casimir-scaling observables: the gluon distribution is a mixture of the (irreducible) quark distribution and some other distribution, as shown in \Fig{fig:PStopics}.
Further, the reducibility factor is not universal to all Poisson-scaling observables but rather depends exponentially on the parameter $\lambda$.
Though $\lambda\sim\mathcal O(1)$ was considered in \Ref{Frye:2017yrw}, perturbative QCD allows for arbitrarily large $\lambda$ by counting emissions in larger and larger regions.
As $\lambda$ increases, the reducibility factor falls to zero much more quickly than the overlap in the distributions decreases, and thus quark and gluon jets rapidly approach mutual irreducibility.
While perturbative control is lost for large $\lambda$ due to non-perturbative effects, considering this limit suggests that there is no fundamental impediment to the mutual irreducibility of quarks and gluons from the perspective of perturbative QCD, at least at LL accuracy.

From these two classes of observables, we see that enriching the feature space beyond individual Casimir-scaling and Poisson observables to $\mathcal O = \{m,n_{\rm SD}\}$ yields $\kappa_{qg} = \kappa_{gq} = 0$ for the combined feature space in the LL limit.
This benefit of using a rich feature space motivates our approach of training data-driven classifiers on complete substructure information to probe the full quark/gluon jet likelihood ratio, rather than relying on individual specially-crafted substructure observables.

\chapter{Details of Observables and Machine Learning Models}
\label{sec:train}

In this appendix, I give details for the jet substructure study in \Sec{sec:qgex}, describing the observables, machine learning models, and model training used.

For the individual substructure observables, three of them use custom implementations: constituent multiplicity $n_\text{const}$, image activity $N_{95}$~\cite{Pumplin:1991kc} (number of pixels in a $33\times33$ jet image containing 95\% of the $p_T$), and jet mass $m$.
The remaining three observables are computed using \textsc{FastJet contrib} 1.033~\cite{fjcontrib}.
The \textsc{RecursiveTools} 2.0.0-beta1 module is used to calculate soft drop multiplicity $n_{\rm SD}$~\cite{Frye:2017yrw} with parameters $\beta=-1$, $z_\text{cut}=0.005$, and $\theta_\text{cut}=0$.
The \textsc{Nsubjettiness} 2.2.4 module is used to calculate the $N$-subjettiness~\cite{Thaler:2010tr,Thaler:2011gf} observables $\tau_N^{(\beta)}$ with $k_T$ axes as recommended in \Ref{Datta:2017rhs}, in particular $\tau_2^{(\beta=1)}$ and jet width $w$ (implemented as $\tau_1^{(\beta=1)}$).

For our trained models, we use several different jet representations and machine learning architectures.
In reverse order compared to \Tab{tab:obsandmodels}, they are:
\begin{itemize}

\item \textbf{DNN}: The $N$-subjettiness basis~\cite{Datta:2017rhs} is a phase space basis in the sense that $3K-4$ independent $N$-subjettiness observables map non-linearly onto $K$-body phase space.
We use 20-body phase space consisting of the following set of $N$-subjettiness basis elements:
\begin{equation}
\left\{\tau_1^{(1/2)},\,\tau_1^{(1)},\,\tau_1^{(2)},\tau_2^{(1/2)},\,\tau_2^{(1)},\,\tau_2^{(2)},\,\ldots,\,\tau_{K-2}^{(1/2)},\,\tau_{K-2}^{(1)},\,\tau_{K-2}^{(2)},\tau_{K-1}^{(1/2)},\,\tau_{K-1}^{(1)}\right\},
\end{equation}
i.e.\ $\tau_N^{(\beta)}$ with $N\in\{1,\ldots,19\}$ and $\beta\in\{1/2,1,2\}$, except $\tau_{19}^{(2)}$ is absent, all computed using the \textsc{Nsubjettiness} 2.2.4 module of \textsc{FastJet contrib} 1.033.
A DNN consisting of three 100-unit fully-connected layers and a 2-unit softmaxed output was trained on the $N$-subjettiness basis inputs.

\item \textbf{CNN}: The jet images approach~\cite{Cogan:2014oua} treats calorimeter deposits as pixel intensities and represents the jet as an image. 
Convolutional neural networks (CNNs) are the typical model of choice when learning from such a representation, and have been successfully implemented for quark/gluon classification~\cite{Komiske:2016rsd}, $W$ tagging~\cite{deOliveira:2015xxd}, and top tagging~\cite{Baldi:2016fql,Guest:2016iqz}.
We calculate $33\times33$ jet images spanning $2R\times2R$ in the rapidity-azimuth plane.
In the language of \Ref{Komiske:2016rsd}, we formulate ``color'' jet images with two channels: the $p_T$ per pixel and the multiplicity per pixel.
Images were standardized by subtracting the mean and dividing by the per-pixel standard deviation of the training set.

A CNN architecture similar to that used in \Ref{Komiske:2016rsd} was employed: three convolutional layers with 48, 32, and 32 filters and filter sizes of $8\times 8$, $4\times 4$, and $4\times 4$, respectively, followed by a 128-unit dense layer. 
Maxpooling of size $2\times2$ was performed after each convolutional layer with a stride length of 2.
The dropout rate was taken to be 0.1 for all convolutional layers and was not used for the dense layer.

\item \textbf{EFPs}: The Energy Flow basis~\cite{Komiske:2017aww} is a linear basis for IRC-safe observables in the sense that any IRC-safe observable is arbitrarily well approximated by a linear combination of Energy Flow Polynomials (EFPs).
As a result of this remarkable property, linear methods can be used for classification and regression and are highly competitive with modern machine learning methods.
The \href{https://energyflow.network}{\texttt{EnergyFlow}} 0.8.2 package~\cite{energyflow} was used to compute EFPs up to $d\le7,\,\chi\le3$ with $\beta=0.5$ using the normalized default hadronic measure.
This yields 996 EFPs in total, including the trivial constant EFP.
This set was used to train a Fisher's Linear Discriminant model with scikit-learn~\cite{scikit-learn}.

\item \textbf{EFN, PFN, PFN-ID}:
Various particle-level network architectures have been proposed to take advantage of the structure of events or jets as sequences of vectors~\cite{Louppe:2017ipp,Andreassen:2018apy,Butter:2017cot,Cheng:2017rdo,Egan:2017ojy,Komiske:2018cqr}.
We choose to focus on the Energy Flow Networks (EFNs) recently introduced in \Ref{Komiske:2018cqr} and shown to be competitive with other particle-level models.
The EFN architecture is designed to have the properties desirable of a model that takes jet constituents as inputs: it is able to handle variable length lists but, critically, is manifestly symmetric under permutations of the elements in the input.
The inputs to an EFN are lists of particles, where a particle is described by its energy fraction, rapidity, and azimuthal angle (the latter two translated to the origin according to the $E$-scheme jet axis).
EFNs construct an internal latent representation of the jet using the particle-level inputs, weighting each particle's contribution by its energy fraction in order to ensure the IRC safety of the internal observables, and then combine the internal jet observables using a DNN backend.
The \href{https://energyflow.network}{\texttt{EnergyFlow}} package contains an implementation of EFNs.

The EFN architecture can be generalized to learn potentially IRC-unsafe internal observables.
This variant is termed a Particle Flow Network (PFN), which can easily incorporate additional particle features such as flavor information; see \Ref{Komiske:2018cqr} for a more thorough discussion. 
In addition to the IRC-safe EFN, our study uses a PFN with only kinematic inputs, and a PFN-ID with both kinematic and particle flavor (or ID) information.
For each network, the per-particle frontend subnetwork has three fully-connected 100-unit layers corresponding to an internal latent representation of 100 jet observables, and the per-jet backend has three fully-connected 100-unit layers that combines the internal latent observables.
The EFN, PFN, and PFN-ID networks differ only in their inputs and whether the energy fractions are used as weights for the internal sum over particles (for the EFN) or passed to the frontend subnetwork (for the PFN and PFN-ID).

\end{itemize}

All of the above models (excepting the linear EFPs) were implemented and trained using Keras~\cite{keras} with the TensorFlow~\cite{tensorflow} backend.
Training/validation and test datasets were each constructed using 500,000 events for each jet sample being considered.
The training/validation dataset is further divided with 90\% used for training and the remaining 10\% used for validation. 
Properties common to all networks were the use of ReLU activations~\cite{nair2010rectified} for each non-output layer, a 2-unit softmaxed output layer, He-uniform initialization~\cite{heuniform} of the model weights, the categorical crossentropy loss function, the Adam optimization algorithm~\cite{adam}, a learning rate of 0.001, and a patience parameter of 10 epochs monitoring the validation loss.
Models are trained 25 times, making use of different random weight initializations, and the best one is selected according to the maximum Area Under the (mixed sample ROC) Curve.
The hyperparameters of each model were not optimized for either classification performance or accuracy of the ultimately extracted fractions but rather are demonstrative of typical performance that can be achieved.
Practical users of the Operational Definition should tune the hyperparameters for their own purpose.

Finally, it should be noted that other data-driven criteria can be used to select optimal trained models, though we do not explore this further here.
One idea is that since the regions of the ROC curve that are relevant for topic extraction are those with very low and very high signal efficiency, in practice it may be beneficial to optimize training for these regions directly.
A method for optimizing loss-function based training by operating point is described in \Ref{rocopt}, and it would be fascinating to explore this for training better models for topic extraction.

\begin{singlespace}
\bibliography{main}
\bibliographystyle{JHEP}
\end{singlespace}

\end{document}